\documentclass{article}

\usepackage[numbers,sort&compress]{natbib}
\usepackage[colorlinks=true, allcolors=blue]{hyperref}
\usepackage[utf8]{inputenc}
\usepackage[T1]{fontenc}
\usepackage{algorithmicx}
\usepackage{algcompatible}
\usepackage{url}
\usepackage{booktabs}
\usepackage{array}
\newcolumntype{L}[1]{>{\raggedright\arraybackslash}p{#1}}
\usepackage{fullpage}
\usepackage{amsfonts}
\usepackage{nicefrac}
\usepackage{microtype}
\usepackage{xcolor}
\usepackage{algorithm}
\usepackage{algpseudocode}
\usepackage{float}
\usepackage{tabularx}

\DeclareMathAlphabet{\bdxcal}{U}{BOONDOX-cal}{m}{n}

\usepackage{amsmath}
\usepackage{mathtools}
\usepackage{amssymb}
\usepackage{amsthm}
\usepackage{bbm}
\usepackage{graphicx}
\usepackage{mathrsfs}
\usepackage{braket}

\usepackage{doi}
\usepackage{tikz}
\usetikzlibrary{decorations.pathreplacing,calc}
\usepackage{tcolorbox}
\xdefinecolor{bluebox}{RGB}{0,101,189}
\newtcolorbox{mybox}{colback=bluebox!5!white,colframe=bluebox!75!black}
\usepackage{subcaption}

\usepackage{theoremref}

\newtheorem{thm}{\protect\theoremname}[section]
\theoremstyle{plain}
\newtheorem{lem}[thm]{\protect\lemmaname}
\theoremstyle{plain}
\newtheorem{rem}[thm]{\protect\remarkname}
\theoremstyle{plain}

\theoremstyle{plain}
\newtheorem*{lem*}{\protect\lemmaname}
\theoremstyle{plain}
\newtheorem*{rem*}{\protect\remarkname}
\theoremstyle{plain}
\newtheorem*{thm*}{\protect\theoremname}
\theoremstyle{plain}
\newtheorem{prop}[thm]{\protect\propositionname}
\theoremstyle{plain}
\newtheorem*{prop*}{\protect\propositionname}
\theoremstyle{plain}
\newtheorem*{alg*}{Algorithm}
\theoremstyle{plain}
\newtheorem{cor}[thm]{\protect\corollaryname}

\newtheorem{defn}[thm]{Definition}

\usepackage[USenglish]{babel}
  \providecommand{\corollaryname}{Corollary}
  \providecommand{\lemmaname}{Lemma}
  \providecommand{\propositionname}{Proposition}
  \providecommand{\remarkname}{Remark}
\providecommand{\theoremname}{Theorem}

\usepackage[normalem]{ulem}

\newcommand{\Tr}{\mathrm{Tr}}

\newcommand{\dd}{\mathrm{d}}

\DeclareMathOperator{\Var}{Var}

\title{Improved Adaptive Estimation of Quantum Partition Functions with Heisenberg Scaling}

\author{%
Yufei Wang\textsuperscript{1}\thanks{\href{mailto:yuw@math.ku.dk}{\texttt{yuw@math.ku.dk}}}
\qquad Daniel Stilck Fran\c{c}a\textsuperscript{1}\thanks{\href{mailto:dsfranca@math.ku.dk}{\texttt{dsfranca@math.ku.dk}}}
\qquad Samuel Slezak\textsuperscript{2}\thanks{\href{mailto:samuel.slezak@ens-lyon.fr}{\texttt{samuel.slezak@ens-lyon.fr}}}\\[0.5em]
\small \textsuperscript{1}Department of Mathematical Sciences, University of Copenhagen\\
\small Universitetsparken 5, 2100 Denmark\\[0.25em]
\small \textsuperscript{2}Univ Lyon, ENS Lyon, UCBL, CNRS, Inria, LIP,\\
\small F-69342, Lyon Cedex 07, France
}

\mathtoolsset{showonlyrefs}

\begin{document}
\maketitle

\begin{abstract}
We give quantum algorithms that estimate the log partition function of an
\(n\)-qubit quantum Hamiltonian to additive error
\(\epsilon\) with Heisenberg scaling, while
minimizing calls to thermal-state purification preparation. The input is a block encoding
of \(H\), unitaries preparing thermofield-double states at chosen inverse
temperatures, with their inverses and controlled versions, and an adaptive slowly-varying
cooling schedule.
The system-size bounds below assume
\(0\preceq H\preceq hI\) with \(h=O(n)\) and
block-encoding normalization \(\alpha=\Theta(h)\).
For constant inverse temperature,
we show that such a schedule of length \(O(\sqrt n)\) always exists and can be generated from
overlap estimates with \(\widetilde O(\sqrt n)\) further preparations.

Given such a schedule and classically chosen temperatures, we estimate the
log partition function with \(\widetilde O(n^{5/4}/\epsilon)\) expected calls to
state preparation and to the block encoding, improving by \(n^{1/4}\) on non-adaptive strategies. To achieve our result, we use a recursive-doubling
identity that expresses each schedule increment as a weighted sum of logarithms of
thermofield-double overlaps and of short imaginary-time steps implemented by
quantum singular value transformation combined with previously developed variance reduction techniques. If thermofield doubles can be prepared
controlled on a temperature register, a quantum inverse-binomial log estimator
combined with quantum mean estimation reduces both query counts to
\(\widetilde O(n/\epsilon)\). We show that the latter scaling is optimal in block-encoding queries up to polylogarithmic factors and discuss end-to-end complexities for one-dimensional partition-function estimation.
\end{abstract}

\section{Introduction} \label{sec:intro}

Computing log partition functions is a fundamental problem in statistical
physics and quantum computing.  For an \(n\)-qubit Hamiltonian \(H\) at inverse
temperature \(\beta\), the quantities
\[
    Z(\beta)=\Tr(e^{-\beta H}),
    \qquad f(\beta)=\ln Z(\beta)
\]
determine the free energy and encode equilibrium thermodynamics.  Their
estimation also connects thermal simulation to counting and optimization,
and to free-energy calculations in quantum chemistry
\cite{Stefankovic2009,PoulinWocjan2009,Huang2025Fullqubit}.
Our aim is to achieve Heisenberg scaling
\(\widetilde O(1/\epsilon)\) for general quantum Hamiltonians, i.e. to obtain an estimate of \(f(\beta)\) with additive error \(0<\epsilon\le1\),
equivalently an estimate of \(Z(\beta)\) within a multiplicative factor
\(e^{\pm\epsilon}\), while minimizing dependence on system size and on the resources related to thermal state access.

\medskip

We give quantum algorithms for this problem in two access models. In the first model, a temperature is specified classically, and the
algorithm can call the corresponding thermofield-double preparation
unitary, its inverse, and the required controlled operations.  We estimate
the local terms separately, using nearly unbiased quantum estimators so
that errors accumulate through their variance, building upon the approach of
\cite{Cornelissen2023}.  Optimizing the precision allocated to each term
gives, for a supplied schedule of length
\(\ell_{\mathrm{SV}}=\widetilde O(\sqrt n)\),
\[
    \widetilde O\!\left(\frac{n^{5/4}}{\epsilon}\right)
\]
expected calls to purified-state preparation and to a Hamiltonian block
encoding with normalization \(\alpha=\Theta(h)\), at constant failure
probability (Theorem~\ref{thm:single-temperature-main}).  More generally,
the bound depends on the distribution of interval lengths, rather than
only their number.  This gives a Heisenberg-precision reconstruction for
arbitrary real spectra using coherent access at separately chosen
temperatures.

\medskip

The second model allows temperatures and reconstruction terms to be
selected coherently.  We use the classical inverse-binomial idea of
estimating a log probability from a harmonic function of the waiting time
to the first success~\cite{VanOpheusden2020} to build a quantum version of
this estimator.  Coherent truncation and reversible evaluation give a
logarithmic sample with controlled bias and bounded second moment.
Preparing a superposition over the reconstruction terms then lets us
estimate their weighted sum with one quantum mean-estimation procedure.
Under this stronger access model, for the same supplied schedule of length
\(\ell_{\mathrm{SV}}=\widetilde O(\sqrt n)\), the expected reconstruction
query counts improve to
\[
    \widetilde O\!\left(\frac{n}{\epsilon}\right)
\]
for both coherent-temperature preparation and Hamiltonian block-encoding
access (Theorem~\ref{thm:coherent-indexed-main}). Coherent access to
temperature-labeled preparations can carry substantial implementation
overhead: a direct selector has a gate cost that grows with the number of
required temperatures. A common circuit whose parameters are computed
reversibly from the temperature label provides a sufficient route to a
gate-complexity advantage, when selection and parameter-loading costs are
small; see Section~\ref{sec:end-to-end}.

\medskip

Quantum algorithms for partition-function estimation have developed along several lines.
Annealing methods estimate successive partition-function ratios using
thermal states, whereas direct methods estimate traces of approximations to
the exponential operator; hybrid approaches offer further alternatives
\cite{Wocjan2009Partition,PoulinWocjan2009,ChowdhurySommaSubasi2021,WuWang2022}.
We focus on algorithms with access to purified Gibbs states.  Preparing thermal
states can itself be the dominant computational task, and its cost depends
strongly on the Hamiltonian and temperature
\cite{PoulinWocjan2009,ChenKastoryanoBrandaoGilyen2023,RouzeFrancaAlhambra2024}.
We therefore separate the number of calls to state-preparation unitaries and
Hamiltonian block encodings from their circuit implementations.  This
distinction also makes the access assumptions essential: a supply of mixed
Gibbs states, coherent access to a purification at a classically chosen
temperature, and preparation controlled by a quantum temperature register
are different resources.

\medskip

For classical systems, adaptive cooling schedules and bounded-variance
ratio estimators form the basis of efficient counting algorithms
\cite{Stefankovic2009,Huber2015,Kolmogorov2017}.
Quantum algorithms build on this framework using coherent encodings of
classical Gibbs distributions, often called quantum samples
\cite{Wocjan2009Partition,Montanaro2015,Harrow2020,Arunachalam2022,Cornelissen2023}.
These works establish quadratic improvements in precision dependence,
non-destructive estimation and state reuse, and efficient construction of
slowly-varying schedules.  Cornelissen and Hamoudi~\cite{Cornelissen2023}
further obtain sublinear dependence on the logarithm of the configuration
space size by combining nearly unbiased estimates through their variance.
These results exploit classical energy-evaluation access; several are
formulated for nonnegative integer-valued energies, as arise naturally in
counting problems.  Their algorithms and bounds therefore do not transfer directly to
an arbitrary quantum Hamiltonian supplied through a block encoding.

\medskip

General quantum partition-function estimation has a substantial history
\cite{PoulinWocjan2009,ChowdhurySommaSubasi2021,Matsumoto2022}.
The issue here is how to retain favorable precision and system-size scaling
without assuming ingredients that are naturally present classically, like an integer spectrum or efficient evaluation of individual
energies.  Indeed, consider, e.g., the approach of~\cite{Arunachalam2022}, which uses the quantum samples to put energy samples into bins as an integral part of the estimator. Resolving sharp energy bins for a quantum Hamiltonian can require
additional spectral precision, especially when eigenvalues lie close to bin
boundaries.  To overcome this issue, we instead use smooth quantities: overlaps of canonical
purified Gibbs states, or thermofield doubles, and short imaginary-time
reweighting expectations.  These quantities can be implemented without
resolving individual energy levels, using purified-state access and quantum
singular value transformation~\cite{Gilyen2019QSVT}.

\medskip

Our reconstruction applies to any known shifted energy window
\(0\preceq H\preceq hI\).  To state the system-size benchmarks,
we take an extensive bound \(h=O(n)\), conventionally
\(h=n\), and constant inverse temperature \(\beta=\Theta(1)\).  This includes
bounded-strength local Hamiltonians with \(O(n)\) interaction terms, after a
known scalar shift.  Bounds for general \(h\) and \(\beta\) are retained in
the main theorems.

\medskip

The common starting point is a slowly-varying cooling schedule, i.e. a sequence of inverse temperatures along which
neighboring thermofield doubles have constant overlap.  Such schedules have
long been central to classical and quantum annealing
\cite{Stefankovic2009,BoixoKnillSomma2009,Harrow2020,Arunachalam2022}.
In the constant-temperature regime, schedules with
\(O(\sqrt n)\) length exist for all Hamiltonians; see
Section~\ref{sec:adaptive-cooling-schedules}.  We use
overlap-based adaptive schedule generation and analyze the reconstruction
cost separately from the cost of obtaining the schedule.

\medskip

On each schedule interval, a recursive-doubling identity similar to one observed in~\cite{Matsumoto2022} expresses the
change in \(f\) as a weighted sum of logarithms of overlap and reweighting
probabilities.  Here we combine this identity with
short reweighting steps and a slowly-varying schedule so that all elementary
probabilities stay uniformly bounded away from zero.  Our contribution is the
resulting reconstruction and its variance-sensitive quantum estimation,
with two implementations distinguished by their access requirements.

\medskip

A matching lower bound explains the linear system-size dependence in the
stronger model compared to the sublinear results for classical systems of~\cite{Cornelissen2023}.  A scalar shift \(H\mapsto H+cI\) leaves every Gibbs state
unchanged but changes \(f(\beta)\) by \(-\beta c\).  When this offset is
accessible only through an \(\alpha\)-normalized Hamiltonian block encoding,
estimating the absolute log partition function requires
\(\Omega(\alpha\beta/\epsilon)\) block-encoding queries in the worst case
for \(0<\epsilon\le\beta\min\{h,\alpha\}/16\),  even with free coherent access to thermofield doubles at every temperature
(Theorem~\ref{thm:linear-lower-bound}).  Thus, for
\(\alpha=\Theta(h)=\Theta(n)\) and constant \(\beta\), the coherent-temperature
algorithm is optimal up to logarithmic factors in this query model. The
separately chosen-temperature algorithm is within a factor
\(\widetilde O(n^{1/4})\) of this benchmark.

\medskip

Finally, we translate the oracle bounds into circuit resource estimates
for one-dimensional systems with efficient preparation
\cite{bergamaschi2026fastmixingquantumspin}.
We also explain how finite-temperature estimates yield ground-state
degeneracy estimates when the ground energy and a suitable excitation-gap
bound are available.

\medskip

A direct fixed-temperature benchmark combines coherent adaptations of the
uniform-grid ratio estimator of \citet{RouzeFrancaAlhambra2024} with
nearly unbiased estimation~\cite{Cornelissen2023}.
For extensive Hamiltonians at constant inverse temperature and fixed failure
probability, our short-schedule reconstruction improves its
\(\widetilde O(n^{3/2}/\epsilon)\) query cost to
\(\widetilde O(n^{5/4}/\epsilon)\).
Coherent-temperature access gives \(\widetilde O(n/\epsilon)\), also
on a uniform grid when its temperatures are supported at comparable
per-call cost. Remark~\ref{rem:uniform-grid-comparison} derives these
benchmarks and explains their relation to the original sampling bound.

\section{Problem Setup and Main Results}

We study partition-function estimation for quantum Hamiltonians with
real-valued spectra. We first state the problem and the two thermal-state
access models, then present query-complexity upper bounds for both algorithms
and a lower bound for block-encoding queries. We next explain the common algorithmic framework, followed by
the cooling-schedule guarantees and applications.

\medskip

\subsection{Estimation problem and access models}

Let \(H=\sum_a E_a\ket{a}\!\bra{a}\) be an \(n\)-qubit Hamiltonian
with \(0\preceq H\preceq hI\), where \(h>0\) is known.
Section~\ref{sec:technical-overview} explains this normalization and
its effect on the partition function.
For a finite inverse temperature \(\beta\geq0\), define
\[
Z_H(\beta):=\Tr(e^{-\beta H}),
\qquad
f_H(\beta):=\ln Z_H(\beta),
\qquad
\rho_\beta^H:=\frac{e^{-\beta H}}{Z_H(\beta)}.
\]
We omit the subscript or superscript \(H\) when the Hamiltonian is
clear from context. Our goal is to solve the following estimation
problem.

\begin{defn}[Partition-function estimation]
Given a Hamiltonian \(H\), a finite inverse temperature
\(\beta\geq0\), a desired precision \(0<\epsilon\leq1\), and a
failure probability \(0<p<1\), output an estimate
\(\widehat f(\beta)\) satisfying
\[
\bigl|\widehat f(\beta)-f(\beta)\bigr|\leq\epsilon
\]
with probability at least \(1-p\).
Equivalently, \(\widehat Z(\beta):=e^{\widehat f(\beta)}\)
satisfies
\[
e^{-\epsilon}Z(\beta)
\leq\widehat Z(\beta)
\leq e^\epsilon Z(\beta)
\]
with the same success probability.
\end{defn}

\medskip
\medskip

The thermal states used by our algorithms are the canonical
\emph{thermofield-double} (TFD) purifications
\begin{equation}
\label{eq:results-tfd-state}
\ket{\psi_\theta}
:=
\frac{1}{\sqrt{Z(\theta)}}
\sum_a e^{-\theta E_a/2}\ket{a}\otimes\ket{a^*},
\end{equation}
where \(\ket{a^*}\) is the complex conjugate of \(\ket a\) in the
computational basis. We distinguish two ways of accessing these states.

\begin{itemize}
\item \textbf{Fixed-temperature coherent access.}
For each classically specified inverse temperature \(\theta\), we
have a preparation unitary and its inverse satisfying
\begin{equation}
\label{eq:results-tfd-prep-U}
U_\theta\ket0=\ket{\psi_\theta}.
\end{equation}
The algorithm chooses the temperature classically, while its
estimation subroutines use the preparation coherently.
The precise resource model is given in
Definition~\ref{def:partition-function-resource-model}.

\item \textbf{Coherent-temperature access.}
The inverse temperature may itself be stored in a quantum register.
We have a parameter-controlled unitary and its inverse satisfying
\begin{equation}
\label{eq:results-coherent-tfd-prep}
U_{\mathrm{cohTFD}}\ket\theta\ket0
=\ket\theta\ket{\psi_\theta}.
\end{equation}
This model also supplies coherent preparation of the distribution
over reconstruction terms and reversible loading of their parameters,
together with the inverse operations, as specified in
Definition~\ref{def:coherent-temperature-access}.
\end{itemize}

\medskip

Both models use an \(\alpha\)-block encoding \(U_H\) of the
Hamiltonian: a unitary whose block on an all-zero ancilla register
equals \(H/\alpha\).
We take \(\alpha=\Theta(h)\) and assume the inverse and controlled
operations required by the algorithms. Let \(N_{\mathrm{TFD}}\),
\(N_{\mathrm{cohTFD}}\), and \(N_{\mathrm{BE}}\) count calls to the
fixed-temperature preparation, coherent-temperature preparation,
and block encoding, respectively, including their required inverse
and controlled uses. These are query counts.

\medskip

Both models reconstruct \(f(\beta)\) using a supplied cooling
schedule \(0=\beta_0<\cdots<\beta_\ell=\beta\), where \(\ell\)
is the number of intervals. We call it \emph{\(C\)-slowly varying} if
\(\lvert\langle\psi_{\beta_i}|\psi_{\beta_{i+1}}\rangle\rvert^2
\geq1/C\) for every interval. Throughout the reconstruction results,
\(C=O(1)\). Section~\ref{sec:adaptive-cooling-schedules}
gives an adaptive construction of such a schedule.
The notation \(\widetilde O\) suppresses polylogarithmic factors,
with the detailed dependences specified in the formal theorems.

\medskip

With fixed-temperature access, we choose a temperature classically
and call its TFD preparation circuit.
Coherent-temperature access prepares the matching TFD state in each
branch of a temperature superposition, preserving coherence.
Our first algorithm estimates reconstruction terms separately and
forms their weighted sum. Our second prepares term labels in
superposition, loads their parameters coherently, and estimates the
sum directly by quantum mean estimation.
For the same supplied schedule, the latter improves our query upper
bound by a factor of \(\sqrt{\ell}\), up to polylogarithmic factors.

\medskip

\subsection{Query complexity and optimality}

We now compare the query complexity of reconstruction under the
two access models. The following informal statements summarize
Theorems~\ref{thm:single-temperature-main}
and~\ref{thm:coherent-indexed-main}.

\begin{thm}[Fixed-temperature reconstruction, informal]
\label{thm:fixed-temp-per-term-results}
Let \(0\preceq H\preceq hI\), and assume an \(\alpha\)-block
encoding with \(\alpha=\Theta(h)\) and fixed-temperature coherent
access. Given a supplied \(C\)-slowly-varying schedule of length
\(\ell\) to a finite \(\beta>0\), with \(C=O(1)\),
Algorithm~\ref{alg:finite-temperature-reconstruction} estimates
\(f(\beta)\) to additive error \(\epsilon\), with failure
probability at most \(p\), using expected query counts
\[
N_{\mathrm{TFD}},N_{\mathrm{BE}}
=\widetilde O\!\left(
\frac{\sqrt{\ell}(\ell+h\beta)\ln(2/p)}{\epsilon}
\right).
\]
\end{thm}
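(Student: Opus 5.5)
The plan is to write \(f(\beta)=f(0)+\sum_{i=0}^{\ell-1}\bigl(f(\beta_{i+1})-f(\beta_i)\bigr)\), with \(f(0)=n\ln 2\) known, and to estimate each increment from probabilities that stay bounded away from zero. On an interval \([\beta_i,\beta_{i+1}]\) with \(\delta_i=\beta_{i+1}-\beta_i\), the TFD overlap is
\[
\lvert\langle\psi_{\beta_i}|\psi_{\beta_{i+1}}\rangle\rvert^2
=\frac{Z(\bar\beta_i)^2}{Z(\beta_i)Z(\beta_{i+1})},
\qquad \bar\beta_i=\tfrac{\beta_i+\beta_{i+1}}{2}.
\]
Hence \(\ln Z(\beta_{i+1})-\ln Z(\beta_i)=2\bigl(\ln Z(\bar\beta_i)-\ln Z(\beta_i)\bigr)-\ln(\text{overlap})\). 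Iterating this midpoint identity (recursive doubling) \(k\) times expresses the increment as a weighted sum of \(\ln\) overlaps between dyadically spaced TFDs, with weights \(2^j\), plus a single term \(2^k\bigl(\ln Z(\beta_i+\delta_i2^{-k})-\ln Z(\beta_i)\bigr)\). This last term equals \(2^k\ln\langle\psi_{\beta_i}|e^{-\delta_i2^{-k}H}|\psi_{\beta_i}\rangle\), an imaginary-time reweighting expectation. I choose \(k\) so that \(\delta_i2^{-k}h=O(1)\). Then \(e^{-\tau H}\) with \(\tau=\delta_i2^{-k}\) is implementable by QSVT from \(U_H\) with degree \(\widetilde O(\alpha\tau+1)=\widetilde O(1)\), and its expectation is \(\ge e^{-\tau h}=\Omega(1)\). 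Overlaps of nested sub-intervals of a \(C\)-slowly-varying interval are also \(\ge 1/C\), by log-convexity of \(Z\). So every elementary probability is \(\Omega(1)\).

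Each term is estimated with a nearly unbiased quantum log-estimator, following the approach of \cite{Cornelissen2023}: amplitude estimation of an \(\Omega(1)\) probability \(q\) to precision \(\sigma\) costs \(\widetilde O(1/\sigma)\) calls, and its log has bias \(\ll\sigma^2\) and variance \(O(\sigma^2)\). The total error is then controlled by the variance \(\sum_t w_t^2\sigma_t^2\) through Chebyshev's inequality, followed by a median of \(O(\ln(2/p))\) repetitions.

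The allocation step solves \(\min\sum_t c_t/\sigma_t\) subject to \(\sum_t w_t^2\sigma_t^2\le\epsilon^2\), where \(c_t\) is the per-call cost. Lagrange multipliers give total cost \(\bigl(\sum_t (w_tc_t)^{2/3}\bigr)^{3/2}/\epsilon\).

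The step I expect to be the main obstacle is avoiding the naive weight \(2^k\sim\delta_ih\). The weighted overlap terms must be small enough that the sum stays \(\widetilde O(\sqrt\ell(\ell+h\beta))\). I would exploit that the overlap at level \(j\) is \(1-O(4^{-j})\)-close to one on a slowly varying interval (Taylor expansion in the variance of \(H\)), so the estimator needs only precision relative to \(1-q\). Similarly, the reweighting term is tied to energy \(\langle H\rangle\le h\).

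With this, interval \(i\) contributes effective weight \(O(1+\delta_ih)\) at \(\widetilde O(1)\) cost per term. The allocation formula then gives
\[
\Bigl(\sum_i (1+\delta_ih)^{2/3}\Bigr)^{3/2}\le\sqrt{\ell}\sum_i(1+\delta_ih)=\sqrt{\ell}(\ell+h\beta)
\]
by Hölder's inequality. Multiplying by \(\widetilde O(1)\) and \(\ln(2/p)\) yields the stated \(N_{\mathrm{TFD}},N_{\mathrm{BE}}\), with the BE count carrying the extra QSVT degree factor, which is polylogarithmic.
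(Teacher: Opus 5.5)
Your outline is the paper's proof of Theorem~\ref{thm:single-temperature-main}. It uses the same left-anchored recursive doubling (your $\tau=\delta_i2^{-k}$ is the paper's $\delta_i$, with $k_i=\max\{0,\lceil\log_2(h\Delta_i)\rceil\}$). It obtains conditioning the same way, from overlap monotonicity under restriction and $h\tau\le1$. It uses Cornelissen--Hamoudi nearly unbiased estimation of a clipped logarithm, depths $M_{i,j}\propto a_{i,j}^{2/3}$ from a Lagrange multiplier, H\"older's inequality, and a median over $O(\ln(2/p))$ repetitions. The argument goes through, but only after you delete the paragraph you call the main obstacle.

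What you call the main obstacle is not one. The refinement you propose for it rests on a false claim, and you do not need it. The paper uses exactly the naive weights $a_{i,0}=2^{k_i}$ and $a_{i,j}=2^{k_i-j}$. Since $\sum_{j=0}^{k_i}a_{i,j}^{2/3}$ is a geometric series, it equals $O(2^{2k_i/3})=O(\max\{1,h\Delta_i\}^{2/3})$. That is precisely the per-interval contribution you feed into your final H\"older step.

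The claim that level-$j$ overlaps are $1-O(4^{-j})$ on a slowly varying interval fails in general. Take $H=h\,|1\rangle\langle1|_1\otimes I$ and $\beta_i=0$. Then $A(0,s)=\ln2+\ln(1+e^{-sh})-2\ln(1+e^{-sh/2})\in[0,\ln2]$, so $[0,\Delta]$ is $2$-slowly varying for every $\Delta$. Yet the nested overlaps stay near $1/2$ at every level until the interval length drops to order $1/h$. A Taylor expansion in $\operatorname{Var}(H)$ is valid only below the scale on which $f''$ varies, and that scale can be $1/h$, which is already the finest step.

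Two details your sketch leaves implicit, both handled in the paper:
\begin{itemize}
\item The bias of the logarithm is $O(\sigma_t^2)$, coming from the curvature of $\ln$; it is not $\ll\sigma_t^2$. You therefore need $\sum_t w_t\sigma_t^2=O(\epsilon^2)$ for the weighted bias. This holds under your allocation because $w_t^{-1/3}\le w_t^{2/3}$.
\item The reweighting block must be subnormalized, for example to $e^{-1}e^{-\tau H/2}$ as in the paper, so that the QSVT polynomial is bounded on $[-1,1]$. This shifts the logarithm by the known constant $2$. The QSVT accuracy must be $\nu=O(\epsilon/\sum_i2^{k_i})$, because that error is multiplied by the weight $2^{k_i}$. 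This costs only a $\log(1/\nu)$ factor in the degree.
\end{itemize}
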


\medskip

Coherent-temperature access allows us to sample reconstruction
terms in superposition and estimate their weighted sum directly.

\begin{thm}[Coherent-temperature reconstruction, informal]
\label{thm:results-indexed}
Let \(0\preceq H\preceq hI\), and assume an \(\alpha\)-block
encoding with \(\alpha=\Theta(h)\) and the coherent-temperature
access of Definition~\ref{def:coherent-temperature-access}.
Given a supplied \(C\)-slowly-varying schedule of length \(\ell\)
to a finite \(\beta>0\), with \(C=O(1)\),
Algorithm~\ref{alg:sampled-dyadic-reconstruction} estimates
\(f(\beta)\) to additive error \(\epsilon\), with failure
probability at most \(p\), using expected query counts
\[
N_{\mathrm{cohTFD}},N_{\mathrm{BE}}
=\widetilde O\!\left(
\frac{(\ell+h\beta)\ln(2/p)}{\epsilon}
\right).
\]
\end{thm}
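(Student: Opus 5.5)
We prove Theorem~\ref{thm:results-indexed} by writing \(f(\beta)\) as a single bounded-variance expectation over a classical distribution on reconstruction terms, and then estimating that expectation with one quantum mean-estimation call. The sum is \(f(\beta)=f(0)+\sum_{i}[f(\beta_{i+1})-f(\beta_i)]\), with \(f(0)=n\ln 2\) known exactly.

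\textbf{Step 1: rewrite each increment.} On each interval we use the recursive-doubling identity. Since \(|\langle\psi_{\theta}|\psi_{\theta'}\rangle|^2=Z(\tfrac{\theta+\theta'}{2})^2/(Z(\theta)Z(\theta'))\), the quantity \(2f(\tfrac{\theta+\theta'}{2})-f(\theta)-f(\theta')\) is the log of an overlap probability. Iterating the identity dyadically down to step size \(\delta=\Theta(1/h)\) expresses \(f(\beta_{i+1})-f(\beta_i)\) as a weighted sum of two kinds of term:
\begin{itemize}
\item logs of overlaps \(\ln|\langle\psi_{\theta}|\psi_{\theta'}\rangle|^2\), with weights decaying geometrically in the depth;
\item \(O(h(\beta_{i+1}-\beta_i)+1)\) short imaginary-time terms \(\ln\langle\psi_\theta|e^{-\delta H}|\psi_\theta\rangle\).
\end{itemize}
The short-time terms are implemented by QSVT with \(O(\alpha\delta\log)\) block-encoding calls. \(C\)-slow variation, together with the fact that overlaps between points inside an interval are at least as large as those at its endpoints (log-convexity of \(Z\)), keeps every elementary probability \(q\) above a constant. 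The bound \(0\preceq H\preceq hI\) gives \(q\ge e^{-\delta h}=\Omega(1)\) for the reweighting terms.

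\textbf{Step 2: bound the total weight.} Collecting all terms gives \(f(\beta)-f(0)=\sum_j w_j\ln q_j\) with total weight \(W=\sum_j|w_j|=O(\ell+h\beta)\), up to logs coming from the dyadic depth.

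\textbf{Step 3: build the per-branch estimator.} Using Definition~\ref{def:coherent-temperature-access}, we prepare \(\sum_j\sqrt{|w_j|/W}\ket j\) and load the parameters reversibly. In branch \(j\) we run the quantum inverse-binomial estimator. Coherently, we repeat the Bernoulli-\(q_j\) experiment, either a swap/Hadamard-type overlap test on two TFD copies or a QSVT reweighting test. We record the waiting time \(K\) to the first success, truncated at \(K_{\max}=O(\log(W/\epsilon))\), and reversibly compute the harmonic sum \(-\sum_{k<K}1/(k\cdot\text{const})\). This is the van Opheusden estimator of \(\ln q_j\) with exact unbiasedness before truncation. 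Since \(q_j\) is bounded below, the truncation bias is \(e^{-\Omega(K_{\max})}\le\epsilon/(2W)\) and the second moment is \(O(1)\). Each branch uses \(O(1)\) expected calls to \(U_{\mathrm{cohTFD}}\), and \(O(\alpha\delta)=O(1)\) block-encoding calls per trial for reweighting terms, up to logs.

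\textbf{Step 4: estimate the mean and count queries.} The random variable \(W\cdot\mathrm{sign}(w_j)\cdot\widehat{\ln q_j}\) has mean within \(\epsilon/2\) of \(f(\beta)-f(0)\) and standard deviation \(O(W)\). We apply a quantum mean estimator for bounded-variance variables (Montanaro/Hamoudi type) to precision \(\epsilon/2\). This uses \(\widetilde O(W\ln(1/p)/\epsilon)\) invocations of the sampling circuit and its inverse, which gives \(N_{\mathrm{cohTFD}},N_{\mathrm{BE}}=\widetilde O((\ell+h\beta)\ln(2/p)/\epsilon)\).

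\textbf{Main obstacle.} The hardest part is making the estimator coherent with a fixed circuit length. The waiting time is random, so the circuit must be truncated at \(K_{\max}\) while all garbage is uncomputed. We also need the phase-estimation-based mean estimator to work with a bounded second moment rather than a bounded range. We would handle this by bounding the range by \(O(W\log K_{\max})\) and using the variance-adaptive estimator, while checking that the expected query count absorbs only logarithmic factors.
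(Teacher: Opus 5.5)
Your outline follows the paper's proof step for step. It uses the recursive-doubling identity with \(2^{k_i}\delta_i=\Delta_i\) and \(h\delta_i\le1\), and conditioning via overlap monotonicity and \(e^{-h\delta_i}\ge e^{-1}\). It bounds the total weight by \(L=\sum_i(2^{k_i+1}-1)=O(\ell+h\beta)\) and prepares branch amplitudes \(\sqrt{a_{i,j}/L}\). It forms a truncated harmonic (inverse-binomial) sample with \(M=O(\log(L/\epsilon))\) trials, then applies Montanaro's bounded-variance mean estimation and takes a median over \(O(\ln(2/p))\) runs.

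\textbf{Gap: the overlap test.} The step that fails as written is your Bernoulli experiment for the overlap branches. Write \(\omega:=|\langle\psi_\theta|\psi_{\theta'}\rangle|^2\). A swap test on \(\ket{\psi_\theta}\otimes\ket{\psi_{\theta'}}\) succeeds with probability \(q=(1+\omega)/2\); a Hadamard test succeeds with \((1+\sqrt\omega)/2\). Either way, the harmonic sample targets \(\ln q\), not the \(\ln\omega\) that enters the reconstruction. This cannot be repaired cheaply. The unbiased series \(-\ln\omega=\sum_m(2(1-q))^m/m\) leads to the sample \(\sum_{m\le G}2^m/m\), whose second moment is at least \(\sum_g q\,(2(1-\omega))^g/g^2\). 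That sum diverges for \(\omega<1/2\), and after truncation at \(M\) it grows polynomially in \(L/\epsilon\), destroying the \(\widetilde O(L/\epsilon)\) count. The case \(\omega<1/2\) is the typical one: generated schedules have \(O(\beta_i,\Delta_i)\le e^{-2}\) on every nonfinal interval, and this is exactly the top-level overlap term. The paper instead uses the compute--uncompute test. It applies \(U_{\mathrm{cohTFD}}\) at \(\beta_i+2^j\delta_i\), then \(U_{\mathrm{cohTFD}}^\dagger\) at \(\beta_i\), and flags the all-zero outcome on the TFD registers. The success probability is then exactly \(O(\beta_i,2^j\delta_i)\), at two coherent-TFD calls per trial.

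\textbf{Smaller omissions.} On the reweighting branches you leave out a normalization. QSVT needs the polynomial bounded on all of \([-1,1]\), where \(e^{-tx}>1\) for \(x<0\). The paper therefore block-encodes \(e^{-1}e^{-\delta_iH/2}\), so the flag probability is \(p_{i,0}\approx e^{-2}R(\beta_i,\delta_i)\) and the sample is \(2-\widetilde H_{G_M}\). The QSVT error \(\nu\) is multiplied by the total reweighting weight \(\sum_i2^{k_i}\le L\), so you need \(\nu=O(\epsilon/L)\); the paper takes \(\nu=\gamma_0\epsilon/(64L)\). This choice also keeps \(p_{i,0}\ge\gamma_0/2\), so your truncation bound still applies, at only a \(\log(L/\epsilon)\) cost in QSVT degree. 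Finally, a coherent sample must run all \(M\) trials, so it costs \(M\) branch applications rather than \(O(1)\) expected calls. This is still absorbed into \(\widetilde O\).
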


\medskip

These asymptotic query bounds also cover schedule generation
when the same TFD oracles can also prepare states at the temperatures
needed to generate the schedule; see Section~\ref{sec:adaptive-cooling-schedules}.

\medskip

The fixed-temperature reconstruction bound contains an additional
factor of \(\sqrt\ell\) from estimating the logarithmic terms
separately with coefficient-dependent depths. Coherent branch
sampling allows quantum mean estimation to estimate the weighted
sum without this factor in the query bound.
In the extensive constant-temperature regime
\(h=O(n)\), \(\beta=\Theta(1)\), and
\(\ell=\widetilde O(\sqrt n)\), the two bounds become
\[
\widetilde O\!\left(\frac{n^{5/4}\ln(2/p)}{\epsilon}\right)
\qquad\text{and}\qquad
\widetilde O\!\left(\frac{n\ln(2/p)}{\epsilon}\right).
\]

\medskip

Efficient preparation at each fixed temperature does not automatically
give efficient coherent-temperature preparation.
Suppose we have \(m\) separate preparation circuits, one for each
temperature. An index register identifies the temperature in each
branch. A direct coherent implementation applies all \(m\) circuits
in sequence, with each circuit acting only on the branch for its
temperature.
The total gate count is the sum of all controlled circuit sizes,
plus the cost of the index controls.
These controls can be implemented with \(O(m)\) additional gates
using unary iteration
\cite[Section~III~A]{Babbush2018LinearT}.
If the preparation circuits have similar sizes, this direct
construction uses roughly \(m\) times as many gates as one
fixed-temperature preparation.
This overhead can cancel the benefit of fewer queries.

\medskip

A single preparation circuit with temperature as an input can avoid this overhead by computing its parameters
coherently from the temperature input.
Section~\ref{sec:end-to-end} gives conditions under which the query
improvement carries over to gate-count bounds.
Our spin-chain construction achieves this using a common
quasi-adiabatic preparation path and temperature-controlled
subroutines
\cite{bergamaschi2026fastmixingquantumspin,chen2023efficient}.
In this application, the full gate-count upper bound decreases from
\(\widetilde O(n^{9/4}\ln(2/p)/\epsilon)\)
for fixed-temperature reconstruction to
\(\widetilde O(n^2\ln(2/p)/\epsilon)\)
for coherent-temperature reconstruction.
These bounds include auxiliary operations, so the coherent method
retains an \(n^{1/4}\) improvement in the gate-count upper bounds,
up to polylogarithmic factors.
Proposition~\ref{prop:coherent-tfd-spin-chains} gives the coherent
TFD resource bounds, and
Corollary~\ref{cor:spin-chain-circuit-comparison} gives the resulting
reconstruction costs.
The construction and auxiliary-cost analysis are provided in
Appendix~\ref{app:coherent-temperature-circuits}.

\medskip

The following informal lower-bound statement summarizes
Theorem~\ref{thm:linear-lower-bound}.

\begin{thm}[Hidden energy-offset lower bound, informal]
\label{thm:results-lower-bound}
In the hidden-offset oracle model of
Theorem~\ref{thm:linear-lower-bound}, there are Hamiltonians
\(H_c=A+cI\) satisfying \(0\preceq H_c\preceq hI\), with
\(A\) fixed and known, for which estimating \(\ln Z_{H_c}(\beta)\)
to additive error \(\epsilon\) with success probability at least
\(2/3\) requires
\[
\Omega\!\left(\frac{\alpha\beta}{\epsilon}\right)
\]
block-encoding queries in the worst case, including for algorithms
with variable stopping time, where the cost is measured in expectation.
Here \(0<\epsilon\leq\beta\min\{h,\alpha\}/16\), and the
\(\alpha\)-block encoding is the only oracle that depends on
the unknown offset \(c\). TFD preparation may be supplied for free,
provided its full unitary implementation and all other auxiliary
operations are independent of \(c\).
\end{thm}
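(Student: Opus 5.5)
We reduce the problem to phase estimation of a hidden eigenphase and then use the standard Heisenberg-limited lower bound for that task. Since \(f_{H_c}(\beta)=f_A(\beta)-\beta c\) and \(A\) is known, an \(\epsilon\)-accurate estimate of \(\ln Z_{H_c}(\beta)\) gives an \((\epsilon/\beta)\)-accurate estimate of \(c\). It therefore suffices to show that distinguishing two offsets \(c_0,c_1\) with \(|c_1-c_0|=\delta:=4\epsilon/\beta\) needs \(\Omega(\alpha/\delta)\) queries. With \(\delta=4\epsilon/\beta\), any estimator with error at most \(\epsilon\) separates the two hypotheses, so this gives the claimed \(\Omega(\alpha\beta/\epsilon)\).

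First fix the instance. Take, for example, \(A=(h/4)I\), or any fixed \(A\) with \(0\preceq A\preceq (h/2)I\), and let \(c\in\{0,\delta\}\). The condition \(\epsilon\le\beta\min\{h,\alpha\}/16\) gives \(\delta\le\min\{h,\alpha\}/4\). Hence \(0\preceq H_c\preceq hI\), and \(\|H_c\|/\alpha\le 1\), so an \(\alpha\)-block encoding exists.

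Next we choose block encodings whose dependence on \(c\) is confined to a single phase. Write \(H_c/\alpha=A/\alpha+(c/\alpha)I\). A natural family is
\[
U_c=\begin{pmatrix} H_c/\alpha & \sqrt{I-H_c^2/\alpha^2}\\ \sqrt{I-H_c^2/\alpha^2} & -H_c/\alpha\end{pmatrix}.
\]
With \(A\) a multiple of the identity, this is \(I\otimes R(\theta_c)\) for a reflection-type rotation \(R\) on one qubit with \(\cos\theta_c=(a+c)/\alpha\). Its eigenphases \(\pm\theta_c\) change by \(\Delta\theta=\Theta(\delta/\alpha)\) between the two hypotheses, provided \(\cos\theta_c\) stays away from \(\pm1\), say in \([1/4,3/4]\); I would rescale \(A\) to arrange this. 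The TFD states are \(c\)-independent because Gibbs states are shift invariant, so we supply them, together with all auxiliary operations, as fixed unitaries independent of \(c\), as the statement allows.

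The core step is the hybrid/continuity argument. For a \(T\)-query algorithm \(V_T U_c V_{T-1}\cdots U_c V_0\) (controlled uses and inverses included), the standard argument gives
\[
\|\psi_T^{(0)}-\psi_T^{(1)}\|\le T\max\|U_{c_0}-U_{c_1}\|\quad\text{(with controlled versions)}.
\]
We have \(\|U_{c_0}-U_{c_1}\|=O(\Delta\theta)=O(\delta/\alpha)\), since the eigenvalues differ by that amount. Success probability at least \(2/3\) on both hypotheses requires final states at constant trace distance, so \(T=\Omega(\alpha/\delta)=\Omega(\alpha\beta/\epsilon)\).

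For variable stopping time, I would apply a Markov truncation. If the expected number of queries is \(\bar T\), stop the algorithm at \(3\bar T\) queries and output arbitrarily when it is cut off. This lowers the success probability by at most \(1/3\)-ish. To keep a constant gap, I would start from success probability \(2/3\) and truncate at \(k\bar T\) for a suitably large constant \(k\), or first amplify by median-of-a-constant-number-of-runs. The fixed-\(T\) bound then applies.

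The main obstacle is making the \(\|U_{c_0}-U_{c_1}\|=O(\delta/\alpha)\) bound hold for the specific block encoding the oracle model fixes, since a badly chosen encoding could leak \(c\) more strongly. A second difficulty is handling controlled queries, whose relative phase is what phase estimation exploits. I would address both by defining the oracle family explicitly as above, so that controlled-\(U_c\) also moves by \(O(\delta/\alpha)\) in operator norm, and noting that the lower bound only needs to hold for some valid instance.
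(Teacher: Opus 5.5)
Your skeleton matches the paper's proof. You use two hypotheses $c_0=0$ and $c_1=4\epsilon/\beta$ separated by a midpoint test. You bound the per-query operator-norm change of the fixed oracle $U_c$ and apply the hybrid argument over forward, adjoint and controlled queries. You then truncate by Markov at a large constant multiple of the worst-case expected query count.

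The gap is in the estimate everything rests on, $\|U_{c_1}-U_{c_0}\|=O(\delta/\alpha)$, which you justify through eigenphases. The bound is true, but not for the reason you give. For scalar $A$ the oracle is $I\otimes R(\theta_c)$ with $R(\theta)=\cos\theta\,\sigma_z+\sin\theta\,\sigma_x$. This is a reflection, with trace $0$ and determinant $-1$, so its eigenvalues are $\pm1$ for every $c$. No eigenphase moves; instead the eigenbasis rotates, and $R(\theta_0)$ and $R(\theta_1)$ do not commute. Spectral closeness would not bound the operator-norm distance anyway, since non-commuting unitaries with identical spectra can be far apart: $\|\sigma_z-\sigma_x\|=\sqrt2$.

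What you need is a Lipschitz bound for the matrix-valued map. With $x=\cos\theta=(a+c)/\alpha$,
$\|R(\theta_1)-R(\theta_0)\|=2|\sin((\theta_1-\theta_0)/2)|\le|\theta_1-\theta_0|\le|x_1-x_0|/\sqrt{1-x_{\max}^2}$.
This is the paper's computation $\|u'(x)\|=1/\sqrt{1-x^2}$. The formal model actually takes $A$ non-scalar. In that case all $M_c$ are functions of $A$, so in the eigenbasis of $A$ the oracle is a direct sum of such $2\times2$ blocks, and the same bound holds blockwise. This also removes your restriction to $A$ proportional to the identity.

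Your instance must also be chosen so that this bound applies. From $A=(h/4)I$, or $A\preceq(h/2)I$, together with $c\le\min\{h,\alpha\}/4$, it does not follow that $\|H_c\|\le\alpha$ when $h$ is large compared with $\alpha$. The Lipschitz constant further requires $\|H_c\|/\alpha$ to stay bounded away from $1$. A window like $[1/4,3/4]$ is not what is needed, and it is unattainable when $h<\alpha/4$.

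Take $0\preceq A\preceq\frac{\min\{h,\alpha\}}{4}I$, as the paper does. Then $\operatorname{spec}(M_c)\subseteq[0,1/2]$, so $\|U_{c_1}-U_{c_0}\|\le 2(c_1-c_0)/\alpha=8\epsilon/(\alpha\beta)$. The $1/3$ distinguishing gap then gives $q\ge\alpha\beta/(24\epsilon)$. Truncating at $\lfloor 12Q\rfloor$ queries keeps success at least $7/12$ and the gap at least $1/6$, which gives $Q\ge\alpha\beta/(576\epsilon)$ in expectation.
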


\medskip

For \(\alpha=\Theta(h)\), \(h\beta\geq1\),
\(\ell=O(1+h\beta)\), fixed \(p\leq1/3\), and the
accuracy range above, the coherent reconstruction therefore matches
the lower bound up to polylogarithmic factors; see
Corollary~\ref{cor:coherent-block-query-optimality}.
In particular, extensive normalization \(h=\Theta(n)\) and constant
\(\beta\) give a matching \(\widetilde\Theta(n/\epsilon)\)
block-encoding query bound.

\medskip

Classical and quantum product estimators usually assume energy
values and Gibbs-sampling or reflection access for classical
configurations. Our algorithms use TFD preparation and a block encoding
of a quantum Hamiltonian. Table~\ref{tab:prior-work} in the appendix
compares their guarantees and resource units.

\subsection{Algorithmic idea and reconstruction framework}

A cooling schedule expresses the target as a telescopic sum,
\begin{equation}
\label{eq:results-telescopic-sum}
f(\beta)
=f(0)+\sum_{i=0}^{\ell-1}
\bigl(f(\beta_{i+1})-f(\beta_i)\bigr),
\end{equation}
where \(f(0)=n\ln2\) is known exactly.
For \(\theta,s\geq0\), define the overlap probability and
reweighting ratio
\begin{equation}
\label{eq:results-O-R}
O(\theta,s):=\lvert\langle\psi_\theta|\psi_{\theta+s}\rangle\rvert^2,
\qquad
R(\theta,s):=\Tr(\rho_\theta e^{-sH}),
\end{equation}
and write
\begin{equation}
\label{eq:results-A-and-B}
A(\theta,s):=-\ln O(\theta,s),
\qquad
B(\theta,s):=\ln R(\theta,s).
\end{equation}
The logarithm of \(R\) gives an increment of \(f\), while the
negative logarithm of \(O\) gives a second finite difference.

\medskip

The slowly-varying condition controls overlaps but need not keep
\(R\) bounded away from zero over a full schedule interval.
The \emph{recursive-doubling decomposition} starts with a short
reweighting step and recovers the full increment using overlap
corrections:
\begin{equation}
\label{eq:results-dyadic-reconstruction}
f(\beta_{i+1})-f(\beta_i)
=2^{k_i}B(\beta_i,\delta_i)
+\sum_{j=1}^{k_i}2^{k_i-j}A(\beta_i,2^j\delta_i).
\end{equation}
Here \(k_i\geq0\) is an integer and
\(2^{k_i}\delta_i=\beta_{i+1}-\beta_i\), with
\(h\delta_i\leq1\). The step lengths double at each stage.
Section~\ref{sec:technical-overview} proves this identity and shows
that all probabilities appearing in it are bounded below by a
positive constant. It also implements
reweighting by quantum singular value transformation (QSVT) and
specifies the known normalization correction.

\medskip

Taking the logarithm of an amplitude estimate can introduce bias.
The fixed-temperature method controls each term's bias and variance
using the nearly unbiased amplitude estimator of
Cornelissen and Hamoudi~\cite{Cornelissen2023}, then allocates
estimation depths according to the reconstruction coefficients.
The coherent-temperature method prepares branches in superposition,
with probabilities proportional to their coefficients.
In each branch, it performs a fixed number of trials, counts the
failures before the first success (or all trials if none succeeds),
and converts this count to a harmonic number.
The mean of this sample approximates the negative logarithm
of the branch success probability.
We then construct a random variable \(X\) whose mean approximates
the reconstruction sum \(f(\beta)-f(0)\).
Quantum mean estimation~\cite{Montanaro2015} estimates this mean.

\medskip

The flowchart summarizes the construction. Its first step generates
the schedule; the reconstruction bounds above assume that this
schedule is supplied and count its generation separately.

\begin{center}
\setlength{\fboxsep}{6pt}
\begin{tabular}{c}

\fbox{
\begin{minipage}{0.92\textwidth}
\textbf{Generate a slowly-varying cooling schedule}
\smallskip

Construct a schedule
\[
    0=\beta_0<\beta_1<\cdots<\beta_\ell=\beta
\]
such that the overlap between TFD states at consecutive temperatures
is bounded below by a constant.
This keeps the logarithms of the overlap probabilities well
conditioned and decomposes the target quantity as
\[
    f(\beta)
    =
    f(0)+
    \sum_{i=0}^{\ell-1}
    \bigl(f(\beta_{i+1})-f(\beta_i)\bigr).
\]
The schedule-generation procedure is given in
Algorithm~\ref{alg:quantum-cooling-schedule-generation}.
\end{minipage}
}

\\[1em]
\(\downarrow\)
\\

\fbox{
\begin{minipage}{0.92\textwidth}
\textbf{Recursive-doubling decomposition:
express the target as a weighted logarithmic sum}
\smallskip

For each schedule interval, choose \(k_i\) and \(\delta_i\) such that
\[
    2^{k_i}\delta_i=\beta_{i+1}-\beta_i,
    \qquad
    h\delta_i\leq1.
\]
The recursive-doubling identity then gives
\begin{equation}
\label{eq:results-full-dyadic-reconstruction}
f(\beta)
=
f(0)
+
\sum_{i=0}^{\ell-1}
\left[
    2^{k_i}B(\beta_i,\delta_i)
    +
    \sum_{j=1}^{k_i}
    2^{k_i-j}A(\beta_i,2^j\delta_i)
\right].
\end{equation}
Thus \(f(\beta)\) is reduced to a weighted sum of logarithms of
projection probabilities together with the known reweighting
correction.
The slowly-varying condition bounds all overlap probabilities away
from zero, while \(h\delta_i\leq1\) gives an analogous constant lower
bound for the probabilities defining \(B(\beta_i,\delta_i)\).
\end{minipage}
}

\\[1em]
\(\downarrow\)
\\

\fbox{
\begin{minipage}{0.92\textwidth}
\textbf{Estimate the weighted logarithmic sum}
\smallskip

\textit{Fixed-temperature coherent access:}
Algorithm~\ref{alg:finite-temperature-reconstruction}
treats the branches separately, calling
Algorithm~\ref{alg:local-log-estimate} to estimate each logarithmic
quantity \(A(\beta_i,2^j\delta_i)\) or \(B(\beta_i,\delta_i)\).
The reconstruction algorithm chooses the estimation depths
according to the coefficients and combines the estimates using
Eq.~\eqref{eq:results-full-dyadic-reconstruction}.
The total complexity is given in
Theorem~\ref{thm:fixed-temp-per-term-results}.

\medskip

\textit{Coherent-temperature access:}
Algorithm~\ref{alg:sampled-dyadic-reconstruction}
prepares the branch index in superposition, with probabilities
proportional to the reconstruction coefficients.
It uses the selected branch to generate a scaled logarithmic
sample whose expectation approximates the full weighted sum in
Eq.~\eqref{eq:results-full-dyadic-reconstruction}.
The algorithm prepares this sample coherently and applies
quantum mean estimation to estimate its expectation.
The total complexity is given in
Theorem~\ref{thm:results-indexed}.
\end{minipage}
}

\\[1em]
\(\downarrow\)
\\

\fbox{
\begin{minipage}{0.92\textwidth}
\textbf{Output}
\smallskip

Since \(f(0)=n\ln2\) is known exactly, output
\(\widehat f(\beta)\) satisfying
\[
    |\widehat f(\beta)-f(\beta)|\leq\epsilon
\]
with probability at least \(1-p\).
Equivalently, \(\widehat Z(\beta):=e^{\widehat f(\beta)}\) satisfies
\[
    e^{-\epsilon}Z(\beta)
    \leq
    \widehat Z(\beta)
    \leq
    e^\epsilon Z(\beta).
\]
\end{minipage}
}

\end{tabular}
\end{center}

\medskip

We next state how the schedule can be obtained, then describe two
applications of the reconstruction.

\subsection{Cooling schedules for real-valued spectra}

The overlap-based procedure of
Arunachalam et al.~\cite[Theorem~3.4]{Arunachalam2022} generates
the required schedules. Adapting existing schedule-length analyses
\cite{Stefankovic2009,Arunachalam2022} to real-valued spectra
gives the following bounds; see
Theorems~\ref{thm:adaptive_schedule_existence}
and~\ref{thm:quantum-schedule-arunachalam} and
Section~\ref{sec:adaptive-cooling-schedules}.

\begin{thm}[Adaptive cooling schedules, informal]
\label{thm:results-schedule}
Let \(0\preceq H\preceq hI\), and assume fixed-temperature TFD
access at every adaptively queried temperature.
For any finite \(\beta>0\), the overlap-based procedure constructs,
with probability at least \(1-p\), a \(15\)-slowly-varying schedule
from zero to \(\beta\) of length
\[
\ell=O\!\left(1+\sqrt{h\beta}\right).
\]
It uses \(\widetilde O(\ell\ln(2+h\beta)\ln(2/p))\) TFD queries
and no Hamiltonian block-encoding queries. The length guarantee
requires no spectral-gap assumption.

\medskip

For the low-temperature specialization, suppose \(H\) is
nonconstant. Let \(E_0\) be its ground energy, \(N_0\) its ground
degeneracy, \(\bar H:=H-E_0I\), and \(E_s>0\) the smallest
positive eigenvalue of \(\bar H\).
If \(\ln N_0+2<n\ln2\), define the finite endpoint \(\gamma\)
by \(\ln\Tr(e^{-\gamma\bar H})=\ln N_0+2\).
At this endpoint the schedule length also satisfies
\[
\ell=O\!\left(
1+\sqrt{n\ln\!\left(\frac{e\|\bar H\|_\infty}{E_s}\right)}
\right).
\]
Here \(\|\bar H\|_\infty\) is the operator norm.
If \(\ln N_0+2\geq n\ln2\), one interval already gives an
\(e^2\)-slowly-varying schedule to any finite endpoint.
\end{thm}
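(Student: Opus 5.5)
The proof splits into an existence statement for slowly-varying schedules and an algorithmic statement, following the \v{S}tefankovi\v{c}--Vempala--Vigoda analysis \cite{Stefankovic2009} adapted to real spectra.

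\textbf{Overlap as a convexity defect.} For TFD states,
\[
O(\theta,s)=\frac{Z(\theta+s/2)^2}{Z(\theta)Z(\theta+s)},
\]
so \(-\ln O(\theta,s)=f(\theta)+f(\theta+s)-2f(\theta+s/2)\). Thus slow variation is a bound on the second difference of the convex, decreasing function \(f\). The derivative \(-f'(\theta)=\langle H\rangle_\theta\) lies in \([0,h]\) and is nonincreasing in \(\theta\).

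\textbf{Existence via a greedy argument.} I would build the schedule greedily: from \(\beta_i\), take the largest \(\beta_{i+1}\le\beta\) with \(-\ln O(\beta_i,\beta_{i+1}-\beta_i)\le \ln 15\). Each interval is then saturated except possibly the last. To bound the number of saturated intervals, I would use the \v{S}tefankovi\v{c} et al. lemma. A defect of order one on \([a,b]\) forces one of two things. Either the slope \(\langle H\rangle\) drops by a constant factor across the interval, or the interval is long and \(\ln Z\) changes by a constant.

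The cleaner route is their bound: the number of steps is \(O(\sqrt{\ln(Z(0)/Z(\beta))\cdot\ln(\text{slope ratio})})\) plus logarithmic terms. On real spectra the slope ratio may be unbounded, because \(\langle H\rangle_\beta\) can be tiny. To avoid this I would work directly with the mean-value form. For a saturated interval of length \(s\), the defect is at most \((s^2/4)\max\Var_\theta(H)\). It is also at most \((s/2)\bigl(\langle H\rangle_{\beta_i}-\langle H\rangle_{\beta_{i+1}}\bigr)\). Since \(\langle H\rangle\) decreases by at most \(h\) in total, the second bound with Cauchy--Schwarz gives
\[
\ell\lesssim \sqrt{\textstyle\sum s_i\cdot\sum \Delta_i}\le\sqrt{\beta h}.
\]
This yields \(\ell=O(1+\sqrt{h\beta})\) with no gap assumption.

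\textbf{Low-temperature form.} Here I would replace \(\sum s_i\,\Delta_i\)-type quantities by a logarithmic slope decay. Shifting to \(\bar H\) leaves the defects unchanged. Before the endpoint \(\gamma\) we have \(\ln\Tr(e^{-\theta\bar H})\ge \ln N_0+2\), so excited states still carry weight. From this I would aim to show \(\langle\bar H\rangle_\theta\gtrsim E_s e^{-\dots}\), giving a slope ratio at most \(\|\bar H\|_\infty/E_s\) up to constants. The total log-partition-function drop is at most \(n\ln 2\). Splitting the schedule into \(O(\ln(e\|\bar H\|/E_s))\) slope-halving phases and applying Cauchy--Schwarz across phases then gives \(\sqrt{n\ln(e\|\bar H\|/E_s)}\).

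\textbf{Degenerate case.} If \(\ln N_0+2\ge n\ln 2\), then \(Z(\theta)\) ranges within a factor \(e^2\) of \(2^n\) after shifting. The one-interval overlap \(Z(\theta/2)^2/(Z(0)Z(\theta))\) is therefore at least \(e^{-2}\).

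\textbf{Algorithm.} Invoking \cite[Theorem~3.4]{Arunachalam2022}, each next temperature is found by binary or exponential search in \(s\). There are \(O(\ln(2+h\beta))\) candidates, each tested with constant-precision amplitude estimation of \(O\) via a swap/Hadamard test. A test costs \(O(1)\) TFD calls, and boosting to failure probability \(p/(\ell\ln)\) costs \(\ln(2/p)\) plus logarithmic factors. Estimation slack means the found schedule is only approximately saturated. The threshold \(15\) absorbs this constant-factor slack, and the length bound survives up to constants. No block encoding is needed.

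\textbf{Main obstacle.} I expect the low-temperature slope lower bound near \(\gamma\) to be the delicate step.
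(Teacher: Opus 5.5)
Your finite-endpoint bound and your treatment of the search procedure follow the paper's own argument. By convexity, every interval has $-\ln O\le\tfrac{s_i}{2}d_i$, where $d_i$ is the drop of $\langle H\rangle$ (your $\Delta_i$). Cauchy--Schwarz with $\sum_i s_i\le\beta$ and $\sum_i d_i\le h$ then gives $\ell\le 1+\sqrt{h\beta}/2$ for any schedule whose nonfinal ratios are at least $e^2$. Generation is \cite[Theorem~3.4]{Arunachalam2022} with search precision $1/(2h)$. Two points need sharpening.

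\textbf{Transfer to the generated schedule.} What carries the length bound over is the lower bound $e^2$ on the nonfinal ratios, i.e.\ a lower bound on the defect. The threshold $15$ plays no role here. The lower bound holds because the rejected end of the final bracket lies within $1/(2h)$. Over that distance, $e^{-ht}Z(\theta)\le Z(\theta+t)\le Z(\theta)$ changes an overlap by at most a factor $e^{1/2}$.

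\textbf{Failure budget.} $\ell$ is not known in advance, so split the failure budget using the cap $1+\sqrt{h\beta}/2$, or use per-call budgets $6p/(\pi^2j^2)$ as the paper does.

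For the low-temperature bound you depart from the paper. The paper applies \cite[Lemma~4.3]{Stefankovic2009}, $L_{\mathrm{PB}}\le 1+\sqrt{(f(0)-f(\gamma))\ln\frac{-f'(0)}{-f'(\gamma)}}$, to a perfectly balanced schedule and then uses $\ell\le L_{\mathrm{PB}}+1$. Write $\bar Z(\theta):=\Tr(e^{-\theta\bar H})$.

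\textbf{Gap at phase boundaries.} Within each slope-halving phase your Cauchy--Schwarz step is sound. As written, however, it ignores the intervals that straddle a phase boundary. There can be up to $K=O(\ln(e\|\bar H\|_\infty/E_s))$ of them, and $K$ is not $O(\sqrt{nK})$ once $\ln(\|\bar H\|_\infty/E_s)\gg n$.

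\textbf{Patch.} Each nonfinal interval's drop in $\ln\bar Z$ is at least its defect. So $\ell=O(n)$ regardless, and $\min\{K,n\}\le\sqrt{nK}$ absorbs the straddlers.

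\textbf{Phase-free alternative.} Let $D_i$ be the drop of $f$ and $r_i$ the ratio of $-f'$ at the two endpoints. Then $-\ln O\le D_i\ln r_i$: use $-\ln O\le\tfrac{D_i}{2}(r_i-1)$ if $r_i\le e$, and $-\ln O\le D_i$ if $r_i>e$. Since $\sum_i\ln r_i$ telescopes, Cauchy--Schwarz reproduces the paper's displayed bound directly.

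\textbf{Your flagged step.} The slope lower bound you called delicate is immediate. For $\theta\le\gamma$ we have $\bar Z(\theta)\ge N_0e^2$, hence $\langle\bar H\rangle_\theta\ge E_s(1-e^{-2})$ with no exponential loss. This is exactly the paper's estimate at $\gamma$.

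\textbf{Degenerate case.} Use $\bar Z(\theta/2)\ge\bar Z(\theta)$. This gives overlap at least $\bar Z(\theta)/2^n\ge N_0/2^n\ge e^{-2}$. Bounding the numerator and denominator separately yields only $e^{-4}$.
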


\medskip

The first bound gives \(\ell=O(\sqrt n)\) when
\(h=O(n)\) and \(\beta=O(1)\), without any spectral-gap assumption.
For low temperatures, Remark~\ref{thm:adaptive_schedule_tight}
gives a family with \(E_s=2^{-n}\) and
\(\|\bar H\|_\infty=\Theta(1)\).
At the specified endpoint, every \(e^2\)-slowly-varying schedule
needs \(\Omega(n)\) intervals, and a schedule with \(O(n)\)
intervals exists. The low-temperature bound is therefore tight
for this family.

\medskip

The generation algorithm is given the final inverse temperature.
The parameters \(E_0,N_0,E_s\) are used only in the length analysis
and are not required as inputs.
For \(0<\epsilon\le1\), including schedule generation does not
change the asymptotic reconstruction query bounds, provided the
same TFD oracles can also prepare states at the temperatures
needed to generate the schedule.

\subsection{Applications}
\label{subsec:ground-degeneracy}

A sufficiently large finite inverse temperature suppresses excited
states. A gap lower bound and a ground-energy estimate then allow
us to recover the ground-state degeneracy as follows.

\begin{cor}[Ground-state degeneracy from finite temperature]
\label{prop:ground-degeneracy}
Let \(E_0\) be the ground energy of \(H\), let
\(N_0:=\dim\ker(H-E_0I)\), and suppose that a known
\(\Delta_{\mathrm{lb}}>0\) satisfies
\(E-E_0\geq\Delta_{\mathrm{lb}}\) for every eigenvalue \(E>E_0\).
For \(0<\epsilon\leq1\), choose a finite inverse temperature
\begin{equation}
\label{eq:ground-degeneracy-beta}
\beta\geq\frac{n\ln2+\ln(6/\epsilon)}{\Delta_{\mathrm{lb}}}.
\end{equation}
If estimates \(\widehat f(\beta)\) and \(\widehat E_0\) satisfy
\[
|\widehat f(\beta)-f(\beta)|\leq\frac{\epsilon}{6},
\qquad
|\widehat E_0-E_0|\leq\frac{\epsilon}{6\beta},
\]
then
\begin{equation}
\label{eq:ground-degeneracy-estimator}
\widehat N_0:=\exp\!\left(\widehat f(\beta)+\beta\widehat E_0\right)
\end{equation}
satisfies \(|\widehat N_0-N_0|\leq\epsilon N_0\).
If each estimation guarantee fails with probability at most \(p/2\),
the relative-error guarantee holds with probability at least \(1-p\).
\end{cor}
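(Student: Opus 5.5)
The plan is to compare \(\widehat N_0\) with the exact quantity \(f(\beta)+\beta E_0\), which is close to \(\ln N_0\), and to collect the three error sources in the exponent. Write \(Z(\beta)=e^{-\beta E_0}\Tr(e^{-\beta\bar H})\) with \(\bar H=H-E_0I\). Then
\[
f(\beta)+\beta E_0=\ln\bigl(N_0+S\bigr),\qquad S:=\sum_{E_a>E_0}e^{-\beta(E_a-E_0)}.
\]
The gap assumption gives \(0\le S\le(2^n-N_0)e^{-\beta\Delta_{\mathrm{lb}}}\le 2^n e^{-\beta\Delta_{\mathrm{lb}}}\). The choice of \(\beta\) in \eqref{eq:ground-degeneracy-beta} makes \(e^{-\beta\Delta_{\mathrm{lb}}}\le 2^{-n}\epsilon/6\), so \(S\le\epsilon/6\le(\epsilon/6)N_0\), using \(N_0\ge1\). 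Hence
\[
0\le \ln(N_0+S)-\ln N_0\le \ln(1+\epsilon/6)\le\epsilon/6.
\]

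Next I would bound the estimation errors. The exponent error satisfies
\[
\bigl|\widehat f(\beta)+\beta\widehat E_0-f(\beta)-\beta E_0\bigr|\le \epsilon/6+\beta\cdot\epsilon/(6\beta)=\epsilon/3.
\]
Combining this with the truncation bound gives
\[
\ln\widehat N_0-\ln N_0\in[-\epsilon/3,\;\epsilon/2].
\]
Exponentiating, \(\widehat N_0/N_0\in[e^{-\epsilon/3},e^{\epsilon/2}]\).

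It remains to convert this to relative error for \(0<\epsilon\le1\). The lower side is easy: \(1-e^{-\epsilon/3}\le\epsilon/3\le\epsilon\). For the upper side I would use convexity of \(x\mapsto e^{x}\) on \([0,1/2]\), which gives \(e^{x}-1\le 2(e^{1/2}-1)x\le 1.3x\) there. With \(x=\epsilon/2\) this yields \(e^{\epsilon/2}-1\le 0.65\,\epsilon\le\epsilon\). Therefore \(|\widehat N_0-N_0|\le\epsilon N_0\).

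For the probability statement, a union bound over the two estimation events, each failing with probability at most \(p/2\), shows that both guarantees hold simultaneously with probability at least \(1-p\), and the deterministic argument above then applies. The only delicate point is ensuring the truncation term \(S\) is controlled relative to \(N_0\), which uses \(N_0\ge1\) and the crude count \(2^n\) of excited states; the constants \(6\) leave ample slack, so no step should be a real obstacle.
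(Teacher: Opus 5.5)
Your proof is correct and follows the paper's argument. You use the same split \(e^{\beta E_0}Z(\beta)=N_0+T_\beta\) with \(T_\beta\le 2^n e^{-\beta\Delta_{\mathrm{lb}}}\le\epsilon/6\le(\epsilon/6)N_0\), the same \(\epsilon/3\) bound on the exponent error, and a union bound; the only difference is cosmetic: you add the errors in log space and bound \(e^{\epsilon/2}-1\) by convexity, whereas the paper multiplies \(e^{\epsilon/3}(1+\epsilon/6)\) and uses \(e^{x}\le 1+2x\).
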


\medskip

The proof and resource bounds are given in
Appendix~\ref{app:ground-state-degeneracy},
Theorem~\ref{thm:ground-state-degeneracy}
and Corollary~\ref{cor:ground-state-degeneracy-cost}.
The gap sets the required temperature, which in turn sets the
ground-energy precision. The cost of estimating the ground energy
is additional to the reconstruction cost.

\medskip
As a second application, we consider partition-function estimation
for one-dimensional spin chains.
For one-dimensional systems with uniformly bounded finite-range
interactions and open boundaries, canonical TFD preparation has
polylogarithmic depth at temperatures in a fixed bounded interval;
see \cite[Corollary~I.1]{bergamaschi2026fastmixingquantumspin}.
Under all implementation and accuracy assumptions of
Corollary~\ref{4.1}, a supplied short schedule therefore gives
expected reconstruction depth
\(\widetilde O(n^{5/4}\ln(2/p)/\epsilon)\) for fixed-temperature
access, or \(\widetilde O(n\ln(2/p)/\epsilon)\) with an efficient
coherent-temperature interface.
Proposition~\ref{prop:coherent-tfd-spin-chains} supplies the
coherent-temperature interface for this application.

\medskip

Our main technical contribution is the reconstruction framework
developed in Section~\ref{sec:technical-overview}.
It forms the foundation of both algorithms.
Sections~\ref{sec:single-temperature}
and~\ref{sec:sampled-dyadic} build on this framework to give
the two estimation methods and their query-complexity bounds.

\section{Common Reconstruction Framework}
\label{sec:technical-overview}

Starting from a given cooling schedule, we express
\(f(\beta)-f(0)\) as a weighted sum of logarithms of quantum
projection probabilities. We identify these probabilities, derive
the recursive-doubling identity, and prove that its probabilities
are bounded away from zero. This common reconstruction underlies
both estimation algorithms in Sections~\ref{sec:single-temperature}
and~\ref{sec:sampled-dyadic}.

\medskip

The assumption \(0\preceq H\preceq hI\) does not require the
exact ground energy. Given a known lower bound
\(L\leq\lambda_{\min}(H_{\mathrm{orig}})\), set
\(H:=H_{\mathrm{orig}}-LI\) and choose \(h>0\) so that
\(H\preceq hI\). The normalized Gibbs and canonical TFD states
are unchanged, while
\[
f_{H_{\mathrm{orig}}}(\beta)=f_H(\beta)-\beta L.
\]
The shifted Hamiltonian makes \(e^{-sH}\) a contraction and gives
\(e^{-hs}\leq\Tr(\rho_\beta e^{-sH})\leq1\) for \(s\geq0\).
In particular, \(hs\leq1\) bounds the reweighting ratio below by
\(e^{-1}\). A looser lower bound may increase \(h\) and hence the
resource cost.

\subsection{TFD states, overlaps, and reweighting ratios}
\label{subsec:tfd-local-quantities}

We begin with the purified Gibbs states used throughout the
reconstruction.

\begin{defn}[Thermofield-double state]
\label{def:tfd-state}
Let \(H=\sum_a E_a|a\rangle\!\langle a|\) be the spectral
decomposition of \(H\). Let \(\mathcal H^*\) be the conjugate
purifying space and let \(|a^*\rangle\in\mathcal H^*\) be the
complex conjugate of \(|a\rangle\) in the computational basis.
For inverse temperature \(\beta\geq0\), define the
thermofield-double state
\begin{equation}
\label{eq:technical-tfd}
    |\psi_\beta\rangle
    :=
    \frac{1}{\sqrt{Z(\beta)}}
    \sum_a e^{-\beta E_a/2}
    |a\rangle\otimes|a^*\rangle.
\end{equation}
Tracing out the purifying subsystem gives the Gibbs state
\(\rho_\beta\); the reduced state on the conjugate subsystem is
\(\rho_\beta^*\).
\end{defn}

\medskip

Both algorithms use TFD-overlap probabilities and reweighting
ratios to reconstruct the log-partition function.
We define these quantities below.

\begin{defn}[Overlaps and reweighting ratios]
\label{def:local-thermal-quantities}
For \(\beta,s\geq0\), define
\begin{equation}
\label{eq:technical-O-R}
    O(\beta,s)
    :=
    |\langle\psi_\beta|\psi_{\beta+s}\rangle|^2
    =
    \frac{Z(\beta+s/2)^2}{Z(\beta)Z(\beta+s)},
    \qquad
    R(\beta,s)
    :=
    \Tr(\rho_\beta e^{-sH})
    =
    \frac{Z(\beta+s)}{Z(\beta)},
\end{equation}
and their logarithms
\begin{equation}
\label{eq:technical-A-B}
    A(\beta,s):=-\ln O(\beta,s),
    \qquad
    B(\beta,s):=\ln R(\beta,s).
\end{equation}
Equivalently,
\[
    A(\beta,s)
    =
    f(\beta)+f(\beta+s)
    -2f\!\left(\beta+\frac{s}{2}\right),
    \qquad
    B(\beta,s)
    =
    f(\beta+s)-f(\beta).
\]
\end{defn}

\medskip

Thus \(B(\beta,s)\) is the increment of \(f\) over an interval of
length \(s\), whereas \(A(\beta,s)\) is a centered second finite
difference of \(f\).
Since \(f\) is convex, \(A(\beta,s)\geq0\).
The corresponding thermodynamic derivatives are recorded in
Remark~\ref{rem:thermodynamic-derivatives}.

\subsection{Projection probabilities for the two quantities}
\label{subsec:projection-probabilities}

The estimation algorithms operate on quantum measurement
probabilities. For overlaps, we use the TFD preparations
\(U_\theta|0\rangle=|\psi_\theta\rangle\).
For reweighting, we use a block encoding of
\(e^{-1}e^{-sH/2}\), constructed from the block encoding of \(H\)
by QSVT; see Appendix~\ref{app:single-temperature},
Subsection~\ref{app:qsvt-imaginary-time}.
The following lemma gives the two success probabilities and
explains why the reweighting logarithm requires adding \(2\).

\begin{lem}[Projection probabilities for overlaps and reweighting]
\label{lem:local-quantity-projection}
Let \(Q=S\otimes R\), where \(S\) is the physical register and
\(R\) is the purifying register.

\medskip

For the overlap probability \(O(\beta,s)\), define
\[
    |\phi_{\beta,s}^{\rm ov}\rangle
    :=
    U_\beta^\dagger U_{\beta+s}|0\rangle_Q,
    \qquad
    \Pi^{\rm ov}:=|0\rangle\!\langle0|_Q.
\]
Then
\begin{equation}
\label{eq:overlap-projection-probability}
    \bigl\|\Pi^{\rm ov}
    |\phi_{\beta,s}^{\rm ov}\rangle\bigr\|^2
    =
    O(\beta,s).
\end{equation}

\medskip

For the reweighting ratio \(R(\beta,s)\), let \(D_s\) be a unitary acting
on an \(a\)-qubit ancilla register \(A\) and the physical register
\(S\), whose all-zero ancilla block is an operator
\(\widetilde A_s\) satisfying, for an accuracy parameter \(\nu\geq0\),
\[
    (\langle0^a|_A\otimes I_S)
    D_s
    (|0^a\rangle_A\otimes I_S)
    =
    \widetilde A_s,
    \qquad
    \|\widetilde A_s-e^{-1}e^{-sH/2}\|\leq\nu.
\]
Define
\[
    |\phi_{\beta,s}^{R}\rangle
    :=
    (D_s\otimes I_R)
    \bigl(|0^a\rangle_A|\psi_\beta\rangle_{SR}\bigr),
    \qquad
    \Pi^{R}
    :=
    |0^a\rangle\!\langle0^a|_A\otimes I_{SR}.
\]
The corresponding projection probability is
\begin{equation}
\label{eq:R-implemented-probability}
    \widetilde R(\beta,s)
    :=
    \bigl\|\Pi^{R}|\phi_{\beta,s}^{R}\rangle\bigr\|^2
    =
    \langle\psi_\beta|
    \bigl(\widetilde A_s^\dagger\widetilde A_s\otimes I_R\bigr)
    |\psi_\beta\rangle.
\end{equation}
In the exact case \(\widetilde A_s=e^{-1}e^{-sH/2}\),
\[
    \widetilde R(\beta,s)=e^{-2}R(\beta,s).
\]
More generally, if both \(\widetilde A_s\) and
\(e^{-1}e^{-sH/2}\) are contractions, then
\begin{equation}
\label{eq:R-probability-error}
    |\widetilde R(\beta,s)-e^{-2}R(\beta,s)|
    \leq2\nu.
\end{equation}
\end{lem}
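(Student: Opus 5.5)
The plan is to verify the two parts of the lemma by direct computation, handling the overlap statement first and then the reweighting statement.

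\textbf{Overlap part.} Since \(U_\beta\) is unitary,
\[
\Pi^{\rm ov}|\phi^{\rm ov}_{\beta,s}\rangle=|0\rangle\langle0|U_\beta^\dagger U_{\beta+s}|0\rangle
=\langle\psi_\beta|\psi_{\beta+s}\rangle\,|0\rangle .
\]
Its squared norm is therefore \(|\langle\psi_\beta|\psi_{\beta+s}\rangle|^2=O(\beta,s)\). To connect with Definition~\ref{def:local-thermal-quantities}, expand both states in the basis \(|a\rangle|a^*\rangle\) of Definition~\ref{def:tfd-state}. This gives
\[
\langle\psi_\beta|\psi_{\beta+s}\rangle=\frac{\sum_a e^{-(\beta+s/2)E_a}}{\sqrt{Z(\beta)Z(\beta+s)}}=\frac{Z(\beta+s/2)}{\sqrt{Z(\beta)Z(\beta+s)}},
\]
which is real and positive and recovers the stated closed form.

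\textbf{Reweighting part: the exact identity.} Insert \(I_A=|0^a\rangle\langle0^a|+(I-|0^a\rangle\langle0^a|)\) and use the block-encoding condition. This yields
\[
\Pi^R|\phi^R_{\beta,s}\rangle=|0^a\rangle_A\otimes(\widetilde A_s\otimes I_R)|\psi_\beta\rangle,
\]
so \(\widetilde R(\beta,s)=\langle\psi_\beta|(\widetilde A_s^\dagger\widetilde A_s\otimes I_R)|\psi_\beta\rangle\), which is Eq.~\eqref{eq:R-implemented-probability}. For any operator \(M\) on \(S\), tracing out \(R\) gives \(\langle\psi_\beta|(M\otimes I)|\psi_\beta\rangle=\Tr(\rho_\beta M)\). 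In the exact case, \(M=e^{-2}e^{-sH}\) because \(H\) is Hermitian, so the expression equals \(e^{-2}\Tr(\rho_\beta e^{-sH})=e^{-2}R(\beta,s)\).

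\textbf{Reweighting part: the error bound.} Write \(A_0:=e^{-1}e^{-sH/2}\) and \(E:=\widetilde A_s-A_0\), with \(\|E\|\le\nu\). Then
\[
\widetilde A_s^\dagger\widetilde A_s-A_0^\dagger A_0=\widetilde A_s^\dagger E+E^\dagger A_0 .
\]
Since both \(\widetilde A_s\) and \(A_0\) are contractions, the operator norm of the right-hand side is at most \(2\nu\). The expectation of a bounded operator in a unit vector is bounded by its norm, so Eq.~\eqref{eq:R-probability-error} follows. Note that \(A_0\) is automatically a contraction because \(H\succeq0\) and \(s\ge0\), so the hypothesis is effectively needed only for \(\widetilde A_s\). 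This is in any case automatic, since \(\widetilde A_s\) is a block of the unitary \(D_s\).

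\textbf{Expected difficulty.} There is no real obstacle here. The only point needing care is bookkeeping with the tensor factors \(A\), \(S\), \(R\). In particular, one must check that \(D_s\otimes I_R\) acts trivially on the purifier, so that the partial trace reduces the expectation to \(\rho_\beta\).
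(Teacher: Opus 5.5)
Your proposal is correct and follows essentially the same route as the paper: you read the overlap probability off as the squared all-zero amplitude $\langle\psi_\beta|\psi_{\beta+s}\rangle$, and you note that projecting the ancilla leaves $|0^a\rangle(\widetilde A_s\otimes I_R)|\psi_\beta\rangle$. You then get the $2\nu$ bound from $\|\widetilde A_s^\dagger\widetilde A_s-e^{-2}e^{-sH}\|\le 2\nu$ via the split $\widetilde A_s^\dagger E+E^\dagger A_0$, exactly as the paper does here and in Lemma~\ref{lem:single-temperature-dilation-stability}. Your remark that the contraction hypothesis holds automatically ($H\succeq0$, $s\ge0$, and $\widetilde A_s$ is a block of the unitary $D_s$) is also the justification the paper gives in that later stability lemma.
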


\begin{proof}
For the overlap quantity,
\[
    \langle0|_Q
    U_\beta^\dagger U_{\beta+s}
    |0\rangle_Q
    =
    \langle\psi_\beta|\psi_{\beta+s}\rangle,
\]
and squaring the amplitude of the all-zero outcome gives
Eq.~\eqref{eq:overlap-projection-probability}.

\medskip

For the second quantity, projecting the dilation ancilla onto
\(|0^a\rangle_A\) leaves
\[
    |0^a\rangle_A
    (\widetilde A_s\otimes I_R)
    |\psi_\beta\rangle_{SR}.
\]
Taking its squared norm gives
Eq.~\eqref{eq:R-implemented-probability}.
For the ideal operator,
\[
    \bigl\|
        (e^{-1}e^{-sH/2}\otimes I_R)|\psi_\beta\rangle
    \bigr\|^2
    =
    e^{-2}\Tr(\rho_\beta e^{-sH})
    =
    e^{-2}R(\beta,s).
\]
Finally,
\[
    \|\widetilde A_s^\dagger\widetilde A_s-e^{-2}e^{-sH}\|
    \leq2\nu
\]
whenever the two operators are contractions, which implies
Eq.~\eqref{eq:R-probability-error}.
\end{proof}

\medskip

Thus reweighting logarithms require the known additive correction
\(2\), with approximation error controlled by
Eq.~\eqref{eq:R-probability-error}. We next determine which
probabilities to estimate and how to combine their logarithms.

\subsection{Recursive-doubling reconstruction}
\label{subsec:recursive-doubling}

Consider a cooling schedule
\(0=\beta_0<\beta_1<\cdots<\beta_\ell=\beta\), and write
\(\Delta_i:=\beta_{i+1}-\beta_i\).
Even when \(O(\beta_i,\Delta_i)\geq1/C\), the reweighting ratio
\(R(\beta_i,\Delta_i)\) need not have a constant lower bound.
For example, \(H=hI\) gives \(O(\beta_i,\Delta_i)=1\) but
\(R(\beta_i,\Delta_i)=e^{-h\Delta_i}\).
We therefore start with a short reweighting step and recover the
full increment by doubling the interval length and adding overlap
corrections. For each interval we choose
\begin{equation}
\label{eq:dyadic-parameters}
    k_i
    :=
    \max\{0,\lceil\log_2(h\Delta_i)\rceil\},
    \qquad
    \delta_i
    :=
    \frac{\Delta_i}{2^{k_i}}.
\end{equation}
By construction, \(2^{k_i}\delta_i=\Delta_i\) and
\(h\delta_i\leq1\). Since \(0\preceq H\preceq hI\), this gives
\(R(\beta_i,\delta_i)\geq e^{-h\delta_i}\geq e^{-1}\).

\medskip

The following identity expresses each increment
\(f(\beta_{i+1})-f(\beta_i)\) exactly in terms of one \(B\)-term at
the smallest scale \(\delta_i\) and a sequence of \(A\)-terms over
intervals of lengths
\(2\delta_i,4\delta_i,\ldots,2^{k_i}\delta_i\).
Figure~\ref{fig:recursive-doubling} illustrates the successive
doubling steps and the corresponding correction terms.

\begin{lem}[Recursive-doubling reconstruction formula]
\label{lem:recursive-doubling}
For the parameters in Eq.~\eqref{eq:dyadic-parameters},
\begin{equation}
\label{eq:interval-dyadic-reconstruction}
    f(\beta_{i+1})
    =
    f(\beta_i)
    +
    2^{k_i}B(\beta_i,\delta_i)
    +
    \sum_{j=1}^{k_i}
    2^{k_i-j}A(\beta_i,2^j\delta_i).
\end{equation}
Consequently,
\begin{equation}
\label{eq:full-dyadic-reconstruction}
    f(\beta)
    =
    f(0)
    +
    \sum_{i=0}^{\ell-1}
    \left[
        2^{k_i}B(\beta_i,\delta_i)
        +
        \sum_{j=1}^{k_i}
        2^{k_i-j}A(\beta_i,2^j\delta_i)
    \right].
\end{equation}
\end{lem}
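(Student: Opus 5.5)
The plan is a telescoping argument in the dyadic scale, using only the finite-difference forms of \(A\) and \(B\) from Definition~\ref{def:local-thermal-quantities}. Fix an interval \(i\) and abbreviate \(\theta:=\beta_i\), \(\delta:=\delta_i\), \(k:=k_i\). For \(j=0,1,\ldots,k\), define the partial increments
\[
    D_j:=f(\theta+2^j\delta)-f(\theta)=B(\theta,2^j\delta).
\]
Then \(D_0=B(\theta,\delta)\). Since \(2^k\delta=\Delta_i\), we also have \(D_k=f(\beta_{i+1})-f(\beta_i)\). The goal is therefore to express \(D_k\) in terms of \(D_0\) and overlap terms.

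The key step is a one-step recursion. Apply the identity \(A(\theta,s)=f(\theta)+f(\theta+s)-2f(\theta+s/2)\) with \(s=2^j\delta\), and subtract \(f(\theta)\) appropriately:
\[
    A(\theta,2^j\delta)
    =\bigl[f(\theta+2^j\delta)-f(\theta)\bigr]-2\bigl[f(\theta+2^{j-1}\delta)-f(\theta)\bigr]
    =D_j-2D_{j-1}.
\]
Hence \(D_j=2D_{j-1}+A(\theta,2^j\delta)\) for \(1\le j\le k\). This is exactly the doubling relation: the increment over a doubled interval is twice the half-interval increment plus a second-difference correction.

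Next, unroll the recursion by induction on \(k\). The claim is
\[
    D_k=2^kD_0+\sum_{j=1}^{k}2^{k-j}A(\theta,2^j\delta).
\]
The case \(k=0\) is trivial, since the sum is empty and \(D_0=D_0\). For the inductive step, \(D_{k}=2D_{k-1}+A(\theta,2^k\delta)\); multiplying the inductive hypothesis by \(2\) shifts every power \(2^{k-1-j}\) to \(2^{k-j}\), and the new term carries coefficient \(2^0\). Substituting \(D_0=B(\theta,\delta)\) and \(D_k=f(\beta_{i+1})-f(\beta_i)\) gives Eq.~\eqref{eq:interval-dyadic-reconstruction}.

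Finally, summing over \(i\) and using the telescopic identity Eq.~\eqref{eq:results-telescopic-sum} yields Eq.~\eqref{eq:full-dyadic-reconstruction}. There is no real obstacle here; the identity is purely algebraic. The only points to check are that the parameters in Eq.~\eqref{eq:dyadic-parameters} satisfy \(2^{k}\delta=\Delta_i\) exactly, which holds by definition, and that all arguments \(\theta+2^j\delta\) lie in \([\beta_i,\beta_{i+1}]\), so that every \(f\) value involved is finite. The case \(k_i=0\) reduces to \(\Delta_i=\delta_i\) and is covered by the empty sum.
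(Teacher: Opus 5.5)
Your proposal is correct and follows the paper's own proof: the paper also rewrites \(A(\beta_i,2^r\delta_i)\) as the doubling relation \(f(\beta_i+2^r\delta_i)-f(\beta_i)=2[f(\beta_i+2^{r-1}\delta_i)-f(\beta_i)]+A(\beta_i,2^r\delta_i)\), iterates it over \(r=1,\ldots,k_i\), substitutes \(B(\beta_i,\delta_i)\) and \(2^{k_i}\delta_i=\Delta_i\), and telescopes over the schedule. Your explicit induction on the partial increments \(D_m\) (which is really over \(m\le k\) for fixed \(k\)) and your remark on the empty sum when \(k_i=0\) are the same steps, just spelled out.
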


\begin{proof}
Fix \(i\).
For each \(r=1,\ldots,k_i\), the definition of \(A\) gives
\[
    A(\beta_i,2^r\delta_i)
    =
    f(\beta_i)
    +
    f(\beta_i+2^r\delta_i)
    -
    2f(\beta_i+2^{r-1}\delta_i),
\]
and hence
\[
    f(\beta_i+2^r\delta_i)-f(\beta_i)
    =
    2\bigl[
        f(\beta_i+2^{r-1}\delta_i)-f(\beta_i)
      \bigr]
    +
    A(\beta_i,2^r\delta_i).
\]
Iterating this identity over \(r=1,\ldots,k_i\) yields
\[
    f(\beta_i+2^{k_i}\delta_i)-f(\beta_i)
    =
    2^{k_i}
    \bigl[f(\beta_i+\delta_i)-f(\beta_i)\bigr]
    +
    \sum_{j=1}^{k_i}
    2^{k_i-j}A(\beta_i,2^j\delta_i).
\]
Using
\[
    f(\beta_i+\delta_i)-f(\beta_i)
    =
    B(\beta_i,\delta_i)
\]
and \(2^{k_i}\delta_i=\beta_{i+1}-\beta_i\) proves
Eq.~\eqref{eq:interval-dyadic-reconstruction}.
When \(k_i=0\), the sum over \(j\) is empty and equals zero.
Summing over the schedule intervals then telescopes to
Eq.~\eqref{eq:full-dyadic-reconstruction}.
\end{proof}

\begin{figure}[tbp]
\centering
\begin{tikzpicture}[x=3cm,y=2.5cm,>=stealth]

\coordinate (b0) at (0,0);
\coordinate (b1) at (0.5,0);
\coordinate (b2) at (1,0);
\coordinate (b4) at (2,0);
\coordinate (b8) at (4.4,0);

\draw[thick] (b0) -- (b8);

\foreach \p/\lab in {
  b0/$\beta_i$,
  b1/$\beta_i+\delta_i$,
  b2/$\beta_i+2\delta_i$,
  b4/$\beta_i+4\delta_i$,
  b8/$\beta_i+2^{k_i}\delta_i=\beta_{i+1}$}
{
  \draw (\p) ++(0,0.08) -- ++(0,-0.16);
  \node[below=4pt] at (\p) {\lab};
}

\draw[decorate,decoration={brace,mirror,amplitude=4pt}]
  (b0) -- (b1)
  node[midway,below=18pt] {$B(\beta_i,\delta_i)$};

\draw[decorate,decoration={brace,amplitude=4pt}]
  ($(b0)+(0,0.45)$) -- ($(b2)+(0,0.45)$)
  node[midway,above=10pt] {$A(\beta_i,2\delta_i)$};

\draw[decorate,decoration={brace,amplitude=4pt}]
  ($(b0)+(0,0.95)$) -- ($(b4)+(0,0.95)$)
  node[midway,above=10pt] {$A(\beta_i,4\delta_i)$};

\draw[decorate,decoration={brace,amplitude=4pt}]
  ($(b0)+(0,1.45)$) -- ($(b8)+(0,1.45)$)
  node[midway,above=10pt] {$A(\beta_i,2^{k_i}\delta_i)$};

\node at (3.55,0.95) {$\cdots$};

\end{tikzpicture}

\caption{Recursive-doubling reconstruction on the schedule interval
\([\beta_i,\beta_{i+1}]\). The base term \(B(\beta_i,\delta_i)\) gives
the increment at the finest scale, while each term
\(A(\beta_i,2^j\delta_i)\) supplies the correction required to double
the reconstructed scale from \(2^{j-1}\delta_i\) to
\(2^j\delta_i\). The horizontal spacing is schematic.}
\label{fig:recursive-doubling}
\end{figure}

For later use, define the reconstruction coefficients
\begin{equation}
\label{eq:reconstruction-coefficients}
    a_{i,0}:=2^{k_i},
    \qquad
    a_{i,j}:=2^{k_i-j},
    \quad 1\leq j\leq k_i,
\end{equation}
and the corresponding logarithmic quantities
\begin{equation}
\label{eq:local-logarithmic-terms}
    Y_{i,0}:=B(\beta_i,\delta_i),
    \qquad
    Y_{i,j}:=A(\beta_i,2^j\delta_i),
    \quad 1\leq j\leq k_i.
\end{equation}
Equation~\eqref{eq:full-dyadic-reconstruction} can then be written
compactly as
\begin{equation}
\label{eq:weighted-log-reconstruction}
    f(\beta)-f(0)
    =
    \sum_{i=0}^{\ell-1}
    \sum_{j=0}^{k_i}
    a_{i,j}Y_{i,j}.
\end{equation}
By Definition~\ref{def:local-thermal-quantities}, the exact
logarithmic terms are
\[
    Y_{i,0}
    =
    \ln R(\beta_i,\delta_i),
    \qquad
    Y_{i,j}
    =
    -\ln O(\beta_i,2^j\delta_i)
    \quad (j\geq1).
\]

\medskip

We multiply the short-step \(B\)-term by \(2^{k_i}\) and add
the weighted \(A\)-terms to obtain
\(f(\beta_{i+1})-f(\beta_i)\) exactly.
Remark~\ref{rem:recursive-doubling-interpretation} explains
the roles of these terms in more detail.
\subsection{Conditioning through a slowly-varying schedule}
\label{subsec:slowly-varying-conditioning}

The logarithms are well conditioned when the probabilities are
bounded below by a positive constant.
For overlaps, it is enough to require this bound between
consecutive schedule temperatures.

\begin{defn}[Slowly-varying schedule]
\label{def:slowly-varying-schedule}
A schedule \(0=\beta_0<\beta_1<\cdots<\beta_\ell=\beta\)
is \(C\)-slowly varying if
\begin{equation}
\label{eq:slowly-varying-condition}
    O(\beta_i,\beta_{i+1}-\beta_i)\geq\frac1C
\end{equation}
for every \(i\), where \(C=O(1)\).
\end{defn}

\medskip

The recursive-doubling formula uses shorter intervals of the form
\([\beta_i,\beta_i+2^j\delta_i]\).
By convexity of \(f\), shortening a schedule interval does not
decrease its overlap probability.

\begin{lem}[Overlap monotonicity under restriction]
\label{lem:overlap-monotonicity}
If
\[
    \beta_1\leq\beta_2\leq\beta_3\leq\beta_4,
\]
then
\[
    O(\beta_2,\beta_3-\beta_2)
    \geq
    O(\beta_1,\beta_4-\beta_1).
\]
In particular, every overlap probability used in the recursive-doubling
decomposition within a
\(C\)-slowly-varying schedule interval is at least \(1/C\).
\end{lem}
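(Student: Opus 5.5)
We work with the closed form of the overlap from Definition~\ref{def:local-thermal-quantities}. For an interval \([x,y]\) with \(x\le y\), write
\[
    -\ln O(x,y-x) = f(x)+f(y)-2f\!\left(\tfrac{x+y}{2}\right) =: g(x,y).
\]
The claim is then equivalent to
\[
    g(\beta_2,\beta_3)\le g(\beta_1,\beta_4)
    \qquad\text{whenever } [\beta_2,\beta_3]\subseteq[\beta_1,\beta_4].
\]
The only property of \(f\) we need is convexity. This holds because \(f(\beta)=\ln\Tr e^{-\beta H}\) has second derivative \(\Var_{\rho_\beta}(H)\ge0\), or alternatively by H\"older's inequality on the eigenvalue sum. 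Since the smaller interval can be reached from the larger one by moving one endpoint at a time, it suffices to show that \(g\) is nonincreasing in its left endpoint and nondecreasing in its right endpoint.

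For the right endpoint, fix \(x\) and let \(y<y'\). Then
\[
    g(x,y')-g(x,y) = \bigl[f(y')-f(y)\bigr] - 2\bigl[f(m')-f(m)\bigr],
\]
where \(m=(x+y)/2\) and \(m'=(x+y')/2\). The segment \([m,m']\) has length \((y'-y)/2\), which is exactly half the length of \([y,y']\), and it lies to the left of \([y,y']\): indeed \(m\le y\) and \(m'\le y'\). Convexity makes slopes of secants nondecreasing as the segment shifts right, so
\[
    \frac{f(m')-f(m)}{m'-m} \le \frac{f(y')-f(y)}{y'-y}.
\]
This gives \(2[f(m')-f(m)]\le f(y')-f(y)\), hence \(g(x,y')\ge g(x,y)\). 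The left endpoint is handled by the mirror argument, applying the same secant comparison to \(\tilde f(t):=f(-t)\), which is also convex. To avoid differentiability issues, we use the discrete secant-slope monotonicity of convex functions rather than derivatives. With both monotonicity facts in hand, we chain
\[
    g(\beta_2,\beta_3)\le g(\beta_2,\beta_4)\le g(\beta_1,\beta_4).
\]

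For the ``in particular'' statement, every overlap used in Lemma~\ref{lem:recursive-doubling} has the form \(O(\beta_i,2^j\delta_i)\) with \(2^j\delta_i\le\Delta_i\). Its interval \([\beta_i,\beta_i+2^j\delta_i]\) is therefore contained in \([\beta_i,\beta_{i+1}]\). Applying the main claim with \(\beta_1=\beta_2=\beta_i\), \(\beta_3=\beta_i+2^j\delta_i\) and \(\beta_4=\beta_{i+1}\), and then using Definition~\ref{def:slowly-varying-schedule}, gives a lower bound of \(1/C\).

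The main obstacle is the secant-slope comparison. We must check carefully that the half-length midpoint segment sits entirely to the left of the moved-endpoint segment, since it may overlap it. Overlap causes no difficulty, because secant slopes of a convex function are monotone in both endpoints whenever \(m\le y\) and \(m'\le y'\).
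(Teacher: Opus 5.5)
Your proof is correct and is essentially the paper's argument: both reduce the claim to monotonicity of \(F(x,y)=f(x)+f(y)-2f\bigl(\tfrac{x+y}{2}\bigr)\) in each endpoint via convexity of \(f\), and then chain the two endpoint moves. The only difference is cosmetic: the paper differentiates, using \(\partial_xF=f'(x)-f'\bigl(\tfrac{x+y}{2}\bigr)\le0\) and \(\partial_yF=f'(y)-f'\bigl(\tfrac{x+y}{2}\bigr)\ge0\), whereas you use secant-slope monotonicity, which is equally valid (and \(f\) is smooth here anyway).
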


\begin{proof}
For \(x\leq y\), define
\(F(x,y):=f(x)+f(y)-2f((x+y)/2)\).
Convexity of \(f\) gives
\[
\begin{aligned}
\partial_xF(x,y)&=f'(x)-f'\!\left(\frac{x+y}{2}\right)\leq0,\\
\partial_yF(x,y)&=f'(y)-f'\!\left(\frac{x+y}{2}\right)\geq0.
\end{aligned}
\]
Moving the left endpoint to the right or the right endpoint to the
left therefore cannot increase \(F\), so
\(F(\beta_2,\beta_3)\leq F(\beta_1,\beta_4)\).
The claim follows from \(O(x,y-x)=e^{-F(x,y)}\).
\end{proof}

\medskip

Combining the slowly-varying condition with the choice of the
elementary steps \(\delta_i\) gives a uniform lower bound for every
probability entering the reconstruction.

\begin{cor}[Uniform conditioning of the local probabilities]
\label{cor:uniform-local-conditioning}
Suppose that the schedule
\(0=\beta_0<\beta_1<\cdots<\beta_\ell=\beta\) is \(C\)-slowly varying.
For every schedule interval \(0\leq i<\ell\), the scaled reweighting
probability at the elementary step \(\delta_i\) and all overlap
probabilities appearing in the recursive-doubling reconstruction
satisfy
\begin{equation}
\label{eq:uniform-conditioning}
e^{-2}R(\beta_i,\delta_i)\geq\gamma_0,
\qquad
O(\beta_i,2^j\delta_i)\geq\gamma_0,
\quad 1\leq j\leq k_i,
\qquad
\gamma_0:=\min\{e^{-3},1/C\}.
\end{equation}
Thus every ideal circuit probability is at least \(\gamma_0\),
which is \(\Omega(1)\) when \(C=O(1)\).
\end{cor}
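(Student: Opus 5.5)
The corollary has two parts, one for the reweighting probability and one for the overlaps, and each follows from facts already established. We treat them in that order.

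For the reweighting probability, we use the choice of parameters in Eq.~\eqref{eq:dyadic-parameters}. By construction \(h\delta_i\leq1\). The claim \(h\delta_i\leq1\) is immediate when \(k_i=\lceil\log_2(h\Delta_i)\rceil\geq0\), since then \(2^{k_i}\geq h\Delta_i\). It also holds when \(k_i=0\), because that case means \(h\Delta_i\leq1\) and \(\delta_i=\Delta_i\). Since \(0\preceq H\preceq hI\), we have \(e^{-\delta_iH}\succeq e^{-h\delta_i}I\). Taking the trace against the state \(\rho_{\beta_i}\) gives
\[
R(\beta_i,\delta_i)=\Tr(\rho_{\beta_i}e^{-\delta_iH})\geq e^{-h\delta_i}\geq e^{-1}.
\]
Multiplying by \(e^{-2}\) yields \(e^{-2}R(\beta_i,\delta_i)\geq e^{-3}\geq\gamma_0\).

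For the overlaps, fix \(1\leq j\leq k_i\). We apply Lemma~\ref{lem:overlap-monotonicity} with the four points
\[
\beta_1\to\beta_i,\quad \beta_2\to\beta_i,\quad \beta_3\to\beta_i+2^j\delta_i,\quad \beta_4\to\beta_{i+1}.
\]
These are correctly ordered because \(2^j\delta_i\leq2^{k_i}\delta_i=\Delta_i\). The lemma then gives \(O(\beta_i,2^j\delta_i)\geq O(\beta_i,\Delta_i)\). The \(C\)-slowly-varying condition Eq.~\eqref{eq:slowly-varying-condition} bounds the right-hand side below by \(1/C\geq\gamma_0\).

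The final sentence of the corollary identifies these as the ideal circuit probabilities. This follows from Lemma~\ref{lem:local-quantity-projection}: there the overlap circuit probability is exactly \(O\), and the exact reweighting probability is \(e^{-2}R\). Both are therefore at least \(\gamma_0\), and \(\gamma_0=\Omega(1)\) whenever \(C=O(1)\).

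None of the steps is a real obstacle. The only point needing care is the edge case \(k_i=0\), where the overlap family is empty and only the \(R\)-bound is required.
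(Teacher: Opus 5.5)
Your proof is correct and follows the paper's argument: the reweighting bound comes from \(0\preceq H\preceq hI\) together with \(h\delta_i\le1\), and the overlap bound from Lemma~\ref{lem:overlap-monotonicity} applied to \([\beta_i,\beta_i+2^j\delta_i]\subseteq[\beta_i,\beta_{i+1}]\) together with the slowly-varying condition. Your explicit checks that \(h\delta_i\le1\) and that \(2^j\delta_i\le\Delta_i\) only spell out details the paper leaves implicit.
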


\begin{proof}
For \(j\geq1\), the interval
\([\beta_i,\beta_i+2^j\delta_i]\) is contained in
\([\beta_i,\beta_{i+1}]\).
Lemma~\ref{lem:overlap-monotonicity} therefore gives
\[
    O(\beta_i,2^j\delta_i)\geq\frac1C.
\]
For \(j=0\), the assumptions \(0\preceq H\preceq hI\) and
\(h\delta_i\leq1\) imply
\[
    R(\beta_i,\delta_i)
    =
    \Tr(\rho_{\beta_i}e^{-\delta_iH})
    \geq
    e^{-h\delta_i}
    \geq
    e^{-1}.
\]
Multiplying the reweighting bound by \(e^{-2}\) gives
\(e^{-2}R(\beta_i,\delta_i)\geq e^{-3}\).
Taking the smaller of \(e^{-3}\) and \(1/C\) proves
Eq.~\eqref{eq:uniform-conditioning}.
\end{proof}

\medskip

Thus the deterministic part of the algorithm has reduced
partition-function estimation to a weighted sum of logarithms of
ideal projection probabilities lying in \([\gamma_0,1]\), together
with the known reweighting correction.
For \(C=O(1)\), this constant lower bound is what makes the
logarithmic estimation procedures in
Sections~\ref{sec:single-temperature} and~\ref{sec:sampled-dyadic}
stable.

\medskip

Theorem~\ref{thm:quantum-schedule-arunachalam} gives the
schedule-generation procedure, its length guarantees, and its
separate query cost. Here the schedule is supplied input.

\subsection{From the common reconstruction to the two algorithms}
\label{subsec:two-statistical-assemblies}

The reconstruction uses the coefficients \(a_{i,j}\) and exact
logarithmic terms \(Y_{i,j}\) in
Eq.~\eqref{eq:weighted-log-reconstruction}.
Once the QSVT implementation in
Lemma~\ref{lem:local-quantity-projection} is fixed, define the
actual circuit probabilities by
\begin{equation}
\label{eq:implemented-local-probabilities}
p_{i,j}
:=
\begin{cases}
\widetilde R(\beta_i,\delta_i), & j=0,\\[1mm]
O(\beta_i,2^j\delta_i), & 1\leq j\leq k_i.
\end{cases}
\end{equation}
For \(j\geq1\), this is the exact overlap probability.
For \(j=0\), it approximates \(e^{-2}R(\beta_i,\delta_i)\), so
\(\ln p_{i,0}+2\) approximates \(Y_{i,0}\).
Both algorithms include this correction in each reweighting
estimate or sample and budget the QSVT error separately.

\medskip

Section~\ref{sec:single-temperature} estimates the logarithmic
terms separately with coefficient-dependent depths.
Section~\ref{sec:sampled-dyadic} prepares a coherent sample over
the same terms and estimates its mean.
Both use the reconstruction and conditioning bounds proved here.
\section{Partition-function estimation with fixed-temperature access}
\label{sec:single-temperature}
In this section, we give the first of two algorithms for estimating the
weighted sum in Eq.~\eqref{eq:weighted-log-reconstruction}. We consider
fixed-temperature access, where each required inverse temperature is supplied
classically to the corresponding state-preparation routine. Under this access
model, the logarithmic terms are estimated independently from their
corresponding projection probabilities and then combined using the
reconstruction coefficients. The estimation accuracy is allocated across the
terms according to these coefficients, which reduces the total query cost
while controlling the error of the weighted sum. We first define the access
model and analyze the estimator for a single logarithmic term, and then
present the full reconstruction algorithm and its complexity.

\begin{defn}[Fixed-temperature oracle access and query complexity]
\label{def:partition-function-resource-model}
Let \(U_H\) be an \(\alpha\)-block encoding of \(H\), with
\(\alpha=\Theta(h)\), and assume access to \(U_H^\dagger\) and to the
controlled versions of \(U_H\) and \(U_H^\dagger\).
For each classically specified inverse temperature \(\theta\), let
\(U_\theta\) be the fixed-temperature TFD preparation unitary satisfying
\[
U_\theta\ket0=\ket{\psi_\theta},
\]
where \(\ket{\psi_\theta}\) is the TFD state of
Definition~\ref{def:tfd-state}.  We also assume access to
\(U_\theta^\dagger\).

\medskip

Define
\[
\begin{aligned}
N_{\mathrm{TFD}}
&:=
\text{the total number of fixed-temperature TFD preparation queries},
\\
N_{\mathrm{BE}}
&:=
\text{the total number of Hamiltonian block-encoding queries}.
\end{aligned}
\]
Thus calls to \(U_\theta\) and \(U_\theta^\dagger\) contribute to
\(N_{\mathrm{TFD}}\), while calls to \(U_H\), \(U_H^\dagger\), and their
controlled versions each contribute one query to \(N_{\mathrm{BE}}\).
\end{defn}

\medskip

The reconstruction takes the cooling schedule as classical input.
We count TFD preparation and Hamiltonian block-encoding calls here.

\subsection{Local logarithmic estimation}
\label{subsec:single-temperature-local}
To estimate the local logarithmic terms, we use the nondestructive unbiased
amplitude-estimation algorithm
\(\operatorname{NdUb\text{-}Amplitude}(\ket{\psi},\Pi,M,\eta)\) of Cornelissen and
Hamoudi~\cite[Algorithm~3]{Cornelissen2023}, together with its bias,
variance, and reflection-cost guarantees from
\cite[Theorem~2.4]{Cornelissen2023}. When applied
with depth parameter \(M\) and accuracy parameter \(\eta\), denote the
resulting estimate by \(\widehat p_{M,\eta}\). Its bias is at most \(\eta\),
and its variance is \(O(p/M^2+\eta)\). The full statement is recorded as
Theorem~\ref{thm:ndub-amplitude} in
Appendix~\ref{app:local-log-estimation}.

\medskip

By Lemma~\ref{lem:local-quantity-projection} and
Eq.~\eqref{eq:implemented-local-probabilities}, the circuit associated
with each reconstruction term \((i,j)\) produces the projection probability
\[
p_{i,j}=\|\Pi\ket{\psi}\|^2.
\]
This is the quantity estimated by the amplitude-estimation procedure.
For an overlap term \(j\ge1\),
\(-\ln p_{i,j}=Y_{i,j}\) exactly. For a reweighting term \(j=0\),
\(\ln p_{i,0}+2\) approximates the exact quantity
\(Y_{i,0}=\ln R(\beta_i,\delta_i)\), with the difference controlled by
Lemma~\ref{lem:single-temperature-dilation-stability}.
By Eq.~\eqref{eq:uniform-conditioning} and
Lemma~\ref{lem:single-temperature-dilation-stability}, every implemented
projection probability satisfies
\[
p_{i,j}\ge \gamma,
\qquad
\gamma:=\frac12\min\{e^{-3},1/C\}.
\]
This uniform lower bound ensures that the logarithm can be estimated stably.

\begin{algorithm}[H]
\caption{Local logarithmic estimation,
\(\mathsf{LocalLogEst}(\mathsf{type},\theta,\delta,M,\gamma,\eta)\)}
\label{alg:local-log-estimate}
\small
\begin{algorithmic}[1]

\Require A branch type
\(\mathsf{type}\in\{\mathsf{overlap},\mathsf{reweight}\}\);
parameters \(\theta,\delta\ge0\); an integer \(M\ge5\);
the uniform lower bound \(\gamma\) defined above; and
\(\eta\in(0,1)\).

\Require Fixed-temperature TFD access to \(U_\theta\) and its adjoint.
For an overlap branch, also require
\(U_{\theta+\delta}\) and its adjoint.
For a reweighting branch, require the QSVT unitary
\(D_\delta\) from Lemma~\ref{lem:qsvt-imaginary-time},
constructed at the prescribed approximation accuracy, and its adjoint.

\Ensure An estimate of \(-\ln p\) for an overlap branch, or of
\(\ln p+2\) for a reweighting branch, including the QSVT normalization
correction.

\If{\(\mathsf{type}=\mathsf{overlap}\)}
    \State Prepare
    \[
    \ket{\psi}
    \gets
    U_\theta^\dagger U_{\theta+\delta}\ket0_{SR},
    \qquad
    \Pi
    \gets
    \ket0\!\bra0_{SR}.
    \]
\Else
    \State Prepare
    \[
    \ket{\psi}
    \gets
    (D_\delta\otimes I_R)
    \bigl(\ket{0^a}_A\ket{\psi_\theta}_{SR}\bigr),
    \qquad
    \Pi
    \gets
    \ket{0^a}\!\bra{0^a}_A\otimes I_{SR}.
    \]
\EndIf

\State Run
\(\operatorname{NdUb\text{-}Amplitude}(\ket{\psi},\Pi,M,\eta)\)
and denote its estimate of
\(p=\|\Pi\ket{\psi}\|^2\) by \(\widehat p_{M,\eta}\).

\State Set
\[
\widehat p^{\,\mathrm{clip}}_{M,\eta}
\gets
\min\!\left\{
2,\max\!\left\{\frac{\gamma}{2},\widehat p_{M,\eta}\right\}
\right\}.
\]

\If{\(\mathsf{type}=\mathsf{overlap}\)}
    \State \Return
    \(-\ln\widehat p^{\,\mathrm{clip}}_{M,\eta}\).
\Else
    \State \Return
    \(\ln\widehat p^{\,\mathrm{clip}}_{M,\eta}+2\).
\EndIf

\end{algorithmic}
\end{algorithm}
\medskip

The amplitude-estimation procedure used in
Algorithm~\ref{alg:local-log-estimate} requires reflections about the
prepared state \(\ket{\psi}\) and the reflection \(I-2\Pi\), as stated in
Theorem~\ref{thm:ndub-amplitude}. If \(\ket{\psi}=V\ket{0}\), the
reflection about \(\ket{\psi}\) can be implemented as
\[
I-2\ket{\psi}\!\bra{\psi}
=
V\bigl(I-2\ket{0}\!\bra{0}\bigr)V^\dagger .
\]
For an overlap term, \(V=U_\theta^\dagger U_{\theta+\delta}\), so one
such reflection uses four fixed-temperature TFD queries. For a reweighting term, the prepared state is
\[
\ket{\psi}
=
(D_\delta\otimes I_R)
\bigl(\ket{0_a}_A\otimes\ket{\psi_\theta}_{SR}\bigr).
\]
Thus, if the preparation unitary is denoted by \(V\), then
\[
V
=
(D_\delta\otimes I_R)
(I_A\otimes U_\theta).
\]
A reflection about \(\ket{\psi}\) is implemented by applying
\(V^\dagger\), performing the phase flip about the all-zero input
state, and applying \(V\). Hence one such reflection uses one call each to
\(U_\theta\) and \(U_\theta^\dagger\), and one application each of
\(D_\delta\) and \(D_\delta^\dagger\). It therefore uses two
fixed-temperature TFD queries and \(O(q)\) Hamiltonian block-encoding
queries when \(D_\delta\) has QSVT degree \(q\).

\medskip

With the reflection costs established, we next bound the error of
Algorithm~\ref{alg:local-log-estimate}. The following lemma translates the
amplitude-estimation guarantees into bias and variance bounds for the
logarithmic estimate, and records the corresponding query cost.
Theorem~\ref{thm:single-temperature-main} combines these local bounds to
control the error and cost of the full reconstruction.

\begin{lem}[Local logarithmic estimation]
\label{lem:almost-unbiased-local}
Fix \(0<\gamma\le1\), an integer \(M\ge5\), and
\(\eta\in(0,1)\). Let \(\ket{\psi}\) and \(\Pi\) be the state and
projector constructed in Algorithm~\ref{alg:local-log-estimate}, and let
\(p:=\|\Pi\ket{\psi}\|^2\in[\gamma,1]\).
Let \(\widehat Y_{M,\eta}\) be the output of the algorithm, and let
\(c_v:=91\) be the variance constant in
Theorem~\ref{thm:ndub-amplitude}. Define
\begin{equation}
\label{eq:single-temperature-K-gamma-main}
K_\gamma
:=
\frac{1}{\gamma}
+
\frac{
\bigl(4+8\ln(2/\gamma)\bigr)(c_v+2)
}{
\gamma^2
}.
\end{equation}
Set \(Y=-\ln p\) for an overlap term and \(Y=\ln p+2\) for a
reweighting term. Then
\[
\left|\mathbb E[\widehat Y_{M,\eta}]-Y\right|
\le
K_\gamma\left(M^{-2}+\eta\right),
\qquad
\operatorname{Var}(\widehat Y_{M,\eta})
\le
K_\gamma\left(M^{-2}+\eta\right).
\]
In particular, if \(\eta\le M^{-2}\), then both quantities are at most
\(2K_\gamma/M^2\).

\medskip

In Algorithm~\ref{alg:finite-temperature-reconstruction}, we choose
\(\eta=M^{-2}\). Thus the internal accuracy parameter does not dominate
the \(M^{-2}\) error from amplitude estimation. Its contribution to the
reflection cost is only logarithmic and is absorbed into the
\(\widetilde{\mathcal O}\)-notation. Moreover, in the regime \(C=O(1)\) considered in
Theorem~\ref{thm:single-temperature-main},
Eq.~\eqref{eq:uniform-conditioning} and
Lemma~\ref{lem:single-temperature-dilation-stability}
allow us to take
\(\gamma=\gamma_0/2=\Omega(1)\).
Hence \(K_\gamma=O(1)\).

\medskip

The expected query counts are
\[
\begin{aligned}
\textnormal{overlap:}\qquad
&N_{\mathrm{TFD}}
=\widetilde{\mathcal O}(M),
&&N_{\mathrm{BE}}=0,
\\
\textnormal{reweighting:}\qquad
&N_{\mathrm{TFD}}
=\widetilde{\mathcal O}(M),
&&N_{\mathrm{BE}}
=\widetilde{\mathcal O}(Mq),
\end{aligned}
\]
where \(q\) is the QSVT degree of \(D_\delta\) from
Lemma~\ref{lem:qsvt-imaginary-time}. If
\(\alpha=\Theta(h)\), \(h\delta\le1\), and the QSVT accuracy is inverse
polynomial in the reconstruction parameters, then
\(q=\widetilde{\mathcal O}(1)\). Hence the reweighting branch also has
\(N_{\mathrm{BE}}=\widetilde{\mathcal O}(M)\).
\end{lem}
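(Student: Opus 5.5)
The plan is to treat the clipped logarithm as a smooth function of \(\widehat p:=\widehat p_{M,\eta}\) and to propagate the bias and variance guarantees of Theorem~\ref{thm:ndub-amplitude}: bias at most \(\eta\), and variance at most \(c_v(p/M^2+\eta)\). Set \(g(x):=\ln\bigl(\min\{2,\max\{\gamma/2,x\}\}\bigr)\). Both output types are \(\pm g(\widehat p)\) plus a constant, so it suffices to bound \(|\mathbb E g(\widehat p)-\ln p|\) and \(\operatorname{Var} g(\widehat p)\). The map \(g\) is \((2/\gamma)\)-Lipschitz on \(\mathbb R\). It agrees with \(\ln\) on \([\gamma/2,2]\), which contains \([\gamma/2,1]\ni p\), and it is bounded in absolute value by \(\ln(2/\gamma)\).

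\textbf{Variance.} Since \(g\) is \((2/\gamma)\)-Lipschitz, \(\operatorname{Var} g(\widehat p)\le \mathbb E[(g(\widehat p)-g(p))^2]\le (4/\gamma^2)\,\mathbb E[(\widehat p-p)^2]\). Next write \(\mathbb E[(\widehat p-p)^2]=\operatorname{Var}\widehat p+(\mathbb E\widehat p-p)^2\le c_v(M^{-2}+\eta)+\eta^2\). Because \(\eta<1\), this is at most \((c_v+1)(M^{-2}+\eta)\). The result is of order \((c_v+2)/\gamma^2\cdot(M^{-2}+\eta)\), which is dominated by \(K_\gamma\).

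\textbf{Bias.} Expand \(g\) to second order around \(p\) on the event \(E:=\{|\widehat p-p|\le p/2\}\). On \(E\) we have \(\widehat p\in[p/2,3p/2]\subset[\gamma/2,2]\), so \(g=\ln\) there. Taylor's theorem with \(|\ln''|\le 4/p^2\) on that range gives
\[
g(\widehat p)-\ln p=\frac{\widehat p-p}{p}+r,\qquad |r|\le \frac{2(\widehat p-p)^2}{p^2}.
\]
Off \(E\), the difference is bounded by \(2\ln(2/\gamma)\). Chebyshev's inequality gives \(\Pr[E^c]\le 4\,\mathbb E(\widehat p-p)^2/p^2\). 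The linear term needs care because it is not restricted to \(E\). Write \(\mathbb E\bigl[\tfrac{\widehat p-p}{p}\mathbf 1_E\bigr]=\tfrac{\mathbb E\widehat p-p}{p}-\mathbb E\bigl[\tfrac{\widehat p-p}{p}\mathbf 1_{E^c}\bigr]\). The first piece is at most \(\eta/\gamma\), which accounts for the \(1/\gamma\) term in \(K_\gamma\). On \(E^c\) we have \(|\widehat p-p|/p\le 2(\widehat p-p)^2/p^2\), so the second piece is at most \(2\mathbb E(\widehat p-p)^2/\gamma^2\).

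Collecting the pieces, the bias is at most
\[
\frac{\eta}{\gamma}+\frac{(4+8\ln(2/\gamma))\,\mathbb E(\widehat p-p)^2}{\gamma^2}\le K_\gamma(M^{-2}+\eta).
\]
The constant is matched by \(2+2+8\ln(2/\gamma)\), with the factor \(c_v+2\) absorbing the mean-squared error. The main obstacle is this bookkeeping of the bad event and the unbounded linear term. The cruder route of bounding \(|\mathbb E g-\ln p|\) by Lipschitz continuity alone gives only \(\sqrt{\text{MSE}}\sim M^{-1}\), which is insufficient. The second-order expansion is essential because it produces the \(M^{-2}\) rate. The corollary for \(\eta\le M^{-2}\) then follows immediately.

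\textbf{Costs.} Theorem~\ref{thm:ndub-amplitude} uses \(\widetilde O(M\log(1/\eta))\) reflections in expectation. The reflection counts computed just before the lemma give four TFD queries per overlap reflection, and two TFD queries plus \(O(q)\) block-encoding queries per reweighting reflection. For \(q\), Lemma~\ref{lem:qsvt-imaginary-time} applied to \(e^{-sH/2}\) with \(s\alpha=O(h\delta)=O(1)\) gives a polynomial degree \(q=O(\log(1/\nu))\). With inverse-polynomial \(\nu\), this is \(\widetilde O(1)\). The bounds \(\gamma=\gamma_0/2\) and \(K_\gamma=O(1)\) follow from Corollary~\ref{cor:uniform-local-conditioning} together with the dilation-stability lemma.
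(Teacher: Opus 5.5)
Your proposal is correct and follows essentially the same route as the paper. Both arguments use a $(2/\gamma)$-Lipschitz bound for the variance and a second-order Taylor expansion on a good event where clipping is inactive. Both control the linear term off that event by $|\Delta|\le 2\Delta^2/(\text{threshold})$, and bound the bad event by Markov together with $|\widehat Y-Y|\le 2\ln(2/\gamma)$. This gives exactly the paper's intermediate bound $\eta/\gamma+(4+8\ln(2/\gamma))\,\mathbb E[\Delta^2]/\gamma^2$. The only differences are cosmetic:
\begin{itemize}
\item you take the event threshold $p/2$ instead of $\gamma/2$;
\item you bound the variance through the mean-squared error rather than through $\operatorname{Var}(g(\widehat p))\le(4/\gamma^2)\operatorname{Var}(\widehat p)$;
\item you quote the amplitude-estimation variance as $c_v(p/M^2+\eta)$, which is weaker than, but implied by, the theorem's $c_vp/M^2+\eta$.
\end{itemize}
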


\medskip

The clipping argument, the logarithmic bias bound, and the conversion of the
reflection count in Theorem~\ref{thm:ndub-amplitude} to the oracle counts
above are proved in Appendix~\ref{app:local-log-estimation}.
\subsection{Variance-sensitive reconstruction}
\label{subsec:variance-sensitive-reconstruction}

The reconstruction in
Eq.~\eqref{eq:full-dyadic-reconstruction} is a weighted sum
of local logarithms.  By Lemma~\ref{lem:almost-unbiased-local}, a local
estimate with depth \(M_{i,j}\) has variance
\(O(M_{i,j}^{-2})\).  After multiplication by its reconstruction coefficient
\(a_{i,j}\), its contribution to the variance of the final estimator is
therefore \(O(a_{i,j}^2/M_{i,j}^2)\).  Since the coefficients
\(a_{i,j}\) vary across the reconstruction, using the same depth for every
local term is not query efficient.
We instead choose the depths to minimize the total local cost
\(\sum_{i,j}M_{i,j}\) subject to the required variance bound on the weighted
sum.  Solving this constrained optimization with a Lagrange multiplier gives
the relative allocation
\[
M_{i,j}\propto a_{i,j}^{2/3}.
\]
The common proportionality factor is then chosen so that the total variance
is \(O(\epsilon^2)\).  The estimation depths are chosen according to
Eq.~\eqref{eq:single-temperature-optimal-depth}; the full optimization
procedure is given in
Appendix~\ref{app:local-depth-allocation}.
The resulting reconstruction algorithm is given below.

\begin{algorithm}[H]
\caption{Fixed-temperature partition-function reconstruction}
\label{alg:finite-temperature-reconstruction}
\small
\begin{algorithmic}[1]
\Require An \(n\)-qubit Hamiltonian \(H\) with
\(0\preceq H\preceq hI\) and \(h>0\); an \(\alpha\)-block encoding of
\(H\) with \(\alpha=\Theta(h)\); a supplied \(C\)-slowly-varying schedule
\(0=\beta_0<\cdots<\beta_{\ell}=\beta\); the
fixed-temperature TFD access of
Definition~\ref{def:partition-function-resource-model};
\(0<\epsilon\le1\); and \(0<p<1\).
\Ensure An estimate \(\widehat f(\beta)\) satisfying
\(\Pr[|\widehat f(\beta)-f(\beta)|\le\epsilon]\ge1-p\).

\State Compute \(\Delta_i\), \(k_i\), and \(\delta_i\) from
\eqref{eq:dyadic-parameters} for every
\(i=0,\ldots,\ell-1\).

\State Set
\(\gamma_0\gets\min\{e^{-3},1/C\}\),
\(\gamma\gets\gamma_0/2\), and
\(\epsilon_{\mathrm{stat}}\gets\epsilon/2\).

\State Compute
\[
S\gets
\sum_{i=0}^{\ell-1}
\left(
2^{2k_i/3}
+
\sum_{j=1}^{k_i}2^{2(k_i-j)/3}
\right).
\]

\State
\[
\nu\gets
\frac{\gamma_0\epsilon}
{64\sum_{i=0}^{\ell-1}2^{k_i}}.
\]
For each \(i\), choose the QSVT dilation used for the reweighting term so
that its all-zero block \(\widetilde A_{\delta_i}\) satisfies
\[
\left\|
\widetilde A_{\delta_i}-e^{-1}e^{-\delta_iH/2}
\right\|
\le\nu.
\]

\For{\(i=0,\ldots,\ell-1\)}
    \State Set
    \(
    M_{i,0}\gets
    \max\!\left\{
    5,
    \left\lceil
    \frac{
    4\sqrt{K_\gamma}\,S^{1/2}2^{2k_i/3}
    }{\epsilon_{\mathrm{stat}}}
    \right\rceil
    \right\},
    \qquad
    \eta_{i,0}\gets M_{i,0}^{-2}.
    \)
    \For{\(j=1,\ldots,k_i\)}
        \State Set
        \(
        M_{i,j}\gets
        \max\!\left\{
        5,
        \left\lceil
        \frac{
        4\sqrt{K_\gamma}\,S^{1/2}2^{2(k_i-j)/3}
        }{\epsilon_{\mathrm{stat}}}
        \right\rceil
        \right\},
        \qquad
        \eta_{i,j}\gets M_{i,j}^{-2}.
        \)
    \EndFor
\EndFor

\State Set
\[
N_{\mathrm{rep}}
\gets
2\left\lceil9\ln(2/p)\right\rceil+1.
\]

\For{\(r=1,\ldots,N_{\mathrm{rep}}\)}
    \State Initialize
    \(\widehat f^{(r)}(\beta)\gets n\ln2\).

    \For{\(i=0,\ldots,\ell-1\)}
        \State Using a fresh run, set
        \[
        \widehat B_i^{(r)}
        \gets
        \mathsf{LocalLogEst}
        (\mathsf{reweight},
        \beta_i,\delta_i,M_{i,0},\gamma,\eta_{i,0}),
        \]
        using the QSVT dilation chosen above.
        \State Update
        \[
        \widehat f^{(r)}(\beta)
        \gets
        \widehat f^{(r)}(\beta)
        +
        2^{k_i}\widehat B_i^{(r)}.
        \]

        \For{\(j=1,\ldots,k_i\)}
            \State Using a fresh run, set
            \[
            \widehat A_{i,j}^{(r)}
            \gets
            \mathsf{LocalLogEst}
            (\mathsf{overlap},
            \beta_i,2^j\delta_i,M_{i,j},\gamma,\eta_{i,j}).
            \]
            \State Update
            \[
            \widehat f^{(r)}(\beta)
            \gets
            \widehat f^{(r)}(\beta)
            +
            2^{k_i-j}\widehat A_{i,j}^{(r)}.
            \]
        \EndFor
    \EndFor
\EndFor

\State \Return
\[
\operatorname{median}
\left\{
\widehat f^{(1)}(\beta),
\ldots,
\widehat f^{(N_{\mathrm{rep}})}(\beta)
\right\}.
\]
\end{algorithmic}
\end{algorithm}
\medskip

The bias and variance bounds in Lemma~\ref{lem:almost-unbiased-local},
together with the depth allocation in
Algorithm~\ref{alg:finite-temperature-reconstruction}, control the error
of the complete weighted sum. The following theorem gives the
reconstruction cost for a supplied slowly-varying schedule and, in
particular, for the extensive constant-temperature regime.
\begin{thm}[Partition-function estimation with fixed-temperature access]
\label{thm:single-temperature-main}
\label{cor:complexity_simple}
Let \(H\) be an \(n\)-qubit Hamiltonian with
\(0\preceq H\preceq hI\) for some \(h>0\). Let
\(0<\epsilon\le1\) and \(0<p<1\). Suppose that a
\(C\)-slowly-varying schedule
\[
0=\beta_0<\beta_1<\cdots<\beta_{\ell}=\beta
\]
is supplied, where \(C=O(1)\) and \(0<\beta<\infty\). Assume an
\(\alpha\)-block encoding of \(H\) with \(\alpha=\Theta(h)\) and the
fixed-temperature TFD access of
Definition~\ref{def:partition-function-resource-model}.

\medskip

\noindent
Then Algorithm~\ref{alg:finite-temperature-reconstruction} outputs
\(\widehat f(\beta)\) such that
\[
\Pr\!\left[
|\widehat f(\beta)-f(\beta)|
\le\epsilon
\right]
\ge1-p.
\]
Its expected reconstruction-stage query counts satisfy
\begin{equation}
\label{eq:single-temperature-fixed-schedule-cost}
N_{\mathrm{TFD}},N_{\mathrm{BE}}
=
\widetilde{\mathcal O}\!\left(
\frac{
\sqrt{\ell}
(\ell+h\beta)
\ln(2/p)
}{\epsilon}
\right).
\end{equation}
The hidden factors are polylogarithmic in
\(\ell\), \(h\beta\), and \(1/\epsilon\), and the constant
may depend on \(C\).

\medskip

In particular, if
\begin{equation}
\label{eq:single-temperature-extensive-regime}
h=O(n),
\qquad
\beta=\Theta(1),
\qquad
\ell=\widetilde O(\sqrt n),
\end{equation}
then
\begin{equation}
\label{eq:single-temperature-main-n-scaling}
\boxed{
N_{\mathrm{TFD}},N_{\mathrm{BE}}
=
\widetilde{\mathcal O}\!\left(
\frac{n^{5/4}\ln(2/p)}{\epsilon}
\right).}
\end{equation}
On the same event, the estimate
\(\widehat Z(\beta):=e^{\widehat f(\beta)}\) satisfies
\[
e^{-\epsilon}Z(\beta)
\le
\widehat Z(\beta)
\le
e^\epsilon Z(\beta).
\]

\end{thm}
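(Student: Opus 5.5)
The proof has four parts: the error of a single repetition, the median amplification, the query count, and the substitution of the extensive regime. Start from the exact identity of Lemma~\ref{lem:recursive-doubling} in the form Eq.~\eqref{eq:weighted-log-reconstruction}. One repetition is then
\[
\widehat f^{(r)}(\beta)=n\ln2+\sum_{i,j}a_{i,j}\widehat Y_{i,j},
\]
where the $\widehat Y_{i,j}$ are independent outputs of Algorithm~\ref{alg:local-log-estimate}. Write $Y'_{i,j}$ for the implemented target: $Y'_{i,j}=-\ln p_{i,j}$ for overlaps and $Y'_{i,0}=\ln p_{i,0}+2$ for reweighting. The error of one repetition then splits into three parts:
\begin{itemize}
\item the QSVT error $\sum_i 2^{k_i}|Y'_{i,0}-Y_{i,0}|$;
\item the bias $\sum a_{i,j}|\mathbb E\widehat Y_{i,j}-Y'_{i,j}|$;
\item the fluctuation, whose variance is $\sum a_{i,j}^2\Var(\widehat Y_{i,j})$.
\end{itemize}

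\textbf{QSVT error.} Use Eq.~\eqref{eq:R-probability-error} with $p_{i,0}\ge\gamma_0-2\nu\ge\gamma_0/2$. The logarithm is $(2/\gamma_0)$-Lipschitz on this range, so each reweighting term is off by at most $4\nu/\gamma_0$. With the algorithm's choice of $\nu$, the total is at most $\epsilon/16$. Presumably this is exactly what Lemma~\ref{lem:single-temperature-dilation-stability} records, and it also justifies taking $\gamma=\gamma_0/2$.

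\textbf{Bias and variance.} Since $\eta_{i,j}=M_{i,j}^{-2}$, Lemma~\ref{lem:almost-unbiased-local} bounds each local bias and variance by $2K_\gamma/M_{i,j}^2$. The depths satisfy
\[
M_{i,j}\ge 4\sqrt{K_\gamma}\,S^{1/2}a_{i,j}^{2/3}/\epsilon_{\mathrm{stat}}.
\]
Hence
\[
\sum a_{i,j}^2\Var\le\sum 2K_\gamma a_{i,j}^2\cdot\frac{\epsilon_{\mathrm{stat}}^2}{16K_\gamma S\,a_{i,j}^{4/3}}
=\frac{\epsilon_{\mathrm{stat}}^2}{8S}\sum a_{i,j}^{2/3}=\frac{\epsilon_{\mathrm{stat}}^2}{8},
\]
because $S=\sum a_{i,j}^{2/3}$ by definition. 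The bias is controlled the same way:
\[
\sum a_{i,j}\cdot\frac{2K_\gamma}{M_{i,j}^2}\le\frac{\epsilon_{\mathrm{stat}}^2}{8S}\sum a_{i,j}^{-1/3}\le\frac{\epsilon_{\mathrm{stat}}^2}{8},
\]
using $a_{i,j}\ge1$ and $\epsilon_{\mathrm{stat}}\le1$. The total bias is therefore at most $\epsilon/32+\epsilon/16<\epsilon/4$, and the standard deviation is at most $\epsilon_{\mathrm{stat}}/\sqrt8$.

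\textbf{Median amplification.} Chebyshev's inequality applied to the remaining budget $\epsilon/2+$ gives per-repetition failure probability at most $1/4$, and with some slack below that. A Chernoff/Hoeffding bound on the median of $N_{\mathrm{rep}}=2\lceil 9\ln(2/p)\rceil+1$ independent repetitions then gives failure probability at most $p$. The statement about $\widehat Z$ follows by exponentiating.

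\textbf{Query count.} This is the main step. By Lemma~\ref{lem:almost-unbiased-local}, each local call costs $\widetilde O(M_{i,j})$ queries of both kinds, using $q=\widetilde O(1)$ since $h\delta_i\le1$, $\alpha=\Theta(h)$, and $\nu^{-1}$ is polynomial. Summing over terms and repetitions gives
\[
\widetilde O\!\Big(\ln(2/p)\big[\#\text{terms}+\epsilon^{-1}S^{1/2}\textstyle\sum a_{i,j}^{2/3}\big]\Big)=\widetilde O\!\big(\ln(2/p)(\ell+\textstyle\sum_i k_i+S^{3/2}/\epsilon)\big).
\]
It remains to bound $S$. For each $i$, the geometric sum gives
\[
\sum_{j=0}^{k_i}2^{2(k_i-j)/3}+2^{2k_i/3}=O(2^{2k_i/3}),\qquad 2^{k_i}\le\max\{1,2h\Delta_i\}.
\]
Thus $S=O\big(\sum_i(1+h\Delta_i)^{2/3}\big)$. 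By H\"older's (or Jensen's) inequality with $\sum_i\Delta_i=\beta$,
\[
S=O\big(\ell^{1/3}(\ell+h\beta)^{2/3}\big),\qquad\text{so}\qquad S^{3/2}=O\big(\sqrt\ell\,(\ell+h\beta)\big).
\]
The additive $\ell+\sum_i k_i$ is $O(\ell\log(2+h\beta))$, which is dominated because $\epsilon\le1$. This proves Eq.~\eqref{eq:single-temperature-fixed-schedule-cost}.

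\textbf{Extensive regime.} With $h=O(n)$, $\beta=\Theta(1)$ and $\ell=\widetilde O(\sqrt n)$, we get $\ell+h\beta=O(n)$ and $\sqrt\ell=\widetilde O(n^{1/4})$. This gives the boxed bound $\widetilde O(n^{5/4}\ln(2/p)/\epsilon)$.

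I expect the main care to go into the H\"older step that produces the $\sqrt\ell$ factor, and into confirming that the bias condition holds independently of the number of terms.
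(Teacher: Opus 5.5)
Your proposal is correct and follows the paper's proof essentially step by step. It uses the same split into the QSVT error ($\le\epsilon/16$ via Lemma~\ref{lem:single-temperature-dilation-stability}), per-term bias and variance $2K_\gamma/M_{i,j}^2$ summed against the $a_{i,j}^{2/3}$ depth allocation, Chebyshev plus a Hoeffding median bound, and the H\"older estimate $S^{3/2}=O(\sqrt{\ell}\,(\ell+h\beta))$; the only differences are cosmetic (you apply Chebyshev to the budget left after the total bias $3\epsilon/32$ rather than at the paper's threshold $7\epsilon_{\mathrm{stat}}/8$, you count the number of local terms as $\ell+\sum_i k_i$ instead of bounding it by $S$, and your per-interval geometric sum counts the $j=0$ term twice, which is harmless for an upper bound).
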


\begin{proof}[Proof of Theorem~\ref{thm:single-temperature-main}]
Fix one repetition \(r\) of
Algorithm~\ref{alg:finite-temperature-reconstruction} before the final
median. Let \(\widetilde f(\beta)\) denote the deterministic reconstruction
obtained from Eq.~\eqref{eq:full-dyadic-reconstruction} by replacing each
\(\ln R(\beta_i,\delta_i)\) with \(\ln p_{i,0}+2\), where
\(p_{i,0}\) is the implemented probability from
Eq.~\eqref{eq:implemented-local-probabilities}. The overlap terms
are unchanged.

\medskip

\noindent
For an overlap term, Eq.~\eqref{eq:uniform-conditioning} gives
\[
p_{i,j}
=
O(\beta_i,2^j\delta_i)
\ge
\gamma_0,
\qquad
j\ge1.
\]
For a reweighting term,
Lemma~\ref{lem:single-temperature-dilation-stability}, together with the
choice of \(\nu\) in
Algorithm~\ref{alg:finite-temperature-reconstruction}, gives
\[
p_{i,0}\ge\frac{\gamma_0}{2}.
\]
Hence every projection probability estimated by the algorithm is at least
\(\gamma:=\gamma_0/2=\Omega(1)\). Since \(C=O(1)\), we also have
\(K_\gamma=O(1)\). Moreover,
Eq.~\eqref{eq:single-temperature-total-dilation-bias} gives
\[
|\widetilde f(\beta)-f(\beta)|
\le
\frac{\epsilon}{16}.
\]

\medskip

\noindent
The algorithm chooses
\(\eta_{i,j}=M_{i,j}^{-2}\).
Lemma~\ref{lem:almost-unbiased-local} therefore gives bias and variance
at most \(2K_\gamma/M_{i,j}^2\) for every local logarithmic estimate,
relative to the corresponding term in \(\widetilde f(\beta)\).
Since all local calls within one reconstruction are independent, the
triangle inequality and additivity of variance give
\begin{align*}
\left|
\mathbb E[\widehat f^{(r)}(\beta)]-\widetilde f(\beta)
\right|
&\le
2K_\gamma
\sum_{i,j}\frac{a_{i,j}}{M_{i,j}^2},\\
\operatorname{Var}(\widehat f^{(r)}(\beta))
&\le
2K_\gamma
\sum_{i,j}\frac{a_{i,j}^2}{M_{i,j}^2}.
\end{align*}

\medskip

\noindent
Let
\(S=\sum_{i,j}a_{i,j}^{2/3}\), as in
Eq.~\eqref{eq:single-temperature-depth-S}, and set
\(\epsilon_{\rm stat}=\epsilon/2\).
The depth choice derived in
Appendix~\ref{app:local-depth-allocation} satisfies
\[
M_{i,j}
\ge
\frac{
4\sqrt{K_\gamma}\,S^{1/2}a_{i,j}^{2/3}
}{
\epsilon_{\rm stat}
}.
\]
Substitution into the bias bound gives
\[
\left|
\mathbb E[\widehat f^{(r)}(\beta)]-\widetilde f(\beta)
\right|
\le
\frac{\epsilon_{\rm stat}^2}{8S}
\sum_{i,j}a_{i,j}^{-1/3}
\le
\frac{\epsilon_{\rm stat}^2}{8}
\le
\frac{\epsilon_{\rm stat}}8,
\]
where we used \(a_{i,j}\ge1\), and hence
\(a_{i,j}^{-1/3}\le a_{i,j}^{2/3}\).
The same depth choice gives
\[
\operatorname{Var}(\widehat f^{(r)}(\beta))
\le
\frac{\epsilon_{\rm stat}^2}{8S}
\sum_{i,j}a_{i,j}^{2/3}
=
\frac{\epsilon_{\rm stat}^2}{8}.
\]

\medskip

\noindent
Chebyshev's inequality now gives
\[
\Pr\!\left[
\left|
\widehat f^{(r)}(\beta)
-
\mathbb E[\widehat f^{(r)}(\beta)]
\right|
>
\frac{7\epsilon_{\rm stat}}8
\right]
\le
\frac{\epsilon_{\rm stat}^2/8}
{(7\epsilon_{\rm stat}/8)^2}
=
\frac8{49}
<
\frac13.
\]
Therefore, with probability greater than \(2/3\),
\[
\left|
\widehat f^{(r)}(\beta)-\widetilde f(\beta)
\right|
\le
\epsilon_{\rm stat}.
\]
Combining this event with
Eq.~\eqref{eq:single-temperature-total-dilation-bias} and using the
triangle inequality gives
\[
\left|
\widehat f^{(r)}(\beta)-f(\beta)
\right|
\le
\epsilon_{\rm stat}+\frac{\epsilon}{16}
=
\frac{\epsilon}{2}+\frac{\epsilon}{16}
<
\epsilon.
\]
Thus one reconstruction succeeds with probability greater than \(2/3\).

\medskip

\noindent
We next bound the query cost of one reconstruction. The rounding analysis in
Appendix~\ref{app:local-depth-allocation} gives
\[
\sum_{i,j}M_{i,j}
=
O\!\left(\frac{S^{3/2}}{\epsilon}\right).
\]
It remains to express \(S\) in terms of the supplied schedule. For each
interval \(i\), the definition of the coefficients \(a_{i,j}\) gives
\[
\sum_{j=0}^{k_i}a_{i,j}^{2/3}
=
2^{2k_i/3}
+
\sum_{j=1}^{k_i}2^{2(k_i-j)/3}
=
O(2^{2k_i/3})
=
O\!\left(
\max\{1,h\Delta_i\}^{2/3}
\right),
\]
where the middle bound is a geometric sum and the last bound follows from
the definition of \(k_i\). Hence H\"older's inequality gives
\[
S^{3/2}
=
O\!\left(
\left[
\sum_i\max\{1,h\Delta_i\}^{2/3}
\right]^{3/2}
\right)
\le
O\!\left(
\sqrt{\ell}
\sum_i\max\{1,h\Delta_i\}
\right).
\]
Since \(\max\{1,x\}\le1+x\) for \(x\ge0\) and
\(\sum_i\Delta_i=\beta\),
\[
S^{3/2}
=
O\!\left(
\sqrt{\ell}(\ell+h\beta)
\right).
\]

\medskip

\noindent
Lemma~\ref{lem:almost-unbiased-local} gives
\(\widetilde O(M_{i,j})\) fixed-temperature TFD calls for each local term.
Overlap terms use no block-encoding queries, while reweighting terms use
\(\widetilde O(M_{i,0})\) block-encoding queries. Therefore, one
repetition uses
\[
\widetilde{\mathcal O}\!\left(
\frac{
\sqrt{\ell}(\ell+h\beta)
}{\epsilon}
\right)
\]
TFD queries and the same asymptotic number of block-encoding queries.

\medskip

\noindent
It remains to amplify the success probability. Each independent repetition
fails with probability at most \(1/3\). If more than half of the repetitions
succeed, then their median also has error at most \(\epsilon\). Therefore,
the median can fail only if at least half of the \(N_{\mathrm{rep}}\)
repetitions fail. Hoeffding's inequality gives
\[
\Pr[\textnormal{median fails}]
\le
\exp\!\left[
-2N_{\mathrm{rep}}
\left(\frac12-\frac13\right)^2
\right]
=
e^{-N_{\mathrm{rep}}/18}.
\]
For the choice
\[
N_{\mathrm{rep}}
=
2\left\lceil9\ln(2/p)\right\rceil+1
\]
in Algorithm~\ref{alg:finite-temperature-reconstruction}, we have
\(N_{\mathrm{rep}}\ge18\ln(2/p)\), and hence
\[
e^{-N_{\mathrm{rep}}/18}
\le
\frac{p}{2}
<
p.
\]
Multiplying the query cost of one repetition by
\(N_{\mathrm{rep}}=O(\ln(2/p))\) proves
Eq.~\eqref{eq:single-temperature-fixed-schedule-cost}.

\medskip

\noindent
Finally, under
Eq.~\eqref{eq:single-temperature-extensive-regime},
\(\sqrt{\ell}=\widetilde O(n^{1/4})\) and
\(\ell+h\beta=\widetilde O(n)\).
This proves
Eq.~\eqref{eq:single-temperature-main-n-scaling}.
On the event
\(|\widehat f(\beta)-f(\beta)|\le\epsilon\), exponentiation gives
\[
e^{-\epsilon}Z(\beta)
\le
\widehat Z(\beta)
\le
e^\epsilon Z(\beta).
\]
\end{proof}

\section{Partition-function estimation with coherent-temperature access}
\label{sec:sampled-dyadic}

In this section the exact reconstruction formula
\eqref{eq:full-dyadic-reconstruction} is unchanged.
The difference from the fixed-temperature method is how the local
logarithmic terms are combined. Instead of estimating every local term
separately and then forming their weighted sum, we introduce a distribution
over the terms with probabilities proportional to their reconstruction
coefficients \(a_{i,j}\). This rewrites the same reconstruction as an
expectation and allows us to apply quantum mean estimation directly to the
weighted sum.

\subsection{Coherent branch construction}
\label{subsec:coherent-branch-construction}

Using the parameters in Eq.~\eqref{eq:dyadic-parameters} and the
reconstruction coefficients \(a_{i,j}\), define
\begin{equation}
\label{eq:coherent-branch-data}
\mathcal R
:=
\{(i,j):0\le i<\ell,\ 0\le j\le k_i\},
\qquad
L
:=
\sum_{(i,j)\in\mathcal R}a_{i,j},
\qquad
\pi_{i,j}
:=
\frac{a_{i,j}}{L}.
\end{equation}
Then Eq.~\eqref{eq:weighted-log-reconstruction} becomes
\begin{equation}
\label{eq:coherent-sampled-identity}
f(\beta)-f(0)
=
\sum_{(i,j)\in\mathcal R}a_{i,j}Y_{i,j}
=
L\,\mathbb E_{(I,J)\sim\pi}[Y_{I,J}].
\end{equation}
The identity above expresses the reconstruction as the mean of the exact
local logarithmic terms \(Y_{I,J}\) under the distribution \(\pi\).
The coherent algorithm does not evaluate \(Y_{i,j}\) directly. Instead, it
constructs a branch-dependent random variable whose expectation approximates
\(Y_{i,j}\), with the normalization correction included on each reweighting
branch. Quantum mean estimation is then applied to the weighted sample.

\begin{defn}[Coherent-temperature access model]
\label{def:coherent-temperature-access}
For the supplied schedule and the branch distribution in
Eq.~\eqref{eq:coherent-branch-data}, assume access to the following
unitaries and their adjoints.

\begin{enumerate}
    \item A parameter-controlled TFD preparation unitary
    \(U_{\mathrm{cohTFD}}\) satisfying
    \[
    U_{\mathrm{cohTFD}}
    \ket{\theta}\ket0_Q
    =
    \ket{\theta}\ket{\psi_\theta}_Q
    \]
    for every inverse temperature \(\theta\) used in the reconstruction.
    The same unitary acts coherently when the inverse-temperature register
    is in superposition. Define
    \[
    N_{\mathrm{cohTFD}}
    :=
    \text{the total number of calls to }
    U_{\mathrm{cohTFD}}
    \text{ and }
    U_{\mathrm{cohTFD}}^\dagger .
    \]
    Each forward or adjoint call counts as one query.

    \item A branch-preparation unitary \(U_\pi\) and a reversible
    parameter-loading unitary \(U_{\mathrm{load}}\) satisfying
    \[
    U_\pi\ket0
    =
    \sum_{(i,j)\in\mathcal R}
    \sqrt{\pi_{i,j}}\ket{i,j},
    \qquad
    U_{\mathrm{load}}\ket{i,j}\ket0
    =
    \ket{i,j}\ket{\mathrm{par}_{i,j}},
    \]
    where \(\mathrm{par}_{i,j}\) contains the inverse-temperature
    parameters required by the branch construction. The coherent
    preparation and loading of these registers are described in
    Appendix~\ref{app:sampled-dyadic},
    Subsection~\ref{subsec:coherent-branch-preparation}.

    \item The Hamiltonian block-encoding access, including adjoints and
    controlled versions, from
    Definition~\ref{def:partition-function-resource-model}. The
    corresponding query count \(N_{\mathrm{BE}}\) is defined as in that
    definition.
\end{enumerate}

\medskip

\end{defn}

\medskip

The two branch types implement the overlap and reweighting tests
associated with
\(Y_{i,j}\) in Eq.~\eqref{eq:coherent-sampled-identity}. For \(j\ge1\),
coherently prepare the TFD state at
\(\beta_i+2^j\delta_i\), apply the inverse TFD preparation at
\(\beta_i\), and test the all-zero TFD registers. The resulting projection
weight is \(O(\beta_i,2^j\delta_i)\). For \(j=0\), coherently prepare the
TFD state at \(\beta_i\), apply the branch-controlled QSVT unitary
\(D_{\delta_i}\), and test the all-zero QSVT ancilla register. In the ideal
case, the resulting weight is
\(e^{-2}\Tr(\rho_{\beta_i}e^{-\delta_iH})\); the QSVT approximation error is
handled separately.

\medskip

The register-level implementations of the two branch types, together with
their coherent-TFD and block-encoding query counts, are given in
Appendix~\ref{app:sampled-dyadic},
Subsection~\ref{subsec:coherent-branch-implementation}.

\subsection{Coherent logarithmic sampling}
\label{subsec:coherent-logarithmic-sampling}

Equation~\eqref{eq:coherent-sampled-identity} reduces the reconstruction to
the mean of the exact local logarithmic terms \(Y_{I,J}\) under
\(\pi\). The coherent circuit does not evaluate \(Y_{i,j}\) directly.
Instead, for each branch \((i,j)\), it produces the corresponding overlap
or reweighting probability. The main technical step is therefore to convert
this probability into a random variable whose expectation approximates the
required logarithm.

\medskip

By Lemma~\ref{lem:local-quantity-projection} and
Eq.~\eqref{eq:implemented-local-probabilities}, the circuit associated
with each branch \((i,j)\) produces the projection probability
\(p_{i,j}\). The coherent logarithmic construction below requires repeated
Bernoulli trials with this probability while keeping the branch registers
\(I,J\) coherent. We therefore combine the overlap and reweighting circuits
into a single branch-controlled unitary \(U_{\mathrm{branch}}\).
The explicit register-level construction of \(U_{\mathrm{branch}}\) is given
in Appendix~\ref{app:sampled-dyadic},
Subsection~\ref{subsec:coherent-branch-implementation}. Conditioned on a
branch \((i,j)\in\mathcal R\), it acts as
\begin{equation}
\label{eq:coherent-branch-flag-probability}
\begin{aligned}
&U_{\mathrm{branch}}
\ket{i,j}_{IJ}\ket{\mathrm{par}_{i,j}}_P
\ket0_Q\ket{0^a}_A\ket0_F
\\
&\qquad=
\ket{i,j}_{IJ}\ket{\mathrm{par}_{i,j}}_P
\left(
\sqrt{p_{i,j}}\,
\ket{\phi^1_{i,j}}_{QA}\ket1_F
+
\sqrt{1-p_{i,j}}\,
\ket{\phi^0_{i,j}}_{QA}\ket0_F
\right),
\end{aligned}
\end{equation}
where \(\ket{\phi^1_{i,j}}_{QA}\) and
\(\ket{\phi^0_{i,j}}_{QA}\) are normalized states on \(Q\otimes A\).
Their explicit form will not be needed. Hence
\(\Pr(F=1\mid i,j)=p_{i,j}\), while the branch and parameter registers are
unchanged. One application of \(U_{\mathrm{branch}}\) uses at most two
calls to \(U_{\mathrm{cohTFD}}\) or its adjoint, no fixed-temperature TFD call, and
\(\widetilde{\mathcal O}(1)\) block-encoding queries.

\medskip

For the overlap branches, Eq.~\eqref{eq:uniform-conditioning} gives
\(p_{i,j}\ge\gamma_0\). For the reweighting branches,
Lemma~\ref{lem:single-temperature-dilation-stability} gives
\(p_{i,0}\ge\gamma_0/2\) whenever \(2\nu\le\gamma_0/2\).
The choice of \(\nu\) in
Algorithm~\ref{alg:sampled-dyadic-reconstruction} satisfies this condition.
Hence every branch probability obeys
\[
p_{i,j}\ge\gamma:=\gamma_0/2=\Omega(1).
\]
Fix a branch and denote its success probability by
\(\mu\in[\gamma,1]\). We use the expansion
\[
-\ln\mu
=
\sum_{m=1}^{\infty}\frac{(1-\mu)^m}{m}.
\]
This suggests estimating the logarithm from repeated coherent trials with
success probability \(\mu\). The following lemma shows that a truncated
harmonic sample constructed from these trials approximates \(-\ln\mu\)
while keeping its second moment controlled.

\begin{lem}[Coherent logarithmic sampling]
\label{lem:coherent-logarithmic-sampling-summary}
Fix a branch of \(U_{\mathrm{branch}}\) with success probability
\(\mu\in[\gamma,1]\), and let \(M\ge1\). Apply
\(U_{\mathrm{branch}}\) \(M\) times to fresh workspace and flag registers,
without measurement, and let
\(b_1,\ldots,b_M\in\{0,1\}\) denote the corresponding success flags.
Let \(G_M\) be the number of initial failures before the first success,
truncated at \(M\). Define
\[
H_{G_M}
:=
\sum_{m=1}^{G_M}\frac1m
=
\sum_{m=1}^{M}
\frac1m
\prod_{t=1}^{m}(1-b_t),
\qquad
H_0:=0.
\]
Then
\begin{equation}
\label{eq:coherent-log-sample-bounds}
\begin{aligned}
\mathbb E[H_{G_M}]
&=
\sum_{m=1}^{M}\frac{(1-\mu)^m}{m},\\
0\le
-\ln\mu-\mathbb E[H_{G_M}]
&\le
\frac{e^{-\gamma(M+1)}}{\gamma(M+1)},\\
\mathbb E[H_{G_M}^{\,2}]
&\le
H_M\ln(1/\gamma),
\end{aligned}
\end{equation}
where \(H_M=\sum_{m=1}^{M}1/m\) is the \(M\)-th harmonic number.
Moreover, \(H_{G_M}\) can be computed reversibly from the \(M\) unmeasured
flags. Its preparation uses \(M\) applications of
\(U_{\mathrm{branch}}\), and its uncomputation uses \(M\) applications of
\(U_{\mathrm{branch}}^\dagger\).
\end{lem}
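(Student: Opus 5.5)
The plan is to treat the \(M\) applications of \(U_{\mathrm{branch}}\) on fresh registers as producing i.i.d.\ Bernoulli\((\mu)\) flags \(b_1,\ldots,b_M\) in the computational basis. Conditioned on the branch, each application acts on its own fresh workspace and flag and leaves \(I,J,P\) unchanged, by Eq.~\eqref{eq:coherent-branch-flag-probability}. Hence the joint flag distribution obtained by measuring is the product distribution. All three moment statements are then classical facts about the truncated geometric variable \(G_M\).

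\textbf{Mean.} Write \(H_{G_M}=\sum_{m=1}^M \frac1m\prod_{t\le m}(1-b_t)\). The indicator \(\prod_{t\le m}(1-b_t)\) equals \(\mathbf 1\{G_M\ge m\}\), which checks the stated identity. Linearity and independence then give \(\mathbb E[H_{G_M}]=\sum_{m=1}^M(1-\mu)^m/m\).

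\textbf{Bias.} The bias equals the tail \(\sum_{m>M}(1-\mu)^m/m\), which is nonnegative. It is at most
\[
\frac{1}{M+1}\sum_{m\ge M+1}(1-\mu)^m=\frac{(1-\mu)^{M+1}}{(M+1)\mu}.
\]
Using \(1-\mu\le 1-\gamma\le e^{-\gamma}\) and \(\mu\ge\gamma\) yields the claimed bound.

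\textbf{Second moment.} Expand
\[
H_{G_M}^2=\sum_{m,m'\le M}\frac{1}{mm'}\mathbf 1\{G_M\ge\max(m,m')\},
\qquad\text{so}\qquad
\mathbb E[H_{G_M}^2]=\sum_{m,m'}\frac{(1-\mu)^{\max(m,m')}}{mm'}.
\]
Bound \((1-\mu)^{\max(m,m')}\le(1-\mu)^{m'}\), which is valid since \(1-\mu\le1\). The double sum is then at most
\[
\Bigl(\sum_{m\le M}\tfrac1m\Bigr)\Bigl(\sum_{m'\ge1}\tfrac{(1-\mu)^{m'}}{m'}\Bigr)=H_M\ln(1/\mu)\le H_M\ln(1/\gamma).
\]
I expect this to be the only step that needs some care. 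The route is to avoid bounding by \(H_M^2\) and to use the log series on one factor. An alternative is the direct form \(\mathbb E[H_G^2]=\sum_g \Pr[G_M=g]H_g^2\) with \(H_g^2\le H_M H_g\); this gives the same bound, since \(\mathbb E[H_{G_M}]\le-\ln\mu\).

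\textbf{Reversible computation.} Compute \(H_{G_M}\) from the unmeasured flags with a reversible circuit. Use a running "no success yet" bit \(c_m=c_{m-1}(1-b_m)\) via Toffoli gates, and add \(1/m\) into an accumulator controlled on \(c_m\), to finite precision as elsewhere. Then uncompute the \(c_m\) in reverse order. This classical reversible arithmetic uses no oracle queries. Finally, uncompute the flags and workspaces by applying \(U_{\mathrm{branch}}^\dagger\) \(M\) times, after the value has been copied or used, which gives the stated \(M+M\) applications. Coherence of the branch registers holds because every operation is controlled only on \(I,J,P\) and leaves them intact.
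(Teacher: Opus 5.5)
Your proposal is correct and follows essentially the paper's proof. Both use the tail-sum identity for the mean and the geometric tail bound for the bias. For the second moment, your double-sum estimate \((1-\mu)^{\max(m,m')}\le(1-\mu)^{m'}\) is just the expectation of the pointwise inequality \(H_{G_M}^2\le H_M H_{G_M}\) that the paper uses and that you list as your alternative. For the coherent evaluation, both use reversible prefix products \(r_m=\prod_{t\le m}(1-b_t)\).

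The only imprecision is in the uncomputation. The accumulator holding \(H_{G_M}\), and any copy of it, must be uncomputed from the flags before the \(M\) calls to \(U_{\mathrm{branch}}^\dagger\); otherwise those calls do not return the workspace to zero. This is automatic if, as in the paper, the uncomputation is the adjoint of the entire preparation circuit.
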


\medskip

Increasing \(M\) reduces the bias in the estimate of \(-\ln\mu\),
while the second-moment bound controls the cost of mean estimation.
We also need to include the error from computing the sample with finite
precision.
The proof of Lemma~\ref{lem:coherent-logarithmic-sampling-summary} and the
reversible computation of \(H_{G_M}\) are given in
Appendix~\ref{app:sampled-dyadic},
Subsection~\ref{subsec:reversible-harmonic-sample}.

\medskip

The reversible arithmetic uses a fixed-point approximation
\(\widetilde H_{G_M}\) satisfying
\(|\widetilde H_{G_M}-H_{G_M}|\le\zeta\). Hence
\begin{equation}
\label{eq:coherent-fixed-point-log-bounds}
\begin{aligned}
\left|
\mathbb E[\widetilde H_{G_M}]+\ln\mu
\right|
&\le
\frac{e^{-\gamma(M+1)}}{\gamma(M+1)}+\zeta,\\
\mathbb E[\widetilde H_{G_M}^{\,2}]
&\le
2H_M\ln(1/\gamma)+2\zeta^2.
\end{aligned}
\end{equation}

\medskip

Define the local logarithmic sample by
\[
\widehat Y_{i,j}:=
\begin{cases}
2-\widetilde H_{G_M}, & j=0,\\
\widetilde H_{G_M}, & j\ge1.
\end{cases}
\]
For \(j\ge1\), its expectation approximates
\(Y_{i,j}=-\ln O(\beta_i,2^j\delta_i)\). For \(j=0\), the constant
\(2\) compensates for the QSVT normalization, so its expectation
approximates \(Y_{i,0}=B(\beta_i,\delta_i)\), with the remaining QSVT
error controlled by Lemma~\ref{lem:single-temperature-dilation-stability}.

\medskip

We now have a logarithmic sample for each branch. To estimate the full
weighted sum by quantum mean estimation, we combine these samples into
the global random variable
\(X=L\widehat Y_{I,J}\), where \((I,J)\sim\pi\).
The following lemma constructs its coherent preparation and the inverse
needed for mean estimation.
\begin{lem}[Full coherent sample preparation]
\label{lem:full-coherent-sample-preparation}
Let \(\mathcal R\) be the branch set,
\(\pi=(\pi_{i,j})_{(i,j)\in\mathcal R}\) its sampling distribution,
and \(L>0\) the reconstruction scale, as defined in
Eq.~\eqref{eq:coherent-branch-data}.
Assume that coherent preparation of the branch distribution and
reversible loading of the branch parameters are available.
For each \((i,j)\in\mathcal R\), let \(p_{i,j}\) be the success
probability of \(U_{\mathrm{branch}}\), which leaves the branch
index and parameter registers unchanged.

\medskip

Fix an integer \(M\ge1\) and a fixed-point error tolerance \(\zeta>0\).
For a string of success flags
\(\mathbf b=(b_1,\ldots,b_M)\in\{0,1\}^M\), where \(b_m=1\)
denotes success on the \(m\)-th trial, define
\[
H(\mathbf b)
:=
\sum_{m=1}^{M}\frac1m\prod_{t=1}^{m}(1-b_t).
\]
Let \(\widetilde H(\mathbf b)\) be a reversibly computable
fixed-point approximation satisfying
\[
\bigl|\widetilde H(\mathbf b)-H(\mathbf b)\bigr|
\le\zeta
\qquad
\text{for every }\mathbf b\in\{0,1\}^M.
\]
Define the conditional flag distribution and the sample-value map by
\[
q_{i,j}(\mathbf b)
:=
\prod_{m=1}^{M}
p_{i,j}^{\,b_m}(1-p_{i,j})^{1-b_m},
\qquad
x_j(y)
:=
\begin{cases}
L(2-y), & j=0,\\
Ly, & j\ge1,
\end{cases}
\qquad y\in\mathbb R.
\]

\medskip

There is a unitary \(U_{\mathrm{sample}}\) satisfying
\[
U_{\mathrm{sample}}\ket0
=
\sum_{(i,j)\in\mathcal R}
\sum_{\mathbf b\in\{0,1\}^M}
\sqrt{\pi_{i,j}q_{i,j}(\mathbf b)}\,
\ket{\mathrm{work}_{i,j,\mathbf b}}
\ket{x_j(\widetilde H(\mathbf b))}_X.
\]
Here \(\ket0\) is the joint all-zero initial state, \(X\) labels
the sample-value register, and
\(\ket{\mathrm{work}_{i,j,\mathbf b}}\) contains all other registers.
For nonzero summands, these workspace states are normalized and
mutually orthogonal, since they retain the branch index and all flags.

\medskip

Measuring the value register produces the random variable
\[
X=L\widehat Y_{I,J},
\qquad (I,J)\sim\pi,
\]
where, conditional on \((I,J)=(i,j)\), one has
\(\mathbf b\sim q_{i,j}\) and
\(\widehat Y_{i,j}=x_j(\widetilde H(\mathbf b))/L\).
No measurement is performed during the preparation.

\medskip

The preparation uses one coherent branch preparation, one parameter
loading, and \(M\) applications of \(U_{\mathrm{branch}}\), together
with reversible arithmetic.
Its inverse reverses all these operations and uses \(M\)
applications of \(U_{\mathrm{branch}}^\dagger\).
\end{lem}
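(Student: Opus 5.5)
The plan is to build \(U_{\mathrm{sample}}\) as an explicit composition of the assumed unitaries and then read off the output state by tracking each branch. First apply \(U_\pi\) to the index register, which gives \(\sum_{(i,j)}\sqrt{\pi_{i,j}}\ket{i,j}\). Then apply \(U_{\mathrm{load}}\), which writes \(\ket{\mathrm{par}_{i,j}}\) into \(P\) in each branch. Next apply \(U_{\mathrm{branch}}\) \(M\) times, the \(m\)-th application acting on the common registers \(IJ,P\) together with a fresh block \(Q_mA_mF_m\) initialized to zero.

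Because \(U_{\mathrm{branch}}\) leaves \(IJ\) and \(P\) unchanged, Eq.~\eqref{eq:coherent-branch-flag-probability} applies in each branch independently to every fresh block. The joint state of the \(M\) blocks, conditioned on \((i,j)\), is therefore the tensor product
\[
\bigotimes_{m=1}^M\Bigl(\sqrt{p_{i,j}}\ket{\phi^1_{i,j}}\ket1_{F_m}+\sqrt{1-p_{i,j}}\ket{\phi^0_{i,j}}\ket0_{F_m}\Bigr).
\]
Expanding this product over \(\mathbf b\in\{0,1\}^M\) gives amplitudes \(\sqrt{q_{i,j}(\mathbf b)}\) attached to \(\bigotimes_m\ket{\phi^{b_m}_{i,j}}\ket{b_m}_{F_m}\).

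The value is then computed in place. A reversible fixed-point circuit maps \(\ket{\mathbf b}_F\ket0\mapsto\ket{\mathbf b}_F\ket{\widetilde H(\mathbf b)}\), for example by accumulating \(\frac1m\prod_{t\le m}(1-b_t)\) with a running AND ancilla. A second reversible map, controlled on whether \(j=0\), computes \(x_j(\widetilde H(\mathbf b))\) into \(X\); since \(L\) is a classical constant, this is an affine map. Any garbage from these computations is either uncomputed or left in the workspace. I would keep \(\widetilde H(\mathbf b)\) in the workspace to avoid a further uncomputation. Collecting all registers other than \(X\) into \(\ket{\mathrm{work}_{i,j,\mathbf b}}\) produces the claimed state.

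Normalization holds because each \(\ket{\phi^{b}}\) is normalized and the arithmetic is a permutation of basis states tensored with the workspace. Orthogonality holds because distinct \((i,j,\mathbf b)\) differ in the computational-basis content of \(IJ\) or \(F\), so the workspace states are orthogonal regardless of \(Q,A\). The measurement claim follows from the Born rule. The outcome \(x\) has probability \(\sum_{(i,j),\mathbf b:\,x_j(\widetilde H(\mathbf b))=x}\pi_{i,j}q_{i,j}(\mathbf b)\), and the cross terms vanish by orthogonality. This is exactly the law of \(L\widehat Y_{I,J}\) with \((I,J)\sim\pi\) and \(\mathbf b\sim q_{I,J}\), given the definition of \(\widehat Y\).

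The resource count is read off directly: one \(U_\pi\), one \(U_{\mathrm{load}}\), \(M\) applications of \(U_{\mathrm{branch}}\), and arithmetic. The inverse applies the adjoints in reverse order, so it uses \(M\) applications of \(U_{\mathrm{branch}}^\dagger\). The main subtlety is that the flag registers are not i.i.d. measured bits but entangled with \(Q_mA_m\). The proof therefore has to rely on the fresh-block tensor structure, together with the fact that the arithmetic reads only \(F\) and \(IJ\), so that it acts as \(I_{QA}\otimes(\text{permutation})\). I would state this explicitly to justify both the product form of \(q_{i,j}\) and the orthogonality of the workspace states.
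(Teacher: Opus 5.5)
Your proposal is correct and follows essentially the same route as the paper. You compose \(U_\pi\), \(U_{\mathrm{load}}\), and \(M\) applications of \(U_{\mathrm{branch}}\) on fresh flag and workspace blocks, expand the conditional product state over \(\mathbf b\), and reversibly compute the prefix-failure indicators, \(\widetilde H\), and \(x_j(\widetilde H)\) while retaining \(I,J\) and the flags for orthogonality; the Born-rule distribution and query count then follow as in the paper. The only cosmetic difference is that the paper constructs \(\widetilde H\) explicitly, from constants \(\widetilde c_m\) with \(|\widetilde c_m-1/m|\le\zeta/M\), and notes that the integer \(L\) and the shift \(2\) are represented exactly, whereas you take \(\widetilde H\) as given by the lemma's hypothesis.
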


\medskip

The proof and register-level construction are given in
Appendix~\ref{app:sampled-dyadic},
on page~\pageref{proof:full-coherent-sample-preparation}.
\paragraph{Quantum mean estimation.}
The construction above provides the coherent sampler required by the
bounded-variance quantum mean-estimation procedure of
Montanaro~\cite[Algorithm~3 and Theorem~5]{Montanaro2015}.
For either branch, \(\widehat Y_{i,j}^{\,2}\le
2\widetilde H_{G_M}^{\,2}+8\), using
\((2-y)^2\le2y^2+8\) for the reweighting branch.
Thus Eq.~\eqref{eq:coherent-fixed-point-log-bounds} gives the known bound
\[
\operatorname{Var}(X)
\le
\mathbb E[X^2]
\le
L^2\!\left(4H_M\ln(1/\gamma)+4\zeta^2+8\right)
=:\sigma_X^2.
\]
The encoded value of \(X\) is available for the reversible arithmetic,
comparisons, and controlled rotations used by the procedure.
Writing \(\ket{\Psi_X}:=U_{\mathrm{sample}}\ket0\), the reflection about
the prepared sample state is implemented as
\[
2\ket{\Psi_X}\!\langle\Psi_X|-I
=
U_{\mathrm{sample}}
\bigl(2\ket0\!\langle0|-I\bigr)
U_{\mathrm{sample}}^\dagger,
\]
where \(\ket0\) denotes the joint initial state of all preparation
registers. Thus no additional sample-state-reflection oracle is required.
For \(0<\eta\le\sigma_X\), the procedure estimates
\(\mathbb E[X]\) to additive error \(\eta\), with success probability
at least \(2/3\), using
\[
\widetilde O\!\left(\frac{\sigma_X}{\eta}\right)
\]
applications of \(U_{\mathrm{sample}}\) and
\(U_{\mathrm{sample}}^\dagger\).

\medskip

The choices of \(M\) and \(\zeta\), and the application of this
mean-estimation procedure to the full reconstruction, are given in
Subsection~\ref{subsec:coherent-reconstruction},
Algorithm~\ref{alg:sampled-dyadic-reconstruction}.

\subsection{Reconstruction algorithm and complexity}
\label{subsec:coherent-reconstruction}

The preceding construction provides a coherent sample \(X\) whose expectation
approximates \(f(\beta)-f(0)\). We now combine this sample with
bounded-variance quantum mean estimation in
Algorithm~\ref{alg:sampled-dyadic-reconstruction} to obtain the full
partition-function reconstruction procedure.
Theorem~\ref{thm:coherent-indexed-main} proves the correctness of this
algorithm and gives its coherent-TFD and block-encoding query complexities.
In the extensive constant-temperature regime, both query counts scale as
\(\widetilde O(n/\epsilon)\) for constant failure probability.

\begin{thm}[Coherent-temperature partition-function estimation]
\label{thm:coherent-indexed-main}
\label{thm:sampled-dyadic-main}
Let \(H\) be an \(n\)-qubit Hamiltonian satisfying
\(0\preceq H\preceq hI\) for some \(h>0\). Let
\(0<\epsilon\le1\) and \(0<p<1\). Suppose that a
\(C\)-slowly-varying schedule
\(0=\beta_0<\cdots<\beta_{\ell}=\beta\) is supplied, where
\(C=O(1)\) and \(0<\beta<\infty\). Assume an \(\alpha\)-block encoding of
\(H\) with \(\alpha=\Theta(h)\) and the coherent-temperature access of
Definition~\ref{def:coherent-temperature-access}.

\medskip

Algorithm~\ref{alg:sampled-dyadic-reconstruction} outputs
\(\widehat f(\beta)\) such that
\[
\Pr\!\left[
|\widehat f(\beta)-f(\beta)|
\le\epsilon
\right]
\ge1-p.
\]
Its expected reconstruction-stage query counts satisfy
\begin{equation}
\label{eq:coherent-fixed-schedule-resources}
N_{\mathrm{cohTFD}},\,N_{\mathrm{BE}}
=
\widetilde{\mathcal O}\!\left(
\frac{(\ell+h\beta)\ln(2/p)}{\epsilon}
\right).
\end{equation}
In particular, if
\[
h=O(n),
\qquad
\beta=\Theta(1),
\qquad
\ell=\widetilde O(\sqrt n),
\]
then
\begin{equation}
\label{eq:sampled-main-n-scaling}
\boxed{
N_{\mathrm{cohTFD}},\,N_{\mathrm{BE}}
=
\widetilde{\mathcal O}\!\left(
\frac{n\ln(2/p)}{\epsilon}
\right).
}
\end{equation}
For constant \(p\), both nonzero query counts are
\(\widetilde{\mathcal O}(n/\epsilon)\).

\end{thm}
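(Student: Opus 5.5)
The proof splits the error into three parts: the deterministic QSVT bias, the truncation and fixed-point bias of the harmonic samples, and the statistical error of quantum mean estimation. Let \(\widetilde f(\beta)\) be the reconstruction with each \(\ln R(\beta_i,\delta_i)\) replaced by \(\ln p_{i,0}+2\), exactly as in the proof of Theorem~\ref{thm:single-temperature-main}. Lemma~\ref{lem:single-temperature-dilation-stability} gives \(|\widetilde f(\beta)-f(\beta)|\le\epsilon/16\), provided the QSVT accuracy is \(\nu=\Theta(\gamma_0\epsilon/\sum_i2^{k_i})\). This accuracy is inverse polynomial in \(\ell,h\beta,1/\epsilon\), so the QSVT degree is \(q=\widetilde O(1)\) because \(h\delta_i\le1\) and \(\alpha=\Theta(h)\). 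By Eq.~\eqref{eq:coherent-sampled-identity}, \(\widetilde f(\beta)-f(0)=L\,\mathbb E_{(I,J)\sim\pi}[\widetilde Y_{I,J}]\), where \(\widetilde Y\) denotes the implemented logarithmic terms. Every branch probability satisfies \(p_{i,j}\ge\gamma=\gamma_0/2\). Eq.~\eqref{eq:coherent-fixed-point-log-bounds} then gives
\[
\bigl|\mathbb E[X]-(\widetilde f(\beta)-f(0))\bigr|\le L\Bigl(\frac{e^{-\gamma(M+1)}}{\gamma(M+1)}+\zeta\Bigr).
\]
I choose \(M=\lceil\gamma^{-1}\ln(16L/(\gamma\epsilon))\rceil\) and \(\zeta=\epsilon/(16L)\), which make this bias at most \(\epsilon/8\). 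Then \(M=O(\ln(L/\epsilon))\), and the arithmetic precision is logarithmic.

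Next I bound \(L\). By Eq.~\eqref{eq:dyadic-parameters}, \(\sum_{j=0}^{k_i}a_{i,j}\le2^{k_i+1}\le2\max\{2,2h\Delta_i\}\). Summing over intervals and using \(\sum_i\Delta_i=\beta\) gives \(L=O(\ell+h\beta)\). With \(H_M=O(\ln M)\) and \(\gamma=\Omega(1)\), the variance bound preceding the algorithm becomes \(\sigma_X=L\sqrt{4H_M\ln(1/\gamma)+4\zeta^2+8}=\widetilde O(\ell+h\beta)\).

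I then run Montanaro's bounded-variance mean estimator on \(U_{\mathrm{sample}}\) from Lemma~\ref{lem:full-coherent-sample-preparation}, with target error \(\eta=\epsilon/2\). The requirement \(\eta\le\sigma_X\) holds because \(\sigma_X\ge\sqrt8L\ge\sqrt8\) while \(\epsilon\le1\). This gives success probability \(2/3\) using \(\widetilde O(\sigma_X/\epsilon)\) calls to \(U_{\mathrm{sample}}\) and its inverse. On success the total error is at most \(\epsilon/2+\epsilon/8+\epsilon/16<\epsilon\). Taking the median of \(O(\ln(2/p))\) independent repetitions amplifies this to success probability \(1-p\), by the same Hoeffding argument as in Theorem~\ref{thm:single-temperature-main}.

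For the query count, each \(U_{\mathrm{sample}}\) or its inverse uses \(M\) applications of \(U_{\mathrm{branch}}\) or \(U_{\mathrm{branch}}^\dagger\). Each such application costs at most two coherent-TFD calls and \(\widetilde O(q)=\widetilde O(1)\) block-encoding queries. Multiplying, both \(N_{\mathrm{cohTFD}}\) and \(N_{\mathrm{BE}}\) are \(\widetilde O(M\sigma_X\ln(2/p)/\epsilon)=\widetilde O((\ell+h\beta)\ln(2/p)/\epsilon)\). Substituting \(h=O(n)\), \(\beta=\Theta(1)\) and \(\ell=\widetilde O(\sqrt n)\) yields Eq.~\eqref{eq:sampled-main-n-scaling}.

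The main obstacle is making Montanaro's bounded-variance algorithm apply rigorously. It needs the coherent sampler and the reflection, which Lemma~\ref{lem:full-coherent-sample-preparation} supplies, along with a known variance bound \(\sigma_X\) and a possibly unbounded real output. I would handle the output issue by noting that \(X\) is bounded: \(|X|\le L(H_M+\zeta+2)\). The polylogarithmic overhead from this range, and from the expected-versus-worst-case query count, is absorbed into \(\widetilde O\).
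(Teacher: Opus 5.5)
Your proposal is correct and follows the paper's own proof essentially step by step. The paper splits the error the same way: QSVT bias \(\epsilon/16\), truncation and fixed-point bias \(\epsilon/8\), and mean-estimation error \(\epsilon/2\). It uses the same bound \(L=O(\ell+h\beta)\), the same \(\sigma_X\), and the same query accounting: \(M\) branch applications per sample, \(\widetilde O(\sigma_X/\epsilon)\) sampler calls, and median amplification. Your explicit check that \(\epsilon/2\le\sigma_X\), and your routing of the QSVT bias through \(\widetilde f(\beta)\) rather than directly through \(\sum_i a_{i,0}\le L\), are only cosmetic differences.
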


\begin{proof}
Let \((I,J)\sim\pi\), and define
\[
\widetilde Y_{i,j}
:=
\begin{cases}
\ln p_{i,0}+2, & j=0,\\
-\ln p_{i,j}, & j\ge1.
\end{cases}
\]
For \(j\ge1\), Eq.~\eqref{eq:implemented-local-probabilities} gives
\(p_{i,j}=O(\beta_i,2^j\delta_i)\), and hence
\(\widetilde Y_{i,j}=Y_{i,j}\). For \(j=0\),
Lemma~\ref{lem:single-temperature-dilation-stability} gives
\[
p_{i,0}\ge\frac{\gamma_0}{2}=\gamma,
\qquad
|\widetilde Y_{i,0}-Y_{i,0}|
=
|\ln p_{i,0}+2-\ln R(\beta_i,\delta_i)|
\le
\frac{4\nu}{\gamma_0},
\]
where we used the choice of \(\nu\) in
Algorithm~\ref{alg:sampled-dyadic-reconstruction}.
For the overlap branches,
Eq.~\eqref{eq:uniform-conditioning} gives
\(p_{i,j}\ge\gamma_0\). Thus every branch probability satisfies
\(p_{i,j}\ge\gamma\).

\medskip

Only the reweighting branches contribute to the QSVT approximation error.
Since \(\sum_i a_{i,0}\le L\),
\[
\sum_i a_{i,0}
|\widetilde Y_{i,0}-Y_{i,0}|
\le
\frac{4\nu}{\gamma_0}\sum_i a_{i,0}
\le
\frac{4\nu L}{\gamma_0}
=
\frac{\epsilon}{16}.
\]

\medskip

Let \(\widehat Y_{i,j}\) denote the fixed-point logarithmic output of the
coherent sampler, including the correction on reweighting branches. By
Eq.~\eqref{eq:coherent-fixed-point-log-bounds}, the inequality
\((2-y)^2\le2y^2+8\), and the choices of
\(M\) and \(\zeta\) in
Algorithm~\ref{alg:sampled-dyadic-reconstruction},
\[
\left|
\mathbb E[\widehat Y_{i,j}]-\widetilde Y_{i,j}
\right|
\le
\frac{\epsilon}{8L},
\qquad
\mathbb E[\widehat Y_{i,j}^{\,2}]
\le
4H_M\ln(1/\gamma)+4\zeta^2+8.
\]
The first bound combines the logarithmic truncation error
\(\epsilon/(16L)\) and the arithmetic error
\(\zeta=\epsilon/(16L)\). Moreover, \(\gamma=\Omega(1)\) and
\(M=O(\log(L/\epsilon))\), so
\(\mathbb E[\widehat Y_{i,j}^{\,2}]=\widetilde O(1)\) uniformly over all
branches.

\medskip

Now define the global coherent sample
\(X:=L\widehat Y_{I,J}\). Since
\(\pi_{i,j}=a_{i,j}/L\),
\[
\mathbb E[X]
=
\sum_{(i,j)\in\mathcal R}
a_{i,j}\mathbb E[\widehat Y_{i,j}],
\]
whereas Eq.~\eqref{eq:coherent-sampled-identity} gives
\[
f(\beta)-f(0)
=
\sum_{(i,j)\in\mathcal R}a_{i,j}Y_{i,j}.
\]
Therefore
\begin{align*}
\left|
\mathbb E[X]-(f(\beta)-f(0))
\right|
&\le
\sum_{(i,j)\in\mathcal R}
a_{i,j}
\left|
\mathbb E[\widehat Y_{i,j}]-\widetilde Y_{i,j}
\right|
+
\sum_i a_{i,0}
|\widetilde Y_{i,0}-Y_{i,0}|\\
&\le
L\frac{\epsilon}{8L}
+
\frac{\epsilon}{16}
=
\frac{3\epsilon}{16}
<
\frac{\epsilon}{4}.
\end{align*}
The second moment satisfies
\[
\mathbb E[X^2]
=
L^2\sum_{(i,j)\in\mathcal R}
\pi_{i,j}\mathbb E[\widehat Y_{i,j}^{\,2}]
\le
L^2\max_{i,j}\mathbb E[\widehat Y_{i,j}^{\,2}]
=
\widetilde O(L^2),
\]
and hence
\(\operatorname{Var}(X)=\widetilde O(L^2)\).

\medskip

Bounded-variance quantum mean estimation
\cite[Theorem~5]{Montanaro2015} therefore estimates
\(\mathbb E[X]\) to additive error \(\epsilon/2\) using
\(\widetilde O(L/\epsilon)\) coherent sample-preparation calls and their
inverses. Combining this with the preceding bias gives
\[
\left|
\widehat g-(f(\beta)-f(0))
\right|
\le
\frac{\epsilon}{2}+\frac{3\epsilon}{16}
<
\epsilon.
\]
Adding the known baseline \(f(0)=n\ln2\), one run produces an
\(\epsilon\)-accurate estimate of \(f(\beta)\) with constant success
probability. Using the
\(N_{\mathrm{rep}}\) independent repetitions in
Algorithm~\ref{alg:sampled-dyadic-reconstruction} and taking their median
reduces the failure probability to at most \(p\).

\medskip

It remains to count oracle queries. By
Subsection~\ref{subsec:coherent-branch-implementation}, one application of
\(U_{\mathrm{branch}}\) uses at most two coherent-TFD calls and
\(\widetilde O(1)\) block-encoding queries. One logarithmic sample uses
\(M\) such branch applications, and its inverse has the same asymptotic
cost. Thus one sample-state reflection uses \(O(M)\) coherent-TFD calls
and \(\widetilde O(M)\) block-encoding queries. Since
\(M=O(\log(L/\epsilon))\), this factor is absorbed into
\(\widetilde O\). Together with quantum mean estimation and the
\(N_{\mathrm{rep}}=O(\ln(2/p))\) repetitions,
\[
N_{\mathrm{TFD}}=0,
\qquad
N_{\mathrm{cohTFD}},\,N_{\mathrm{BE}}
=
\widetilde{\mathcal O}\!\left(
\frac{L\ln(2/p)}{\epsilon}
\right).
\]

\medskip

Finally, using \(a_{i,j}=2^{k_i-j}\) and
\(k_i=\max\{0,\lceil\log_2(h\Delta_i)\rceil\}\),
\begin{align*}
L
&=
\sum_{i=0}^{\ell-1}
\sum_{j=0}^{k_i}2^{k_i-j}
=
\sum_{i=0}^{\ell-1}
(2^{k_i+1}-1)\\
&=
O\!\left(
\sum_{i=0}^{\ell-1}
\max\{1,h\Delta_i\}
\right)
=
O(\ell+h\beta),
\end{align*}
where \(\sum_i\Delta_i=\beta\). This proves
Eq.~\eqref{eq:coherent-fixed-schedule-resources}.

\medskip

Under
\(h=O(n)\), \(\beta=\Theta(1)\), and
\(\ell=\widetilde O(\sqrt n)\),
we have \(L=\widetilde O(n)\). Therefore
\[
N_{\mathrm{TFD}}=0,
\qquad
N_{\mathrm{cohTFD}},\,N_{\mathrm{BE}}
=
\widetilde{\mathcal O}\!\left(
\frac{n\ln(2/p)}{\epsilon}
\right),
\]
which proves Eq.~\eqref{eq:sampled-main-n-scaling}.
\end{proof}
\begin{algorithm}[H]
\caption{Partition-function reconstruction with coherent-temperature access}
\label{alg:sampled-dyadic-reconstruction}
\small
\begin{algorithmic}[1]
\Require An \(n\)-qubit Hamiltonian \(H\) with
\(0\preceq H\preceq hI\) and \(h>0\); an \(\alpha\)-block encoding of
\(H\) with \(\alpha=\Theta(h)\); a supplied \(C\)-slowly-varying schedule
\(0=\beta_0<\cdots<\beta_{\ell}=\beta\); the
coherent-temperature access of
Definition~\ref{def:coherent-temperature-access};
\(0<\epsilon\le1\); and \(0<p<1\).
\Ensure An estimate \(\widehat f(\beta)\) satisfying
\(\lvert\widehat f(\beta)-\ln Z(\beta)\rvert\le\epsilon\)
with probability at least \(1-p\).

\State Compute \(\Delta_i\), \(k_i\), and \(\delta_i\) from
Eq.~\eqref{eq:dyadic-parameters}. Set \(a_{i,j}=2^{k_i-j}\) for
\(0\le j\le k_i\), and compute \(\mathcal R\), \(L\), and \(\pi_{i,j}\)
from Eq.~\eqref{eq:coherent-branch-data}.

\State Set
\[
\begin{gathered}
\gamma_0\gets\min\{e^{-3},1/C\},
\qquad
\gamma\gets\gamma_0/2,
\qquad
\nu\gets\frac{\gamma_0\epsilon}{64L},
\qquad
\zeta\gets\frac{\epsilon}{16L},
\\
M\gets
\left\lceil
\frac{1}{\gamma}
\ln\!\left(\frac{16L}{\gamma\epsilon}\right)
\right\rceil.
\end{gathered}
\]

\State Construct \(U_{\mathrm{sample}}\) and its inverse as in
Lemma~\ref{lem:full-coherent-sample-preparation}, using branch
distribution \(\pi\), scale \(L\), truncation length \(M\),
fixed-point error at most \(\zeta\), and QSVT approximation error
\(\nu\) on the reweighting branches. The encoded sample is
\(X=L\widehat Y_{I,J}\), where \((I,J)\sim\pi\).

\State Set the variance bound
\[
\sigma_X^2
\gets
L^2\!\left(4H_M\ln(1/\gamma)+4\zeta^2+8\right),
\]
where \(H_M=\sum_{m=1}^{M}1/m\).

\State Set
\[
N_{\mathrm{rep}}
\gets
2\left\lceil9\ln(2/p)\right\rceil+1.
\]

\For{\(r=1,\ldots,N_{\mathrm{rep}}\)}
    \State Independently apply the bounded-variance quantum
    mean-estimation procedure of
    Montanaro~\cite[Algorithm~3 and Theorem~5]{Montanaro2015},
    using \(U_{\mathrm{sample}}\),
    \(U_{\mathrm{sample}}^\dagger\), the variance bound
    \(\sigma_X^2\), and additive target \(\epsilon/2\).
    Denote the output by \(\widehat g_r\).
\EndFor

\State \Return
\[
\widehat f(\beta)
\gets
n\ln2+
\operatorname{median}
\{\widehat g_1,\ldots,\widehat g_{N_{\mathrm{rep}}}\}.
\]
\end{algorithmic}
\end{algorithm}
\begin{cor}[Block-encoding-query optimality]
\label{cor:coherent-block-query-optimality}
Assume the hypotheses of
Theorem~\ref{thm:coherent-indexed-main}. Suppose that the supplied schedule
satisfies \(\ell=O(1+h\beta)\), that \(h\beta\ge1\), and that
\(p\in(0,1/3]\) is fixed. Under the hidden-offset model of
Theorem~\ref{thm:linear-lower-bound}, for
\[
0<\epsilon\le
\min\!\left\{1,\frac{\beta\min\{h,\alpha\}}{16}\right\},
\]
\[
N_{\mathrm{BE}}
=
\widetilde{\Theta}\!\left(\frac{h\beta}{\epsilon}\right).
\]
In particular, if \(h=\Theta(n)\) and \(\beta=\Theta(1)\), then
\(N_{\mathrm{BE}}=\widetilde{\Theta}(n/\epsilon)\).
This equality concerns the worst-case expected block-encoding query
complexity.
\end{cor}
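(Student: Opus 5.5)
The corollary splits into a matching upper and lower bound on the worst-case expected number of block-encoding queries; I would prove each side separately and then compare.

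\emph{Upper bound.} I would invoke Theorem~\ref{thm:coherent-indexed-main}. With $p\le1/3$ fixed, $\ln(2/p)=O(1)$, so it gives
\[
N_{\mathrm{BE}}=\widetilde O\bigl((\ell+h\beta)/\epsilon\bigr).
\]
The assumption $\ell=O(1+h\beta)$ together with $h\beta\ge1$ gives $\ell+h\beta=O(h\beta)$. Hence $N_{\mathrm{BE}}=\widetilde O(h\beta/\epsilon)$. The polylogarithmic factors hidden here depend on $\ell$, $h\beta$ and $1/\epsilon$, which is exactly what $\widetilde\Theta$ absorbs. I also need the requirement $\epsilon\le1$ of Theorem~\ref{thm:coherent-indexed-main}; it is built into the stated range through the minimum with $1$.

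\emph{Lower bound.} I would invoke Theorem~\ref{thm:linear-lower-bound} in the hidden-offset model. For $0<\epsilon\le\beta\min\{h,\alpha\}/16$ it gives $\Omega(\alpha\beta/\epsilon)$ block-encoding queries for success probability $2/3$, also in expectation for variable stopping time. Any algorithm with failure probability $p\le1/3$ succeeds with probability at least $2/3$, so the lower bound applies to it. Because $\alpha=\Theta(h)$, the bound reads $\Omega(h\beta/\epsilon)$.

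\emph{Compatibility of the two models.} This is the step I expect to need the most care. The hard instances $H_c=A+cI$ must lie inside the class covered by the upper bound: they need $0\preceq H_c\preceq hI$, and the coherent-temperature oracles must be independent of $c$. The latter holds because the TFD states, and hence the branch preparation, the parameter loading and the coherent TFD unitary, are unchanged by a scalar shift.

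There is one subtlety: the slowly-varying schedule is supplied input, so it must be valid for every $c$. This is also automatic, since overlaps are shift invariant. I must additionally check that some schedule with $\ell=O(1+h\beta)$ exists for these instances, for example one from Theorem~\ref{thm:results-schedule} of length $O(1+\sqrt{h\beta})$, which is $O(1+h\beta)$. Supplying such a schedule for free does not weaken the lower bound, because it does not depend on $c$ and so counts among the $c$-independent auxiliary operations. The block encoding is the only $c$-dependent oracle in both models.

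Combining the two sides gives $N_{\mathrm{BE}}=\widetilde\Theta(h\beta/\epsilon)$ for the worst-case expected complexity. Specializing to $h=\Theta(n)$ and $\beta=\Theta(1)$ gives $\widetilde\Theta(n/\epsilon)$.
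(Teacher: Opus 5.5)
Your proposal is correct and follows the paper's own two-line proof. The upper bound comes from Theorem~\ref{thm:coherent-indexed-main} using $\ell+h\beta=O(h\beta)$, and the lower bound comes from Theorem~\ref{thm:linear-lower-bound} using $\alpha=\Theta(h)$. Your extra checks are left implicit in the paper: that the hard family $H_c=A+cI$ satisfies the upper-bound hypotheses (shift-invariant TFD states and schedules, $c$-independent branch and loading data), and that $p\le 1/3$ gives success probability at least $2/3$. These checks usefully confirm that the two bounds apply to a common instance class.
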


\begin{proof}
Since \(\ell=O(1+h\beta)\) and \(h\beta\ge1\),
Theorem~\ref{thm:coherent-indexed-main} gives
\(N_{\mathrm{BE}}=\widetilde O(h\beta/\epsilon)\).
The hidden-offset lower bound gives
\(N_{\mathrm{BE}}=\Omega(\alpha\beta/\epsilon)
=\Omega(h\beta/\epsilon)\), since \(\alpha=\Theta(h)\).
\end{proof}

\begin{rem}[Uniform-grid comparison]
\label{rem:uniform-grid-comparison}
The block-encoding ratio estimator of
\cite[Appendix~D]{RouzeFrancaAlhambra2024} admits the following coherent
fixed-temperature benchmarks.\footnote{The theorem stated in that reference
uses sampling and has \(\epsilon^{-2}\) precision dependence.
The comparison here counts calls to coherent preparation and short
exponential block encodings, with their implementation costs accounted for
separately.}
For a uniform grid, set \(K:=\max\{1,\lceil h\beta\rceil\}\),
\(\theta_i:=i\beta/K\) for \(0\le i\le K\), and
\(\delta:=\beta/K\). Since \(h\delta\le1\), each step satisfies
\[
O(\theta_i,\delta)
=\frac{R(\theta_i,\delta/2)^2}{R(\theta_i,\delta)}
\ge e^{-h\delta}\ge e^{-1}.
\]
Thus the grid is \(e\)-slowly varying. Each interval has
\(k_i=0\), all reconstruction weights are one, and
\(S=\sum_{i=0}^{K-1}a_{i,0}^{2/3}=K\).
The allocation used in Theorem~\ref{thm:single-temperature-main},
based on nearly unbiased estimation~\cite{Cornelissen2023},
therefore gives \(\widetilde O(K^{3/2}/\epsilon)\) expected queries
for fixed failure probability. Estimating each logarithmic term to
accuracy \(\epsilon/K\) by amplitude estimation instead gives
\(\widetilde O(K^2/\epsilon)\) queries. For \(h=O(n)\) and constant
\(\beta\), these become \(\widetilde O(n^{3/2}/\epsilon)\) and
\(\widetilde O(n^2/\epsilon)\), respectively.

\medskip

If the coherent-temperature interface also supports the uniform-grid
temperatures at comparable per-call cost, then \(L=K\), and
Theorem~\ref{thm:coherent-indexed-main} gives
\(\widetilde O((1+h\beta)/\epsilon)\) queries for fixed failure
probability. This specialization uses only reweighting branches and a
deterministic schedule. The short-schedule construction can require fewer
distinct temperatures, and its access model only promises the temperatures
needed for that supplied schedule. Consequently, the uniform-grid
specialization is available when those additional temperatures are
supported; its gate cost must also include their coherent preparation
and loading.
\end{rem}

\section{Adaptive Cooling Schedules}
\label{sec:adaptive-cooling-schedules}

The preceding reconstruction algorithms achieve Heisenberg scaling given a
suitable slowly-varying schedule. We now explain how such schedules can be
generated and how their length is bounded. We use the adaptive procedure of
Arunachalam et al.~\cite{Arunachalam2022}, which selects each successive
inverse temperature by binary search using TFD overlap estimates.

\medskip

Following their analysis, we compare the generated schedule with a
perfectly-balanced schedule. Throughout this section,
\(f(\beta):=\ln Z(\beta)\), where
\(Z(\beta)=\operatorname{Tr}(e^{-\beta H})\).

\begin{defn}[Perfectly-balanced schedule~\cite{Arunachalam2022}]
A schedule
\[
0=\gamma_0<\gamma_1<\cdots<\gamma_{L_{\mathrm{PB}}}=\gamma
\]
is perfectly balanced if
\[
f\!\left(\frac{\gamma_i+\gamma_{i+1}}{2}\right)
=\frac{f(\gamma_i)+f(\gamma_{i+1})}{2}-1
\qquad (0\le i<L_{\mathrm{PB}}-1),
\]
with equality replaced by \(\ge\) on the final interval.
Thus every nonfinal interval has squared TFD overlap \(e^{-2}\),
and the final interval has squared overlap at least \(e^{-2}\).
Here \(L_{\mathrm{PB}}\) is the total number of intervals.
\end{defn}

\medskip

The following theorem gives the length bounds needed below.

\begin{thm}[Perfectly-balanced schedule length,
adapted from~\cite{Stefankovic2009}]
\label{thm:adaptive_schedule_existence}
\label{thm:existence}
Let \(H\) be an \(n\)-qubit Hamiltonian satisfying
\(0\preceq H\preceq hI\). For a finite endpoint
\(\beta_{\max}>0\) with \(-f'(\beta_{\max})>0\), there exists a
perfectly-balanced schedule from \(0\) to \(\beta_{\max}\) with
\[
L_{\mathrm{PB}}
\le
1+\sqrt{
\bigl(f(0)-f(\beta_{\max})\bigr)
\ln\frac{-f'(0)}{-f'(\beta_{\max})}
}.
\]
The factor \(f(0)-f(\beta_{\max})\) is at most \(h\beta_{\max}\).

\medskip

For the low-temperature specialization, assume that \(H\) is nonconstant.
Let \(E_0\) be the ground energy, \(N_0\) its degeneracy, and
\(\bar H:=H-E_0I\). Define
\[
E_s:=\min\{E-E_0:E\in\operatorname{spec}(H),\ E>E_0\}.
\]
If \(\ln N_0+2<n\ln2\), there is a finite \(\gamma>0\) such that
\(\ln\operatorname{Tr}(e^{-\gamma\bar H})=\ln N_0+2\), and the
corresponding perfectly-balanced schedule satisfies
\[
L_{\mathrm{PB}}
\le
1+\sqrt{
\bigl(n\ln2-\ln N_0-2\bigr)
\ln\!\left(
\frac{e^2\|\bar H\|_\infty}{(e^2-1)E_s}
\right)
}.
\]
The terminal interval from \(\gamma\) to \(\infty\) has squared TFD
overlap \(e^{-2}\) and is therefore \(e^2\)-slowly varying.
\end{thm}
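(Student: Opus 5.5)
The plan is to follow the classical Štefankovič–Vempala–Vigoda counting argument, phrased for the convex function \(f\) and the mean energy \(g(\theta):=-f'(\theta)=\Tr(\rho_\theta H)\). Because \(f''=\Var_\theta(H)\ge0\), \(g\) is nonincreasing. The hypothesis \(g(\beta_{\max})>0\) gives \(g>0\) on \([0,\beta_{\max}]\).

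\textbf{Existence.} Build the schedule greedily. Given \(\gamma_i\), the map \(t\mapsto F(\gamma_i,t)=f(\gamma_i)+f(t)-2f((\gamma_i+t)/2)\) is continuous, vanishes at \(t=\gamma_i\), and is nondecreasing by the computation in Lemma~\ref{lem:overlap-monotonicity}. Take \(\gamma_{i+1}\) to be the smallest \(t\) with \(F=2\) if such a \(t\le\beta_{\max}\) exists. Otherwise stop with \(\gamma_{i+1}=\beta_{\max}\), in which case \(F\le2\) and the final overlap is at least \(e^{-2}\). Finiteness follows from the length bound proved next, since every nonfinal interval contributes at least a fixed positive amount to a finite quantity.

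\textbf{Key per-interval inequality.} Fix a nonfinal interval \([a,b]\) with midpoint \(m\), and set \(x:=f(a)-f(b)=\int_a^b g\) and \(y:=\ln(g(a)/g(b))\ge0\). The balance condition reads
\[
2=\int_a^m g-\int_m^b g .
\]
For \(t\in[a,m]\), monotonicity of \(g\) gives \(g(t+\Delta/2)\ge g(b)\) and \(g(t)\le g(a)\), where \(\Delta=b-a\). Hence \(g(t+\Delta/2)\ge e^{-y}g(t)\) pointwise, which yields \(\int_m^b g\ge e^{-y}\int_a^m g\). Therefore
\[
2\le(1-e^{-y})\int_a^m g\le y\,x ,
\]
so \(x_iy_i\ge2\ge1\) on every nonfinal interval.

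\textbf{Summation.} Apply Cauchy--Schwarz over the \(L_{\mathrm{PB}}-1\) nonfinal intervals:
\[
L_{\mathrm{PB}}-1\le\sum_i\sqrt{x_iy_i}\le\Bigl(\sum_i x_i\Bigr)^{1/2}\Bigl(\sum_i y_i\Bigr)^{1/2}.
\]
Both sums telescope and are bounded by their values on the whole range \([0,\beta_{\max}]\), because \(x_i,y_i\ge0\). This gives the claimed bound. Finally, \(f(0)-f(\beta_{\max})=\int_0^{\beta_{\max}}g\le h\beta_{\max}\), since \(g\le h\).

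\textbf{Low-temperature specialization.} Run the same argument with \(\bar H\), which has the same TFD states and overlaps, and write \(\bar f(\theta)=\ln\Tr(e^{-\theta\bar H})\). This function decreases continuously from \(n\ln2\) toward \(\ln N_0\). The hypothesis \(\ln N_0+2<n\ln2\) and the intermediate value theorem then give a finite \(\gamma\) with \(\bar f(\gamma)=\ln N_0+2\); since \(H\) is nonconstant, \(\bar g(\gamma)>0\). The two factors in the bound are handled as follows.
\begin{itemize}
\item The first factor is \(\bar f(0)-\bar f(\gamma)=n\ln2-\ln N_0-2\).
\item For the logarithm, \(\bar g(0)=\Tr(\bar H)/2^n\le\|\bar H\|_\infty\).
\item At \(\gamma\), \(\bar g(\gamma)\ge E_s\Pr_\gamma[\text{excited}]\), and \(\Pr_\gamma[\text{excited}]=1-N_0e^{-\bar f(\gamma)}=1-e^{-2}\).
\end{itemize}
Together these give the ratio \(e^2\|\bar H\|_\infty/((e^2-1)E_s)\). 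The TFD state at \(\infty\) is the normalized maximally entangled state on the ground space, so the terminal overlap is \(N_0/\Tr(e^{-\gamma\bar H})=e^{-2}\).

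I expect the main obstacle to be the per-interval inequality \(x_iy_i\ge1\), because the integer-spectrum tricks used in the classical proof are unavailable. The pointwise ratio bound above, which uses only monotonicity of \(g\), is the route I would try first. The remaining steps are routine bookkeeping.
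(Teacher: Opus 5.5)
Your proof is correct, but for the finite-endpoint bound it takes a different route from the paper. The paper does not reprove that part. It invokes \cite[Lemma~4.3]{Stefankovic2009} as a black box and then observes that, in the maximal-endpoint construction used in that lemma's proof, every nonfinal interval saturates the midpoint inequality, so the sequence is perfectly balanced.

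You instead build the schedule greedily, so the perfectly-balanced property holds by construction. You then prove the per-interval estimate $x_iy_i\ge2$ directly from the pointwise comparison $g(t+\Delta/2)\ge e^{-y}g(t)$ on the left half-interval. This uses only that the mean energy $g$ is positive and nonincreasing on $[0,\beta_{\max}]$, and Cauchy--Schwarz with telescoping finishes the argument.

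Your route buys three things:
\begin{itemize}
\item It is self-contained.
\item It gives a slightly better constant, $L_{\mathrm{PB}}\le1+\sqrt{(f(0)-f(\beta_{\max}))\ln(g(0)/g(\beta_{\max}))/2}$.
\item The estimate uses only $F\ge2$, not $F=2$. It therefore applies verbatim to any schedule whose nonfinal intervals have partition-function ratio at least $e^2$. Such schedules directly satisfy $\ell\le1+\sqrt{(f(0)-f(\beta_{\max}))\ln(g(0)/g(\beta_{\max}))}$, without the comparison step of Corollary~\ref{cor:large-ratio-counting}. That is one interval better than the $\ell\le L_{\mathrm{PB}}+1$ obtained there.
\end{itemize}
The paper's route is shorter, but it leans on internal details of the cited construction, namely that its intervals saturate the inequality rather than merely satisfy it.

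Your low-temperature specialization coincides with the paper's argument. You pass to $\bar H$, which leaves the midpoint second differences unchanged. You then bound $\bar f(0)-\bar f(\gamma)$, $\bar g(0)\le\|\bar H\|_\infty$, and $\bar g(\gamma)\ge E_s(1-e^{-2})$, and compute the terminal overlap $N_0/\operatorname{Tr}(e^{-\gamma\bar H})=e^{-2}$.

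Your finiteness remark for the greedy construction is phrased loosely but is valid. Applying Cauchy--Schwarz to any finite prefix of nonfinal intervals bounds their number, so the process reaches $\beta_{\max}$ after finitely many steps.
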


\medskip

The finite-endpoint bound follows from
\cite[Lemma~4.3]{Stefankovic2009}. Its application to perfectly-balanced
schedules and the low-temperature specialization are proved in
Appendix~\ref{app:adaptive_schedule}.
Global energy shifts leave TFD overlaps unchanged, so \(\bar H\) is
used only to express the low-temperature bound. If
\(\ln N_0+2\ge n\ln2\), then \(N_0/2^n\ge e^{-2}\), and the
one-interval schedule from zero to any finite endpoint is already
\(e^2\)-slowly varying.

\begin{rem}[Tightness in the exponentially small-gap regime]
\label{thm:adaptive_schedule_tight}
For \(n\ge3\), consider
\[
H_n=\sum_{j=1}^{n}2^{-j}\ket1\!\bra1_j,
\]
where \(\ket1\!\bra1_j\) acts on qubit \(j\) and as the identity on
the other qubits. This family has \(E_s=2^{-n}\) and
\(\|H_n\|_\infty=1-2^{-n}\).
At the endpoint \(\gamma_n\) defined by
\(\ln\operatorname{Tr}(e^{-\gamma_n H_n})=2\), every
\(e^2\)-slowly-varying schedule has \(\Omega(n)\) intervals, whereas
Theorem~\ref{thm:adaptive_schedule_existence} gives
\(L_{\mathrm{PB}}=O(n)\). Hence \(L_{\mathrm{PB}}=\Theta(n)\),
showing that the low-temperature bound can be asymptotically tight.
The proof is given in Appendix~\ref{app:adaptive_schedule},
Theorem~\ref{thm:tight}.
\end{rem}

\medskip

The adaptive schedule need not be perfectly balanced. Instead, its
nonfinal intervals have a partition-function ratio between \(e^2\) and
\(15\). The upper bound ensures slow variation, while the lower bound
controls the number of intervals through the following comparison.

\begin{cor}[Large-ratio counting bound]
\label{cor:large-ratio-counting}
Let \(0=\beta_0<\cdots<\beta_\ell=\beta_{\max}\) be a schedule
such that
\[
\frac{Z(\beta_i)Z(\beta_{i+1})}
     {Z\!\left((\beta_i+\beta_{i+1})/2\right)^2}
\ge e^2
\qquad (0\le i<\ell).
\]
Then \(\ell\le L_{\mathrm{PB}}\), where \(L_{\mathrm{PB}}\) is the
length of a perfectly-balanced schedule with the same endpoints.
In particular, the finite-endpoint bound in
Theorem~\ref{thm:adaptive_schedule_existence} also bounds \(\ell\).
If the ratio condition holds only on the nonfinal intervals, then
\(\ell\le L_{\mathrm{PB}}+1\).
\end{cor}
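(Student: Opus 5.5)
The plan is to compare the given schedule with a perfectly-balanced schedule through a potential-style argument on the second finite difference \(F(x,y)=f(x)+f(y)-2f((x+y)/2)\) from the proof of Lemma~\ref{lem:overlap-monotonicity}. The hypothesis says exactly that \(F(\beta_i,\beta_{i+1})\ge 2\) for every interval. A perfectly-balanced schedule \(0=\gamma_0<\cdots<\gamma_{L_{\mathrm{PB}}}=\beta_{\max}\) has \(F(\gamma_j,\gamma_{j+1})=2\) on nonfinal intervals and \(F\le 2\) on the final one. The goal is to show that each interval of the given schedule must contain, in a suitable sense, at least one full balanced interval, so the given schedule cannot have more intervals.

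The key step is an interlacing claim: \(\gamma_i\le\beta_i\) for all \(i\le\min\{\ell,L_{\mathrm{PB}}\}\). I would prove it by induction, with base case \(\gamma_0=\beta_0=0\). Suppose \(\gamma_i\le\beta_i\) and, for contradiction, \(\beta_{i+1}<\gamma_{i+1}\). Then \([\beta_i,\beta_{i+1}]\subseteq[\gamma_i,\gamma_{i+1}]\). Lemma~\ref{lem:overlap-monotonicity}, in its \(F\)-form, gives \(F(\beta_i,\beta_{i+1})\le F(\gamma_i,\gamma_{i+1})\le 2\).

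The main obstacle is upgrading this to a strict contradiction with \(F(\beta_i,\beta_{i+1})\ge2\). The non-strict monotonicity allows equality, so strict monotonicity is needed. I would try first to show that \(\partial_yF(x,y)=f'(y)-f'((x+y)/2)\) is strictly positive whenever \(f\) is strictly convex. Here \(f''(\beta)=\Var_{\rho_\beta}(H)>0\) unless \(H\) is constant. A constant \(H\) gives \(F\equiv0\), so the hypothesis \(F\ge 2\) is impossible unless \(\ell=0\). In the nonconstant case, shrinking the right endpoint strictly decreases \(F\), which gives \(F(\beta_i,\beta_{i+1})<2\) and contradicts the hypothesis. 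Hence \(\beta_{i+1}\ge\gamma_{i+1}\).

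To conclude, suppose \(\ell>L_{\mathrm{PB}}\). Interlacing at index \(L_{\mathrm{PB}}\) gives \(\beta_{L_{\mathrm{PB}}}\ge\gamma_{L_{\mathrm{PB}}}=\beta_{\max}\), which leaves no room for the remaining intervals and is a contradiction. Hence \(\ell\le L_{\mathrm{PB}}\). The finite-endpoint bound of Theorem~\ref{thm:adaptive_schedule_existence} then bounds \(\ell\) directly.

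For the final statement, suppose the ratio condition holds only on the nonfinal intervals. I would apply the same induction to the truncated schedule \(\beta_0<\cdots<\beta_{\ell-1}\), with a balanced schedule to \(\beta_{\ell-1}\). To keep the same \(L_{\mathrm{PB}}\), it is better to run the induction against the balanced schedule to \(\beta_{\max}\) up to index \(\ell-1\). This gives \(\beta_{\ell-1}\ge\gamma_{\ell-1}\), and \(\beta_{\ell-1}<\beta_{\max}\) forces \(\ell-1\le L_{\mathrm{PB}}\).
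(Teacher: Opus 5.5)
Your interlacing argument is correct and is essentially the comparison argument of Arunachalam et al.\ that the paper cites. You also add a useful step: you make the needed strict monotonicity explicit through strict convexity of \(f\) for nonconstant \(H\), whereas the non-strict Lemma~\ref{lem:overlap-monotonicity} alone would only force \(F(\beta_i,\beta_{i+1})=2\), which is not a contradiction.

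In the last part, do not run the induction ``up to index \(\ell-1\)'', since that presupposes \(\gamma_{\ell-1}\) exists. Argue by contradiction instead: if \(\ell-1\ge L_{\mathrm{PB}}\), interlacing at index \(L_{\mathrm{PB}}\) uses only nonfinal \(\beta\)-intervals and gives \(\beta_{L_{\mathrm{PB}}}\ge\beta_{\max}\), which is impossible. This actually yields the sharper bound \(\ell\le L_{\mathrm{PB}}\).
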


\medskip

This is the comparison argument of
\cite[Corollary~2.4]{Arunachalam2022}: increasing the interval ratio
cannot increase the number of intervals. Applying the same comparison to all intervals except the last,
then adding the final interval, gives
\(\ell\le L_{\mathrm{PB}}+1\).

\medskip

We now state the quantum schedule-generation procedure in our notation.
The states \(\ket{\psi_\theta}\) below are the TFD states of
Definition~\ref{def:tfd-state}.

\begin{algorithm}[h]
\caption{Quantum cooling schedule generation,
adapted from~\cite[Algorithm~3]{Arunachalam2022}}
\label{alg:quantum-cooling-schedule-generation}
\small
\begin{algorithmic}[1]
\Require Endpoint \(\beta_{\max}>0\); a known bound \(h>0\) with
\(0\preceq H\preceq hI\); failure probability \(0<p<1\); TFD
preparation unitaries and their adjoints at the queried temperatures.
\Ensure A schedule \(0=\beta_0<\cdots<\beta_\ell=\beta_{\max}\), or a failure flag.

\State Set \(\beta_0\gets0\) and \(k\gets0\).

\State Use additive error \(0.005\) for every overlap estimate, and
conditional failure probability at most \(6p/(\pi^2j^2)\) for the
\(j\)-th such call across all binary searches.

\While{\(\beta_k<\beta_{\max}\)}
    \State If \(k\ge\lceil1+\sqrt{h\beta_{\max}}/2\rceil\),
    return a failure flag.
    \State Define
    \(f_o(\theta):=|\langle\psi_{\beta_k}\mid\psi_\theta\rangle|^2\),
    evaluated by quantum overlap estimation.

    \State Compute
    \[
    \beta^\ast\gets
    \textsc{BinarySearch}\!\left(
    f_o(\cdot)\ge0.075,\,
    [\beta_k,\beta_{\max}],\,
    \frac{1}{2h}
    \right).
    \]

    \State If \(\beta^\ast\le\beta_k\), return a failure flag.
    \State Set \(\beta_{k+1}\gets\beta^\ast\) and \(k\gets k+1\).
\EndWhile

\State Set \(\ell\gets k\) and \Return \((\beta_0,\ldots,\beta_\ell)\).
\end{algorithmic}
\end{algorithm}

\medskip

Here \textsc{BinarySearch} is the routine used in
\cite[Algorithm~3]{Arunachalam2022}: it returns the endpoint if its
estimated overlap passes the threshold; otherwise it returns the
accepted lower endpoint of a bracket of the prescribed width.
The precision \(1/(2h)\) replaces \(1/(2n)\) in the original energy
range. Each search therefore uses
\(O(\ln(2+h\beta_{\max}))\) overlap-estimation calls.

\begin{thm}[Quantum cooling schedule generation, adapted from
{\cite[Algorithm~3 and Theorem~3.4]{Arunachalam2022}}]
\label{thm:quantum-schedule-arunachalam}
Let \(H\) be an \(n\)-qubit Hamiltonian satisfying
\(0\preceq H\preceq hI\), with \(h>0\), and let
\(0<p<1\) and \(0<\beta_{\max}<\infty\).
Assume access to the TFD preparation unitaries \(U_\theta\) and their
adjoints from Definition~\ref{def:partition-function-resource-model}
at every temperature \(\theta\in[0,\beta_{\max}]\) queried by
Algorithm~\ref{alg:quantum-cooling-schedule-generation}.

\medskip

With probability at least \(1-p\), the algorithm outputs a
\(15\)-slowly-varying schedule
\(0=\beta_0<\cdots<\beta_\ell=\beta_{\max}\) satisfying
\[
e^2
\le
\frac{Z(\beta_i)Z(\beta_{i+1})}
     {Z\!\left((\beta_i+\beta_{i+1})/2\right)^2}
\le15
\qquad (0\le i<\ell-1),
\]
with only the upper bound required on the final interval.
For \(H\ne0\), its length obeys
\[
\ell
=O\!\left(
1+\sqrt{
\bigl(f(0)-f(\beta_{\max})\bigr)
\ln\frac{-f'(0)}{-f'(\beta_{\max})}
}
\right).
\]
The schedule-generation procedure uses
\[
\widetilde O\!\left(
\ell\,\ln(2+h\beta_{\max})\,\ln\frac{2}{p}
\right)
\]
reflections about TFD states. The same asymptotic bound holds for
\(N_{\mathrm{TFD}}\), and \(N_{\mathrm{BE}}=0\).
For \(H=0\), a one-interval schedule suffices.

\medskip

Under the low-temperature assumptions of
Theorem~\ref{thm:adaptive_schedule_existence}, taking the supplied
endpoint to be \(\gamma\) gives
\[
\ell
=O\!\left(
1+\sqrt{
n\ln\!\left(
\frac{e^2\|\bar H\|_\infty}{(e^2-1)E_s}
\right)
}
\right).
\]
Its generation cost is obtained by setting \(\beta_{\max}=\gamma\)
in the reflection bound above. One terminal interval to \(\infty\)
can be appended without an additional binary search.
\end{thm}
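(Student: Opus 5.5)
We prove Theorem~\ref{thm:quantum-schedule-arunachalam} by following the analysis of \cite[Theorem~3.4]{Arunachalam2022}, replacing the integer energy range by the window \([0,h]\). We split it into three parts: correctness of each binary-search step, a length bound, and query counting.

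\emph{Correctness.} Condition on the event that every overlap estimate has additive error at most \(0.005\). The per-call failure budgets \(6p/(\pi^2 j^2)\) sum to at most \(p\), so a union bound shows this event has probability at least \(1-p\). On this event, every accepted \(\beta^\ast\) satisfies \(O(\beta_k,\beta^\ast-\beta_k)\ge0.075-0.005=0.07>1/15\), which gives the ratio upper bound \(15\) on every interval, including the final one.

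For the lower bound \(e^2\) on nonfinal intervals, we use the bracket width \(1/(2h)\). The search stops at a lower endpoint \(\beta^\ast\) because the point \(\beta^\ast+w\), with \(w\le 1/(2h)\), failed the test. Hence \(O(\beta_k,\beta^\ast+w-\beta_k)<0.08\). We then need a continuity estimate in the endpoint: \(-\ln O(\theta,s)=A(\theta,s)\) has \(s\)-derivative \(\tfrac12\bigl(\langle H\rangle_{\theta+s/2}-\langle H\rangle_{\theta+s}\bigr)\in[0,h/2]\). This follows from Definition~\ref{def:local-thermal-quantities} and \(0\preceq H\preceq hI\). So shrinking \(s\) by \(w\) changes \(A\) by at most \(hw/2\le1/4\). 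This gives \(O(\beta_k,\beta^\ast-\beta_k)<0.08\,e^{1/4}<e^{-2}\), since \(0.103<0.135\).

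Progress \(\beta^\ast>\beta_k\) also follows from this derivative bound. At \(s=1/(2h)\) we have \(A\le1/4\), so \(O\ge e^{-1/4}\) and the first bracket passes. This matches the algorithm's failure check.

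\emph{Length.} Every nonfinal interval has ratio at least \(e^2\). Corollary~\ref{cor:large-ratio-counting} then gives \(\ell\le L_{\mathrm{PB}}+1\), and Theorem~\ref{thm:adaptive_schedule_existence} yields the stated bound. The hypothesis \(-f'(\beta_{\max})>0\) holds for \(H\ne0\) with \(H\succeq0\), since \(-f'=\langle H\rangle_\beta>0\). For the \(O(1+\sqrt{h\beta_{\max}})\) form used by the algorithm's cap \(k\ge\lceil1+\sqrt{h\beta_{\max}}/2\rceil\), we have \(f(0)-f(\beta_{\max})\le h\beta_{\max}\). The log-ratio \(\ln(\langle H\rangle_0/\langle H\rangle_{\beta_{\max}})\) can be large, however, so the cap needs a direct argument.

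\emph{Main obstacle: the direct bound for the cap.} For a ratio-\(e^2\) interval of length \(s\) we have \(A(\beta_i,s)\ge2\). Writing \(A\) as an integral of the variance \(f''\), we get \(A\le\tfrac{s}{4}\bigl(f'(\beta_i+s)-f'(\beta_i)\bigr)\cdot\tfrac12\)-type bounds. Cauchy--Schwarz over intervals then gives \(\ell^2\lesssim\beta_{\max}\sum_i(\text{decrease of }\langle H\rangle)/\text{const}\le \beta_{\max}h/4\)-type estimates, i.e. \(\ell\le 1+\sqrt{h\beta_{\max}}/2\) up to the precise constant. I would first try to obtain exactly \(A(\theta,s)\le \tfrac{s}{4}\bigl(\langle H\rangle_\theta-\langle H\rangle_{\theta+s}\bigr)\), using convexity of \(f'\)-differences, and then check that the constant matches the cap. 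Tuning this constant is the delicate step.

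\emph{Low-temperature specialization and cost.} This is immediate from Theorem~\ref{thm:adaptive_schedule_existence} with \(\beta_{\max}=\gamma\), plus the appended terminal interval of overlap \(e^{-2}\).

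For the cost, each search makes \(O(\ln(2+h\beta_{\max}))\) overlap calls. Each call is amplitude estimation to constant precision with failure probability \(6p/(\pi^2j^2)\), so it uses \(O(\ln(j/p))\) reflections, each with four TFD queries. Summing over \(j\le O(\ell\ln(2+h\beta_{\max}))\) gives the stated \(\widetilde O\) bound. No block encoding is used, so \(N_{\mathrm{BE}}=0\).
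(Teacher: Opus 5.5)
Your overall route is the paper's. You condition on all overlap estimates being accurate via the $6p/(\pi^2j^2)$ union bound and derive the ratio window on nonfinal intervals. You then get $\ell\le L_{\mathrm{PB}}+1$ from Corollary~\ref{cor:large-ratio-counting} and Theorem~\ref{thm:adaptive_schedule_existence}, handle the cap by a separate convexity and Cauchy--Schwarz argument, and count $O(\ln(1/\eta_j))$ reflections per call.

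Your explicit continuity argument for the $e^2$ lower bound is a reasonable replacement for the paper's appeal to \cite{Arunachalam2022} through $e^{-h\Delta}Z(\beta)\le Z(\beta+\Delta)\le Z(\beta)$. However, your derivative is off by a factor of two: $\partial_sA(\theta,s)=f'(\theta+s)-f'(\theta+s/2)=\langle H\rangle_{\theta+s/2}-\langle H\rangle_{\theta+s}\in[0,h]$. So shrinking the interval by $w\le1/(2h)$ can change $A$ by up to $1/2$, not $1/4$. The conclusion still holds, since $0.08\,e^{1/2}\approx0.132<e^{-2}\approx0.135$, and the progress step still passes with overlap at least $e^{-1/2}$. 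This is the same $e^{\pm hw}$ factor the paper's sandwich yields.

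The genuine gap is the cap, which you leave unresolved. The sharper inequality you propose to aim for, $A(\theta,s)\le\frac s4(\langle H\rangle_\theta-\langle H\rangle_{\theta+s})$, is false when $f''$ concentrates near the midpoint. Take a nondegenerate level $0$ and an $N$-fold level $E$, with $\theta=0$ and $s=2\ln N/E$, and write $d:=f'(\theta+s)-f'(\theta)$. Then $A/(s\,d)\to1/2$ as $N\to\infty$, so the $\frac s4$ bound fails.

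No constant tuning is needed, because the weaker bound is exactly enough:
\begin{itemize}
\item At $m=\theta+s/2$, the tangent-line bounds $f(\theta)-f(m)\le-f'(\theta)\,s/2$ and $f(\theta+s)-f(m)\le f'(\theta+s)\,s/2$ give $A\le\frac s2 d$.
\item At any loop check with $\beta_k<\beta_{\max}$, all $k$ completed intervals are nonfinal, so $A\ge2$ on each, which forces $\sqrt{\Delta_id_i}\ge2$.
\item Cauchy--Schwarz with $\sum_{i<k}\Delta_i=\beta_k<\beta_{\max}$ and $\sum_{i<k}d_i\le-f'(0)\le h$ then gives $k\le\frac12\sqrt{h\beta_{\max}}<\lceil1+\sqrt{h\beta_{\max}}/2\rceil$.
\end{itemize}
Hence the cap is never triggered on the good event. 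This is the argument the paper gives immediately after its proof, and with it your proposal is complete.
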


\begin{proof}
We first analyze the algorithm assuming that all overlap estimates are
accurate. The correctness argument of
\cite[Theorem~3.4]{Arunachalam2022} applies to our energy range
\(0\preceq H\preceq hI\). Indeed, for every \(\beta,\Delta\ge0\),
\[
e^{-h\Delta}Z(\beta)
\le Z(\beta+\Delta)
\le Z(\beta).
\]
Thus replacing the original binary-search precision by \(1/(2h)\)
preserves the constant-factor bounds used in that argument. Every
nonfinal interval consequently has a partition-function ratio between
\(e^2\) and \(15\), while the final interval satisfies the upper bound
\(15\). Hence the resulting schedule is \(15\)-slowly varying.

\medskip

To bound its length, apply
Corollary~\ref{cor:large-ratio-counting} to the nonfinal intervals.
Their number is at most \(L_{\mathrm{PB}}\), so including the final
interval gives
\[
\ell\le L_{\mathrm{PB}}+1.
\]
The finite-endpoint and low-temperature bounds now follow from
Theorem~\ref{thm:adaptive_schedule_existence}, with the additional
interval absorbed into the additive constant.
For \(H\ne0\), the Gibbs mean energy is strictly positive at every
finite inverse temperature. Therefore
\(-f'(\beta_{\max})>0\), and the logarithm in the finite-endpoint
bound is well defined.

\medskip

We next account for estimation failures and query costs.
Each binary search has range at most \(\beta_{\max}\) and precision
\(1/(2h)\), and therefore requires
\(O(\ln(2+h\beta_{\max}))\) overlap estimates.
Over the entire schedule, the number of overlap-estimation calls is
thus \(O(\ell\ln(2+h\beta_{\max}))\).

\medskip

Because these calls are chosen adaptively, we assign their failure
probabilities by call index. For the \(j\)-th overlap-estimation call
in the full schedule-generation procedure, set
\[
\eta_j:=\frac{6p}{\pi^2j^2}.
\]
Conditioned on the preceding history, that call returns an inaccurate
estimate with probability at most \(\eta_j\).
Since \(\sum_{j=1}^{\infty}j^{-2}=\pi^2/6\), we have
\(\sum_{j\ge1}\eta_j=p\).
A union bound therefore shows that all overlap estimates
are simultaneously accurate with probability at least \(1-p\).
The correctness and length guarantees established above therefore hold
with this probability.

\medskip

Each overlap estimate has constant additive accuracy and uses
\(O(\ln(1/\eta_j))\) reflections. Summing over the calls gives
\[
\widetilde O\!\left(
\ell\,\ln(2+h\beta_{\max})\,\ln\frac{2}{p}
\right)
\]
reflections in total.

\medskip

Finally, the TFD preparation unitary implements the required reflection
through
\[
2\ket{\psi_\theta}\!\bra{\psi_\theta}-I
=
U_\theta
\bigl(2\ket0\!\bra0-I\bigr)
U_\theta^\dagger.
\]
Its controlled version is obtained by controlling the middle
zero-state reflection. Thus either version uses one application each
of \(U_\theta\) and \(U_\theta^\dagger\).
The TFD preparations needed to initialize the overlap estimates also
fit within the same asymptotic bound. Consequently, the stated
reflection bound also bounds \(N_{\mathrm{TFD}}\), while
\(N_{\mathrm{BE}}=0\).
\end{proof}

\medskip

For a finite endpoint, any schedule whose nonfinal interval ratios are
at least \(e^2\) also satisfies \(\ell\le1+\sqrt{h\beta_{\max}}/2\).
To see this, for \(0\le i<\ell-1\), set
\(\Delta_i:=\beta_{i+1}-\beta_i\) and
\(d_i:=f'(\beta_{i+1})-f'(\beta_i)\ge0\).
Convexity gives
\[
2
\le
f(\beta_i)+f(\beta_{i+1})
-2f\!\left(\frac{\beta_i+\beta_{i+1}}2\right)
\le
\frac{\Delta_i d_i}{2}.
\]
Since \(\sum_{i=0}^{\ell-2}\Delta_i\le\beta_{\max}\) and
\(\sum_{i=0}^{\ell-2}d_i\le h\), Cauchy--Schwarz yields
\[
\ell-1
\le
\frac12\sum_{i=0}^{\ell-2}\sqrt{\Delta_i d_i}
\le
\frac12\sqrt{h\beta_{\max}}.
\]
This applies to perfectly-balanced schedules and, on the successful
event above, to the generated schedule. In particular,
\(h=O(n)\) and \(\beta_{\max}=O(1)\) give \(\ell=O(\sqrt n)\).
When all overlap estimates are accurate, each step
advances the temperature and the algorithm finishes within the cap.
Thus the stopping checks preserve the success guarantee.
On every run, generation uses at most
\(\widetilde O((1+\sqrt{h\beta_{\max}})
\ln(2+h\beta_{\max})\ln(2/p))\) TFD preparation queries
and no block-encoding calls.

\medskip

The construction follows Arunachalam et al.~\cite{Arunachalam2022};
our real-valued setting requires retaining the dependence on \(E_s\),
which is bounded below by one in their integer-valued setting after
shifting the ground energy to zero.

\medskip

Assume that the TFD oracle used for reconstruction
also supports the temperatures needed to generate the schedule.
For a finite endpoint \(\beta\), assign failure probability \(p/2\)
to each stage, giving an overall failure probability at most \(p\).
Since \(1+\sqrt{h\beta}=O(\ell+h\beta)\) for \(\ell\ge1\),
generation is absorbed by the query bounds of
Theorems~\ref{thm:single-temperature-main}
and~\ref{thm:coherent-indexed-main} when \(0<\epsilon\le1\).
The reconstruction query-cost bounds hold for every returned
increasing schedule, even if it is not slowly varying.
Replacing \(\ell\) by the deterministic cap therefore also bounds
the expected cost of the combined procedure.
\section{Circuit-Level Resource Estimates}
\label{sec:end-to-end}

A smaller query bound for coherent-temperature reconstruction does not
by itself imply a smaller gate count. Each coherent TFD preparation
may require more gates than a fixed-temperature preparation, and this
additional cost can offset the reduction in oracle calls.
We first identify when the query improvement is retained in the
gate-count bounds, then give a sufficient condition for implementing
coherent TFD preparation with only polylogarithmic overhead relative
to fixed-temperature preparation.
Throughout this section, we count one- and two-qubit gates, allowing
arbitrary single-qubit rotations, and define depth as the number of
parallel gate layers. We allow arbitrary pairwise connectivity and
the clean workspace specified by each construction.

\subsection{From query bounds to gate counts}

Let \(G_{\mathrm{fixed}}\) and \(G_{\mathrm{coh}}\) be gate-count
upper bounds for one fixed-temperature and one coherent-temperature
TFD preparation, respectively. We compare chosen implementations
at a common preparation accuracy sufficient for both algorithms,
using the prescribed canonical TFD convention.
The fixed-temperature bound holds uniformly over all queried
temperatures. Each bound also applies to the required inverse and
controlled versions.

\medskip

For the same supplied slowly-varying schedule,
Theorems~\ref{thm:single-temperature-main}
and~\ref{thm:coherent-indexed-main} give the following expected
gate-count upper bounds for the TFD part of reconstruction:
\[
\begin{aligned}
\text{Fixed temperature:}\quad&
\widetilde O\!\left(
\frac{\sqrt{\ell}(\ell+h\beta)\ln(2/p)}{\epsilon}
G_{\mathrm{fixed}}\right),\\[3pt]
\text{Coherent temperature:}\quad&
\widetilde O\!\left(
\frac{(\ell+h\beta)\ln(2/p)}{\epsilon}
G_{\mathrm{coh}}\right).
\end{aligned}
\]
Thus a sufficient condition for retaining the \(\sqrt\ell\)
improvement in these bounds, up to polylogarithmic factors, is
\[
G_{\mathrm{coh}}=\widetilde O(G_{\mathrm{fixed}}).
\]
An additional factor of order \(\sqrt\ell\) per coherent preparation
can cancel the improvement in the TFD contribution. A strict
asymptotic comparison must also account for the suppressed
polylogarithmic factors.

\medskip

The full gate count also includes Hamiltonian block encoding,
branch preparation, parameter and QSVT phase loading, arithmetic,
and reflections.
Assume block-encoding calls have comparable gate costs in both
methods. Also assume that, for each method, the total auxiliary
cost is at most of the same order as the combined cost of TFD
preparation and block encoding.
Under these assumptions and the condition above, the
\(\sqrt{\ell}\) improvement also holds for the full gate-count
upper bounds, up to polylogarithmic factors.
Depth must be analyzed separately, since it also depends on
which gates can run in parallel and how many auxiliary qubits
are available.

\subsection{Efficient coherent implementation}

Suppose that preparations at \(m\) temperatures are supplied as known
circuits \(U_0,\ldots,U_{m-1}\), with gate counts \(G_0,\ldots,G_{m-1}\).
Their coherent selection acts on a binary index register and a target
register as
\[
|k\rangle|\varphi\rangle
\longmapsto
|k\rangle U_k|\varphi\rangle,
\qquad 0\le k<m.
\]
A direct implementation applies each \(U_k\) controlled on its
corresponding index value. Controlling each elementary gate incurs
only a constant-factor gate overhead.
A unitary compute--uncompute implementation of unary iteration
generates the selection flags using \(O(m)\) additional gates and
\(O(\log m)\) additional clean selection qubits
\cite[Section~III~A]{Babbush2018LinearT}.
The resulting construction therefore uses
\[
O\!\left(\sum_{k=0}^{m-1}G_k+m\right)
\]
gates. Any conversion between temperature encodings and these
indices must also be counted.

\medskip

If \(G_k\le G_{\mathrm{fixed}}\), this construction gives an
\(O(mG_{\mathrm{fixed}})\) gate-count bound.
For circuits of comparable size, the linear selection overhead
can cancel the \(\sqrt{\ell}\) improvement in the TFD gate-count
comparison above.
This construction assumes known circuit descriptions; uncontrolled
access to unknown black-box unitaries does not generally suffice
to obtain their controlled versions
\cite{Araujo2014QuantumControl}.

\medskip

A shared circuit structure can reduce this cost by computing the
rotation angles from the temperature register.
The following proposition gives a sufficient condition.

\begin{prop}[Efficient coherent parameterization]
Suppose the fixed-temperature preparations share a common sequence
of at most \(G_{\mathrm{fixed}}\) gates with fixed gate locations.
Each gate is fixed or a rotation \(e^{-i\phi P/2}\), where \(P\)
is a fixed one- or two-qubit Pauli operator and only the angle
\(\phi\) depends on temperature.
Let \(b\) be the number of bits used to encode the temperature and
each angle, with angles represented modulo \(4\pi\).
Assume each encoded angle can be computed reversibly from the
temperature using \(\operatorname{poly}(b)\) gates and clean work
qubits, uniformly over all gate locations.
Then the encoded family admits a coherent implementation with
\[
G_{\mathrm{coh}}
=O\!\left(G_{\mathrm{fixed}}\operatorname{poly}(b)\right).
\]
The additional parameter workspace uses \(\operatorname{poly}(b)\)
qubits and returns to zero. The inverse and controlled versions
satisfy the same bound.
\end{prop}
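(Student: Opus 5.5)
The construction replaces each temperature-dependent rotation in the common circuit by a coherently controlled rotation. Its angle is computed reversibly from the temperature register, applied bit by bit, and then uncomputed. I take the input to be \(\ket{\theta}\ket{\varphi}\) with \(\theta\) a \(b\)-bit string. I write the common circuit as \(g_1(\theta)\cdots g_G(\theta)\) with \(G\le G_{\mathrm{fixed}}\), where each \(g_t\) is either a fixed gate or \(e^{-i\phi_t(\theta)P_t/2}\). The goal is the unitary \(\ket{\theta}\ket{\varphi}\mapsto\ket{\theta}g_1(\theta)\cdots g_G(\theta)\ket{\varphi}\). Fixed gates are applied unchanged at cost one each.

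For a parameterized location \(t\), I would proceed in three steps.

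\emph{Compute.} Apply the assumed reversible circuit \(C_t:\ket{\theta}\ket{0}\mapsto\ket{\theta}\ket{\tilde\phi_t(\theta)}\ket{0}\). Here \(\tilde\phi_t\) is the \(b\)-bit encoding of \(\phi_t\) modulo \(4\pi\). This uses \(\operatorname{poly}(b)\) gates and clean work qubits, and the work qubits return to zero.

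\emph{Rotate.} Write \(\tilde\phi=\sum_{r=1}^{b}c_r\,4\pi 2^{-r}\) with bits \(c_r\). Because every term is proportional to the same Pauli \(P_t\), the rotation factorizes exactly:
\[
e^{-i\tilde\phi P_t/2}=\prod_{r=1}^{b}\bigl(e^{-i\,2\pi 2^{-r}P_t}\bigr)^{c_r}.
\]
So I would apply, for each \(r\), the fixed rotation \(e^{-i2\pi2^{-r}P_t}\) controlled on the bit \(c_r\). Each controlled one- or two-qubit Pauli rotation decomposes into \(O(1)\) one- and two-qubit gates, conjugating \(P_t\) to \(Z\) or \(Z\otimes Z\) by fixed Cliffords. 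This gives \(O(b)\) gates in total. Using the modulus \(4\pi\) is what makes the encoding faithful: \(e^{-i\phi P/2}\) has period \(4\pi\) in \(\phi\), including the global sign. The controlled version is therefore correct even as a controlled operation.

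\emph{Uncompute.} Apply \(C_t^\dagger\), which restores the angle register to \(\ket{0}\).

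Since \(\ket{\theta}\) is never altered, each computational-basis branch \(\ket{\theta}\) undergoes exactly \(g_t(\theta)\) on the target. By linearity, the procedure then acts correctly on temperature superpositions. Concatenating over all \(G\) locations gives the claimed family, using \(O(G_{\mathrm{fixed}}(\operatorname{poly}(b)+b))=O(G_{\mathrm{fixed}}\operatorname{poly}(b))\) gates.

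The angle and work registers can be reused across locations because each is returned to zero. The extra workspace is therefore \(\operatorname{poly}(b)\) qubits, taking the maximum rather than the sum over locations.

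The main point to check is that uncomputation genuinely disentangles: the rotation must not act on the angle register. It does not, because the angle bits serve only as controls and are unchanged by the rotate step. \(C_t^\dagger\) then exactly inverts \(C_t\) on the unchanged \(\ket{\theta}\ket{\tilde\phi_t}\). The other delicate point is the global phase, which I addressed above through the \(4\pi\) modulus.

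The inverse is obtained by reversing the gate order and negating each fixed sub-rotation, at the same cost. For the controlled version, it suffices to control only the fixed gates and the middle bit-controlled rotations, not \(C_t\) and \(C_t^\dagger\), since these compute–uncompute pairs act as the identity when the control is off. This adds only a constant-factor overhead per gate.
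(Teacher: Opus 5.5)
Your proposal is correct and follows essentially the same construction as the paper: compute each angle into a clean register, apply $O(b)$ fixed Pauli rotations controlled on the angle bits, uncompute, reuse the workspace, and obtain the inverse by reversal. Two points go slightly beyond the paper without changing the route: your check that the $4\pi$ modulus makes the bitwise factorization exact including the global sign, and your observation that the controlled version need only control the fixed gates and the middle rotations, since each compute--uncompute pair cancels; the paper simply says to control every elementary gate.
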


\begin{proof}
For each temperature-dependent gate, compute its angle into a clean
register. Implement the rotation as a product of \(O(b)\) fixed
rotations controlled by the angle bits, then reverse the angle
computation. Reuse the workspace for the next gate.
This costs \(\operatorname{poly}(b)\) gates per original gate and
preserves the temperature register, so the construction also acts
coherently on superpositions of temperatures.
Reversing the circuit gives its inverse; controlling its elementary
gates changes the gate count by only a constant factor.
\end{proof}

Angle rounding contributes
\(O(G_{\mathrm{fixed}}2^{-b})\) operator-norm error per branch.
Because the temperature register is preserved, the error of the
coherent implementation is the largest branch error.
If the required temperature and preparation accuracies can be
achieved with \(b\) polylogarithmic in the problem size and inverse
accuracy, the proposition gives
\(G_{\mathrm{coh}}=\widetilde O(G_{\mathrm{fixed}})\),
as required for the preceding comparison.

\medskip

The next section uses a shared temperature-dependent preparation
procedure for spin chains, with polylogarithmic-length parameter
encodings, to construct coherent TFD access at comparable gate cost.
It also includes the block-encoding and auxiliary costs to obtain
gate-count and depth bounds. The construction and precision
requirements are given in
Appendix~\ref{app:coherent-temperature-circuits}.

\section{Application: one-dimensional partition-function estimation}

Consider an open chain of \(n\) qubits with explicitly specified
interactions of bounded range, bounded term norm, and bounded number
of terms per site. The interactions need not commute or be translation
invariant. All queried inverse temperatures lie in
\([0,\beta_{\max}]\), where \(\beta_{\max}>0\) is fixed
independently of \(n\). Local coefficients are available to the
required precision. We use the circuit architecture of
Section~\ref{sec:end-to-end}.

\medskip

One-dimensional thermal states admit matrix-product operator
approximations that accurately reproduce local observables
\cite[Theorem~2]{Huang2021}. Here we instead use preparation of
the full canonical TFD. At each fixed finite inverse temperature,
this state can be prepared to trace-distance error \(\eta\)
with depth \(\operatorname{polylog}(2n/\eta)\) and size
\(n\operatorname{polylog}(2n/\eta)\)
\cite[Corollary~I.1 and Appendix~C]{bergamaschi2026fastmixingquantumspin}.

\medskip

This preparation has a common structure that supports a coherent
temperature input. Let \(\mathcal H_t\) denote the parent Hamiltonian
whose ground state is the canonical TFD at inverse temperature \(t\).
For endpoint temperature \(\theta\), the quasi-adiabatic construction
uses the path \(s\mapsto\mathcal H_{s\theta}\), \(0\le s\le1\).
Uniform gap and locality bounds on \([0,\beta_{\max}]\) allow the
block sizes, simulation cutoffs, and error tolerances to be chosen
independently of \(\theta\)
\cite[Section~VII and Appendix~C]{bergamaschi2026fastmixingquantumspin}.
The local simulations use temperature-controlled block encodings
of the parent Hamiltonian and its derivative
\cite[v2, Lemmas~C.5 and~C.6]{chen2023efficient}.
For the chosen path, these are evaluated at \(s\theta\), with
\(\partial_s\mathcal H_{s\theta}
=\theta\mathcal H'_{s\theta}\).
Reversibly computing these parameters lets the same subroutines
act coherently on the endpoint temperature, with polynomial
arithmetic overhead in the parameter bit length and no multiplicative
factor equal to the number of supported temperatures.
This gives a concrete realization of inexpensive coherent access
as discussed in Section~\ref{sec:end-to-end}.

\begin{prop}[Coherent TFD preparation for spin chains]
\label{prop:coherent-tfd-spin-chains}
Fix \(0<\eta<1\). For the chains above and inverse temperatures
\(\theta\in[0,\beta_{\max}]\) represented with
\(O(\log(2n/\eta))\) bits, there is a temperature-preserving
unitary satisfying
\[
\left\|
\widetilde U_{\mathrm{cohTFD}}\ket{\theta}\ket0
-\ket{\theta}\ket{\psi_\theta}\ket0_{\mathrm{aux}}
\right\|_2\le\eta,
\]
where \(\ket{\psi_\theta}\) is the canonical TFD of
Definition~\ref{def:tfd-state}, and \(\mathrm{aux}\) contains
all additional preparation workspace. The bound also holds for
normalized superpositions of temperature labels.
The circuit, its inverse, and its controlled version use
\(n\operatorname{polylog}(2n/\eta)\) gates and auxiliary qubits,
with depth \(\operatorname{polylog}(2n/\eta)\).
The constants depend on \(\beta_{\max}\) and the interaction bounds.
\end{prop}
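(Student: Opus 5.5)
The plan is to build \(\widetilde U_{\mathrm{cohTFD}}\) by parameterizing the fixed-temperature quasi-adiabatic circuit of \cite[Corollary~I.1 and Appendix~C]{bergamaschi2026fastmixingquantumspin} by its endpoint \(\theta\). We then apply the efficient coherent parameterization proposition of Section~\ref{sec:end-to-end}, except that some subroutines take \(\theta\) as a coherent control input instead of only computing rotation angles. Step one fixes all structural parameters independently of \(\theta\): block sizes, Trotter or quasi-adiabatic cutoffs, Lieb--Robinson truncation radii, and per-gate error tolerances. Uniform gap and locality bounds for the parent Hamiltonians \(\mathcal H_t\), \(t\in[0,\beta_{\max}]\), are what make this possible \cite[Section~VII and Appendix~C]{bergamaschi2026fastmixingquantumspin}. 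The result is one gate skeleton of size \(n\operatorname{polylog}(2n/\eta)\) and depth \(\operatorname{polylog}(2n/\eta)\), valid for every \(\theta\). It prepares \(\ket{\psi_\theta}\ket{0}_{\mathrm{aux}}\) to error \(\eta/2\) in trace distance. We convert this to a Euclidean-norm bound, absorbing the phase freedom into the canonical convention and rescaling \(\eta\) polynomially.

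Step two makes each temperature-dependent primitive coherent. The local evolutions use block encodings of \(\mathcal H_{s\theta}\) and of \(\partial_s\mathcal H_{s\theta}=\theta\mathcal H'_{s\theta}\). These are temperature-controlled by \cite[v2, Lemmas~C.5 and~C.6]{chen2023efficient}, which take \(s\theta\) as a register input. For each time slice we therefore compute \(s\theta\) (and the factor \(\theta\)) reversibly into a clean register of \(b=O(\log(2n/\eta))\) bits. We then apply the controlled block encoding and the QSVT or Hamiltonian-simulation step, whose phases are fixed because the normalization is uniform in \(\theta\). Finally we uncompute. Remaining angle-dependent gates are handled exactly as in the proof of the parameterization proposition, using \(O(b)\) bit-controlled fixed rotations. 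Each gate thus incurs \(\operatorname{poly}(b)=\operatorname{polylog}(2n/\eta)\) overhead, and the temperature register is untouched.

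Step three handles errors and superpositions. Since the temperature register is preserved, \(\widetilde U_{\mathrm{cohTFD}}=\sum_\theta\ket\theta\!\bra\theta\otimes V_\theta\). The error on a normalized superposition \(\sum_\theta c_\theta\ket\theta\ket0\) is \(\bigl(\sum_\theta|c_\theta|^2\|V_\theta\ket0-\ket{\psi_\theta}\ket0\|^2\bigr)^{1/2}\), which is at most the worst branch error. The angle rounding error \(O(G\,2^{-b})\) and the discretization of \(s\theta\) are made at most \(\eta/2\) by choosing the constant in \(b\). Inverses come from reversing the circuit, and controlled versions from controlling its elementary gates, with constant overhead. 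Ancilla reuse keeps the auxiliary count at \(n\operatorname{polylog}\). Depth stays polylogarithmic because parameter computations are local to each block and can run in parallel with per-block copies of the \(b\)-bit workspace, which costs \(O(n\,\operatorname{poly}(b))\) qubits.

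The main obstacle is step one: checking that the cited fixed-temperature construction really has \(\theta\)-independent structure. This requires uniform gap lower bounds for \(\mathcal H_{s\theta}\) along the whole path and uniform Lipschitz bounds on \(\theta\mapsto\mathcal H_\theta\), so that the parameters can be discretized to \(b\) bits without amplifying error. I would first extract the explicit dependence of the cutoffs on the gap and on \(\|\partial_s\mathcal H\|\) from Appendix~C, and bound both uniformly on \([0,\beta_{\max}]\) with constants depending only on \(\beta_{\max}\) and the interaction bounds.
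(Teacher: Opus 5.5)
Your architecture is the paper's: one $\theta$-independent quasi-adiabatic skeleton along $s\mapsto\mathcal H_{s\theta}$, temperature-controlled block encodings of $\mathcal H_{s\theta}$ and $\theta\mathcal H'_{s\theta}$ from \cite[v2, Lemmas~C.5 and~C.6]{chen2023efficient} fed by a reversibly computed $s\theta$, and a branchwise error bound that uses preservation of the label register.

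The gap is in passing from step one to step three. A trace-distance guarantee fixes each branch output only up to a global phase $e^{i\phi_\theta}$. That phase is not yours to choose: Definition~\ref{def:tfd-state} fixes $\ket{\psi_\theta}$ with positive real coefficients, so nothing can be ``absorbed into the canonical convention''. For a single classical $\theta$ this is harmless. The proposition, however, claims the Euclidean bound for superpositions of labels. The distance from $\sum_\theta c_\theta\ket{\theta}e^{i\phi_\theta}\ket{\psi_\theta}$ to $\sum_\theta c_\theta\ket{\theta}\ket{\psi_\theta}$ is $\bigl(\sum_\theta|c_\theta|^2|1-e^{i\phi_\theta}|^2\bigr)^{1/2}$, which can be of order one even when every branch is exact in trace distance. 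Your step-three formula needs per-branch errors measured against the canonical phase, and step one does not supply them. Correcting this with a $\theta$-controlled phase gate would require computing $\phi_\theta$ reversibly, and you have no handle on it.

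The missing ingredient is the paper's argument that the exact transport carries no $\theta$-dependent phase:
\begin{enumerate}
\item Along the canonical path the coefficients $e^{-tE_a/2}/\sqrt{Z(t)}$ are real and normalized, so $\langle\psi_t|\partial_t\psi_t\rangle=0$.
\item The parent eigenvalue is identically zero, so Hellmann--Feynman gives $\langle\psi_t|\mathcal H'_t|\psi_t\rangle=0$.
\item The quasi-adiabatic generator is an integral of unitary conjugates of $\partial_s\mathcal H_{s\theta}$, so it has zero expectation in the instantaneous ground state.
\item Writing the transported state as $e^{i\phi(s)}\ket{\psi_{s\theta}}$ then forces $\phi'(s)=0$.
\item The evolution starts from the canonical Bell-pair state at $s=0$, so the ideal circuit outputs exactly $\ket{\psi_\theta}$.
\end{enumerate}
Only after this can the simulation and rounding errors be bounded in Euclidean norm against the canonical state, uniformly in $\theta$.

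Your ``main obstacle'' paragraph should also cover two smaller points. First, the cited local simulations give ancilla-projected block approximations, not a trace-distance bound on $\ket{\psi_\theta}\ket{0}_{\mathrm{aux}}$. The paper notes that if an ancilla block approximates a unitary within $\tau$, the full zero-ancilla output is within $\sqrt{2\tau}$, so running at squared accuracy gives the workspace-inclusive guarantee. Second, the jump normalization on a block of $r$ sites must scale with $r$, not $n$, for the per-block cost to stay $\operatorname{polylog}(2n/\eta)$.
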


The proof, including temperature rounding and workspace errors,
is in Appendix~\ref{app:coherent-temperature-circuits},
Subsection~\ref{subsec:coherent-temperature-proof}.
No gap assumption on the physical Hamiltonian is required.
Fixing the temperature input gives the fixed-temperature preparation
with the same resource bounds.

\medskip

A known scalar shift gives \(0\preceq H\preceq hI\) with a
chosen bound \(h=\Theta(n)\), without changing the TFD. A block
encoding with \(\alpha=\Theta(n)\) and the auxiliary reconstruction
operations have \(\widetilde O(n)\) gate cost and polylogarithmic
depth per primitive; see
Appendix~\ref{app:coherent-temperature-circuits},
Subsection~\ref{subsec:coherent-temperature-auxiliary}.

\begin{cor}[Circuit bounds for interacting spin chains]
\label{cor:spin-chain-circuit-comparison}
\label{4.1}
For the chains and architecture above, let \(0<\epsilon\le1\),
\(0<p<1\), and \(0<\beta\le\beta_{\max}\). Suppose a
\(C\)-slowly-varying schedule is supplied with
\(\ell=\widetilde O(\sqrt n)\), \(C=O(1)\), and
polylogarithmic-length parameter encodings at the required precision.
Both reconstruction methods estimate \(f(\beta)=\ln Z(\beta)\)
to additive error \(\epsilon\), with failure probability at most
\(p\). They admit implementations with the following expected
resource upper bounds:
\[
\begin{array}{c|c|c}
\text{Method}&\text{Depth}&\text{Gate count}\\[2pt]\hline
\text{Fixed temperature}
&\displaystyle\widetilde O\!\left(
\frac{n^{5/4}\ln(2/p)}{\epsilon}\right)
&\displaystyle\widetilde O\!\left(
\frac{n^{9/4}\ln(2/p)}{\epsilon}\right)\\[7pt]
\text{Coherent temperature}
&\displaystyle\widetilde O\!\left(
\frac{n\ln(2/p)}{\epsilon}\right)
&\displaystyle\widetilde O\!\left(
\frac{n^2\ln(2/p)}{\epsilon}\right).
\end{array}
\]
These bounds include auxiliary, controlled, and inverse operations.
They measure quantum circuit resources, with classical circuit
compilation treated as preprocessing.
The notation \(\widetilde O\) suppresses polylogarithmic
implementation factors in \(n\), \(1/\epsilon\), and \(1/p\).
\end{cor}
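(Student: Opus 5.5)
The proof multiplies the query bounds of Theorems~\ref{thm:single-temperature-main} and~\ref{thm:coherent-indexed-main} by per-query circuit costs. With \(h=\Theta(n)\), \(\beta\le\beta_{\max}=O(1)\) and \(\ell=\widetilde O(\sqrt n)\), these give expected query counts \(\widetilde O(n^{5/4}\ln(2/p)/\epsilon)\) for fixed-temperature access and \(\widetilde O(n\ln(2/p)/\epsilon)\) for coherent-temperature access. This holds both for TFD preparations and for block-encoding calls. Each primitive then needs a gate and depth bound of the form \(n\,\mathrm{polylog}\) and \(\mathrm{polylog}\).

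\begin{itemize}
\item \emph{Coherent TFD preparation.} Proposition~\ref{prop:coherent-tfd-spin-chains} supplies it, together with its inverse and controlled version, at size \(n\operatorname{polylog}(2n/\eta)\) and depth \(\operatorname{polylog}(2n/\eta)\). Fixing the temperature label gives the fixed-temperature preparation with the same bounds.
\item \emph{Block encoding.} A block encoding with \(\alpha=\Theta(n)\) is obtained from the \(O(n)\) local terms by a standard LCU construction. The branch preparation, parameter loading, QSVT phases, harmonic-number arithmetic and reflections each cost \(\widetilde O(n)\) gates and polylogarithmic depth. This is the content of the auxiliary-cost subsection of Appendix~\ref{app:coherent-temperature-circuits}.
\item \emph{Totals.} Since every query is realized by a circuit of depth \(\widetilde O(1)\) and size \(\widetilde O(n)\), the total depth is \(\widetilde O(\text{queries})\) and the total gate count is \(\widetilde O(n\cdot\text{queries})\). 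This produces exactly the four entries of the table.
\end{itemize}

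The QSVT degree stays \(\widetilde O(1)\) by Lemma~\ref{lem:almost-unbiased-local}, so each reweighting reflection contributes only polylogarithmically many block-encoding calls.

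The main obstacle is error accounting. Replacing the exact oracles \(U_\theta\) and \(U_{\mathrm{cohTFD}}\) by \(\eta\)-approximate circuits perturbs every amplitude-estimation or mean-estimation run. I would use the standard hybrid argument. If an ideal run makes \(N\) queries, replacing each by an \(\eta\)-close unitary changes the final state by at most \(N\eta\) in Euclidean norm. The output distribution therefore changes by at most \(2N\eta\) in total variation.

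Choose \(\eta=\Theta(1/N_{\max})\) with \(N_{\max}\) polynomial in \(n\), \(1/\epsilon\) and \(\ln(1/p)\). Then each repetition still succeeds with probability at least, say, \(0.6\). The median amplification is unaffected up to constants, and \(\log(2n/\eta)\) remains polylogarithmic.

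For expected, variable-stopping-time counts, I would cap each run at a constant multiple of its expected cost. Markov's inequality then lets me absorb the resulting failure probability into the per-repetition constant. This gives worst-case bounds that imply the stated expected bounds.

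The same \(b=\mathrm{polylog}\) precision must cover three further errors, each of which is inverse-polynomial:
\begin{itemize}
\item temperature rounding, handled via the Lipschitz dependence of the TFD on \(\theta\) built into Proposition~\ref{prop:coherent-tfd-spin-chains};
\item the QSVT accuracy \(\nu\);
\item the fixed-point error \(\zeta\).
\end{itemize}
Finally, classical compilation of the schedule parameters \(k_i,\delta_i,\pi_{i,j}\) is preprocessing, as the corollary allows.
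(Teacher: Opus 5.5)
Your cost bookkeeping matches the paper's proof. That proof substitutes the $n\operatorname{polylog}$-gate, polylog-depth costs of Proposition~\ref{prop:coherent-tfd-spin-chains} and of the auxiliary primitives in Appendix~\ref{app:coherent-temperature-circuits} into Theorems~\ref{thm:single-temperature-main} and~\ref{thm:coherent-indexed-main}, exactly as you do.

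The genuine difference is how you absorb preparation error. The paper does not propagate it call by call. Because the same approximate unitary and its own inverse are used every time, the amplitude- and mean-estimation routines run exactly, only on slightly shifted branch probabilities. The stability theorem of Appendix~\ref{sec:approx-prep} bounds the shift of each logarithmic term by $O(\eta)$. With weights summing to $L=O(\ell+h\beta)$, this gives $\eta=O(\epsilon/L)$, independent of the number of oracle calls. The expected-query statements then carry over verbatim once one notes that the perturbed probabilities stay above a constant floor.

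Your hybrid argument instead treats the estimators as black boxes. You never re-check logarithmic stability or conditioning under perturbation, and one bound handles preparation, rounding and synthesis errors alike. The paper uses this hybrid-plus-Markov-cutoff device only for the optional compilation to a finite gate set. The price is that your error grows with the number of calls. You therefore need $\eta=\Theta(1/N_{\max})$, which costs only extra polylogarithmic factors here, and a cap to make $N_{\max}$ finite.

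Two small points should be made explicit.
\begin{itemize}
\item The Markov cap must be on the total cost of a repetition. There are $\widetilde O(\ell)$ local amplitude-estimation calls per repetition, so capping each at a constant multiple would not survive a union bound.
\item Proposition~\ref{prop:coherent-tfd-spin-chains} gives closeness only on inputs $\ket{\theta}\ket{0}$, not in operator norm. First compose $\widetilde U_{\mathrm{cohTFD}}$ with a $\theta$-block-diagonal unitary $W$ that, in each block, rotates the actual output onto $\ket{\psi_\theta}\ket{0}_{\mathrm{aux}}$ within a two-dimensional plane, so that $\|W-I\|\le\eta$. This yields an exact TFD preparation within $\eta$ of $\widetilde U_{\mathrm{cohTFD}}$ in operator norm. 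The ideal analyses hold for any unitary extension of $\ket{0}\mapsto\ket{\psi_\theta}$, so your hybrid bound then applies.
\end{itemize}
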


\begin{proof}
Use Proposition~\ref{prop:coherent-tfd-spin-chains} with a classical
or coherent temperature input. Substitute its costs and the auxiliary
costs in Appendix~\ref{app:coherent-temperature-circuits} into
Theorems~\ref{thm:single-temperature-main}
and~\ref{thm:coherent-indexed-main}. Choose the preparation accuracy
as in Appendix~\ref{sec:approx-prep}, together with the numerical
error budgets in Appendix~\ref{app:coherent-temperature-circuits}.
All additional precision costs are polylogarithmic.
\end{proof}

\medskip

For this class, implementing coherent access preserves the smaller
reconstruction upper bound. The displayed bounds differ by
\(n^{1/4}\), up to polylogarithmic factors. This compares two
implementations on the same model class and supplied schedule, not
the optimal costs of the access models or a separation from classical
methods. In particular, the depth bounds do not optimize every
possible parallel execution of the independent fixed-temperature
estimates.
\section{Discussion and Conclusion}
\label{sec:conclusion}

We have extended adaptive quantum algorithms for partition-function estimation with Heisenberg scaling
from classical Gibbs distributions to general quantum Hamiltonians.
Existing quantum speedups for this problem, from quantum mean
estimation~\cite{Montanaro2015} and adaptive
annealing~\cite{Harrow2020,Arunachalam2022} to nearly unbiased product
estimation~\cite{Cornelissen2023}, rely on coherent evaluation of
classical energies and, in several cases, on integer-valued spectra.
Neither assumption holds for a Hamiltonian accessed through a block
encoding, and resolving individual energy levels would require
additional spectral precision. Removing these assumptions required
new machinery: a recursive-doubling identity that expresses each
schedule increment through thermofield-double overlaps and short
imaginary-time reweighting steps, whose probabilities stay bounded
away from zero along a slowly-varying schedule; a variance-sensitive
allocation of estimation depths for fixed-temperature access; a
coherent inverse-binomial estimator for logarithms of success
probabilities; and an analysis of adaptive cooling schedules for
real-valued spectra. All resulting estimators retain Heisenberg
scaling in the precision.

For extensive Hamiltonians at constant inverse temperature, coherent-temperature
reconstruction attains the optimal \(\widetilde O(n/\epsilon)\)
block-encoding query complexity in the hidden-offset model
(Corollary~\ref{cor:coherent-block-query-optimality}).
The fixed-temperature bound remains a factor of \(n^{1/4}\) above
this benchmark, up to polylogarithmic factors.

Several questions remain open. The most immediate is whether the gap
between the two access models is fundamental. Coherent-temperature
access can be costly to implement in general. We constructed it
efficiently for one-dimensional spin chains, but it would be desirable
either to remove this requirement altogether or to improve the
fixed-temperature algorithm so that it approaches the
\(\widetilde O(n/\epsilon)\) scaling. Conversely, a lower bound
separating the two models would show that coherent temperature control
is a genuine resource for this problem.

A second direction concerns thermodynamic integration, which expresses
the log partition function as an integral of the mean energy,
\[
f(\beta)-f(0)=-\int_0^\beta\Tr(H\rho_t)\,\dd t .
\]
With coherent temperature access, the integration variable can be
sampled in superposition and the integrand estimated through the block
encoding. Quantum mean estimation should then give scaling similar to
ours in some regimes. The two approaches rely on different resources.
Thermodynamic integration needs no cooling schedule, but it requires
coherent access across a fine grid of temperatures spanning the whole
interval. Our reconstruction needs only the temperatures of a short
slowly-varying schedule. It would be interesting to compare the two
more closely, both in their constants and variances and in their
practical performance on concrete Hamiltonians.

\section*{Acknowledgements}
We thank Aram Harrow, Cambyse Rouz\'e, Kristan Temme, Alvaro Alhambra, Matteo Scandi and Pawel Wocjan for interesting discussions.
D.S.F. acknowledges financial support from the Novo Nordisk Foundation
(Grant No. NNF20OC0059939 Quantum for Life) and from the ERC grant GIFNEQ
101163938. Yufei Wang acknowledges support from the Danish e-Infrastructure Consortium (DeiC) through a PhD scholarship in quantum algorithms and quantum software, grant no. 5259-00016B.
This project was funded within the QuantERA II Programme that has received funding from the EU’s H2020 research and innovation programme under the GA No 101017733.

\paragraph{Use of artificial intelligence.}
OpenAI Codex with ChatGPT 5.6 and 6 Astra assisted with exploring and checking
proof strategies, and with preparing the manuscript. In particular, it played a central role in the construction of the coherent temperature algorithm. The current
organisation and exposition of the proofs are due to the authors. The
authors are responsible for the mathematical claims, the final
presentation, and any remaining errors.

\bibliographystyle{plainnat}
\bibliography{ref}

\appendix
\section{A linear lower bound from hidden energy normalization}
\label{app:linear-lower-bound}

We prove a lower bound on block-encoding queries when the absolute energy
offset is available only through the block encoding. A scalar energy shift
changes the partition function but leaves normalized thermal states
unchanged. Resolving this shift gives a lower bound proportional to the
block-encoding normalization, and hence an
\(\Omega(n\beta/\epsilon)\) bound for extensive normalization.
We first bound how well an algorithm can distinguish two nearby oracles.

\begin{lem}[Hybrid bound for query algorithms]
\label{lem:hybrid-lower-bound}
Let \(\mathcal A\) be a quantum algorithm that makes \(q\) queries, each of
which may be a unitary oracle, its adjoint, or a controlled version, interleaved
with oracle-independent operations.  If the two possible unitary oracles are
\(U\) and \(V\), then the final pure states satisfy
\[
\|\ket{\psi_U}-\ket{\psi_V}\|
\le q\|U-V\|.
\]
\end{lem}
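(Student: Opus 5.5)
The proof is the standard hybrid argument. Write the algorithm as
\[
\ket{\psi_W}=A_q W_q A_{q-1}W_{q-1}\cdots A_1W_1A_0\ket{0},
\]
where the \(A_t\) are the oracle-independent unitaries. Each \(W_t\) is one of the query types: \(W\), \(W^\dagger\), a controlled version of either, or any of these tensored with the identity on untouched registers, with \(W\in\{U,V\}\). Any intermediate measurements can be deferred, and any ancillas dilated into the \(A_t\), so we may assume the whole computation is unitary and the final states are pure.

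First record a per-query bound: for each query type \(\mathcal Q\),
\[
\|\mathcal Q(U)-\mathcal Q(V)\|\le\|U-V\|.
\]
For the plain oracle this is trivial. For the adjoint it holds because \(\|U^\dagger-V^\dagger\|=\|(U-V)^\dagger\|=\|U-V\|\). For a controlled version, \(\ket0\!\bra0\otimes I+\ket1\!\bra1\otimes W\) gives a difference
\[
\ket1\!\bra1\otimes(U-V),
\]
whose norm is \(\|U-V\|\). Tensoring with the identity preserves the operator norm.

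Next define the hybrid states \(\ket{\phi_t}\), \(t=0,\dots,q\), in which the first \(t\) queries use \(V\) and the remaining ones use \(U\). Then \(\ket{\phi_0}=\ket{\psi_U}\) and \(\ket{\phi_q}=\ket{\psi_V}\). Consecutive hybrids differ only at query \(t\), so
\[
\ket{\phi_t}-\ket{\phi_{t-1}}=X\bigl(\mathcal Q_t(V)-\mathcal Q_t(U)\bigr)\ket{\chi},
\]
where \(X\) is a product of unitaries and \(\ket\chi\) is a unit vector. Hence \(\|\ket{\phi_t}-\ket{\phi_{t-1}}\|\le\|U-V\|\). Summing over \(t\) with the triangle inequality gives the claim:
\[
\|\ket{\psi_U}-\ket{\psi_V}\|\le\sum_{t=1}^q\|\ket{\phi_t}-\ket{\phi_{t-1}}\|\le q\|U-V\|.
\]

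There is no real obstacle. The only points needing care are that adjoint and controlled queries do not enlarge the difference, and that any adaptivity has been purified into the \(A_t\), so that the query positions are fixed in the circuit.
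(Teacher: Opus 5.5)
Your proposal is correct and matches the paper's proof: both purify the algorithm by deferred measurement, interpolate with the standard sequence of hybrids, and bound each neighbouring difference by \(\|U-V\|\) using \(\|U^\dagger-V^\dagger\|=\|U-V\|\) and the fact that adding a control does not change this norm. Summing the \(q\) steps with the triangle inequality then gives the bound; your version simply writes out the per-step factorization and the controlled-difference \(\ket1\!\bra1\otimes(U-V)\) more explicitly.
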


\begin{proof}
By deferred measurement, the algorithm may be purified without changing its
query count.  Insert the standard sequence of hybrids in which the first
\(r\) oracle calls use \(U\) and the remaining calls use \(V\).  For adjoint
queries,
\(\|U^\dagger-V^\dagger\|=\|U-V\|\), and adding a control does not change
this norm.  Neighboring hybrids therefore differ by at most \(\|U-V\|\).
Summing the \(q\) neighboring distances proves the claim.
\end{proof}

\begin{lem}[Distinguishing consequence]
\label{lem:distinguish-lower-bound}
If a valid estimator forces some measurement event to have probabilities
differing by at least \(1/3\) on two inputs, then the corresponding final states
have Euclidean distance at least \(1/3\).
\end{lem}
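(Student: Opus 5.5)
The plan is the standard measurement-to-trace-distance argument. Fix the measurement event \(E\) and write \(\Pi_E\) for its projector on the purified final state; by deferred measurement, as in the proof of Lemma~\ref{lem:hybrid-lower-bound}, the algorithm's output distribution is obtained by measuring the final pure state. For the two inputs let \(\ket{\psi_U}\) and \(\ket{\psi_V}\) denote these final states. The hypothesis that a valid estimator separates the inputs gives
\[
\bigl|\langle\psi_U|\Pi_E|\psi_U\rangle-\langle\psi_V|\Pi_E|\psi_V\rangle\bigr|\ge\tfrac13 .
\]
For the intended use, \(E\) is the event that the estimate falls in the \(\epsilon\)-window around the first input's value, and the two windows are disjoint. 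This gives probability at least \(2/3\) on one input and at most \(1/3\) on the other.

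I will bound the left-hand side by the Euclidean distance \(\|\ket{\psi_U}-\ket{\psi_V}\|\). Write \(\ket u=\ket{\psi_U}\), \(\ket v=\ket{\psi_V}\), and \(\ket d=\ket u-\ket v\). Then
\[
\langle u|\Pi_E|u\rangle-\langle v|\Pi_E|v\rangle=\langle d|\Pi_E|u\rangle+\langle v|\Pi_E|d\rangle .
\]
Since \(\|\Pi_E\|\le1\) and both states have unit norm, Cauchy--Schwarz bounds this by \(2\|d\|\). This alone gives only \(\|d\|\ge1/6\).

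To reach the stated constant \(1/3\) I will use a sharper inequality. The difference in probability of any event is at most the trace distance \(\tfrac12\|\,|u\rangle\langle u|-|v\rangle\langle v|\,\|_1\), which equals \(\sqrt{1-|\langle u|v\rangle|^2}\). I then compare this with the Euclidean distance \(\|u-v\|=\sqrt{2-2\,\mathrm{Re}\langle u|v\rangle}\). Setting \(c=|\langle u|v\rangle|\le1\), we have
\[
1-c^2\le 2(1-c)\le 2-2\,\mathrm{Re}\langle u|v\rangle ,
\]
since \(1-c^2=(1-c)(1+c)\) and \(1+c\le2\). So the trace distance is at most \(\|u-v\|\), and hence \(\|\ket{\psi_U}-\ket{\psi_V}\|\ge1/3\).

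The only delicate point is bookkeeping rather than analysis. For variable-stopping-time algorithms, the purification must be taken over a fixed maximal query budget, or the hybrid must be applied to a truncated algorithm. I will state the lemma for the fixed-query purified algorithm; the truncation and Markov argument for expected cost belongs in Theorem~\ref{thm:linear-lower-bound}.
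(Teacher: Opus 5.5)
Your proof is correct and follows the paper's route: you bound the event-probability difference by the trace distance, and then use that for pure states the trace distance \(\sqrt{1-|\langle u|v\rangle|^2}\) is at most the Euclidean distance, which you check explicitly via \(1-c^2\le 2(1-c)\le 2-2\,\mathrm{Re}\langle u|v\rangle\). The preliminary Cauchy--Schwarz step that gives only \(1/6\) is unnecessary but harmless.
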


\begin{proof}
For any measurement event, the probability difference is bounded by the trace
distance of the two final states.  For pure states, the trace distance is at
most the Euclidean distance.
\end{proof}

\begin{thm}[Nontrivial spectrum plus hidden offset]
\label{thm:linear-lower-bound}
Let \(D\ge2\) and \(h,\alpha,\beta>0\), and fix a known, non-scalar
Hamiltonian \(A\) satisfying
\[
0\preceq A\preceq \frac{\min\{h,\alpha\}}{4}I_D .
\]
For \(0\le c\le\min\{h,\alpha\}/4\), define
\[
H_c=A+cI_D .
\]
Every Hamiltonian in this family satisfies
\(0\preceq H_c\preceq hI_D\).
Suppose that the only input interface depending on \(c\) is the exact
\(\alpha\)-block encoding
\[
U_c
=
\begin{pmatrix}
M_c & \sqrt{I-M_c^2}\\
\sqrt{I-M_c^2} & -M_c
\end{pmatrix},
\qquad
M_c:=\frac{H_c}{\alpha}.
\]
The algorithm may know \(A\) and may receive, at no query cost,
fixed-temperature or parameter-controlled TFD preparation unitaries, their
adjoints, and their controlled versions at arbitrary inverse temperatures.
It may also receive any other auxiliary operations. The purification
convention, the unitary extensions of all TFD preparations, and every other
freely supplied operation are required to be independent of \(c\).

\medskip

Any algorithm that outputs an additive-\(\epsilon\) estimate of
\[
f_c=\ln Z_{H_c}(\beta)
\]
with probability at least \(2/3\), uniformly over this family, must make
\[
q=\Omega\!\left(\frac{\alpha\beta}{\epsilon}\right)
\]
queries to \(U_c\) or \(U_c^\dagger\), with controlled queries counted in
the same way, provided
\[
0<\epsilon\le \frac{\beta\min\{h,\alpha\}}{16}.
\]
The same asymptotic lower bound holds for the worst-case expected
number of block-encoding queries when the query count is random.
\end{thm}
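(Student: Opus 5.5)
The plan is to reduce estimation to distinguishing two offsets and then apply the hybrid bound of Lemma~\ref{lem:hybrid-lower-bound}. Fix two offsets \(c_0=0\) and \(c_1=4\epsilon/\beta\). These are admissible, since \(\epsilon\le\beta\min\{h,\alpha\}/16\) gives \(c_1\le\min\{h,\alpha\}/4\). Because \(Z_{H_c}(\beta)=e^{-\beta c}Z_A(\beta)\), we have \(f_{c_1}-f_{c_0}=-\beta c_1=-4\epsilon\). The two target intervals \([f_{c}-\epsilon,f_c+\epsilon]\) are therefore disjoint, separated by \(2\epsilon\). The \(2/3\)-correct estimator thus yields the event ``output lies within \(\epsilon\) of \(f_{c_0}\)''. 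This event has probability at least \(2/3\) under \(c_0\) and at most \(1/3\) under \(c_1\). By Lemma~\ref{lem:distinguish-lower-bound}, the purified final states are at Euclidean distance at least \(1/3\).

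All TFD preparations and auxiliary operations are independent of \(c\) by hypothesis. They are therefore oracle-independent operations, and only the \(U_c\) calls differ between the two runs. Lemma~\ref{lem:hybrid-lower-bound} then gives \(1/3\le q\|U_{c_0}-U_{c_1}\|\). Next I bound \(\|U_{c_0}-U_{c_1}\|\). The diagonal blocks differ by \(\pm(c_1-c_0)I/\alpha\), with norm \(c_1/\alpha\). For the off-diagonal blocks I use that \(M_c\) has spectrum in \([0,1/2]\), since \(H_c\preceq(\min\{h,\alpha\}/2)I\). On that range \(x\mapsto\sqrt{1-x^2}\) is Lipschitz with constant \((1/2)/\sqrt{3/4}=1/\sqrt3\). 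Because \(M_{c_1}-M_{c_0}\) is a scalar multiple of the identity, all the \(M_c\) commute and are simultaneously diagonal with \(A\). Hence \(\|\sqrt{I-M_{c_1}^2}-\sqrt{I-M_{c_0}^2}\|\le c_1/(\sqrt3\alpha)\) is an eigenvalue-wise statement. Adding the blocks gives \(\|U_{c_0}-U_{c_1}\|\le 2c_1/\alpha=8\epsilon/(\alpha\beta)\), and therefore \(q\ge\alpha\beta/(24\epsilon)\).

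It remains to handle a random query count. For a variable-stopping algorithm with worst-case expected query count \(\bar q\), apply Markov's inequality to truncate at \(T=k\bar q\) queries. The truncated algorithm agrees with the original except on an event of probability at most \(1/k\) under each input. The two truncated outcome distributions then still differ on the event by at least \(1/3-2/k\). Taking, say, \(k=12\) leaves a constant gap \(1/6\). The fixed-\(q\) argument then gives \(T=\Omega(\alpha\beta/\epsilon)\), hence \(\bar q=\Omega(\alpha\beta/\epsilon)\).

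The main obstacle is making the stopping-time truncation rigorous inside the purified model. The stopping decision may be made coherently or via intermediate measurements, so we purify with a clock/halt register. Halting is then treated as a controlled no-op on subsequent oracle slots, and the expected count bounds the weight of branches exceeding \(T\). This step needs the most care. The operator-norm estimate is routine once commutativity of the \(M_c\) is noted. Non-scalarity of \(A\) is not needed for this argument; it only rules out trivial instances.
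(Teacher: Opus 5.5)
Your proposal is correct and follows essentially the same route as the paper. It uses the same offsets $c_0=0$ and $c_1=4\epsilon/\beta$, the same hybrid bound with $\|U_{c_1}-U_{c_0}\|\le 2c_1/\alpha$ obtained in the common eigenbasis of $A$ (giving $q\ge\alpha\beta/(24\epsilon)$), and the same Markov truncation at twelve times the expected query count, which leaves a distinguishing gap of $1/6$. The differences are only cosmetic: you test the event that the output lies within $\epsilon$ of $f_{c_0}$ rather than that it exceeds the midpoint, and you bound $\|U_{c_1}-U_{c_0}\|$ with a blockwise triangle inequality rather than the derivative of the $2\times2$ matrix function $u(x)$.
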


\begin{proof}
The spectrum of \(H_c\) is the spectrum of \(A\) shifted by \(c\), and
\[
Z_{H_c}(\beta)
=
e^{-\beta c}Z_A(\beta),
\qquad
f_c
=
\ln Z_A(\beta)-\beta c .
\]
However, for every inverse temperature \(\theta\ge0\), the normalized Gibbs
state satisfies
\[
\rho_{H_c}(\theta)
=
\frac{e^{-\theta(A+cI_D)}}{Z_{H_c}(\theta)}
=
\frac{e^{-\theta A}}{Z_A(\theta)}
=
\rho_A(\theta).
\]
Hence fixed-temperature and parameter-controlled TFD preparations can be
chosen with a common purification convention and unitary extension that are
independent of \(c\). Even free coherent TFD access at all temperatures
therefore does not reveal the unknown offset.

\medskip

Choose
\[
c_0=0,
\qquad
c_1=\frac{4\epsilon}{\beta}.
\]
The assumption \(\epsilon\le\beta\min\{h,\alpha\}/16\) ensures that
\(c_1\le\min\{h,\alpha\}/4\), so both inputs belong to the family and
satisfy \(0\preceq H_{c_i}\preceq hI_D\).
The two target values differ by
\[
|f_{c_1}-f_{c_0}|
=
\beta(c_1-c_0)
=
4\epsilon .
\]
Let \(m=(f_{c_0}+f_{c_1})/2\). On input \(c_0\), every output within
\(\epsilon\) of \(f_{c_0}\) lies above \(m\), whereas on input \(c_1\), every
output within \(\epsilon\) of \(f_{c_1}\) lies below \(m\). Therefore the event
that the output exceeds \(m\) has probabilities differing by at least \(1/3\)
on the two inputs. By Lemma~\ref{lem:distinguish-lower-bound}, the
corresponding final states have Euclidean distance at least \(1/3\).

\medskip

It remains to bound the distance between the two block-encoding oracles.
Since
\[
0\preceq H_c
\preceq \frac{\min\{h,\alpha\}}{2}I_D
\preceq \frac{\alpha}{2}I_D,
\]
the spectrum of \(M_c\) is contained in \([0,1/2]\). Define
\[
u(x)=
\begin{pmatrix}
x & \sqrt{1-x^2}\\
\sqrt{1-x^2} & -x
\end{pmatrix},
\qquad
0\le x\le\frac12.
\]
Its derivative is
\[
u'(x)
=
\begin{pmatrix}
1 & -\dfrac{x}{\sqrt{1-x^2}}\\[1ex]
-\dfrac{x}{\sqrt{1-x^2}} & -1
\end{pmatrix},
\]
and hence
\[
\|u'(x)\|
=
\frac{1}{\sqrt{1-x^2}}
\le \frac{2}{\sqrt3}
<2.
\]
Because all operators \(M_c\) are functions of the fixed Hamiltonian \(A\),
they are simultaneously diagonalizable. Applying the scalar mean-value bound
in their common eigenbasis gives
\[
\|U_{c_1}-U_{c_0}\|
\le
\frac{2|c_1-c_0|}{\alpha}.
\]
The same bound holds for adjoint and controlled queries.
By Lemmas~\ref{lem:hybrid-lower-bound} and
\ref{lem:distinguish-lower-bound},
\[
\frac13
\le
q\|U_{c_1}-U_{c_0}\|
\le
q\,\frac{2(c_1-c_0)}{\alpha}
=
q\,\frac{8\epsilon}{\alpha\beta}.
\]
Thus
\[
q\ge \frac{\alpha\beta}{24\epsilon}.
\]

\medskip

To extend the bound to a random query count, let \(T\) be the number
of block-encoding queries made by the algorithm, and set
\(Q:=\max\{\mathbb E_{c_0}[T],\mathbb E_{c_1}[T]\}\).
If \(Q=\infty\), the claim is immediate. Otherwise, truncate the
algorithm before its \((q+1)\)-st query, where
\(q:=\lfloor12Q\rfloor\), and return an arbitrary estimate if it is
truncated. For either input, Markov's inequality gives
\[
\Pr[T>q]
\le
\frac{\mathbb E[T]}{q+1}
\le
\frac{Q}{\lfloor12Q\rfloor+1}
<\frac1{12}.
\]
The truncated algorithm therefore succeeds with probability at least
\(7/12\) on each input. The event that its output exceeds \(m\)
then has probabilities differing by at least \(1/6\).
As in Lemma~\ref{lem:distinguish-lower-bound}, this probability
difference lower-bounds the Euclidean distance between purified
final states. The hybrid bound consequently gives
\[
\frac16
\le
q\,\frac{8\epsilon}{\alpha\beta}
\le
\frac{96Q\epsilon}{\alpha\beta},
\qquad
Q\ge\frac{\alpha\beta}{576\epsilon}.
\]
Since \(Q\) is bounded above by the worst-case expected query count
over the family, this proves the expected-query lower bound.
\end{proof}

\begin{cor}[Linear lower bound for extensive block encodings]
\label{cor:linear-lower-bound}
Let \(h=\Theta(n)\), \(\alpha=\Theta(n)\), and \(\beta>0\).
Consider the hidden-offset family and input model of
Theorem~\ref{thm:linear-lower-bound} with \(D=2^n\).
Every Hamiltonian in this family satisfies \(0\preceq H_c\preceq hI\).
Any algorithm that estimates \(\ln Z_{H_c}(\beta)\) to additive error
\(\epsilon\), with success probability at least \(2/3\) for every
member of the family, requires
\[
\Omega\!\left(\frac{n\beta}{\epsilon}\right)
\]
block-encoding queries in worst-case expectation, provided
\[
0<\epsilon\le \frac{\beta\min\{h,\alpha\}}{16}.
\]
The bound holds even with free fixed-temperature and
parameter-controlled TFD preparations, subject to the
offset-independence conditions in that theorem.
In particular, for \(\beta=\Theta(1)\) and the same accuracy condition,
the lower bound is
\[
\Omega\!\left(\frac{n}{\epsilon}\right).
\]
\end{cor}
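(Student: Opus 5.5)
The plan is to obtain Corollary~\ref{cor:linear-lower-bound} as a direct specialization of Theorem~\ref{thm:linear-lower-bound} with \(D=2^n\). The work is in exhibiting a valid member of the hidden-offset family at this dimension and translating \(\alpha\beta/\epsilon\) into \(n\beta/\epsilon\).

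First I fix the known non-scalar Hamiltonian \(A\). A concrete choice is
\[
A=\frac{\min\{h,\alpha\}}{4}\,\ket1\!\bra1_1\otimes I_{2^{n-1}},
\]
a multiple of a rank-\(2^{n-1}\) projector on \(D=2^n\geq2\) dimensions. It is non-scalar for \(n\geq1\) and satisfies \(0\preceq A\preceq\frac{\min\{h,\alpha\}}{4}I_D\). For \(0\le c\le\min\{h,\alpha\}/4\), this gives \(H_c=A+cI\preceq\frac{\min\{h,\alpha\}}{2}I\preceq hI\), which is the stated membership claim. The block encodings \(U_c\) are taken exactly as in the theorem. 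The TFD and auxiliary oracles are the same offset-independent ones, so the input model and its independence conditions carry over verbatim.

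Next I invoke Theorem~\ref{thm:linear-lower-bound} in the same accuracy range \(0<\epsilon\le\beta\min\{h,\alpha\}/16\). It gives a worst-case expected query count of at least \(\alpha\beta/(576\epsilon)\), and at least \(\alpha\beta/(24\epsilon)\) for deterministic query counts. Since \(\alpha=\Theta(n)\), there is a constant \(c_\alpha>0\) with \(\alpha\ge c_\alpha n\) for all sufficiently large \(n\), so the bound becomes \(\Omega(n\beta/\epsilon)\). The constant depends only on the \(\Theta\) constants and not on \(\beta\) or \(\epsilon\). Setting \(\beta=\Theta(1)\) then gives \(\Omega(n/\epsilon)\).

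The hypothesis \(h=\Theta(n)\) is not needed for the lower bound itself. It only guarantees that the admissible accuracy range is nonvacuous and of order \(\beta n\). The one point requiring care is that the theorem's constants are uniform in \(D\), so nothing is lost when \(D=2^n\) grows. I will check this in the proof: the hybrid and operator-norm bound \(\|U_{c_1}-U_{c_0}\|\le 2|c_1-c_0|/\alpha\) is dimension-free because it is computed eigenvalue-wise in a common eigenbasis. I do not expect any genuine obstacle beyond this check.
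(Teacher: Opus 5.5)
Your proposal is correct and matches the paper's proof, which simply applies Theorem~\ref{thm:linear-lower-bound} with \(D=2^n\) and uses \(\alpha=\Theta(n)\) to turn \(\Omega(\alpha\beta/\epsilon)\) into \(\Omega(n\beta/\epsilon)\), then sets \(\beta=\Theta(1)\). Your explicit non-scalar \(A\) and your check that the theorem's constants do not depend on \(D\) are harmless additions that the paper leaves implicit.
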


\begin{proof}
Apply Theorem~\ref{thm:linear-lower-bound} with \(D=2^n\).
Under the stated accuracy condition, it gives
\[
q=\Omega\!\left(\frac{\alpha\beta}{\epsilon}\right)
=\Omega\!\left(\frac{n\beta}{\epsilon}\right),
\]
where the second equality uses \(\alpha=\Theta(n)\).
The constant-temperature statement follows by taking
\(\beta=\Theta(1)\).
\end{proof}
\begin{rem}[Interpretation of the offset lower bound]
The lower bound uses an unknown absolute offset.  It therefore applies in the
generic block-encoding model whenever the absolute ground energy is not
separately supplied, even if a non-tight lower bound on that energy is known.
It does not apply to a stronger input model that promises and enforces an exact
normalization such as \(\lambda_{\min}(H)=0\).

\medskip

The bound concerns \(N_{\mathrm{BE}}\); it places no lower bound on
\(N_{\mathrm{TFD}}\), \(N_{\mathrm{cohTFD}}\), total gate count, or
circuit depth.

\medskip

The common \(h\beta\) scale has different roles in the two arguments.  In the
upper bounds it controls the repeated halving of schedule intervals needed to keep
\(\delta_i\|H\|=O(1)\).  In the lower bound it arises because an offset
changes \(f(\beta)\) by \(\beta c\) while perturbing an
\(\alpha\)-block encoding by only \(O(c/\alpha)\).
\end{rem}

\section{Proofs for fixed-temperature reconstruction}
\label{app:single-temperature}

This appendix proves the local logarithmic bounds and the variable-depth
reconstruction used in Section~\ref{sec:single-temperature}.  Every local
branch is chosen classically.  No coherent superposition over temperatures or
doubling levels is used.
\subsection{QSVT implementation of the imaginary-time step}
\label{app:qsvt-imaginary-time}

The reweighting terms use a block encoding of the scaled imaginary-time
operator \(e^{-1}e^{-\delta H/2}\). The fixed scale ensures that the
approximating polynomial is bounded on the full interval \([-1,1]\), as
required by QSVT. Its effect on the reweighting logarithm is the known
additive correction \(2\).

\medskip

Let \(S\) denote the system register on which \(H\) acts, and suppose
that \(U_H\) is an exact \(\alpha\)-block encoding of \(H\).
Let \(A\) contain both the original block-encoding ancillas and the two
additional ancilla qubits used by the polynomial transformation, and let
\(a\) be the total number of qubits in \(A\). Padding \(U_H\) with the
identity on the additional qubits, we have
\[
(\bra{0_a}_A\otimes I_S)U_H(\ket{0_a}_A\otimes I_S)=H/\alpha,
\]
where \(\ket{0_a}_A\) is the all-zero state of \(A\).

\begin{lem}[Block encoding of the imaginary-time step]
\label{lem:qsvt-imaginary-time}
Let \(\delta\ge0\) and \(0<\nu<1\). There exists a unitary
\(D_\delta\) acting on \(A\otimes S\) such that
\begin{equation}
\label{eq:qsvt-delta-block}
(\bra{0_a}_A\otimes I_S)D_\delta(\ket{0_a}_A\otimes I_S)
=\widetilde A_\delta,
\qquad
\|\widetilde A_\delta-e^{-1}e^{-\delta H/2}\|\le\nu.
\end{equation}
The unitary \(D_\delta\) uses
\[
q=O\!\left((1+\alpha\delta)\log(2/\nu)\right)
\]
queries to \(U_H\), \(U_H^\dagger\), and their controlled versions.
In particular, if \(\alpha=\Theta(h)\) and \(h\delta\le1\), then
\(q=O(\log(2/\nu))\). Hence \(q=\widetilde O(1)\) when \(\nu\) is
inverse polynomial in the reconstruction parameters.
\end{lem}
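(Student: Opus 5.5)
The plan is to build \(D_\delta\) by quantum singular value transformation applied to the block encoding \(U_H\) of \(H/\alpha\). The target is the even-or-mixed scalar function \(g(x):=e^{-1}e^{-\delta\alpha x/2}\), evaluated at \(x\in\operatorname{spec}(H/\alpha)\subseteq[0,h/\alpha]\subseteq[0,1]\).

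The first step is to construct a real polynomial \(P\) with \(\deg P=O((1+\alpha\delta)\log(2/\nu))\). It should satisfy \(|P(x)-g(x)|\le\nu\) for \(x\in[0,1]\) and \(|P(x)|\le1\) for all \(x\in[-1,1]\). I will take the standard Chebyshev approximation of \(e^{-t(x+1)}\) with \(t=\alpha\delta/2\). For instance, \cite{Gilyen2019QSVT} gives the degree bound \(O(\sqrt{\max\{t,\log(1/\nu)\}\log(1/\nu)})\), and the cruder bound \(O((1+t)\log(1/\nu))\) also suffices. This approximation has uniform error \(\nu/2\) on \([-1,1]\). I then multiply by \(e^{-1}e^{t}\) to recover \(e^{-1}e^{-tx}\) on \([0,1]\).

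The prefactor \(e^{-1}\) is exactly what keeps \(P\) bounded. On the negative half \([-1,0]\) the function \(e^{-1}e^{-tx}\) grows up to \(e^{t-1}\), which may exceed \(1\). So rather than approximating \(g\) on all of \([-1,1]\), I will use the standard trick of approximating \(g\) on \([0,1]\) only, multiplied by a polynomial approximation of a smooth rectangle/step that vanishes on \([-1,-\kappa]\). That rectangle lemma is in \cite{Gilyen2019QSVT} and costs only \(O(\kappa^{-1}\log(1/\nu))\) degree. I will choose \(\kappa\) a constant times \(1/(1+t)\), so that \(g\le e^{-1}e^{t\kappa}\le 1\) on \([-\kappa,0]\), and the product stays bounded by \(1\) after rescaling by \(1-\nu/4\). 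This step is the main obstacle: ensuring the bound \(|P|\le1\) on all of \([-1,1]\) while keeping the degree linear in \(1+\alpha\delta\) and logarithmic in \(1/\nu\). The constant \(e^{-1}\) is designed to give the slack that makes this work, since for \(\alpha\delta=O(1)\) one can even take \(\kappa\) constant.

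Next, since \(P\) has no definite parity, I split it into even and odd parts \(P=P_e+P_o\), each bounded by \(1\) on \([-1,1]\). I implement each by QSVT with one ancilla qubit and then combine them by a linear combination of unitaries using a second ancilla. This is why \(A\) has two extra qubits. The LCU halves the normalization, so I apply the construction to \(2P\). To keep \(\|2P_e\|,\|2P_o\|\le1\), I will absorb the factor of \(2\) by using the generic-polynomial QSVT theorem of \cite{Gilyen2019QSVT} (Corollary on real polynomials of indefinite parity). That theorem directly gives a block encoding of \(P(H/\alpha)\) for \(|P|\le1\), with one extra ancilla qubit for the LCU and one for the phase sequence, and no normalization loss. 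The resulting block is \(\widetilde A_\delta=P(H/\alpha)\), with \(\|\widetilde A_\delta-e^{-1}e^{-\delta H/2}\|\le\sup_{x\in[0,1]}|P(x)-g(x)|\le\nu\) by the spectral theorem.

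Finally, I count queries. QSVT uses \(\deg P\) applications of \(U_H\), \(U_H^\dagger\), or their controlled versions, giving \(q=O((1+\alpha\delta)\log(2/\nu))\). If \(\alpha=\Theta(h)\) and \(h\delta\le1\), then \(\alpha\delta=O(1)\), so \(q=O(\log(2/\nu))\), which is \(\widetilde O(1)\) for inverse-polynomial \(\nu\).
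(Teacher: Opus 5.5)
Your route differs from the paper's and, as written, has two problems. First, the indefinite-parity QSVT result you need (Theorem~56 of \cite{Gilyen2019QSVT}, which the paper also uses) requires \(|P(x)|\le 1/2\) on \([-1,1]\), not \(|P|\le 1\). The factor \(2\) lost when combining the even and odd circuits is built into that hypothesis, so you cannot absorb it. This is fixable, and it is the real role of the prefactor \(e^{-1}\). Since \(\sup_{[0,1]} g = e^{-1} < 1/2\), you can shrink the transition so that \(t\kappa \le 1/4\), which gives \(|g| \le e^{-3/4} < 1/2\) on \([-\kappa,1]\). Your target \(e^{-1+t\kappa} \le 1\) is too weak.

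Second, and this is a genuine gap, the step-polynomial argument does not control \([-1,-\kappa]\). A polynomial cannot vanish there: the rectangle/step lemma at degree \(O(\kappa^{-1}\log(1/\nu))\) only gives \(|S| \le O(\nu)\). Your unsuppressed approximant \(Q \approx e^{-1-tx}\), however, is as large as about \(e^{t-1}\) near \(x=-1\). So \(|QS|\) is only bounded by roughly \(e^{t-1}\nu\), which exceeds \(1/2\) as soon as \(t = \alpha\delta/2 \gtrsim \ln(1/\nu)\). (Rescaling a \(\nu/2\)-accurate approximant of \(e^{-t(x+1)}\) by \(e^{t-1}\) similarly inflates its error to \(e^{t-1}\nu/2\). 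The sharper Chebyshev degree bound absorbs that, but your cruder \(O((1+t)\log(1/\nu))\) bound does not.) Repairing the step with the standard lemma requires accuracy of order \(\min\{e^{-t},\nu\}\). That gives degree \(O((1+t)(t+\log(2/\nu)))\), which is quadratic in \(\alpha\delta\). Your argument therefore yields \(q = O(\log(2/\nu))\) in the regime \(\alpha\delta = O(1)\) that the paper actually uses. It does not yield the stated bound \(O((1+\alpha\delta)\log(2/\nu))\) for all \(\delta \ge 0\).

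The paper avoids the issue by applying Corollary~66 of \cite{Gilyen2019QSVT} to the Taylor series of \(e^{-1-tx}\) about \(x_0 = 1\), with radius \(r = 1\) and extension \(s = \min\{1, 1/(4t)\}\). The coefficient bound \(\sum_k (1+s)^k |c_k| = e^{-1+ts} \le e^{-3/4} < 19/40\) makes the corollary's polynomial approximate \(e^{-1-tx}\) on \([0,1]\). It also satisfies \(|P| \le 19/40 + \varepsilon_* < 1/2\) on all of \([-1,1]\), with degree \(O(s^{-1}\log(1/\varepsilon_*)) = O((1+\alpha\delta)\log(2/\nu))\). Theorem~56 then gives \(D_\delta\) with two extra ancillas.
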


\begin{proof}
Set \(t:=\alpha\delta/2\). If \(t=0\), a single-qubit rotation
implements the block \(e^{-1}I\) without oracle queries.
For \(t>0\), set
\[
s:=\min\!\left\{1,\frac{1}{4t}\right\},
\qquad
\varepsilon_*:=\min\!\left\{\frac{\nu}{2},\frac{1}{80}\right\}.
\]
The function \(g(x):=e^{-1-tx}\) has Taylor coefficients
\(c_k=e^{-1-t}(-t)^k/k!\) about \(x=1\), satisfying
\[
\sum_{k=0}^{\infty}(1+s)^k|c_k|
=e^{-1+ts}\le e^{-3/4}<\frac{19}{40}.
\]
Apply \cite[Corollary~66]{Gilyen2019QSVT} with
\(x_0=r=1\), extension parameter \(s\), coefficient bound \(19/40\),
and accuracy \(\varepsilon_*\).
Taking the real part of the resulting polynomial gives
\(P\in\mathbb R[x]\) such that
\[
\sup_{x\in[0,1]}|P(x)-e^{-1-tx}|\le\varepsilon_*,
\qquad
\sup_{x\in[-1,1]}|P(x)|
\le\frac{19}{40}+\varepsilon_*<\frac12,
\]
with degree
\[
d=O\!\left(s^{-1}\log(1/\varepsilon_*)\right)
=O\!\left((1+\alpha\delta)\log(2/\nu)\right).
\]
By \cite[Theorem~56]{Gilyen2019QSVT}, the exact block encoding
of \(H/\alpha\) can therefore be transformed into a block encoding
of \(P(H/\alpha)\) with error at most \(\nu/2\), using \(O(d)\)
queries to \(U_H\), \(U_H^\dagger\), and their controlled versions.
The two additional ancilla qubits required by this construction
are included in \(A\).
Since the spectrum of \(H/\alpha\) lies in \([0,1]\), the resulting
all-zero block satisfies
\[
\left\|\widetilde A_\delta-e^{-1}e^{-\delta H/2}\right\|
\le\frac{\nu}{2}+\varepsilon_*\le\nu.
\]
This proves the approximation and query bounds.
\end{proof}

\subsection{Nondestructive amplitude estimation and local logarithmic bounds}
\label{app:local-log-estimation}

We first record the amplitude-estimation result used by
Algorithm~\ref{alg:local-log-estimate}.

\begin{thm}[Nondestructive unbiased amplitude estimation;
Cornelissen--Hamoudi]
\label{thm:ndub-amplitude}
\cite[Theorem~2.4]{Cornelissen2023}
Let \(\ket{\psi}\) be a quantum state and let \(\Pi\) be a projector with
\(p=\|\Pi\ket{\psi}\|^2\).  Given \(t>4\) and
\(\epsilon\in(0,1)\), the algorithm
\(\operatorname{NdUb\text{-}Amplitude}(\ket{\psi},\Pi,t,\epsilon)\)
of Cornelissen and Hamoudi~\cite[Algorithm~3]{Cornelissen2023} outputs
\(\widetilde p\in[-2\pi,2\pi]\) such that
\[
\left|\mathbb E[\widetilde p]-p\right|\le\epsilon,
\qquad
\operatorname{Var}(\widetilde p)
\le
\frac{91p}{t^2}+\epsilon.
\]
It uses one copy of \(\ket{\psi}\), which is restored at the end with
probability at least \(1-\epsilon\), and an expected
\(O(t\log\log t\log(t/\epsilon))\) applications of the reflections
\(I-2\ket{\psi}\!\bra{\psi}\) and \(I-2\Pi\).
\end{thm}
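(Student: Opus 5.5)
This result is imported from Cornelissen and Hamoudi~\cite[Theorem~2.4]{Cornelissen2023}. The plan is therefore to reconstruct their argument at the level needed to confirm that the stated constants and cost apply verbatim. The same argument must also work for our states \(\ket{\psi}=V\ket0\), with reflections implemented as \(V(I-2\ket0\!\bra0)V^\dagger\).

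\textbf{Step 1: spectral setup.} Set up the Grover iterate
\[
G=(I-2\ket{\psi}\!\bra{\psi})(I-2\Pi).
\]
On the two-dimensional invariant subspace spanned by \(\Pi\ket\psi\) and \((I-\Pi)\ket\psi\), it has eigenvectors \(\ket{\psi_\pm}\) with eigenvalues \(e^{\pm2i\theta}\), where \(p=\sin^2\theta\). Moreover, \(\ket\psi=(\ket{\psi_+}+\ket{\psi_-})/\sqrt2\) up to phases. Everything then reduces to estimating \(\theta\) from phase estimation on \(G\), which uses one reflection pair per application of \(G\).

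\textbf{Step 2: a nearly unbiased estimator with small variance.}
\begin{itemize}
\item Run phase estimation of depth \(O(t)\) with a uniformly random phase shift, so that the distribution of the estimation error \(\hat\theta-\theta\) is independent of \(\theta\) modulo the grid.
\item Output a fixed trigonometric function of \(\hat\theta\), of the form \(\sin^2\) corrected by a known factor compensating the characteristic function of the error. Its expectation is then exactly \(p\), up to tail truncation of size \(\epsilon\).
\item Keep the correction factor bounded. This forces the output into a range of size \(O(1)\), which is the source of \([-2\pi,2\pi]\).
\end{itemize}
For the variance, the output is a smooth function of \(\hat\theta\) with derivative \(O(\sin\theta\cos\theta)=O(\sqrt p)\) near \(\theta\), and the phase error is \(O(1/t)\) in mean square. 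The tail events are suppressed by median amplification over \(O(\log(t/\epsilon))\) inner runs, and they contribute at most \(\epsilon\). Together these give \(\operatorname{Var}\le 91p/t^2+\epsilon\), with the constant \(91\) coming from their explicit bookkeeping. The \(\log\log t\) factor comes from the random-depth or doubling schedule used to make the depth distribution produce the \(1/t^2\) rate without an extra \(\log t\).

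\textbf{Step 3: nondestructiveness.}
\begin{itemize}
\item Measuring the phase register collapses the state onto a mixture near \(\ket{\psi_+}\) or \(\ket{\psi_-}\). The symmetry \(\theta\mapsto-\theta\) of the chosen output means we never need to learn the sign.
\item Restore \(\ket\psi\) by uncomputing phase estimation and then projecting back onto \(\ket\psi\) with the reflection \(I-2\ket\psi\!\bra\psi\), retrying via phase-estimation-based rotations within the two-dimensional subspace.
\item Each retry succeeds with constant probability, so \(O(\log(1/\epsilon))\) retries give restoration probability \(1-\epsilon\).
\end{itemize}
This restoration step is where I expect the main obstacle: controlling its cost in expectation, which is why the query bound is an expected rather than worst-case count, while preserving the bias bound. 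To handle it, I would first check the rotation-within-subspace argument of Cornelissen--Hamoudi. Since we use the theorem as a black box, the remaining work is only to verify that their hypotheses (access to the two reflections and one copy of \(\ket\psi\)) are met by our constructions. This is immediate from the reflection implementations given before Lemma~\ref{lem:almost-unbiased-local}.
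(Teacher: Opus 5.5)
This matches the paper, which gives no proof of this statement and simply imports it from \cite[Theorem~2.4]{Cornelissen2023} as a black box; its reflection hypotheses are met by the implementations \(V(I-2\ket0\!\bra0)V^\dagger\) you point to. Your reconstruction of the internals is not needed, and its specifics (the source of the constant \(91\), the \(\log\log t\) factor, the median step, the \([-2\pi,2\pi]\) range) are guesses rather than derivations, so they should not be presented as confirming the constants.
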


\medskip

In Algorithm~\ref{alg:local-log-estimate}, we set \(t=M\) and use
\(\eta\) as the internal accuracy parameter.  Writing
\(c_v:=91\) and denoting the resulting estimate by
\(\widehat p_{M,\eta}\), Theorem~\ref{thm:ndub-amplitude} gives
\begin{equation}
\label{eq:single-temperature-ae-moments-main}
\left|\mathbb E[\widehat p_{M,\eta}]-p\right|\le\eta,
\qquad
\operatorname{Var}(\widehat p_{M,\eta})
\le
\frac{c_vp}{M^2}+\eta
\le
\frac{c_v}{M^2}+\eta.
\end{equation}
We now use these bounds to prove the bias, variance, and query estimates for
the logarithmic estimator in
Lemma~\ref{lem:almost-unbiased-local}.

\begin{proof}[Proof of Lemma~\ref{lem:almost-unbiased-local}]
Define
\[
\Lambda_\gamma:=\ln\frac{2}{\gamma},
\qquad
q_+(x):=
\ln\!\left(\min\!\left\{2,\max\!\left\{\frac{\gamma}{2},x\right\}\right\}\right),
\qquad
q_-(x):=-q_+(x).
\]
The algorithm returns \(q_-(\widehat p_{M,\eta})\) for an overlap term and
\(q_+(\widehat p_{M,\eta})+2\) for a reweighting term. The unshifted maps
\(q_+\) and \(q_-\) are \(2/\gamma\)-Lipschitz and have absolute value
at most \(\Lambda_\gamma\). Adding the constant \(2\) does not change
the Lipschitz bound or the variance.

\medskip

For either sign, the Lipschitz bound and
Eq.~\eqref{eq:single-temperature-ae-moments-main} give
\[
\operatorname{Var}(\widehat Y_{M,\eta})
\le
\frac{4}{\gamma^2}
\left(\frac{c_v}{M^2}+\eta\right)
\le
K_\gamma\left(M^{-2}+\eta\right).
\]

\medskip

It remains to bound the bias.  Write
\(\Delta:=\widehat p_{M,\eta}-p\).  Since \(0<\eta<1\),
\begin{equation}
\label{eq:single-temperature-local-second-moment}
\mathbb E[\Delta^2]
=
\operatorname{Var}(\widehat p_{M,\eta})
+\bigl(\mathbb E[\Delta]\bigr)^2
\le
\frac{c_v}{M^2}+2\eta.
\end{equation}
Let \(\mathcal E:=\{|\Delta|\le\gamma/2\}\).  Since
\(p\in[\gamma,1]\), clipping is inactive on \(\mathcal E\).  Let
\(g(x)=\ln x+2\) for a reweighting term and \(g(x)=-\ln x\) for an overlap
term.  Taylor's theorem about \(p\), using
\(|g'(p)|\le1/\gamma\) and
\(|g''(x)|\le4/\gamma^2\) on the relevant interval, gives
\[
\left|
\mathbb E\!\left[
\bigl(\widehat Y_{M,\eta}-Y\bigr)\mathbf 1_{\mathcal E}
\right]
\right|
\le
\frac{1}{\gamma}
\left|
\mathbb E[\Delta\mathbf 1_{\mathcal E}]
\right|
+
\frac{2}{\gamma^2}\mathbb E[\Delta^2].
\]
Moreover,
\[
\left|
\mathbb E[\Delta\mathbf 1_{\mathcal E}]
\right|
\le
|\mathbb E[\Delta]|
+
\left|
\mathbb E[\Delta\mathbf 1_{\mathcal E^c}]
\right|
\le
\eta+\frac{2}{\gamma}\mathbb E[\Delta^2],
\]
where on \(\mathcal E^c\) we used
\(|\Delta|\le2\Delta^2/\gamma\).  Hence the contribution from
\(\mathcal E\) is at most
\[
\frac{\eta}{\gamma}
+
\frac{4}{\gamma^2}\mathbb E[\Delta^2].
\]

\medskip

By Markov's inequality,
\(\Pr(\mathcal E^c)\le
4\mathbb E[\Delta^2]/\gamma^2\).
The additive correction cancels from \(\widehat Y_{M,\eta}-Y\).
Each unshifted logarithm is bounded in absolute value by \(\Lambda_\gamma\),
so \(|\widehat Y_{M,\eta}-Y|\le2\Lambda_\gamma\) for either branch.
The contribution from \(\mathcal E^c\) is therefore at most
\(8\Lambda_\gamma\mathbb E[\Delta^2]/\gamma^2\).  Combining the two events
and using
Eq.~\eqref{eq:single-temperature-local-second-moment} gives
\[
\left|
\mathbb E[\widehat Y_{M,\eta}]-Y
\right|
\le
\frac{\eta}{\gamma}
+
\frac{4+8\Lambda_\gamma}{\gamma^2}
\left(\frac{c_v}{M^2}+2\eta\right)
\le
K_\gamma\left(M^{-2}+\eta\right),
\]
with \(K_\gamma\) as defined in
Eq.~\eqref{eq:single-temperature-K-gamma-main}.  This proves the bias and
variance bounds.  If \(\eta\le M^{-2}\), both are therefore at most
\(2K_\gamma/M^2\).

\medskip

We finally count oracle calls.  For an overlap term, the prepared state is
obtained from
\(U_\theta^\dagger U_{\theta+\delta}\).  One reflection about this state
uses the preparation and its inverse, and hence four fixed-temperature TFD
calls and no block-encoding query.  For a reweighting term, one reflection
about the prepared state uses \(U_\theta\), \(U_\theta^\dagger\),
\(D_\delta\), and \(D_\delta^\dagger\).  Thus it uses two
fixed-temperature TFD calls and \(O(q)\) block-encoding queries when
\(D_\delta\) has QSVT degree \(q\).

\medskip

By Theorem~\ref{thm:ndub-amplitude}, the expected number of such reflections
is
\(O(M\log\log M\log(M/\eta))=\widetilde O(M)\).
Therefore an overlap term uses
\(N_{\mathrm{TFD}}=\widetilde O(M)\) and \(N_{\mathrm{BE}}=0\), while a
reweighting term uses
\(N_{\mathrm{TFD}}=\widetilde O(M)\) and
\(N_{\mathrm{BE}}=\widetilde O(Mq)\).
No coherent-temperature TFD call is used.  Under
\(\alpha=\Theta(h)\), \(h\delta\le1\), and inverse-polynomial QSVT
accuracy, standard QSVT approximation gives
\(q=\widetilde O(1)\)~\cite{Gilyen2019QSVT}.
\end{proof}

\subsection{Stability of the reweighting dilation}
\label{app:single-temperature-dilation-stability}

For a reweighting term, let \(A_i:=e^{-\delta_iH/2}\), and let
\(\widetilde A_i\) denote the all-zero block of the QSVT unitary
\(D_{\delta_i}\) from Lemma~\ref{lem:qsvt-imaginary-time}.
The following lemma controls the implemented success probability and its
logarithm after the known scaling correction.

\begin{lem}[Probability and logarithm stability]
\label{lem:single-temperature-dilation-stability}
Suppose that \(\|\widetilde A_i-e^{-1}A_i\|\le\nu\). Let
\[
R(\beta_i,\delta_i)
:=\|(A_i\otimes I_R)\ket{\psi_{\beta_i}}\|^2,
\qquad
p_{i,0}
:=\|(\widetilde A_i\otimes I_R)\ket{\psi_{\beta_i}}\|^2,
\]
where \(R\) in \(I_R\) denotes the TFD reference register.
Then
\[
|p_{i,0}-e^{-2}R(\beta_i,\delta_i)|\le2\nu.
\]
If, in addition,
\[
e^{-2}R(\beta_i,\delta_i)\ge\gamma_0,
\qquad
2\nu\le\frac{\gamma_0}{2},
\]
then
\[
p_{i,0}\ge\frac{\gamma_0}{2},
\qquad
|\ln p_{i,0}+2-\ln R(\beta_i,\delta_i)|
\le\frac{4\nu}{\gamma_0}.
\]
\end{lem}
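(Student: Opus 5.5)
The first claim follows from the operator inequality already used in the proof of Lemma~\ref{lem:local-quantity-projection}. Write
\[
p_{i,0}-e^{-2}R(\beta_i,\delta_i)
=\langle\psi_{\beta_i}|\bigl((\widetilde A_i^\dagger\widetilde A_i-e^{-2}A_i^\dagger A_i)\otimes I_R\bigr)|\psi_{\beta_i}\rangle .
\]
We then bound the operator norm through the telescoping decomposition
\[
\widetilde A_i^\dagger\widetilde A_i-e^{-2}A_i^2
=\widetilde A_i^\dagger(\widetilde A_i-e^{-1}A_i)+(\widetilde A_i-e^{-1}A_i)^\dagger e^{-1}A_i .
\]
Here \(\widetilde A_i\) is the block of a unitary, so \(\|\widetilde A_i\|\le1\). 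Also \(\|e^{-1}A_i\|\le e^{-1}\le1\), since \(0\preceq H\) makes \(A_i=e^{-\delta_iH/2}\) a contraction. Each summand therefore has norm at most \(\nu\), and the total is at most \(2\nu\). Because \(\ket{\psi_{\beta_i}}\) is normalized, the expectation is bounded by the operator norm, which gives \(|p_{i,0}-e^{-2}R|\le2\nu\).

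For the lower bound, combine \(e^{-2}R\ge\gamma_0\) with \(2\nu\le\gamma_0/2\). This gives
\[
p_{i,0}\ge\gamma_0-2\nu\ge\gamma_0/2 .
\]

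For the logarithm, note that
\[
\ln p_{i,0}+2-\ln R=\ln p_{i,0}-\ln(e^{-2}R).
\]
Apply the mean value theorem to \(\ln\) on the interval between \(p_{i,0}\) and \(e^{-2}R\). Both endpoints are at least \(\gamma_0/2\), so the derivative on this interval is at most \(2/\gamma_0\). Hence
\[
|\ln p_{i,0}+2-\ln R|\le\frac{2}{\gamma_0}\cdot2\nu=\frac{4\nu}{\gamma_0}.
\]

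There is no real obstacle in this argument. The only step requiring care is the contraction property of both operators. This is what lets us avoid the factor \((\|\widetilde A_i\|+\|e^{-1}A_i\|)\) exceeding \(2\), and it rests on \(\widetilde A_i\) being a unitary block and on \(H\succeq0\).
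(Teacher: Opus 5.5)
Your proposal is correct and follows the paper's proof essentially step for step. The paper uses the same telescoping split of \(\widetilde A_i^\dagger\widetilde A_i-e^{-2}A_i^\dagger A_i\), bounded using that \(\widetilde A_i\) (a unitary block) and \(e^{-1}A_i\) (since \(H\succeq0\)) are contractions, then the lower bound \(p_{i,0}\ge e^{-2}R-2\nu\ge\gamma_0/2\), and finally the mean value theorem for \(\ln\) with both arguments at least \(\gamma_0/2\).
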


\begin{proof}
Since \(H\succeq0\), the operator \(e^{-1}A_i\) is a contraction.
The operator \(\widetilde A_i\) is also a contraction because it is a
block of the unitary \(D_{\delta_i}\). Hence
\[
\begin{aligned}
|p_{i,0}-e^{-2}R(\beta_i,\delta_i)|
&\le\|\widetilde A_i^\dagger\widetilde A_i-e^{-2}A_i^\dagger A_i\|\\
&\le\|\widetilde A_i^\dagger(\widetilde A_i-e^{-1}A_i)\|
+\|(\widetilde A_i^\dagger-e^{-1}A_i^\dagger)e^{-1}A_i\|\\
&\le2\nu.
\end{aligned}
\]
Under the additional assumptions,
\[
p_{i,0}\ge e^{-2}R(\beta_i,\delta_i)-2\nu\ge\frac{\gamma_0}{2}.
\]
Both \(p_{i,0}\) and \(e^{-2}R(\beta_i,\delta_i)\) therefore lie in
\([\gamma_0/2,1]\). The mean-value theorem gives
\[
\begin{aligned}
|\ln p_{i,0}+2-\ln R(\beta_i,\delta_i)|
&=|\ln p_{i,0}-\ln(e^{-2}R(\beta_i,\delta_i))|\\
&\le\frac{2}{\gamma_0}|p_{i,0}-e^{-2}R(\beta_i,\delta_i)|
\le\frac{4\nu}{\gamma_0}.
\end{aligned}
\]
\end{proof}

\medskip

Let \(\widetilde f(\beta)\) denote the deterministic reconstruction
obtained from Eq.~\eqref{eq:full-dyadic-reconstruction} by replacing each
\(\ln R(\beta_i,\delta_i)\) with \(\ln p_{i,0}+2\), leaving the overlap
terms unchanged. Therefore
\[
\widetilde f(\beta)-f(\beta)
=
\sum_{i=0}^{\ell-1}2^{k_i}
\bigl(\ln p_{i,0}+2-\ln R(\beta_i,\delta_i)\bigr).
\]
The uniform bound \(e^{-2}R(\beta_i,\delta_i)\ge\gamma_0\) follows from
Eq.~\eqref{eq:uniform-conditioning}. With the choice
\[
\nu=\frac{\gamma_0\epsilon}{64\sum_{i=0}^{\ell-1}2^{k_i}}
\]
in Algorithm~\ref{alg:finite-temperature-reconstruction}, we have
\(2\nu\le\gamma_0/2\). Thus the stability lemma gives
\begin{equation}
\label{eq:single-temperature-total-dilation-bias}
|\widetilde f(\beta)-f(\beta)|
\le\frac{4\nu}{\gamma_0}\sum_{i=0}^{\ell-1}2^{k_i}
=\frac{\epsilon}{16}.
\end{equation}
The QSVT approximation therefore contributes at most \(\epsilon/16\)
to the reconstruction error.

\medskip

Finally, since \(\alpha=\Theta(h)\) and \(h\delta_i\le1\),
Lemma~\ref{lem:qsvt-imaginary-time} implements an approximation to
\(e^{-1}e^{-\delta_iH/2}\) with QSVT degree \(q=\widetilde O(1)\),
where the suppressed dependence on \(\nu^{-1}\) is polylogarithmic.

\subsection{Choice of the local estimation depths}
\label{app:local-depth-allocation}

We derive the depths \(M_{i,j}\) used in
Algorithm~\ref{alg:finite-temperature-reconstruction}.
Fix one repetition \(r\) before taking the final median, and let
\(\widehat Y_{i,j}\) denote its local logarithmic estimates.
Let \(a_{i,0}=2^{k_i}\), \(a_{i,j}=2^{k_i-j}\) for \(j\ge1\), and define
\begin{equation}
\label{eq:single-temperature-depth-S}
S:=\sum_{i=0}^{\ell -1}
\sum_{j=0}^{k_i}a_{i,j}^{2/3}.
\end{equation}
We choose \(\eta_{i,j}=M_{i,j}^{-2}\).  By
Lemma~\ref{lem:almost-unbiased-local},
\(\operatorname{Var}(\widehat Y_{i,j})
\le 2K_\gamma/M_{i,j}^2\).
Since the local runs are independent, the weighted reconstruction satisfies
\[
\operatorname{Var}(\widehat f^{(r)}(\beta))
\le
2K_\gamma
\sum_{i,j}\frac{a_{i,j}^2}{M_{i,j}^2}.
\]
It is therefore sufficient to impose
\(
\sum_{i,j}a_{i,j}^2/M_{i,j}^2
\le \epsilon_{\rm stat}^2/(16K_\gamma)
\).

\medskip

Up to logarithmic factors, a local estimate of depth \(M_{i,j}\) costs
\(O(M_{i,j})\) queries.  Ignoring integer rounding, we thus minimize
\(\sum_{i,j}M_{i,j}\) subject to the variance constraint above.  Introducing
a Lagrange multiplier \(\lambda\) gives
\[
\mathcal L
=
\sum_{i,j}M_{i,j}
+
\lambda\left(
\sum_{i,j}\frac{a_{i,j}^2}{M_{i,j}^2}
-
\frac{\epsilon_{\rm stat}^2}{16K_\gamma}
\right),
\qquad
\frac{\partial\mathcal L}{\partial M_{i,j}}
=
1-\frac{2\lambda a_{i,j}^2}{M_{i,j}^3}.
\]
Hence the stationary point satisfies
\(M_{i,j}\propto a_{i,j}^{2/3}\).  Writing
\(M_{i,j}=c\,a_{i,j}^{2/3}\) and substituting into the variance constraint
gives \(S/c^2=\epsilon_{\rm stat}^2/(16K_\gamma)\), and therefore
\begin{equation}
\label{eq:single-temperature-optimal-depth}
M_{i,j}^{*}
=
\frac{
4\sqrt{K_\gamma}\,S^{1/2}a_{i,j}^{2/3}
}{
\epsilon_{\rm stat}
}.
\end{equation}

\medskip

The algorithm uses
\[
M_{i,j}
=
\max\!\left\{
5,\left\lceil M_{i,j}^{*}\right\rceil
\right\},
\qquad
\eta_{i,j}=M_{i,j}^{-2}.
\]
Increasing \(M_{i,j}\) can only decrease the variance, so rounding preserves
the variance bound.  Since \(a_{i,j}\ge1\), the number of local terms is at
most \(S\), and hence
\[
\sum_{i,j}M_{i,j}
=
O\!\left(
S+\frac{\sqrt{K_\gamma}\,S^{3/2}}{\epsilon_{\rm stat}}
\right)
=
O\!\left(\frac{S^{3/2}}{\epsilon}\right),
\]
where the last equality uses
\(\epsilon_{\rm stat}=\epsilon/2\) and
\(K_\gamma=O(1)\) for \(C=O(1)\).

\medskip

The same choice also controls the accumulated bias.  Indeed,
Lemma~\ref{lem:almost-unbiased-local} gives
\[
\left|
\mathbb E[\widehat f^{(r)}(\beta)]-\widetilde f(\beta)
\right|
\le
\frac{\epsilon_{\rm stat}^2}{8S}
\sum_{i,j}a_{i,j}^{-1/3}
\le
\frac{\epsilon_{\rm stat}^2}{8},
\]
because \(a_{i,j}^{-1/3}\le a_{i,j}^{2/3}\).
Thus Eq.~\eqref{eq:single-temperature-optimal-depth} gives the minimum
continuous total depth under the required variance constraint and provides
the depth allocation used in
Algorithm~\ref{alg:finite-temperature-reconstruction}.

\subsection{Finite-temperature normalization}
\label{app:finite-temperature-details}

The reconstruction starts from the exact value \(f(0)=n\ln2\).  For every
finite \(\beta\), \(0\preceq H\preceq hI\) implies
\[
2^n e^{-\beta h}\le Z(\beta)\le2^n,
\qquad
n\ln2-\beta h\le f(\beta)\le n\ln2.
\]
If the normalization additionally satisfies \(\lambda_{\min}(H)=0\), then
\(Z(\beta)\ge1\) and hence \(f(\beta)\ge0\).  No zero-temperature limit is
used in either the algorithm or its proof.
\section{Implementation and proofs for coherent temperature access}
\label{app:sampled-dyadic}

This appendix gives the circuit-level implementation and the complete error
and resource accounting for Section~\ref{sec:sampled-dyadic}. The algorithm
itself remains in the main text. We use the query counts defined in
Definitions~\ref{def:partition-function-resource-model}
and~\ref{def:coherent-temperature-access};
branch-state preparation, parameter loading,
and reversible arithmetic are additional gate costs.

\subsection{Preparing and loading the branch}
\label{subsec:coherent-branch-preparation}

For each interval \(i\),
\[
\sum_{j=0}^{k_i}a_{i,j}
=
\sum_{j=0}^{k_i}2^{k_i-j}
=
2^{k_i+1}-1.
\]
Hence the branch-preparation unitary \(U_\pi\) can be implemented coherently
in two stages.  Starting from the all-zero branch register, first prepare
\[
\sum_{i=0}^{\ell-1}
\sqrt{\frac{2^{k_i+1}-1}{L}}\ket{i}.
\]
Conditioned coherently on \(i\), prepare the second branch register as
\[
\ket{i}\ket0
\longmapsto
\ket{i}
\sum_{j=0}^{k_i}
\sqrt{
\frac{2^{k_i-j}}{2^{k_i+1}-1}
}
\ket{j}.
\]
The resulting state is
\[
\sum_{(i,j)\in\mathcal R}
\sqrt{\frac{2^{k_i-j}}{L}}\ket{i,j}
=
\sum_{(i,j)\in\mathcal R}
\sqrt{\pi_{i,j}}\ket{i,j},
\]
which is exactly the branch state required by
Definition~\ref{def:coherent-temperature-access}.  No measurement or
classical sampling of \(i\) or \(j\) is performed.

\medskip

The coherent preparation of this branch state is part of the access model
in Definition~\ref{def:coherent-temperature-access}.  Throughout this
section, \(U_\pi\) is assumed to be implemented exactly.  Its gate cost is
not included in \(N_{\mathrm{TFD}}\), \(N_{\mathrm{cohTFD}}\), or
\(N_{\mathrm{BE}}\).

\medskip

After preparing the branch register, apply the reversible loading unitary
\[
U_{\mathrm{load}}:
\ket{i,j}\ket0
\longmapsto
\ket{i,j}
\ket{\beta_i,\delta_i,2^j\delta_i,
\beta_i+2^j\delta_i}.
\]
Thus each branch remains in coherent superposition while the parameters
required by the corresponding overlap or reweighting construction are made
available in separate registers.  The inverse \(U_{\mathrm{load}}^\dagger\)
is used when the coherent sample is uncomputed.

\medskip

The schedule values are treated as exact classical bit strings supplied to
the algorithm.  Approximate parameter loading would require an additional
error budget and is not included in the stated query bounds.

\subsection{Coherent implementation of the two branch types}
\label{subsec:coherent-branch-implementation}

Let \(I,J\) be the branch registers and let \(P\) contain the parameters
loaded for each branch. Let \(Q=S\otimes R\) be the TFD registers, where
\(S\) is the physical register and \(R\) is the reference register. Let \(A\) be an \(a\)-qubit QSVT ancilla register and \(F\) a
one-qubit flag register.
After preparing the branch distribution and loading the corresponding
parameters, the state has the form
\[
\ket{\Psi_{\mathrm{in}}}
=
\sum_{(i,j)\in\mathcal R}
\sqrt{\pi_{i,j}}\,
\ket{i,j}_{IJ}
\ket{\mathrm{par}_{i,j}}_P
\ket0_Q\ket{0^a}_A\ket0_F .
\]
The registers \(I,J\) are not measured. The parameters and branch-dependent
operations below are selected coherently by \((i,j)\).

\medskip

By Lemma~\ref{lem:local-quantity-projection} and
Eq.~\eqref{eq:implemented-local-probabilities}, the overlap branches
\(j\ge1\) realize the exact probabilities
\(p_{i,j}=O(\beta_i,2^j\delta_i)\), while the reweighting branches
\(j=0\) realize the QSVT-implemented probabilities
\(p_{i,0}=\widetilde R(\beta_i,\delta_i)\).
We now describe the two branch actions.

\medskip

\noindent\textbf{Overlap branches \(j\ge1\).}
Fix a branch \((i,j)\) with \(1\le j\le k_i\). Using the parameters
stored in \(\ket{\mathrm{par}_{i,j}}_P\), apply
\(U_{\mathrm{cohTFD}}\) at inverse temperature
\(\beta_i+2^j\delta_i\) to prepare
\(\ket{\psi_{\beta_i+2^j\delta_i}}_Q\), and then apply
\(U_{\mathrm{cohTFD}}^\dagger\) at inverse temperature \(\beta_i\).
The amplitude of the all-zero state on \(Q\) is
\[
\left\langle
\psi_{\beta_i}
\,\middle|\,
\psi_{\beta_i+2^j\delta_i}
\right\rangle .
\]
Hence, by Lemma~\ref{lem:local-quantity-projection}, the corresponding
success probability is
\[
p_{i,j}=O(\beta_i,2^j\delta_i).
\]
A reversible all-zero test on \(Q\) writes this outcome into \(F\), so that
\(\Pr(F=1\mid i,j)=p_{i,j}\). This probability is exact and involves no
QSVT approximation.

\medskip

One forward overlap action uses one call to
\(U_{\mathrm{cohTFD}}\) and one call to
\(U_{\mathrm{cohTFD}}^\dagger\), and no Hamiltonian block-encoding query.
The inverse action has the same query cost.

\medskip

\noindent\textbf{Reweighting branches \(j=0\).}
Fix a branch \((i,0)\). With the inverse-temperature input set to
\(\beta_i\), apply \(U_{\mathrm{cohTFD}}\) to prepare
\(\ket{\psi_{\beta_i}}_Q\). The reweighting term requires
\(e^{-1}e^{-\delta_iH/2}\) on the physical register \(S\). We implement this
scaled operator using the QSVT unitary \(D_{\delta_i}\) from
Lemma~\ref{lem:qsvt-imaginary-time}, acting on \(A\otimes S\) and
leaving \(R\) unchanged.

\medskip

If \(\widetilde A_{\delta_i}\) denotes the all-zero block of
\(D_{\delta_i}\), then
\[
(\bra{0^a}_A\otimes I_S)
D_{\delta_i}
(\ket{0^a}_A\otimes I_S)
=
\widetilde A_{\delta_i},
\qquad
\left\|
\widetilde A_{\delta_i}
-
e^{-1}e^{-\delta_iH/2}
\right\|
\le\nu .
\]
Therefore
\[
(D_{\delta_i}\otimes I_R)
\bigl(\ket{0^a}_A\ket{\psi_{\beta_i}}_Q\bigr)
=
\ket{0^a}_A
(\widetilde A_{\delta_i}\otimes I_R)
\ket{\psi_{\beta_i}}_Q
+
\ket{\perp_{i,0}}_{AQ},
\]
where
\(
(\bra{0^a}_A\otimes I_Q)\ket{\perp_{i,0}}_{AQ}=0
\).

\medskip

A reversible all-zero test on \(A\) flips the flag qubit \(F\). Hence the
implemented branch probability is
\[
p_{i,0}
=
\Pr(F=1\mid i,0)
=
\left\|
(\widetilde A_{\delta_i}\otimes I_R)
\ket{\psi_{\beta_i}}
\right\|^2.
\]
The corresponding exact reweighting quantity is
\[
R(\beta_i,\delta_i)
=
\Tr\!\left(\rho_{\beta_i}e^{-\delta_iH}\right).
\]
Lemma~\ref{lem:single-temperature-dilation-stability} bounds
\(|p_{i,0}-e^{-2}R(\beta_i,\delta_i)|\le2\nu\) and, under the uniform
conditioning assumption,
\(|\ln p_{i,0}+2-\ln R(\beta_i,\delta_i)|\le4\nu/\gamma_0\).

\medskip

At the accuracy used in
Algorithm~\ref{alg:sampled-dyadic-reconstruction},
Lemma~\ref{lem:qsvt-imaginary-time} gives QSVT degree
\(\widetilde{\mathcal O}(1)\) uniformly in \(i\), since
\(\alpha=\Theta(h)\) and \(h\delta_i\le1\).
Thus the standalone reweighting action uses one call to
\(U_{\mathrm{cohTFD}}\) and
\(\widetilde{\mathcal O}(1)\) Hamiltonian block-encoding queries.
Its inverse has the same query cost.

\medskip

\noindent\textbf{Combined branch unitary.}
Let \(W_A\) and \(W_Q\) denote the reversible all-zero tests on \(A\) and
\(Q\), respectively:
\[
W_A
:=
\ket{0^a}\!\bra{0^a}_A\otimes X_F
+
\left(I_A-\ket{0^a}\!\bra{0^a}_A\right)\otimes I_F ,
\qquad
W_Q
:=
\ket0\!\bra0_Q\otimes X_F
+
\left(I_Q-\ket0\!\bra0_Q\right)\otimes I_F .
\]
These tests flip \(F\) exactly when the corresponding tested register is
all zero, without otherwise changing that register.

\medskip

Define the branch actions
\[
U^{\mathrm{rw}}_i
:=
U_{\mathrm{cohTFD}}^\dagger
W_A(D_{\delta_i}\otimes I_R)U_{\mathrm{cohTFD}},
\qquad
U^{\mathrm{ov}}_{i,j}
:=
W_Q U_{\mathrm{cohTFD}}^\dagger U_{\mathrm{cohTFD}},
\]
where both TFD calls in \(U^{\mathrm{rw}}_i\) use temperature \(\beta_i\), while in
\(U^{\mathrm{ov}}_{i,j}\) the forward call uses
\(\beta_i+2^j\delta_i\) and the adjoint call uses \(\beta_i\).
The final adjoint in \(U^{\mathrm{rw}}_i\) acts only on \(Q\), so it
preserves the flag probability computed above.

\medskip

On the valid branch subspace, define
\begin{equation}
\label{eq:coherent-branch-definition}
U_{\mathrm{branch}}
:=
\sum_{i=0}^{\ell-1}
\ket{i,0}\!\bra{i,0}_{IJ}\otimes U^{\mathrm{rw}}_i
+
\sum_{i=0}^{\ell-1}
\sum_{j=1}^{k_i}
\ket{i,j}\!\bra{i,j}_{IJ}\otimes U^{\mathrm{ov}}_{i,j},
\end{equation}
and extend it as the identity on the orthogonal complement of this
subspace. Since the branch projectors are mutually orthogonal and each
branch action is unitary, this defines a unitary.

\medskip

To implement this direct sum, first load \(\beta_i\) on reweighting
branches and \(\beta_i+2^j\delta_i\) on overlap branches into a temporary
temperature register, and apply \(U_{\mathrm{cohTFD}}\) unconditionally.
Apply \(D_{\delta_i}\) and \(W_A\) only on the reweighting branches.
Reversibly replace the temperature by \(\beta_i\), apply
\(U_{\mathrm{cohTFD}}^\dagger\) unconditionally, and apply \(W_Q\) only
on the overlap branches. Unload the temperature register; the branch and
parameter registers remain unchanged, and the temporary register returns
to zero. For invalid branch indices, use temperature zero in both calls
and omit the QSVT operation and both tests, so the two calls cancel
exactly. No externally controlled coherent-TFD call is required.

\medskip

For every \((i,j)\in\mathcal R\), its action on the initialized workspace is
\begin{equation}
\label{eq:coherent-branch-unitary}
\begin{aligned}
&U_{\mathrm{branch}}
\ket{i,j}_{IJ}\ket{\mathrm{par}_{i,j}}_P
\ket0_Q\ket{0^a}_A\ket0_F
\\
&\qquad=
\ket{i,j}_{IJ}\ket{\mathrm{par}_{i,j}}_P
\left(
\sqrt{p_{i,j}}\,
\ket{\phi^1_{i,j}}_{QA}\ket1_F
+
\sqrt{1-p_{i,j}}\,
\ket{\phi^0_{i,j}}_{QA}\ket0_F
\right),
\end{aligned}
\end{equation}
where \(\ket{\phi^1_{i,j}}\) and \(\ket{\phi^0_{i,j}}\) are normalized
states of \(Q\otimes A\). Thus
\(\Pr(F=1\mid i,j)=p_{i,j}\), while the branch and parameter registers are
unchanged. By linearity, the same unitary acts coherently on the full branch
superposition.

\medskip

The two coherent-TFD calls are shared by all branches. Thus one application
of \(U_{\mathrm{branch}}\) uses one forward and one adjoint coherent-TFD
call, no fixed-temperature TFD call, and
\(\widetilde{\mathcal O}(1)\) block-encoding queries. For the latter bound,
pad the branch-dependent QSVT circuits to a common query schedule and use
the supplied controlled block-encoding access to enable the required
queries on each reweighting branch.
The adjoint \(U_{\mathrm{branch}}^\dagger\) has the same query cost.
\subsection{Reversible harmonic-number sample}
\label{subsec:reversible-harmonic-sample}

\begin{proof}[Proof of
Lemma~\ref{lem:coherent-logarithmic-sampling-summary}]
Fix a branch with success probability \(\mu\in[\gamma,1]\).
For \(1\le m\le M\),
\(\Pr(G_M\ge m)=(1-\mu)^m\). Hence the tail-sum identity gives
\[
\mathbb E[H_{G_M}]
=
\sum_{m=1}^{M}\frac{(1-\mu)^m}{m}.
\]
Since
\(-\ln\mu=\sum_{m=1}^{\infty}(1-\mu)^m/m\), the truncation bias is
nonnegative and satisfies
\begin{align*}
-\ln\mu-\mathbb E[H_{G_M}]
&=
\sum_{m=M+1}^{\infty}\frac{(1-\mu)^m}{m}\\
&\le
\frac{(1-\mu)^{M+1}}{\mu(M+1)}
\le
\frac{e^{-\gamma(M+1)}}{\gamma(M+1)}.
\end{align*}
Moreover, \(0\le H_{G_M}\le H_M\), so
\(H_{G_M}^2\le H_MH_{G_M}\). Taking expectations and using
\(\mathbb E[H_{G_M}]\le-\ln\mu\le\ln(1/\gamma)\) proves the
second-moment bound.

\medskip

For the coherent implementation, use \(M\) fresh flag and workspace
registers and apply \(U_{\mathrm{branch}}\) once to each pair, without
measurement. Conditioned on the fixed branch, let \(b_m\) be the success
flag of the \(m\)-th trial. Reversibly compute
\[
r_m:=\prod_{t=1}^{m}(1-b_t),
\qquad 1\le m\le M,
\]
so that
\[
H_{G_M}
=
\sum_{m=1}^{M}\frac{r_m}{m}.
\]
Reversible Boolean circuits compute the \(r_m\), and controlled fixed-point
additions approximate this sum within \(\zeta\). The branch register \(J\)
selects the logarithmic sample \(2-\widetilde H_{G_M}\) for \(j=0\)
and \(\widetilde H_{G_M}\) for \(j\ge1\). All these computations are
reversible, so the complete coherent preparation is unitary and its inverse
is obtained by reversing the circuit. The full register-level construction is given below, in the proof of
Lemma~\ref{lem:full-coherent-sample-preparation}.
This proves the claimed coherent implementation.
\end{proof}

\medskip

Let \(\widehat Y_{i,0}:=2-\widetilde H_{G_M}\) and
\(\widehat Y_{i,j}:=\widetilde H_{G_M}\) for \(j\ge1\) be the local
logarithmic samples, and define
\[
\widetilde Y_{i,j}
:=
\begin{cases}
\ln p_{i,0}+2, & j=0,\\
-\ln p_{i,j}, & j\ge1.
\end{cases}
\]
With the parameters in
Algorithm~\ref{alg:sampled-dyadic-reconstruction}, every implemented branch
probability satisfies \(p_{i,j}\ge\gamma=\gamma_0/2\). Therefore
\begin{equation}
\label{eq:coherent-local-bias-second-moment}
\left|
\mathbb E[\widehat Y_{i,j}]
-
\widetilde Y_{i,j}
\right|
\le
\frac{\epsilon}{8L},
\qquad
\mathbb E[\widehat Y_{i,j}^{\,2}]
\le
4H_M\ln(1/\gamma)+4\zeta^2+8
=\widetilde O(1).
\end{equation}
The first bound combines the truncation error
\(\epsilon/(16L)\) and the arithmetic error
\(\zeta=\epsilon/(16L)\); the constant correction cancels when subtracting
\(\widetilde Y_{i,j}\). For the second bound, both branch types satisfy
\[
\widehat Y_{i,j}^{\,2}\le2\widetilde H_{G_M}^{\,2}+8,
\]
so Eq.~\eqref{eq:coherent-fixed-point-log-bounds} gives the stated uniform
bound. For \(j\ge1\), \(\widetilde Y_{i,j}=Y_{i,j}\), whereas for \(j=0\)
the difference between \(\widetilde Y_{i,0}\) and \(Y_{i,0}\) is controlled
by Lemma~\ref{lem:single-temperature-dilation-stability}.

\medskip

Preparing one coherent logarithmic sample, excluding branch-state
preparation and arithmetic gates, uses \(M\) applications of
\(U_{\mathrm{branch}}\), and its inverse uses \(M\) applications of
\(U_{\mathrm{branch}}^\dagger\). Since each branch trial uses at most two
coherent-TFD calls and \(\widetilde O(1)\) block-encoding queries, one
reflection about the prepared sample state uses at most \(4M\)
coherent-TFD calls, \(\widetilde O(M)\) block-encoding queries, and no
fixed-temperature TFD call.

\begin{proof}[Proof of Lemma~\ref{lem:full-coherent-sample-preparation}]
\label{proof:full-coherent-sample-preparation}
Use the branch set \(\mathcal R\), distribution \(\pi\), scale \(L\),
truncation length \(M\), and accuracy \(\zeta\) from the statement.
The registers \(I,J\) store the branch index, and \(P\) stores
the corresponding parameter string \(\mathrm{par}_{i,j}\).
The branch-preparation and parameter-loading unitaries satisfy
\[
U_\pi\ket0_{IJ}
=
\sum_{(i,j)\in\mathcal R}
\sqrt{\pi_{i,j}}\ket{i,j}_{IJ},
\]
and
\[
U_{\mathrm{load}}
\ket{i,j}_{IJ}\ket0_P
=
\ket{i,j}_{IJ}\ket{\mathrm{par}_{i,j}}_P.
\]

\medskip

For \(m=1,\ldots,M\), let \(F_m\) be the success-flag register
for the \(m\)-th trial, and let
\(U_{\mathrm{branch}}^{(m)}\) act on the shared \(I,J,P\)
registers and a fresh flag and workspace:
\[
\begin{aligned}
&U_{\mathrm{branch}}^{(m)}
\ket{i,j}_{IJ}\ket{\mathrm{par}_{i,j}}_P\ket0
\\
&\quad=
\ket{i,j}_{IJ}\ket{\mathrm{par}_{i,j}}_P
\left(
\sqrt{p_{i,j}}\ket1_{F_m}\ket{\phi^1_{i,j}}
+
\sqrt{1-p_{i,j}}\ket0_{F_m}\ket{\phi^0_{i,j}}
\right).
\end{aligned}
\]
Here \(p_{i,j}\) is the branch success probability, and
\(\ket{\phi^b_{i,j}}\) is the normalized conditional workspace
state for flag value \(b\in\{0,1\}\).
For a zero-probability outcome, this state may be chosen arbitrarily.
The workspace registers of different trials are disjoint.
Since the branch and parameter registers are unchanged, the
flags are independent conditional on \((i,j)\).

\medskip

Write \(F=F_1\cdots F_M\) for the collection of flag registers.
Introduce registers \(R_{\mathrm{pref}}\), \(H\), and \(X\) to store
the prefix-failure indicators, the fixed-point harmonic value,
and the scaled logarithmic sample, respectively.
All registers are initially zero.
The arithmetic maps below retain their inputs and are implemented
reversibly, with any temporary arithmetic workspace returned to zero.

\medskip

First, define \(U_r\) by
\[
U_r:
\ket{b_1,\ldots,b_M}_F\ket{0^M}_{R_{\mathrm{pref}}}
\longmapsto
\ket{b_1,\ldots,b_M}_F
\ket{r_1,\ldots,r_M}_{R_{\mathrm{pref}}},
\qquad
r_m:=\prod_{t=1}^{m}(1-b_t).
\]
Thus \(r_m=1\) exactly when the first \(m\) trials all fail.

\medskip

Choose fixed-point constants \(\widetilde c_m\), represented
at a common precision, such that
\[
\left|\widetilde c_m-\frac1m\right|
\le\frac{\zeta}{M},
\qquad m=1,\ldots,M.
\]
Use a sufficiently large accumulator to perform the following
sum without further rounding or overflow:
\[
U_H:
\ket{r_1,\ldots,r_M}_{R_{\mathrm{pref}}}\ket0_H
\longmapsto
\ket{r_1,\ldots,r_M}_{R_{\mathrm{pref}}}\ket{\widetilde H}_H,
\qquad
\widetilde H:=\sum_{m=1}^{M}r_m\widetilde c_m.
\]
Since \(r_m\in\{0,1\}\),
\[
\left|
\widetilde H-\sum_{m=1}^{M}\frac{r_m}{m}
\right|
\le
\sum_{m=1}^{M}
r_m\left|\widetilde c_m-\frac1m\right|
\le\zeta.
\]

\medskip

Using the sample-value map from the statement,
\[
x_j(y)
=
\begin{cases}
L(2-y), & j=0,\\
Ly, & j\ge1,
\end{cases}
\]
define \(U_X\) by
\[
U_X:
\ket{i,j}_{IJ}\ket{\widetilde H}_H\ket0_X
\longmapsto
\ket{i,j}_{IJ}\ket{\widetilde H}_H
\ket{x_j(\widetilde H)}_X.
\]
The correction \(2\) and the integer scale \(L\) are represented exactly;
with a sufficiently large value register, this computation introduces no
additional rounding error or oracle queries.
The full coherent sampler is
\[
U_{\mathrm{sample}}
:=
U_XU_HU_r
\left(
U_{\mathrm{branch}}^{(M)}
\cdots
U_{\mathrm{branch}}^{(1)}
\right)
U_{\mathrm{load}}U_\pi.
\]

\medskip

For \(\mathbf b=(b_1,\ldots,b_M)\in\{0,1\}^M\), write
\[
\begin{aligned}
q_{i,j}(\mathbf b)
&:=
\prod_{m=1}^{M}
p_{i,j}^{\,b_m}(1-p_{i,j})^{1-b_m},\\
\ket{\Phi_{i,j,\mathbf b}}
&:=
\bigotimes_{m=1}^{M}\ket{\phi^{b_m}_{i,j}},\\
r(\mathbf b)
&:=
\bigl(r_1(\mathbf b),\ldots,r_M(\mathbf b)\bigr),\\
\widetilde H(\mathbf b)
&:=
\sum_{m=1}^{M}r_m(\mathbf b)\widetilde c_m.
\end{aligned}
\]
Here \(\ket{\Phi_{i,j,\mathbf b}}\) is the joint conditional
state of all trial workspaces.
Expanding the circuit gives
\[
\begin{aligned}
U_{\mathrm{sample}}\ket0
={}&
\sum_{(i,j)\in\mathcal R}
\sum_{\mathbf b\in\{0,1\}^M}
\sqrt{\pi_{i,j}q_{i,j}(\mathbf b)}
\\
&\quad\times
\ket{i,j}_{IJ}
\ket{\mathrm{par}_{i,j}}_P
\ket{\mathbf b}_F
\ket{\Phi_{i,j,\mathbf b}}
\\
&\quad\otimes
\ket{r(\mathbf b)}_{R_{\mathrm{pref}}}
\ket{\widetilde H(\mathbf b)}_H
\ket{x_j(\widetilde H(\mathbf b))}_X.
\end{aligned}
\]
Collecting all registers other than \(X\) into
\(\ket{\mathrm{work}_{i,j,\mathbf b}}\) gives the state in
the statement. These workspace states retain the branch index
and all flags, so distinct \((i,j,\mathbf b)\) give orthogonal
states. Measuring only the value register therefore produces
\(X=L\widehat Y_{I,J}\), with \((I,J)\sim\pi\) and conditional
flag distribution \(q_{i,j}\).
No measurement is performed during the construction.

\medskip

The circuit uses one application each of \(U_\pi\) and
\(U_{\mathrm{load}}\), \(M\) applications of
\(U_{\mathrm{branch}}\), and reversible arithmetic only.
Its adjoint is obtained by applying the adjoints of all operations
in reverse order, and uses \(M\) applications of
\(U_{\mathrm{branch}}^\dagger\).
The arithmetic computing \(r_m\), \(\widetilde H\), and \(X\)
requires no additional TFD-preparation or block-encoding queries.
\end{proof}
\section{Adaptive cooling schedules}
\label{app:adaptive_schedule}

This appendix proves the schedule-length bounds used in
Section~\ref{sec:adaptive-cooling-schedules}. We first recall the convexity
properties of the log-partition function and then prove the bound on
\(L_{\mathrm{PB}}\), the length of a perfectly-balanced schedule, for
finite endpoints and in the low-temperature specialization. We also give
a simple family showing that the low-temperature bound can be
asymptotically tight when the smallest excitation energy is exponentially
small. The arguments are based on the convexity analysis of
\cite{Stefankovic2009}.

\begin{lem}[\cite{Stefankovic2009}]
\label{lem:convexity-facts}
Let \(H\succeq0\) have eigenvalues
\(E_0,\ldots,E_{N-1}\ge0\), and let
\(f(\beta)=\ln Z(\beta)\), where
\[
Z(\beta)=\sum_{i=0}^{N-1}e^{-\beta E_i}
\]
is the partition function, with eigenvalues counted with multiplicity.
Let \(Y_\beta\) denote the random energy under the Gibbs distribution:
\[
\Pr(Y_\beta=E)
=
\frac{\sum_{i:E_i=E}e^{-\beta E_i}}{Z(\beta)},
\qquad E\in\operatorname{spec}(H).
\]
Then:
\begin{enumerate}
    \item \(f(\beta)\) is nonincreasing, with
    \[
    f'(\beta)=-\mathbb{E}[Y_\beta]\le0.
    \]

    \item \(f(\beta)\) is convex, with
    \[
    f''(\beta)=\operatorname{Var}(Y_\beta)\ge0.
    \]

    \item At \(\beta=0\),
    \[
    f'(0)=-\mathbb{E}[Y_0]\ge-\|H\|_\infty.
    \]
\end{enumerate}
\end{lem}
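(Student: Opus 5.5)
We will prove the three items by differentiating the spectral sum for \(Z(\beta)\) directly. Since \(H\) is Hermitian with finitely many eigenvalues, \(Z(\beta)=\sum_i e^{-\beta E_i}\) is a finite sum of smooth positive functions. Hence \(f=\ln Z\) is smooth on \([0,\infty)\), and we may differentiate term by term.

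For the first item, compute
\[
Z'(\beta)=-\sum_i E_i e^{-\beta E_i}.
\]
This gives \(f'(\beta)=Z'(\beta)/Z(\beta)=-\sum_i E_i e^{-\beta E_i}/Z(\beta)\). Grouping equal eigenvalues, this equals \(-\mathbb E[Y_\beta]\) under the stated Gibbs law. Since every \(E_i\ge0\), the mean is nonnegative, so \(f'\le0\) and \(f\) is nonincreasing.

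For the second item, compute
\[
Z''(\beta)=\sum_i E_i^2e^{-\beta E_i},
\qquad
f''=\frac{Z''}{Z}-\Bigl(\frac{Z'}{Z}\Bigr)^2=\mathbb E[Y_\beta^2]-\mathbb E[Y_\beta]^2=\operatorname{Var}(Y_\beta)\ge0.
\]
Convexity then follows from the nonnegative second derivative. Alternatively, Hölder's inequality gives log-convexity of a sum of exponentials, but the derivative route is enough.

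For the third item, at \(\beta=0\) the Gibbs law is uniform over eigenvalues counted with multiplicity. Each \(E_i\) lies in \([0,\|H\|_\infty]\), because \(H\succeq0\) makes the largest eigenvalue equal to the operator norm. Therefore \(\mathbb E[Y_0]\le\|H\|_\infty\), which gives \(f'(0)\ge-\|H\|_\infty\).

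There is no real obstacle here. The only point needing care is that term-by-term differentiation is justified, which holds because the sum is finite, and that multiplicities are handled consistently in the definition of \(Y_\beta\).
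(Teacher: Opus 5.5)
Your proof is correct and complete: differentiating the finite sum \(Z(\beta)=\sum_i e^{-\beta E_i}\) gives \(f'=-\mathbb E[Y_\beta]\) and \(f''=\operatorname{Var}(Y_\beta)\) once multiplicities are grouped, and the bound at \(\beta=0\) holds because every \(E_i\le\|H\|_\infty\). The paper gives no proof of its own and simply cites \v{S}tefankovi\v{c} et al.~\cite{Stefankovic2009}; your direct differentiation is the standard argument behind that citation.
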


\medskip

We now prove the perfectly-balanced schedule-length bound used in the
main text. The argument is based on the convexity construction of
\cite{Stefankovic2009}, but we use it directly for perfectly-balanced
schedules and do not perform the additional Chebyshev refinement.
The related bound in Harrow and Wei~\cite{Harrow2020} appears to contain
minor sign mistakes as written, so we give an alternative proof here for completeness.

\begin{proof}[Proof of Theorem~\ref{thm:adaptive_schedule_existence}]
The finite-endpoint statement follows from
\cite[Lemma~4.3]{Stefankovic2009}. Applied to
\(f(\beta)=\ln Z(\beta)\), that lemma gives a sequence
\[
0=\gamma_0<\gamma_1<\cdots<\gamma_{L_{\mathrm{PB}}}=\gamma
\]
satisfying
\[
f\!\left(\frac{\gamma_i+\gamma_{i+1}}{2}\right)
\ge
\frac{f(\gamma_i)+f(\gamma_{i+1})}{2}-1
\]
for every interval and
\[
L_{\mathrm{PB}}
\le
1+\sqrt{
\bigl(f(0)-f(\gamma)\bigr)
\ln\frac{-f'(0)}{-f'(\gamma)}
}.
\]
In the maximal-endpoint construction used in the proof of that lemma,
every nonfinal interval saturates the inequality. Hence every nonfinal
interval is perfectly balanced in our terminology, while the final
interval may satisfy the inequality strictly if it terminates at the
prescribed endpoint \(\gamma\). The stated \(O(\cdot)\) bound follows
immediately when \(f(0)-f(\gamma)=O(n)\).

\medskip

We now prove the low-temperature specialization. For
\(\bar H=H-E_0I\), write
\[
\bar Z(\beta):=\Tr(e^{-\beta\bar H}),
\qquad
f_{\bar H}(\beta):=\ln\bar Z(\beta).
\]
The shifted Hamiltonian has ground energy zero, ground-state degeneracy
\(N_0\), and smallest positive excitation \(E_s\). Therefore
\[
f_{\bar H}(0)=n\ln2,
\qquad
\lim_{\beta\to\infty}f_{\bar H}(\beta)=\ln N_0.
\]
Since \(\ln N_0+2<n\ln2\), continuity implies that there exists a finite
\(\gamma>0\) such that
\[
f_{\bar H}(\gamma)=\ln N_0+2.
\]
Thus
\[
f_{\bar H}(0)-f_{\bar H}(\gamma)
=
n\ln2-\ln N_0-2,
\qquad
\bar Z(\gamma)=N_0e^2.
\]

\medskip

The excited states contribute
\[
\bar Z(\gamma)-N_0=N_0(e^2-1).
\]
Since every positive eigenvalue of \(\bar H\) is at least \(E_s\),
\[
-f_{\bar H}'(\gamma)
=
\frac{
\sum_i(E_i-E_0)e^{-\gamma(E_i-E_0)}
}{
\bar Z(\gamma)
}
\ge
\frac{E_s(e^2-1)}{e^2}.
\]
At \(\beta=0\),
\[
-f_{\bar H}'(0)\le\|\bar H\|_\infty.
\]
Substituting these estimates into the finite-endpoint bound gives
\[
L_{\mathrm{PB}}
\le
1+\sqrt{
\bigl(n\ln2-\ln N_0-2\bigr)
\ln\!\left(
\frac{e^2\|\bar H\|_\infty}{(e^2-1)E_s}
\right)
}.
\]
Since \(n\ln2-\ln N_0-2\le n\ln2\), the stated asymptotic bound follows.

\medskip

Finally,
\[
\bar Z(\infty)=N_0,
\qquad
\bar Z(\gamma)=N_0e^2.
\]
Hence the squared TFD overlap between inverse temperatures
\(\gamma\) and \(\infty\) is
\[
O_{\bar H}(\gamma,\infty)
=
\frac{\bar Z(\infty)^2}
     {\bar Z(\gamma)\bar Z(\infty)}
=
e^{-2}.
\]
Thus the terminal interval is \(e^2\)-slowly varying.
\end{proof}

\medskip

We now give a simple family showing that the low-temperature
schedule-length bound can be asymptotically tight when the smallest
excitation energy is exponentially small.

\begin{thm}[A tightness example]
\label{thm:tight}
For every \(n\ge3\), consider the \(n\)-qubit Hamiltonian
\[
H_n:=\sum_{j=1}^{n}2^{-j}\ket1\!\bra1_j,
\]
where \(\ket1\!\bra1_j\) denotes the projector
\(\ket1\!\bra1\) acting on qubit \(j\) and the identity on all other
qubits. Thus, on a computational-basis state
\(\ket{x_1\cdots x_n}\),
\[
H_n\ket{x_1\cdots x_n}
=
\left(\sum_{j=1}^{n}x_j2^{-j}\right)
\ket{x_1\cdots x_n}.
\]
Let \(f_n(\beta):=\ln Z_n(\beta)\), and let \(\gamma_n>0\) be the unique
inverse temperature satisfying
\[
f_n(\gamma_n)=2.
\]
Then every \(e^2\)-slowly-varying schedule from \(0\) to \(\gamma_n\)
contains \(\Omega(n)\) intervals. Conversely, there exists a
perfectly-balanced schedule from \(0\) to \(\gamma_n\) with \(O(n)\)
intervals. Hence the minimum number of intervals is \(\Theta(n)\).

\medskip

Moreover,
\[
E_s=2^{-n},
\qquad
\|H_n\|_\infty=1-2^{-n},
\]
so that
\[
\sqrt{
n\ln\frac{\|H_n\|_\infty}{E_s}
}
=
\Theta(n).
\]
Thus the low-temperature schedule-length bound is asymptotically tight
in this regime.
\end{thm}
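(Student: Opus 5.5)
The plan is to use the product structure of \(H_n\). Its eigenvalues are \(x/2^n\) for \(x\in\{0,\ldots,2^n-1\}\), each nondegenerate. Hence \(E_0=0\), \(N_0=1\), \(E_s=2^{-n}\) and \(\|H_n\|_\infty=1-2^{-n}\), and therefore \(\sqrt{n\ln(\|H_n\|_\infty/E_s)}=\Theta(n)\). Moreover
\[
Z_n(\beta)=\prod_{j=1}^n\bigl(1+e^{-\beta 2^{-j}}\bigr),\qquad f_n(\beta)=\sum_{j=1}^n g(2^{-j}\beta),\qquad g(t):=\ln(1+e^{-t}).
\]
Since \(f_n\) decreases strictly from \(n\ln2\) to \(\ln N_0=0\), the endpoint \(\gamma_n\) exists and is unique. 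For \(n\ge3\) we have \(\ln N_0+2=2<n\ln2\). The upper bound \(O(n)\) then follows directly from the low-temperature part of Theorem~\ref{thm:adaptive_schedule_existence}, which gives
\[
L_{\mathrm{PB}}\le1+\sqrt{(n\ln2-2)\ln\bigl(e^2(1-2^{-n})2^n/(e^2-1)\bigr)}=O(n).
\]

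For the lower bound, I use that \(A\) is additive over modes. For an interval \([a,b]\) write
\[
A(a,b-a)=\sum_j\phi_j,\qquad \phi_j:=g(c_ja)+g(c_jb)-2g\bigl(c_j(a+b)/2\bigr),\qquad c_j=2^{-j}.
\]
Each \(\phi_j\ge0\) by convexity of \(g\), so \(e^2\)-slow variation means \(\sum_j\phi_j\le2\). I fix a constant \(K\) such that \(2g(K/2)\le\tfrac12\ln(1+e^{-1})\).

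\emph{Modes that are ``crossed''.} Consider a mode with \(c_ja\le1\) and \(c_jb\ge K\). Then \(g(c_ja)\ge\ln(1+e^{-1})\), while \(g(c_j(a+b)/2)\le g(K/2)\). Hence \(\phi_j\ge\kappa:=\tfrac12\ln(1+e^{-1})\).

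\emph{Counting crossed modes.} The number of such modes is at least \(\log_2(b/(Ka))-O(1)\) when \(a>0\), and at least \(\log_2(b/K)-O(1)\) when \(a=0\). In the latter case the relevant modes satisfy \(c_j\le1/2\), so there are at least \(\min\{n,\log_2(b/K)\}-O(1)\) of them.

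\emph{Consequences for the schedule.} The constraint \(\sum_j\phi_j\le 2\) then gives two facts:
\[
\beta_1\le B_0,\qquad \beta_{i+1}/\beta_i\le\rho\quad(i\ge1),
\]
for absolute constants \(B_0,\rho\).

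\emph{The endpoint is exponentially large.} Every mode with \(c_j\gamma_n\le1\) contributes at least \(\ln(1+e^{-1})\) to \(f_n(\gamma_n)=2\). So at most \(2/\ln(1+e^{-1})<7\) such modes exist. Because the \(c_j\) are the powers \(2^{-1},\ldots,2^{-n}\), this forces \(\gamma_n\ge2^{n-7}\).

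\emph{Conclusion.} A schedule from \(0\) to \(\gamma_n\) has length \(\ell\) satisfying \(B_0\rho^{\ell-1}\ge\gamma_n\ge2^{n-7}\), so \(\ell=\Omega(n)\). Combined with the upper bound, the minimum length is \(\Theta(n)\).

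The main obstacle is making the crossed-mode estimate uniform: one needs a clean constant \(\kappa\) independent of \(a,b\). There is also a boundary issue when \(b/K\) exceeds the scale of the largest mode. In that case the count is capped at \(n\), but then \(\sum_j\phi_j\ge \kappa(n-O(1))>2\), which already contradicts slow variation, so the cap causes no harm.
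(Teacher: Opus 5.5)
Your mode decomposition, the crossed-mode bound \(\phi_j\ge\kappa\), the endpoint estimate \(\gamma_n\ge2^{n-7}\), and the upper bound via Theorem~\ref{thm:adaptive_schedule_existence} are sound. The gap is in the counting step for \(a>0\), and with it the ratio bound \(\beta_{i+1}/\beta_i\le\rho\).

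\textbf{The counting error.} The crossed modes are the indices with \(\max\{1,\log_2 a\}\le j\le\min\{n,\log_2(b/K)\}\). For \(0<a<2\) the binding lower cutoff is \(j\ge1\), so the count is about \(\log_2(b/K)\), not \(\log_2(b/(Ka))\).

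\textbf{The ratio bound is false for small \(\beta_i\).} Since \(g''\le1/4\), each mode satisfies \(\phi_j\le c_j^2(b-a)^2/16\). Summing over modes, \(A(a,b-a)\le(b-a)^2/48\) for every interval. Hence a schedule beginning \(0,10^{-10},1,\dots\) is compatible with \(e^2\)-slow variation, yet \(\beta_2/\beta_1=10^{10}\). So your chain inequality \(B_0\rho^{\ell-1}\ge\gamma_n\) does not follow as written.

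\textbf{The other cutoff.} Your remark there is also valid only for \(a=O(1)\). If \(b/K>2^n\), the count is \(n-\log_2 a-O(1)\), which need not exceed \(2/\kappa\) when \(a\) is close to \(2^n\).

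\textbf{The repair is local.} If \(a\le2\), every mode has \(c_ja\le1\), so your \(a=0\) argument applies verbatim and gives \(b\le\rho\) for a suitable constant \(\rho\). If \(a>2\), both cutoffs are inactive because \(\gamma_n<2^n\). Indeed, \(f_n(2^n)=\sum_{k=0}^{n-1}\ln(1+e^{-2^k})<1/2<2\). The count is then at least \(\log_2(b/(Ka))-1\), which gives \(b\le\rho a\). Hence \(\beta_{i+1}\le\rho\max\{\beta_i,1\}\), so \(\beta_i\le\rho^i\). Then \(\rho^\ell\ge\gamma_n\ge2^{n-7}\) yields \(\ell=\Omega(n)\).

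\textbf{Comparison with the paper.} Once repaired, your route genuinely differs from the paper's. The paper telescopes your product to get \(Z_n(t/2)/Z_n(t)=(1+e^{-t/(2N)})/(1+e^{-t/2})\le2\), with \(N=2^n\). From this it deduces \(O(a,b-a)\le4Z_n(b)/Z_n(a)\). So each \(e^2\)-slowly-varying interval lowers \(f_n\) by at most \(2+\ln4\), and the total drop \(n\ln2-2\) forces \(\ell\ge(n\ln2-2)/(2+\ln4)\). That free-energy budget needs no information about where \(\gamma_n\) lies and no case analysis at small or large \(\beta\), and it gives an explicit constant. Your argument instead tracks the temperature itself: each slow step crosses only \(O(1)\) of the \(n\) octaves of energy scales, while \(\gamma_n=2^{\Theta(n)}\). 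This makes visible why \(\ln(\|H_n\|_\infty/E_s)=\Theta(n)\) drives the length, but that scale bookkeeping is exactly what produced the boundary error.
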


\begin{proof}
Set \(N:=2^n\). By binary expansion, the spectrum of \(H_n\) is
\[
\left\{
0,\frac1N,\frac2N,\ldots,\frac{N-1}{N}
\right\},
\]
with each eigenvalue appearing once. Therefore
\(E_s=1/N=2^{-n}\) and
\(\|H_n\|_\infty=(N-1)/N\). The partition function is
\[
Z_n(\beta)
=
\sum_{k=0}^{N-1}e^{-\beta k/N}
=
\frac{1-e^{-\beta}}{1-e^{-\beta/N}},
\qquad \beta>0,
\]
with \(Z_n(0)=N\). Since \(N>e^2\) for \(n\ge3\), while
\(Z_n(\beta)\) decreases continuously from \(N\) to \(1\), there is a
unique finite \(\gamma_n>0\) such that \(Z_n(\gamma_n)=e^2\).

\medskip

We first prove the lower bound. For every \(t\ge0\),
\[
\frac{Z_n(t/2)}{Z_n(t)}
=
\frac{1+e^{-t/(2N)}}{1+e^{-t/2}}
\le2,
\]
where the identity for \(t=0\) is understood by continuity. Consider any
interval \(0\le a<b\). Since \(Z_n\) is decreasing and
\((a+b)/2\ge b/2\),
\[
Z_n\!\left(\frac{a+b}{2}\right)
\le
Z_n(b/2)
\le
2Z_n(b).
\]
Hence its squared TFD overlap satisfies
\[
O(a,b-a)
=
\frac{Z_n((a+b)/2)^2}{Z_n(a)Z_n(b)}
\le
4\,\frac{Z_n(b)}{Z_n(a)}.
\]
If the interval is \(e^2\)-slowly varying, then
\(O(a,b-a)\ge e^{-2}\). Therefore
\[
f_n(a)-f_n(b)
=
\ln\frac{Z_n(a)}{Z_n(b)}
\le
2+\ln4.
\]

\medskip

Now let
\[
0=\beta_0<\beta_1<\cdots<\beta_\ell=\gamma_n
\]
be any \(e^2\)-slowly-varying schedule. Summing the preceding bound over
its intervals gives
\[
n\ln2-2
=
f_n(0)-f_n(\gamma_n)
\le
\ell(2+\ln4).
\]
Thus
\[
\ell
\ge
\frac{n\ln2-2}{2+\ln4}
=
\Omega(n).
\]

\medskip

Every interval of a perfectly-balanced schedule has squared TFD
overlap at least \(e^{-2}\), with equality on every nonfinal interval.
Thus every perfectly-balanced schedule is \(e^2\)-slowly varying,
and the lower bound above applies to it. Hence
\[
L_{\mathrm{PB}}
\ge
\frac{n\ln2-2}{2+\ln4}
=
\Omega(n).
\]

\medskip

A matching upper bound follows directly from
Theorem~\ref{thm:existence}. For \(H_n\), the ground state is unique,
\(E_s=2^{-n}=1/N\), and
\(\|H_n\|_\infty=1-2^{-n}=(N-1)/N\). Moreover,
\(f_n(\gamma_n)=2\), so the low-temperature specialization of
Theorem~\ref{thm:existence} gives
\[
L_{\mathrm{PB}}
\le
1+\sqrt{
(n\ln2-2)
\ln\!\left(
\frac{(N-1)e^2}{e^2-1}
\right)
}
=
O(n),
\]
where \(N=2^n\). Combining the two bounds gives
\[
L_{\mathrm{PB}}=\Theta(n).
\]

\medskip

Finally,
\[
\ln\frac{\|H_n\|_\infty}{E_s}
=
\ln(N-1)
=
\Theta(n),
\]
and therefore
\[
\sqrt{
n\ln\frac{\|H_n\|_\infty}{E_s}
}
=
\Theta(n).
\]
Thus this family saturates the low-temperature schedule-length bound up
to constant factors.
\end{proof}

\section{Approximate preparation}
\label{sec:approx-prep}

In our algorithm, the logarithmic terms \(A(\beta_i,2^j\delta_i)\) and
\(B(\beta_i,\delta_i)\) from Section~\ref{sec:technical-overview} are evaluated using
the canonical thermofield-double purifications of Gibbs states.
In practice, these purifications can only be prepared approximately.
It is therefore necessary to quantify how state-preparation errors affect
the local estimators used by the algorithm. In this appendix, we analyze
their effect on the logarithmic terms and on the final reconstruction.

\medskip

Throughout this section, \(\|\cdot\|_2\) denotes the Euclidean norm on
state vectors, and \(\|\cdot\|_1\) denotes the trace norm on operators.
As in the main text, assume \(0\preceq H\preceq hI\), with \(h>0\), and
\(\delta_i\geq0\). For each inverse temperature \(\beta\), let
\(|\psi_\beta\rangle\) denote the canonical thermofield-double
purification used in the reconstruction. The purification convention
is fixed across temperatures.
Let \(|\tilde\psi_\beta\rangle\) denote the normalized state actually
prepared by the algorithm. We assume that, for every inverse temperature
queried by the algorithm,
\[
\bigl\|\,|\tilde\psi_\beta\rangle-|\psi_\beta\rangle\,\bigr\|_2
\leq\eta.
\]
This assumption concerns approximation to the prescribed canonical
purification; closeness of the reduced Gibbs states alone is not sufficient.
Each approximate preparation is implemented by a fixed unitary
\(\widetilde U_\beta\) satisfying
\(\widetilde U_\beta|0\rangle=|\tilde\psi_\beta\rangle\), and its adjoint
is used consistently in the estimation circuits. Thus the preparation
error changes the local probabilities being estimated, while the
estimation procedure still uses a fixed coherent preparation.

\medskip

When comparing reduced states, we write
\[
\rho_\beta
=
\Tr_{\mathrm{anc}}\!\bigl(
|\psi_\beta\rangle\langle\psi_\beta|
\bigr),
\qquad
\tilde\rho_\beta
=
\Tr_{\mathrm{anc}}\!\bigl(
|\tilde\psi_\beta\rangle\langle\tilde\psi_\beta|
\bigr).
\]
We first bound the perturbation of the overlap-based quantity
\(A(\beta_i,2^j\delta_i)\), assuming that the ideal squared overlap is
bounded away from zero. We then prove an analogous bound for
\(B(\beta_i,\delta_i)\), using contractivity of partial trace and the
trace-norm bound for expectation values.

\begin{thm}[Stability under approximate purification]
Assume the preparation model above. Suppose that there exist
\(\gamma_A,\gamma_B>0\) such that, for every queried overlap pair
\((i,j)\) and every queried reweighting term \(i\),
\[
\left|
\left\langle
\psi_{\beta_i}\middle|\psi_{\beta_i+2^j\delta_i}
\right\rangle
\right|^2
\geq\gamma_A,
\qquad
\Tr(\rho_{\beta_i}e^{-\delta_iH})\geq\gamma_B.
\]
If
\[
\eta\leq\frac18\min\{\gamma_A,\gamma_B\},
\]
then the quantities obtained by replacing the ideal purifications by
the prepared purifications satisfy
\[
\bigl|
\widetilde A(\beta_i,2^j\delta_i)
-
A(\beta_i,2^j\delta_i)
\bigr|
\leq\frac{8\eta}{\gamma_A},
\qquad
\bigl|
\widetilde B(\beta_i,\delta_i)
-
B(\beta_i,\delta_i)
\bigr|
\leq\frac{4\eta}{\gamma_B}.
\]
\end{thm}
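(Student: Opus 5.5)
The plan is to treat the two quantities separately but with the same template. First show that replacing the ideal purifications by the prepared ones moves the underlying probability by at most a constant multiple of \(\eta\). Then use \(\eta\leq\frac18\min\{\gamma_A,\gamma_B\}\) to keep the perturbed probability above half of its ideal lower bound. Finally apply the mean-value theorem to the logarithm, which is \(2/\gamma\)-Lipschitz on \([\gamma/2,1]\). This mirrors the argument of Lemma~\ref{lem:single-temperature-dilation-stability}.

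For the overlap term, write \(x:=\langle\psi_{\beta_i}|\psi_{\beta_i+2^j\delta_i}\rangle\) and let \(\tilde x\) be the corresponding amplitude for the prepared states. By the triangle and Cauchy--Schwarz inequalities,
\[
|\tilde x-x|
\leq
\bigl\|\,|\tilde\psi_{\beta_i}\rangle-|\psi_{\beta_i}\rangle\bigr\|_2
+
\bigl\|\,|\tilde\psi_{\beta_i+2^j\delta_i}\rangle-|\psi_{\beta_i+2^j\delta_i}\rangle\bigr\|_2
\leq 2\eta ,
\]
since all states are normalized. Because \(|x|,|\tilde x|\leq1\), we get \(\bigl||\tilde x|^2-|x|^2\bigr|\leq(|\tilde x|+|x|)\,|\tilde x-x|\leq4\eta\). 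The hypothesis gives \(4\eta\leq\gamma_A/2\), so \(|\tilde x|^2\geq\gamma_A/2\), and both squared overlaps lie in \([\gamma_A/2,1]\). The mean-value theorem for \(-\ln\) then yields \(|\widetilde A-A|\leq(2/\gamma_A)\cdot4\eta=8\eta/\gamma_A\).

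For the reweighting term, we compare \(\widetilde R:=\Tr(\tilde\rho_{\beta_i}e^{-\delta_iH})\) with \(R=\Tr(\rho_{\beta_i}e^{-\delta_iH})\). Since \(H\succeq0\), \(e^{-\delta_iH}\) is a contraction, so Hölder's inequality gives \(|\widetilde R-R|\leq\|\tilde\rho_{\beta_i}-\rho_{\beta_i}\|_1\). Contractivity of the partial trace bounds this by the trace distance of the pure states. For pure states,
\[
\bigl\|\,|\tilde\psi\rangle\!\langle\tilde\psi|-|\psi\rangle\!\langle\psi|\,\bigr\|_1
=2\sqrt{1-|\langle\tilde\psi|\psi\rangle|^2}
\leq 2\bigl\|\,|\tilde\psi\rangle-|\psi\rangle\bigr\|_2 .
\]
The last inequality follows from \(1-|\langle\tilde\psi|\psi\rangle|^2\leq 2(1-\mathrm{Re}\langle\tilde\psi|\psi\rangle)=\|\tilde\psi-\psi\|_2^2\). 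Hence \(|\widetilde R-R|\leq2\eta\leq\gamma_B/4\), so \(\widetilde R\geq\gamma_B/2\), and the Lipschitz bound for \(\ln\) gives \(|\widetilde B-B|\leq(2/\gamma_B)\cdot2\eta=4\eta/\gamma_B\).

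The only step requiring care is the pure-state trace-norm estimate, specifically keeping the constant at \(2\) rather than \(2\sqrt2\). The inequality \(1-|z|^2\leq2(1-\mathrm{Re}\,z)\) for \(|z|\leq1\) handles this: it holds because \(1-|z|^2\leq(1-|z|)(1+|z|)\leq 2(1-|z|)\leq 2(1-\mathrm{Re}\,z)\). Everything else is direct.
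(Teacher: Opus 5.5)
Your proof is correct and follows the paper's argument essentially step for step. For $A$ you use the $2\eta$ amplitude bound and the resulting $4\eta$ squared-overlap bound, for $B$ the trace-norm, partial-trace and H\"older chain giving $2\eta$, and in each case the $2/\gamma$-Lipschitz bound for the logarithm on $[\gamma/2,1]$. The one addition is that you explicitly verify the pure-state inequality $\| |\tilde\psi\rangle\langle\tilde\psi| - |\psi\rangle\langle\psi| \|_1 \le 2\| |\tilde\psi\rangle - |\psi\rangle \|_2$, which the paper simply asserts.
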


\begin{proof}
We first consider
\[
A(\beta_i,2^j\delta_i)
=
-\ln
\bigl|
\langle\psi_{\beta_i}\mid\psi_{\beta_i+2^j\delta_i}\rangle
\bigr|^2,
\]
and its approximate counterpart
\[
\widetilde A(\beta_i,2^j\delta_i)
=
-\ln
\bigl|
\langle\tilde\psi_{\beta_i}\mid
\tilde\psi_{\beta_i+2^j\delta_i}\rangle
\bigr|^2.
\]
By normalization and the triangle inequality,
\[
\bigl|
\langle\tilde\psi_{\beta_i}\mid
\tilde\psi_{\beta_i+2^j\delta_i}\rangle
-
\langle\psi_{\beta_i}\mid
\psi_{\beta_i+2^j\delta_i}\rangle
\bigr|
\leq2\eta.
\]
Since both overlap amplitudes have modulus at most one, their squared
moduli differ by at most \(4\eta\). The ideal squared overlap is at
least \(\gamma_A\), so the approximate squared overlap is at least
\(\gamma_A-4\eta\geq\gamma_A/2\).
The logarithm is \(2/\gamma_A\)-Lipschitz on
\([\gamma_A/2,1]\). Consequently,
\[
\bigl|
\widetilde A(\beta_i,2^j\delta_i)
-
A(\beta_i,2^j\delta_i)
\bigr|
\leq\frac{8\eta}{\gamma_A}.
\]

\medskip

We next consider
\[
B(\beta_i,\delta_i)
=
\ln\Tr(\rho_{\beta_i}e^{-\delta_iH}),
\qquad
\widetilde B(\beta_i,\delta_i)
=
\ln\Tr(\tilde\rho_{\beta_i}e^{-\delta_iH}).
\]
For normalized pure states, the state-vector error implies
\[
\left\|
|\tilde\psi_{\beta_i}\rangle
\langle\tilde\psi_{\beta_i}|
-
|\psi_{\beta_i}\rangle
\langle\psi_{\beta_i}|
\right\|_1
\leq
2\bigl\|
|\tilde\psi_{\beta_i}\rangle-|\psi_{\beta_i}\rangle
\bigr\|_2
\leq2\eta.
\]
Contractivity of partial trace therefore gives
\[
\|\tilde\rho_{\beta_i}-\rho_{\beta_i}\|_1\leq2\eta.
\]
Since \(H\succeq0\) and \(\delta_i\geq0\),
\(\|e^{-\delta_iH}\|_\infty\leq1\), and hence
\[
\bigl|
\Tr(\tilde\rho_{\beta_i}e^{-\delta_iH})
-
\Tr(\rho_{\beta_i}e^{-\delta_iH})
\bigr|
\leq
\|\tilde\rho_{\beta_i}-\rho_{\beta_i}\|_1
\|e^{-\delta_iH}\|_\infty
\leq2\eta.
\]
The approximate expectation is thus at least
\(\gamma_B-2\eta\geq\gamma_B/2\).
Applying the \(2/\gamma_B\)-Lipschitz bound for the logarithm yields
\[
\bigl|
\widetilde B(\beta_i,\delta_i)
-
B(\beta_i,\delta_i)
\bigr|
\leq\frac{4\eta}{\gamma_B}.
\]
\end{proof}

\medskip

It follows that a uniform additive perturbation of
\(O(\tilde\varepsilon)\) in each logarithmic term is ensured
by
\[
\eta
=
O\!\left(
\tilde\varepsilon\min\{\gamma_A,\gamma_B\}
\right),
\]
with a sufficiently small implied constant.

\medskip

The statistical analysis in Section~\ref{sec:single-temperature} uses
variable-depth variance-sensitive accumulation. Preparation error,
however, produces a deterministic bias in the implemented logarithmic
terms, which we control using the sum of the absolute
reconstruction weights. To make this accumulation explicit, let
\(f_{\mathrm{prep}}(\beta)\) denote the deterministic reconstruction
obtained by replacing \(A\) and \(B\) by \(\widetilde A\) and
\(\widetilde B\), respectively. For the schedule
\(0=\beta_0<\cdots<\beta_{l_{\mathrm{SV}}}=\beta\), the reconstruction
in Section~\ref{sec:technical-overview} and the theorem above give
\[
\begin{aligned}
|f_{\mathrm{prep}}(\beta)-f(\beta)|
&\leq
\sum_{i=0}^{l_{\mathrm{SV}}-1}
\left(
2^{k_i}\frac{4\eta}{\gamma_B}
+
\sum_{j=1}^{k_i}
2^{k_i-j}\frac{8\eta}{\gamma_A}
\right)\\
&\leq
\frac{8\eta}{\min\{\gamma_A,\gamma_B\}}
\sum_{i=0}^{l_{\mathrm{SV}}-1}
\bigl(2^{k_i+1}-1\bigr).
\end{aligned}
\]
Here the depth choice
\[
k_i
=
\max\!\left\{
0,\left\lceil
\log_2\!\bigl(h(\beta_{i+1}-\beta_i)\bigr)
\right\rceil
\right\}
\]
implies \(2^{k_i}\leq1+2h(\beta_{i+1}-\beta_i)\). Therefore,
\[
\sum_{i=0}^{l_{\mathrm{SV}}-1}
\bigl(2^{k_i+1}-1\bigr)
\leq l_{\mathrm{SV}}+4h\beta.
\]
Thus it suffices to take
\[
\tilde\varepsilon
=
\Theta\!\left(
\frac{\varepsilon}{l_{\mathrm{SV}}+h\beta}
\right),
\qquad
\eta
=
O\!\left(
\frac{\varepsilon\min\{\gamma_A,\gamma_B\}}
{l_{\mathrm{SV}}+h\beta}
\right),
\]
with constants chosen so that preparation bias uses only a fixed
fraction of the total error budget.

\medskip

The approximate local probabilities remain bounded below by constant
fractions of their ideal lower bounds, so the statistical analysis
applies with correspondingly adjusted conditioning constants.
Any operator-approximation error in the reweighting circuit is
budgeted separately and combined with preparation error by the
triangle inequality. With these error budgets, the final additive
accuracy for \(f(\beta)=\ln Z(\beta)\), and hence the corresponding
multiplicative accuracy for \(Z(\beta)\), is preserved.
\section{Ground-state degeneracy from finite-temperature estimates}
\label{app:ground-state-degeneracy}
\label{sec:ground-state-degeneracy}

This appendix gives the detailed reduction from finite-temperature
partition-function estimation to ground-state degeneracy estimation.
The reduction never requires evaluating the partition function at
\(\beta=\infty\). Instead, a sufficiently large finite inverse temperature
suppresses the excited-state contribution. The required inverse temperature
is controlled by a lower bound on the spectral gap above the ground space,
and the reduction additionally requires either the exact ground energy or
an estimate of it.

\begin{thm}[Ground-state degeneracy from finite-temperature estimates]
\label{thm:ground-state-degeneracy}
Let \(H\) be an \(n\)-qubit Hamiltonian satisfying
\[
0\preceq H\preceq hI,
\]
with ground-state energy \(E_0\) and ground-state degeneracy \(N_0\).
Write
\[
Z(\beta):=\operatorname{Tr}(e^{-\beta H}),
\qquad
f(\beta):=\ln Z(\beta).
\]
Assume that a lower bound \(\Delta_{\mathrm{lb}}>0\) on the gap above
the ground space is known, so that
\[
E_j-E_0\ge\Delta_{\mathrm{lb}}
\]
for every eigenvalue \(E_j>E_0\).
Fix \(\epsilon\in(0,1]\), and choose a finite inverse temperature
satisfying
\[
\beta
\ge
\frac{n\ln2+\ln(6/\epsilon)}{\Delta_{\mathrm{lb}}}.
\]
Suppose that estimates \(\widehat f(\beta)\) and \(\widehat E_0\)
satisfy
\[
|\widehat f(\beta)-f(\beta)|
\le \frac{\epsilon}{6},
\qquad
|\widehat E_0-E_0|
\le \frac{\epsilon}{6\beta}.
\]
Define
\[
\widehat N_0
:=
\exp\!\left(\widehat f(\beta)+\beta\widehat E_0\right).
\]
Then
\[
1-\epsilon
\le
\frac{\widehat N_0}{N_0}
\le
1+\epsilon.
\]
\end{thm}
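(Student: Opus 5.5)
The plan is to write \(\ln(\widehat N_0/N_0)\) as a sum of three error terms and bound each by \(\epsilon/6\) or smaller, then convert the resulting logarithmic bound into the stated relative bound.

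First I will record the exact identity
\[
f(\beta)+\beta E_0=\ln\Tr\bigl(e^{-\beta(H-E_0I)}\bigr)=\ln\bigl(N_0+T(\beta)\bigr),
\]
where \(T(\beta):=\sum_{E_j>E_0}e^{-\beta(E_j-E_0)}\) is the excited-state tail. The tail has at most \(2^n\) terms, and the gap assumption bounds each by \(e^{-\beta\Delta_{\mathrm{lb}}}\). The choice of \(\beta\) then gives \(T(\beta)\le2^ne^{-\beta\Delta_{\mathrm{lb}}}\le\epsilon/6\). Since \(N_0\ge1\), this yields
\[
0\le \ln\bigl(1+T(\beta)/N_0\bigr)\le T(\beta)\le\epsilon/6.
\]

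Next I will decompose
\[
\ln\frac{\widehat N_0}{N_0}
=\bigl(\widehat f(\beta)-f(\beta)\bigr)+\beta\bigl(\widehat E_0-E_0\bigr)+\ln\bigl(1+T(\beta)/N_0\bigr).
\]
By the two estimation hypotheses, the first two terms each have absolute value at most \(\epsilon/6\). The third term lies in \([0,\epsilon/6]\). Hence
\[
-\epsilon/3\le\ln(\widehat N_0/N_0)\le\epsilon/2.
\]

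The final step converts this into the ratio bound. For the upper side I need \(e^{\epsilon/2}\le1+\epsilon\) on \((0,1]\). This holds because \(e^{x/2}-1-x\) is convex, vanishes at \(x=0\), and equals \(e^{1/2}-2<0\) at \(x=1\). For the lower side, \(e^{-\epsilon/3}\ge1-\epsilon/3\ge1-\epsilon\).

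I do not expect a real obstacle; the only care needed is this elementary exponential inequality on \((0,1]\), which is where \(\epsilon\le1\) is used. The probabilistic version in the Corollary follows by a union bound over the two estimation events.
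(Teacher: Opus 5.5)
Your proposal is correct and follows essentially the same route as the paper: factor out the ground-state contribution, bound the excited-state tail by $2^n e^{-\beta\Delta_{\mathrm{lb}}}\le\epsilon/6$, and combine this with the two estimation errors. The only difference is bookkeeping: you absorb the tail into the exponent via $\ln(1+x)\le x$ and then use $e^{\epsilon/2}\le 1+\epsilon$, whereas the paper bounds the product $e^{\epsilon/3}(1+\epsilon/6)\le(1+2\epsilon/3)(1+\epsilon/6)\le 1+\epsilon$ directly.
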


\begin{proof}
Writing the eigenvalues with multiplicity, we have
\[
Z(\beta)
=
N_0e^{-\beta E_0}
+
\sum_{j:E_j>E_0}e^{-\beta E_j}.
\]
After factoring out the ground-state contribution,
\[
e^{\beta E_0}Z(\beta)
=
N_0+T_\beta,
\qquad
T_\beta
:=
\sum_{j:E_j>E_0}e^{-\beta(E_j-E_0)}.
\]
By the assumed gap lower bound,
\[
T_\beta
\le
2^n e^{-\beta\Delta_{\mathrm{lb}}}.
\]
The chosen value of \(\beta\) therefore implies
\[
T_\beta\le\frac{\epsilon}{6}.
\]
Since \(N_0\ge1\),
\[
\frac{T_\beta}{N_0}
\le
\frac{\epsilon}{6}.
\]

\medskip

Write the estimation errors as
\[
\widehat f(\beta)=f(\beta)+\eta_f,
\qquad
\widehat E_0=E_0+\eta_E.
\]
Then
\[
\widehat N_0
=
\exp\!\left(
f(\beta)+\beta E_0+\eta_f+\beta\eta_E
\right)
=
e^{\eta_f+\beta\eta_E}(N_0+T_\beta),
\]
and hence
\[
\frac{\widehat N_0}{N_0}
=
e^{\eta_f+\beta\eta_E}
\left(1+\frac{T_\beta}{N_0}\right).
\]
The assumed estimation guarantees give
\[
|\eta_f+\beta\eta_E|
\le
\frac{\epsilon}{3}.
\]
Together with \(T_\beta/N_0\le\epsilon/6\), this implies
\[
\frac{\widehat N_0}{N_0}
\le
e^{\epsilon/3}
\left(1+\frac{\epsilon}{6}\right).
\]
For \(0<\epsilon\le1\), the inequality
\(e^x\le1+2x\) for \(0\le x\le1/3\) gives
\[
\begin{aligned}
\frac{\widehat N_0}{N_0}
&\le
\left(1+\frac{2\epsilon}{3}\right)
\left(1+\frac{\epsilon}{6}\right)\\
&=
1+\frac{5\epsilon}{6}+\frac{\epsilon^2}{9}
\le
1+\epsilon.
\end{aligned}
\]
Similarly,
\[
\frac{\widehat N_0}{N_0}
\ge
e^{-\epsilon/3}
\ge
1-\epsilon.
\]
This proves the multiplicative error guarantee.
\end{proof}

\medskip

The theorem specifies the temperature and estimation accuracies needed
to recover the ground-state degeneracy. We now apply the two
partition-function reconstruction theorems at these parameters to obtain
the query bounds. Any additional cost of estimating the ground energy
is accounted for separately.

\begin{cor}[Cost of ground-state degeneracy estimation]
\label{cor:ground-state-degeneracy-cost}
Under the assumptions of
Theorem~\ref{thm:ground-state-degeneracy}, fix \(0<p<1\) and let
\[
\beta
=
\frac{n\ln2+\ln(6/\epsilon)}{\Delta_{\mathrm{lb}}}.
\]
Suppose that a supplied \(C\)-slowly-varying schedule of length
\(\ell_{\mathrm{SV}}(\beta)\) reaches this inverse temperature,
where \(C=O(1)\).
For each reconstruction method below, assume the block-encoding
normalization and state-preparation access required by the
corresponding partition-function estimation theorem.

\medskip

Using the fixed-temperature reconstruction of
Theorem~\ref{thm:single-temperature-main}
with failure probability \(p/2\), the expected reconstruction-stage
query counts for the required partition-function estimate satisfy
\[
N_{\mathrm{TFD}},N_{\mathrm{BE}}
=
\widetilde{\mathcal O}\!\left(
\frac{
\sqrt{\ell_{\mathrm{SV}}(\beta)}
\bigl(\ell_{\mathrm{SV}}(\beta)+h\beta\bigr)
\ln(4/p)}{\epsilon}
\right).
\]
Alternatively, using the coherent-temperature reconstruction of
Theorem~\ref{thm:coherent-indexed-main}
with failure probability \(p/2\), the expected reconstruction-stage
query counts for the same partition-function estimate satisfy
\[
N_{\mathrm{cohTFD}},N_{\mathrm{BE}}
=
\widetilde{\mathcal O}\!\left(
\frac{
\bigl(\ell_{\mathrm{SV}}(\beta)+h\beta\bigr)
\ln(4/p)}{\epsilon}
\right).
\]

\medskip

If \(E_0\) is not known exactly, an additional estimate
\(\widehat E_0\) is required with
\[
|\widehat E_0-E_0|
\le
\frac{\epsilon}{6\beta}
=
\frac{\epsilon\Delta_{\mathrm{lb}}}
{6\bigl(n\ln2+\ln(6/\epsilon)\bigr)}.
\]
If the partition-function and ground-energy estimates each fail
with probability at most \(p/2\), then the resulting
\(\widehat N_0\) has relative error at most \(\epsilon\) with
probability at least \(1-p\).
The displayed query bounds exclude schedule-generation costs
and any additional cost of estimating \(E_0\).
\end{cor}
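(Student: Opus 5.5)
The plan is to treat this corollary as a direct substitution. Theorem~\ref{thm:ground-state-degeneracy} already reduces degeneracy estimation to two estimates: one of \(f(\beta)\) at additive accuracy \(\epsilon/6\), and one of \(E_0\) at accuracy \(\epsilon/(6\beta)\). So I only need to instantiate the two reconstruction theorems at the prescribed \(\beta\), at accuracy \(\epsilon/6\) and failure probability \(p/2\), and then combine the failure events.

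First I fix \(\beta=(n\ln2+\ln(6/\epsilon))/\Delta_{\mathrm{lb}}\). This meets the lower bound required by Theorem~\ref{thm:ground-state-degeneracy} with equality, and it is finite and positive, as both reconstruction theorems require. The supplied \(C\)-slowly-varying schedule of length \(\ell_{\mathrm{SV}}(\beta)\) ends at this \(\beta\), and \(0\preceq H\preceq hI\) holds by assumption. Hence Theorem~\ref{thm:single-temperature-main} applies verbatim with precision \(\epsilon/6\in(0,1]\) and failure probability \(p/2\). It gives expected query counts \(\widetilde O(\sqrt{\ell_{\mathrm{SV}}}(\ell_{\mathrm{SV}}+h\beta)\ln(4/p)/(\epsilon/6))\). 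The constant \(6\) is absorbed into the \(O\), and \(\ln(2/(p/2))=\ln(4/p)\). In the same way, Theorem~\ref{thm:coherent-indexed-main} with the same parameters gives the coherent-temperature bound without the \(\sqrt{\ell_{\mathrm{SV}}}\) factor. The suppressed polylogarithmic factors in \(\ell_{\mathrm{SV}}\), \(h\beta\) and \(1/\epsilon\) remain polylogarithmic after the rescaling \(\epsilon\mapsto\epsilon/6\). Note that \(h\beta\) now carries the \(n/\Delta_{\mathrm{lb}}\) dependence explicitly; I keep it inside the bound and do not simplify it further.

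Next I record the ground-energy requirement. Substituting the chosen \(\beta\) into \(\epsilon/(6\beta)\) gives the displayed accuracy \(\epsilon\Delta_{\mathrm{lb}}/(6(n\ln2+\ln(6/\epsilon)))\). If \(E_0\) is known exactly, this step is vacuous with \(\eta_E=0\).

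Finally, I combine the failure probabilities by a union bound. The partition-function estimate fails with probability at most \(p/2\), and the ground-energy estimate fails with probability at most \(p/2\). Both guarantees therefore hold simultaneously with probability at least \(1-p\). On that event, Theorem~\ref{thm:ground-state-degeneracy} gives \(|\widehat N_0/N_0-1|\le\epsilon\). The schedule-generation cost and the cost of estimating \(E_0\) are not counted, as stated.

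There is no real obstacle here. The only points needing care are checking that the rescaled accuracy \(\epsilon/6\) still lies in \((0,1]\), so that both theorems apply, and that the constant rescalings of \(\epsilon\) and \(p\) stay inside the \(\widetilde O\) and \(\ln(4/p)\) factors.
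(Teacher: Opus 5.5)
Your proposal is correct and matches the paper's own proof: run Theorem~\ref{thm:single-temperature-main} or Theorem~\ref{thm:coherent-indexed-main} at accuracy \(\epsilon/6\) with failure probability \(p/2\), and absorb the constant rescaling into \(\widetilde{\mathcal O}\). Then substitute \(\beta\) into the ground-energy requirement of Theorem~\ref{thm:ground-state-degeneracy} and finish with a union bound.
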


\begin{proof}
Run Theorem~\ref{thm:single-temperature-main}, or alternatively
Theorem~\ref{thm:coherent-indexed-main}, with additive accuracy
\(\epsilon/6\) and failure probability \(p/2\).
Constant rescaling of the accuracy is absorbed by
\(\widetilde{\mathcal O}\).
The required ground-energy accuracy is exactly the second
estimation guarantee in
Theorem~\ref{thm:ground-state-degeneracy}.
A union bound shows that both estimation guarantees hold with
probability at least \(1-p\), and that theorem then gives the
claimed relative error.
\end{proof}

\medskip

The gap lower bound determines the inverse temperature needed to suppress
excited states. This choice of \(\beta\) also sets the required precision
of the ground-energy estimate: a larger inverse temperature requires
a more accurate estimate of the ground energy.

\section{Comparison with prior work}

Table~\ref{tab:prior-work} compares guarantees and resource units
under the stated access models. The system-size bounds for this work
use \(h=O(n)\), \(\beta=\Theta(1)\), and
\(\ell=\widetilde O(\sqrt n)\).

\begin{table}[!htbp]
\centering
\small
\setlength{\tabcolsep}{4pt}
\begin{tabular}{@{}L{0.18\textwidth}L{0.30\textwidth}L{0.30\textwidth}L{0.16\textwidth}@{}}
\toprule
Work & Setting and access & Main guarantee & Resource \\
\midrule
\multicolumn{4}{@{}l}{\textit{Classical partition-function estimators}} \\
\addlinespace[2pt]
\v{S}tefankovi\v{c} et al.~\cite{Stefankovic2009}
& Classical energies and Gibbs samples; supplied Chebyshev schedule
& \(O(\ell^2/\epsilon^2)\) samples for relative partition-function error \(\epsilon\)
& Gibbs-sampling calls \\
\addlinespace
Huber; Kolmogorov~\cite{Huber2015,Kolmogorov2017}
& Classical energies and Gibbs samples
& Adaptive schedules and paired-product estimators; \(\epsilon^{-2}\) estimation dependence
& Gibbs-sampling calls \\
\midrule
\multicolumn{4}{@{}l}{\textit{Classical methods for quantum thermal systems}} \\
\addlinespace[2pt]
Harrow et al.~\cite{Harrow_2020}
& Local quantum Hamiltonians; suitable complex zero-free region around \([0,\beta]\)
& Relative partition-function error \(\epsilon\) in \(n^{O(\log(n/\epsilon))}\) time at fixed \(\beta\)
& Classical runtime \\
\addlinespace
Bakshi et al.~\cite{BakshiLiuMoitraTang2024}
& Bounded-degree local qubit Hamiltonians at sufficiently high temperature
& Sample product states whose mixture has Gibbs-state trace-norm error \(\eta\), in \(\operatorname{poly}(n,\log(1/\eta))\) time
& Classical runtime for mixed-state preparation \\
\addlinespace
Huang~\cite{Huang2021}
& One-dimensional bounded nearest-neighbor interactions
& Matrix-product operator approximation for local-observable error \(\eta\); bond dimension independent of \(n\) at fixed \(\beta,\eta\)
& Representation bond dimension \\
\midrule
\multicolumn{4}{@{}l}{\textit{Quantum partition-function estimators with classical-energy access}} \\
\addlinespace[2pt]
Montanaro~\cite{Montanaro2015}
& Coherent energy evaluation, Gibbs states, and their reflections
& \(\widetilde O(\ell^2/\epsilon)\); bounded-variance quantum mean estimation of ratios
& Gibbs-state reflections \\
\addlinespace
Harrow--Wei~\cite{Harrow2020}
& Coherent energy evaluation and Gibbs-state reflections
& \(\widetilde O(\ell^2/\epsilon)\) estimation; adaptive annealing with state reuse
& Gibbs-state reflections \\
\addlinespace
Arunachalam et al.~\cite{Arunachalam2022}
& Coherent energy evaluation and Gibbs-state reflections
& \(\widetilde O(\ell^2/\epsilon)\) estimation; overlap-based scheduling and paired products
& Gibbs-state reflections \\
\addlinespace
Cornelissen--Hamoudi~\cite{Cornelissen2023}
& Coherent values, Gibbs states, and their reflections; each factor has at most \(d\) values
& \(\widetilde O(\ell^{3/2}/\epsilon+\ell\log d)\); nearly unbiased product estimation
& Expected Gibbs-state reflections \\
\midrule
\multicolumn{4}{@{}l}{\textit{This work: real spectra with TFD and block-encoding access}} \\
\addlinespace[2pt]
Fixed temperature
& TFD preparation at classically chosen temperatures; Hamiltonian block encoding
& \(\widetilde O\!\left(\frac{\sqrt\ell(\ell+h\beta)}{\epsilon}\right)\); \(\widetilde O(n^{5/4}/\epsilon)\) in the regime above
& Expected TFD and block-encoding queries, each \\
\addlinespace
Coherent temperature
& Coherent-temperature interface with branch and parameter loading; block encoding
& \(\widetilde O\!\left(\frac{\ell+h\beta}{\epsilon}\right)\); \(\widetilde O(n/\epsilon)\) in the regime above
& Expected coherent TFD and block-encoding queries, each \\
\bottomrule
\end{tabular}
\caption{Representative results under the stated access models and resource units.
Costs involving \(\ell\) assume a supplied suitable schedule of length \(\ell\),
constant conditioning, and constant failure probability; schedule-generation
costs are separate.}
\label{tab:prior-work}
\end{table}

\clearpage

For partition-function estimation, relative error and additive error in
\(\ln Z\) are equivalent up to constant rescaling for
\(0<\epsilon\leq1\).

\medskip

Remark~\ref{rem:uniform-grid-comparison} derives the uniform-grid benchmarks
for our access models and states the additional temperature-access
condition for their coherent-temperature specialization.

\section{Additional notes on the common reconstruction}

This appendix records the energy-shift identities and interpretations
of the overlap and reweighting terms used in
Section~\ref{sec:technical-overview}.

\subsection{Energy normalization}

For the normalization \(H=H_{\mathrm{orig}}-LI\) introduced in
Section~\ref{sec:technical-overview}, the TFD overlaps are unchanged,
while the reweighting terms transform as
\[
R_H(\beta,s)
=
e^{sL}R_{H_{\mathrm{orig}}}(\beta,s),
\qquad
B_H(\beta,s)
=
B_{H_{\mathrm{orig}}}(\beta,s)+sL.
\]
The stronger information needed to extract the ground-state
degeneracy is discussed separately in
Section~\ref{subsec:ground-degeneracy}.

\subsection{Thermodynamic interpretation}

The logarithmic terms \(A(\beta,s)\) and \(B(\beta,s)\) are
second and first finite differences of \(f\), respectively, as
specified in Definition~\ref{def:local-thermal-quantities}.
Their small-step limits give the usual thermodynamic derivatives.

\begin{rem}[Thermodynamic derivatives]
\label{rem:thermodynamic-derivatives}
Since the system is finite dimensional, \(f\) is smooth at every
finite inverse temperature.
The same two quantities recover the usual thermodynamic
derivatives:
\[
    \lim_{s\rightarrow0}
    \frac{B(\beta,s)}{s}
    =
    f'(\beta)
    =
    -\Tr(H\rho_\beta),
\]
and
\[
    \lim_{s\rightarrow0}
    \frac{A(\beta,s)}{(s/2)^2}
    =
    f''(\beta)
    =
    \Var_{\rho_\beta}(H).
\]
Hence \(B\) contains first-order information about \(f\), while
the TFD overlap contains the corresponding second-order
information.
\end{rem}

\medskip

The recursive-doubling identity uses this finite-difference structure
to recover a full schedule increment from a short reweighting step.
The following interpretation explains the role of its overlap terms.

\begin{rem}[Interpretation of the recursive doubling]
\label{rem:recursive-doubling-interpretation}
The reweighting term in
Eq.~\eqref{eq:interval-dyadic-reconstruction} may be written as
\[
    2^{k_i}B(\beta_i,\delta_i)
    =
    \Delta_i
    \frac{B(\beta_i,\delta_i)}{\delta_i}.
\]
This multiplies the change in \(f\) over the first short interval
by \(2^{k_i}\). It would give the full change
\(f(\beta_{i+1})-f(\beta_i)\) if \(f\) were linear.

In general, doubling the interval length does not double the
change in \(f\). When the length increases from
\(2^{j-1}\delta_i\) to \(2^j\delta_i\), adding
\(A(\beta_i,2^j\delta_i)\) to twice the previous change gives
the new change exactly.
Repeating this step gives the weighted \(A\)-terms in the
reconstruction formula and recovers
\(f(\beta_{i+1})-f(\beta_i)\) exactly.
\end{rem}

\section{Circuit constructions for spin chains}
\label{app:coherent-temperature-circuits}

This appendix gives the circuit constructions for the one-dimensional
application. We prove Proposition~\ref{prop:coherent-tfd-spin-chains}
and account for the auxiliary operations and implementation errors
used in Corollary~\ref{cor:spin-chain-circuit-comparison}.
We use the circuit architecture of Section~\ref{sec:end-to-end}.
The preparation accuracy concerns the canonical TFD and all additional
workspace, not only the reduced Gibbs state.

\subsection{Proof of Proposition~\ref{prop:coherent-tfd-spin-chains}}
\label{subsec:coherent-temperature-proof}

\begin{proof}
Group a fixed number of adjacent sites if necessary to use the
nearest-neighbor formulation of the cited preparation construction.
Let \(\mathcal H_t\) be the Hermitian parent Hamiltonian of its
Gaussian-filter Gibbs sampler with single-site Pauli jumps and
local-rate normalization. Choose its sign so that its unique ground
state is \(\ket{\psi_t}\), with energy zero. The gap bound and its
uniformity on bounded temperature intervals follow from
\cite[Theorem~I.1 and Section~VII]{bergamaschi2026fastmixingquantumspin}.
Thus one positive gap bound applies throughout \([0,\beta_{\max}]\), independently
of the chain length. This is a gap of the parent Hamiltonian, not an
assumption about the physical Hamiltonian.

\medskip

\noindent\textbf{Uniform preparation path.}
For endpoint \(\theta\), use the path
\(s\mapsto\mathcal H_{s\theta}\), \(0\le s\le1\).
At \(s=0\), the canonical TFD is a product of Bell pairs.
The quasi-adiabatic construction in
\cite[Theorem~C.1 and Appendix~C]{bergamaschi2026fastmixingquantumspin}
transports this ground state by unitary evolution. Its local
decomposition has a constant number of layers of simulations on
blocks of polylogarithmically many sites. Choose the spatial
cutoffs, integration ranges, discretizations, and error tolerances
using the common gap and locality bounds for \([0,\beta_{\max}]\). These
choices do not depend on the temperature label.

\medskip

\noindent\textbf{Temperature-controlled local operations.}
The temperature-dependent ingredients are block encodings of the
parent Hamiltonian and its derivative. Their controlled versions are
provided by \cite[v2, Lemmas~C.5 and~C.6]{chen2023efficient}.
Reversibly compute the product \(s\theta\) to the required
precision and use it as the temperature input. The path derivative is
\[
\frac{\partial}{\partial s}\mathcal H_{s\theta}
=\theta\mathcal H'_{s\theta},
\]
and the second derivative is
\(\theta^2\mathcal H''_{s\theta}\).
The scalar factors can be encoded using bounds valid for every
\(\theta\le \beta_{\max}\). Multiplication and the additional controls use
polynomially many gates in the precision-bit count; they do not
require a lookup over the set of queried temperatures.

\medskip

We verify the normalization and locality requirements for these
subroutines. On a block of \(r\) sites, there are \(O(r)\)
single-site Pauli jumps. Dividing each jump by the square root of
their number gives
\(\|\sum_a A_a^\dagger A_a\|\le1\), as required by the cited
block-encoding lemmas. Restoring local-rate normalization multiplies
the block-encoding normalization by \(O(r)\), not by \(n\).
For bounded finite-range interactions, the first and second
commutators of a single-site jump with the physical Hamiltonian
contain only boundedly many bounded local terms. The derivative
estimates in \cite[v2, Proposition~C.2]{chen2023efficient} therefore
give polynomial bounds in \(r\) for the required local operators
and their derivatives, uniformly in \(\theta\le \beta_{\max}\).
Their Pauli coefficients and controls are obtained from the supplied
local interactions.

\medskip

The local operators being simulated are the restrictions of the
quasi-adiabatic generator, not independent Gibbs preparations on
small blocks. Use the explicitly computable decomposition in
\cite[Appendix~C.3]{bergamaschi2026fastmixingquantumspin}.
It expresses these restrictions as polynomially many sums and
differences of integrals of local derivatives conjugated by local
parent-Hamiltonian evolutions. The preceding block encodings apply
to those terms and their spatial truncations. The integral
block-encoding and time-dependent simulation construction in
\cite[v2, proof of Proposition~C.1]{chen2023efficient} then applies
with \(\theta\) retained as a control. Its filter uses the gap of
the full preparation path; no gap of a truncated block is assumed.
If a block simulation has target error \(\xi\), its gate count is
\[
\operatorname{poly}\bigl(r,\log(2n/\eta),\log(1/\xi)\bigr).
\]
This conclusion uses the explicit local block encodings, rather than
arbitrary matrix synthesis on \(r\) sites.

\medskip

\noindent\textbf{Parallel execution and workspace.}
Take \(r=\operatorname{polylog}(2n/\eta)\), as required by the
local decomposition. There are \(O(n)\) block simulations in a
constant number of layers. Assign the decomposition error at most
\(\eta/4\), and give each block a full-state simulation error
\(O(\eta/n)\), with constants chosen so their sum is at most
\(\eta/4\). The per-block cost is then polylogarithmic in
\(2n/\eta\). An implementation that initially provides only a
unitary block approximation can be run at squared accuracy: if its
ancilla block approximates a unitary within \(\tau\), the full
output on a zero-ancilla input differs by at most \(\sqrt{2\tau}\).
This converts the block guarantee to a workspace-inclusive error
with only logarithmically increased precision.

\medskip

Distribute the temperature bits to the block workspaces using CNOT
trees. This takes \(O(n\log(2n/\eta))\) gates and
\(O(\log(2n))\) depth. Disjoint blocks in the same layer run in
parallel with separate simulation workspace. Their parameter
registers serve only as controls; reverse the parameter computations
and CNOT trees after use. This gives a size and auxiliary-qubit
count of \(n\operatorname{polylog}(2n/\eta)\), and a depth of
\(\operatorname{polylog}(2n/\eta)\). Locality is used separately
on each temperature branch before adding these controls; we do not
assume that a Hamiltonian coupled to one shared temperature register
is spatially local.

\medskip

\noindent\textbf{Phase and coherent error.}
The exact canonical path satisfies
\(\langle\psi_t|\partial_t\psi_t\rangle=0\).
Since its parent eigenvalue is zero for every \(t\),
\(\langle\psi_t|\mathcal H'_t|\psi_t\rangle=0\).
The quasi-adiabatic generator is an integral of conjugated path
derivatives, so it has zero expectation in the corresponding ground
state. Evolution from the canonical Bell-pair state therefore has
no temperature-dependent phase relative to \(\ket{\psi_\theta}\).
This fixes the phase before any numerical simulation is introduced.
Allocate the remaining error budget to internal arithmetic and
simulation approximations, using uniform tolerances over \([0,\beta_{\max}]\).

\medskip

Let \(e_\theta\) be the full output error on label \(\theta\).
The constructed circuit preserves that label. Hence, for
\(\sum_\theta|c_\theta|^2=1\),
\[
\left\|\sum_\theta c_\theta\ket{\theta}e_\theta\right\|_2^2
=\sum_\theta|c_\theta|^2\|e_\theta\|_2^2
\le\eta^2.
\]
Thus the error does not acquire a factor equal to the number of
temperatures. Reversing the explicit circuit gives its inverse.
For an external control, distribute its bit to parallel workspaces
and control the elementary gates. This changes the costs by at most
polylogarithmic factors and restores the control workspace to zero.
\end{proof}

\medskip

\noindent\textbf{Temperature rounding.}
The proposition refers to the temperature represented by the input
bits. For an unrounded requested temperature, the additional error
is controlled by
\[
\left\|\frac{\partial}{\partial t}\ket{\psi_t}\right\|_2
=\frac12\sqrt{\operatorname{Var}_{\rho_t}(H)}\le\frac h4.
\]
The last inequality follows because the spectrum is contained in an
interval of width at most \(h\). Rounding within \(O(\eta/h)\)
and using a constant fraction of \(\eta\) for preparation gives
an \(\eta\)-accurate output for the requested temperature.
Since \(h=O(n)\) and \(\beta_{\max}\) is fixed, this needs
\(O(\log(2n/\eta))\) temperature bits. Guard bits for internal
products and simulation parameters have the same asymptotic length.

\subsection{Auxiliary reconstruction costs}
\label{subsec:coherent-temperature-auxiliary}

We now account for the quantum operations outside TFD preparation.
The supplied schedule and the compiled numerical data have
polylogarithmically many bits per entry at the precisions specified
below. Classical compilation of these data is not included in the
quantum circuit costs.

\medskip

\noindent\textbf{Hamiltonian block encoding.}
After a known scalar shift, the Hamiltonian has a Pauli expansion
\(H=\sum_a c_aP_a\) with \(O(n)\) terms and
\(\sum_a|c_a|=O(n)\). When this sum is nonzero, prepare its
coefficient state with amplitudes
\(\sqrt{|c_a|/\sum_b|c_b|}\) in a register with exactly one
occupied position. A balanced tree of two-qubit rotations splits
the amplitude according to the total weight in each subtree. It
uses \(O(n)\) gates and logarithmic depth. Each occupied position
controls the corresponding signed Pauli term. The overlap graph of
the nonidentity terms has bounded degree, so a constant number of
groups suffices to apply these controlled terms in parallel. An
identity term only requires its coefficient control. Reversing the
coefficient preparation completes the linear-combination block
encoding. If \(H=0\), the block encoding is trivial. Padding the
normalization by an ancilla rotation gives a chosen
\(\alpha=\Theta(n)\), and a known \(h=\Theta(n)\) can be used
as a spectral upper bound. These operations have
\(\widetilde O(n)\) gates and polylogarithmic depth.

\medskip

\noindent\textbf{Branch preparation and table loading.}
The branch set \(\mathcal R\) of
Eq.~\eqref{eq:coherent-branch-data} has
\(\widetilde O(\ell)\) entries. Prepare its distribution by the
same balanced-tree construction on an occupied-position register.
Compute the binary pair \((i,j)\) by parity trees on the relevant
position bits. Since each valid pair identifies one position,
reversibly decode the pair and clear the position register.
Distribute control bits as needed and uncompute temporary copies.
This implements \(U_\pi\) with
\(\widetilde O(\ell)\) gates and workspace and polylogarithmic
depth, including the index conversion.

\medskip

For \(U_{\mathrm{load}}\), fan out the index bits, compare them
with every table address in parallel, and form each output bit by a
parity tree over the matching entries. Reverse all comparisons and
copies. The table has \(\widetilde O(\ell)\) entries of
polylogarithmic length, giving the same resource bounds. This
implements the operations specified in
Appendix~\ref{app:sampled-dyadic},
Subsection~\ref{subsec:coherent-branch-preparation}, without assuming
a special formula for the schedule. Repeated table values cause
no problem because the branch labels are retained.

\medskip

\noindent\textbf{QSVT, estimation, and reflections.}
The elementary reweighting steps satisfy \(\alpha\delta_i=O(1)\).
Their QSVT circuits therefore have polylogarithmic length at the
required accuracy. Store the compiled phase angles and active-step
flags in the same table. Pad to a common maximum length, making
unused gates identities by their flags. Binary-angle-controlled
rotations require only linearly many elementary controlled rotations
in the angle bit count. The actual block-encoding queries remain
charged to the reconstruction query count.

\medskip

The harmonic sample uses polylogarithmically many flags and
precision bits. Prefix products and fixed-point additions therefore
have polylogarithmic cost, as do the controlled arithmetic operations
in quantum mean estimation. In the fixed-temperature estimator, the
phase-estimation and signal-processing routines of
\cite{Cornelissen2023} add at most polylogarithmic overhead per
reflection: powers of the reflected unitary are implemented by the
corresponding number of calls, not charged as single gates.
Clipping and logarithm evaluation use polynomial-time arithmetic in
the bit length. Weighted summation and the final median cost at most
polylogarithmic overhead per estimate. Their costs are included here,
even when performed classically between quantum calls.

\medskip

A zero-state reflection on the full
\(\widetilde O(n)\)-qubit workspace is implemented by a reversible
AND tree, a phase flip, and uncomputation. It has
\(\widetilde O(n)\) gates and polylogarithmic depth.
Together with Proposition~\ref{prop:coherent-tfd-spin-chains}, these
constructions bound each elementary oracle or reflection call by
\(\widetilde O(n)\) gates and polylogarithmic depth. A full
coherent sample uses polylogarithmically many such calls. For
\(\ell=\widetilde O(\sqrt n)\), the remaining work above fits
within the resource bounds of
Corollary~\ref{cor:spin-chain-circuit-comparison}.

\medskip

\noindent\textbf{Preparation and numerical error budgets.}
Run the statistical reconstruction with a constant fraction of the
final error tolerance. Choose the uniform TFD accuracy as in
Appendix~\ref{sec:approx-prep}, decreasing its constant if necessary
to preserve the success-probability floor. Since
\(L=\sum_{i,j}a_{i,j}=O(\ell+h\beta)\), an accuracy
\(\eta=O(\epsilon/L)\) suffices when \(C=O(1)\).
Use the same approximate preparation and its own inverse every time.
Preparation error then changes the fixed branch probabilities; it
is not a new independent error on every call.

\medskip

Approximate loading means exact reversible loading of fixed rounded
data. Temperature rounding is included in the preceding state-error
budget. Rounding a nonnegative imaginary-time step is controlled by
\[
\|e^{-\delta H/2}-e^{-\delta' H/2}\|
\le\frac h2|\delta-\delta'|.
\]
Choose the step precision and QSVT phase accuracy so that the total
block error meets the dilation tolerance in the reconstruction
algorithm. For a QSVT sequence of polylogarithmic length, summing its
gate errors requires only polylogarithmic angle precision. Fixed-point
sample evaluation meets the tolerance \(\zeta\) in the coherent
algorithm; rounding the accumulated output uses another fixed
fraction of \(\epsilon\).

\medskip

Branch-state approximation has a separate, bounded effect. If the
implemented distribution differs from \(\pi\) in total variation
by at most \(\xi\), then, using
\(|X|\le L(2+H_M+\zeta)\), its contribution to the mean error
is at most
\[
2L(2+H_M+\zeta)\xi.
\]
Taking \(\xi=O(\epsilon/[L(2+H_M+\zeta)])\) therefore suffices.
Uniform branch conditioning and the second-moment bounds remain
valid under this change. All numerical tolerances above require
only polylogarithmic bit lengths. Allocating their constants and
the statistical failure budget gives the claimed additive error and
success probability.

\medskip

\noindent\textbf{Optional compilation to a fixed finite gate set.}
The displayed bounds use the elementary-gate model of
Section~\ref{sec:end-to-end}. They can also be compiled to a standard
finite gate set with polylogarithmic overhead~\cite{NielsenChuang}.
To address the random running time, let \(G_0,D_0\) be uniform expected gate and depth
bounds for the ideal implementation, run with failure probability
at most \(p/2\). Stop and report failure before exceeding gate or
depth budgets \(32G_0/p\) and \(32D_0/p\), respectively.
By Markov's inequality, these cutoffs add failure probability at
most \(p/16\). The counters can be updated using the known gate
and depth costs of each executed layer, with polylogarithmic overhead.

\medskip

The number of gates is now deterministically bounded. Approximate
each rotation sufficiently well that a hybrid bound makes the total
variation distance of the complete stopped computation, including
its execution history, at most \(p/128\). It is enough to use
operator-norm accuracy proportional to \(p\) divided by the gate
cutoff. The expected gate cost changes by at most the gate cutoff
times \(p/128\), hence by at most \(G_0/4\); the analogous
change in expected depth is at most \(D_0/4\). Synthesis of each
rotation and the counters add only polylogarithmic factors, and
the added failure probabilities sum to less than \(p/2\).
This yields the same displayed bounds up to the suppressed
polylogarithmic factors, without assuming a deterministic running
time for the original estimator.
\end{document}